\documentclass{article}
\pdfoutput=1

\usepackage[utf8]{inputenc}
\usepackage[margin=1in]{geometry}
\usepackage{color, colortbl}
\usepackage[dvipsnames]{xcolor}
\usepackage{graphicx,epstopdf}
\usepackage{amsmath,amssymb,amsthm}
\usepackage{algorithm}
\usepackage{algorithmic}
\usepackage{bm}
\usepackage[caption=false]{subfig}
\usepackage{appendix}
\usepackage{multirow}
\usepackage{mathtools}
\usepackage{braket}
\usepackage{csquotes}
\usepackage[colorlinks=true,linkcolor=red,bookmarks=true,breaklinks=true]{hyperref}
\usepackage[numbers,comma,sort&compress]{natbib}
\usepackage[english]{babel}
\usepackage[qm]{qcircuit}
\usepackage[capitalize,nameinlink,noabbrev]{cleveref}
\usepackage{tikz}
\usetikzlibrary{shapes.geometric}
\usetikzlibrary{arrows.meta}
\usetikzlibrary{positioning}
\usetikzlibrary{calc}
\usepackage{array}
\usepackage[normalem]{ulem}

\definecolor{color1}{HTML}{FFDAB9}
\definecolor{color2}{HTML}{FFA07A}
\definecolor{color3}{HTML}{FF7F50}
\definecolor{color4}{HTML}{FF6347}

\definecolor{amber}{rgb}{1.0, 0.75, 0.0}

\newtheorem{thm}{\protect\theoremname}
  \theoremstyle{plain}
  \newtheorem{lem}[thm]{\protect\lemmaname}
  \theoremstyle{remark}
  
  \theoremstyle{plain}
  \newtheorem*{lem*}{\protect\lemmaname}
  \theoremstyle{plain}
  \newtheorem{prop}[thm]{\protect\propositionname}
  \theoremstyle{plain}
  \newtheorem{cor}[thm]{\protect\corollaryname}
  
  \newtheorem{defn}[thm]{Definition}

  \newcolumntype{x}[1]{>{\centering\let\newline\\\arraybackslash\hspace{0pt}}p{#1}}
  
\PassOptionsToPackage{USenglish}{babel}
\usepackage[USenglish]{babel}
  \providecommand{\corollaryname}{Corollary}
  \providecommand{\lemmaname}{Lemma}
  \providecommand{\propositionname}{Proposition}
  \providecommand{\remarkname}{Remark}
\providecommand{\theoremname}{Theorem}

\newcommand{\norm}[1]{\|#1\|}
\newcommand{\one}{I}

\begin{document}

\title{\centering Quantum differential equation solvers: \\limitations and fast-forwarding}

\author{Dong An$^{1,2}$, \quad Jin-Peng Liu$^{4,5,6}$, \quad Daochen Wang$^{2,3,7}$, \quad Qi Zhao$^{8}$ \\ 
\footnotesize $^{1}$ Beijing International Center for Mathematical Research, Peking University, Beijing, China\\
\footnotesize $^{2}$ Joint Center for Quantum Information and Computer Science, University of Maryland, College Park, MD 20742, USA\\
\footnotesize $^{3}$ Department of Mathematics, University of Maryland, College Park, MD 20742, USA\\
\footnotesize $^{4}$ Yau Mathematical Sciences Center, Tsinghua University, Beijing, China\\
\footnotesize $^{5}$ Center for Theoretical Physics, Massachusetts Institute of Technology, Cambridge, MA 02139, USA\\
\footnotesize $^{6}$ Simons Institute and Department of Mathematics, University of California, Berkeley, Berkeley, CA 94720, USA\\
\footnotesize $^{7}$ Department of Computer Science, University of British Columbia, BC V6T 1Z4, Canada\\
\footnotesize $^{8}$ QICI Quantum Information and Computation Initiative, School of Computing and Data Science,\\ 
\footnotesize The University of Hong Kong, Pokfulam Road, Hong Kong, China\\
}
\date{}

\maketitle

\begin{abstract}
    We study the limitations and fast-forwarding of quantum algorithms for linear ordinary differential equation (ODE) systems with a particular focus on non-quantum dynamics, where the coefficient matrix in the ODE is not anti-Hermitian or the ODE is inhomogeneous. On the one hand, for generic linear ODEs, by proving worst-case lower bounds, we show that quantum algorithms suffer from computational overheads due to two types of ``non-quantumness'': real part gap and non-normality of the coefficient matrix. We then show that homogeneous ODEs in the absence of both types of ``non-quantumness'' are equivalent to quantum dynamics, and reach the conclusion that quantum algorithms for quantum dynamics work best. To obtain these lower bounds, we propose a general framework for proving lower bounds on quantum algorithms that are \emph{amplifiers}, meaning that they amplify the difference between a pair of input quantum states. On the other hand, we show how to fast-forward quantum algorithms for solving special classes of ODEs which leads to improved efficiency. More specifically, we obtain exponential improvements in both $T$ and the spectral norm of the coefficient matrix for inhomogeneous ODEs with efficiently implementable eigensystems, including various spatially discretized linear evolutionary partial differential equations. We give fast-forwarding algorithms that are conceptually different from existing ones in the sense that they neither require time discretization nor solving high-dimensional linear systems. 
\end{abstract}

\tableofcontents

\section{Introduction}

Differential equations are widely used to model the evolution and dynamics of systems in many disciplines including mathematics, physics, chemistry, engineering, biology, and economics. 
In this work, we consider the linear ordinary differential equation (ODE) system
\begin{equation}\label{eqn:ODE_general}
\begin{split}
     \frac{d}{dt} u(t) &= Au(t) + b(t), \quad t \in [0,T], \\
     u(0) &= u_{\text{in}}, 
\end{split}
\end{equation}
where $u(t) \in \mathbb{C}^N$ is the solution of the ODE, $A \in \mathbb{C}^{N\times N}$ is the coefficient matrix, and $b(t) \in \mathbb{C}^N$ is the inhomogeneous term. 
We always consider a time-independent coefficient matrix $A$ and sometimes allow for a time-dependent inhomogeneous term. We call~\cref{eqn:ODE_general} a homogeneous system of ODEs if $b(t) = 0$ for all $t$, and an inhomogeneous system otherwise. 
A quantum algorithm is said to solve\footnote{Here the meaning of solving an ODE quantumly follows existing literature on quantum ODE solvers, see, \emph{e.g.},~\cite{Berry2014,BerryChildsOstranderEtAl2017,ChildsLiu2020,LiuKoldenKroviEtAl2021,Krovi2022,FangLinTong2022}. Notice that if we are interested in recovering the unnormalized solution, the norm $\|u(T)\|$ can usually be estimated by quantum amplitude estimation~\cite{BrassardHoyerMoscaEtAl2002}. }~\cref{eqn:ODE_general} if it prepares a quantum state approximately equal to the normalized final solution $u(T)/\|u(T)\|$ up to $2$-norm error $\epsilon$, where $\|\cdot\|$ denotes the $2$-norm of a vector. 

One notable example of~\cref{eqn:ODE_general} is quantum dynamics, where $b(t) = 0$ for all $t$ and $A$ is anti-Hermitian.
Simulating quantum dynamics is also known as time-independent Hamiltonian simulation, which traces its formulation to Feynman~\cite{Feynman1982} and is widely viewed as one of the most important and promising applications of quantum computers. 
The first quantum algorithm for time-independent Hamiltonian simulation was proposed by Lloyd~\cite{Lloyd1996}, and there has been rapid and significant theoretical progress in the past few decades~\cite{BerryAhokasCleveEtAl2007,Childs2009,BerryChilds2012,BerryCleveGharibian2014,BerryChildsCleveEtAl2015,BerryChildsKothari2015,BerryNovo2016,LowChuang2017,ChildsMaslovNamEtAl2018,LowWiebe2019,ChildsOstranderSu2019,Campbell2019,Low2019,ChildsSu2019,ChildsSuTranEtAl2020,ChenHuangKuengEtAl2020,SahinogluSomma2020}. 
The best asymptotic complexity scaling is achieved by quantum signal processing (QSP) and quantum singular value transformation (QSVT), and is given by $\mathcal{O}(T\|A\| + \log(1/\epsilon)/\text{loglog}(1/\epsilon))$. This scaling matches the worst-case lower bound for time-independent Hamiltonian simulation in time $T$, error $\epsilon$, and the norm of $A$~\cite{BerryChildsKothari2015}.

This work focuses on quantum algorithms for ODE systems that are not necessarily quantum dynamics, \emph{i.e.}, $A$ is not anti-Hermitian and/or the system is inhomogeneous. 
Our work has three motivations. First, real-world systems and applications of non-quantum dynamics typically have a huge degree of freedom, leading to the need to employ high-dimensional differential equations for accurate modeling.
This makes the topic of quantum algorithms for solving general ODE systems intriguing since quantum computers can potentially solve high-dimensional problems exponentially faster than classical computers. 
Second, compared to simulating quantum dynamics, there are much fewer quantum algorithms available~\cite{Berry2014,BerryChildsOstranderEtAl2017,ChildsLiu2020,Krovi2022}. 
Most of them first discretize the time variable according to different classical numerical methods and solve the resulting linear system of equations using quantum linear system solvers (such as the HHL algorithm~\cite{HarrowHassidimLloyd2009} or advanced ones~\cite{ChildsKothariSomma2017,ChakrabortyGilyenJeffery2018,SubasiSommaOrsucci2019,LinTong2020,AnLin2022,CostaAnYuvalEtAl2022}). 
Though the linear-in-$T$, linear-in-$\|A\|$ and polylog-in-$\epsilon$ scalings can still be achieved, existing quantum algorithms for general non-quantum dynamics are not as efficient as those for quantum dynamics, in the sense that their complexity scalings typically involve extra factors related to the ``non-quantumness'' of the evolution (\emph{e.g.}, the norm of $e^{At}$, the condition number of the eigenbasis and the rate of norm decay). 
Therefore it is desirable to investigate whether there exist quantum algorithms for generic non-quantum dynamics with better asymptotic scaling or whether the ``non-quantumness'' of such systems inherently limits the efficiency of any quantum algorithm that solves it. 
Third, the best existing algorithms with linear scalings in $T$ and $\|A\|$ can still be expensive in specific applications.
Some examples include the inhomogeneous parabolic and hyperbolic partial differential equations. 
These equations involve an unbounded spatial Laplacian operator, which has a huge spectral norm after spatial discretization. 
This makes linear scaling in $\|A\|$ undesirable for efficient simulation. 
The linear scaling in $T$ also means that simulation for long times is costly. Therefore, it is worthwhile to investigate whether there exist quantum algorithms for specific and practically applicable systems that are more computationally efficient compared to their generic counterparts. 

In this work, we study quantum algorithms for solving non-quantum dynamics from two different angles: 1. their limitations when applied to general non-quantum dynamics and 2. their fast-forwarding when applied to specific non-quantum dynamics. 
On the one hand, we identify two types of ``non-quantumness'' of ODEs that cause computational overhead. 
We show that generic quantum algorithms for solving homogeneous ODEs suffer from exponential computational overhead if the coefficient matrix has at least one pair of eigenvalues with different real parts and linear computational overhead if the coefficient matrix is non-normal\footnote{A matrix $A$ is normal iff $AA^{\dagger} = A^{\dagger}A$, or equivalently $A$ is unitarily diagonalizable.}. 
To avoid the computational overheads from these two sources of non-quantumness, the coefficient matrix $A$ should be normal and all of its eigenvalues should have the same real parts.
We then show that this class of homogeneous ODEs is equivalent to quantum dynamics due to what we call the \emph{shifting equivalence}. Our hardness results can also be generalized to the inhomogeneous case, and imply that existing generic quantum algorithms for inhomogeneous ODEs cannot be substantially improved. 
On the other hand, we show that fast-forwarding and improved asymptotic complexity scalings are possible for special cases of ODEs. 
More specifically, we quadratically improve the scaling in time $T$ if the coefficient matrix $A$ is negative semi-definite, and exponentially improve the scaling in $T$ and $\norm{A}$ if $A$ has classically computable eigenvalues and quantumly implementable eigenstates. 
The latter case includes common (spatially discretized) linear evolutionary partial differential equations, such as transport equation, heat equation, advection-diffusion equation, and so on. 
Our algorithms are ``one-shot'' in the sense that they neither require time discretization nor solve high-dimensional linear systems. 
Instead, they are based on the linear combination of quantum states, which makes them conceptually different from existing generic algorithms that rely on time discretization and solving high-dimensional linear systems. 

\subsection{Limitations of generic quantum ODE solvers}

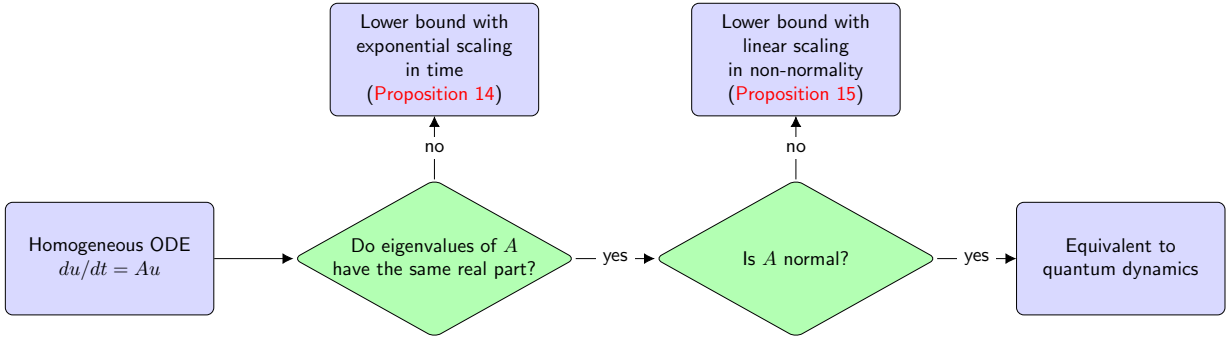
\begin{figure}
\centering
\scalebox{0.7}{
{\tikzset{%
  >={Latex[width=2mm,length=2mm]},
            base/.style = {rectangle, rounded corners, draw=black,
                          minimum width=3.7cm, minimum height=2cm,
                          text centered, font=\sffamily},
            triangle/.style = {diamond, rounded corners, draw=black,
                           minimum width = 5cm, minimum height = 2.8cm,aspect=1.5,
                           inner sep=-10ex,
                           text width=6cm,
                          text centered, font=\sffamily},
       result/.style = {triangle, fill=green!30},
       new/.style = {base, fill=blue!15},
         old/.style = {base, minimum width=2.5cm, fill=yellow!30},
}
\begin{tikzpicture}[node distance=2cm,
    every node/.style={fill=white, font=\sffamily, scale=1.07}, align=center]
  
  \node (A-known-p)     [new] {Lower bound with \\
   exponential scaling \\
   in time \\
   (\cref{prop:lb_eig_diff_homo})};
   
  \node (locate)     [result, below = 1.2 cm of A-known-p]   {Do eigenvalues of $A$ \\have the same real part?};
  
  \node (shrink)     [new, left = 1.5 cm of locate]   {Homogeneous ODE\\ $du/dt = Au$};
  
  \node (locate2)     [result, right = 1.5 cm of locate]   {Is $A$ normal?};
  
  \node (locate3)     [new, above = 1.2 cm of locate2]   {Lower bound with \\
   linear scaling \\
   in non-normality \\
   (\cref{prop:lb_non_normal_homo})};
  
  \node (locate4)     [new, right = 1.5 cm of locate2]   {Equivalent to \\quantum dynamics};

  \draw[->]              (locate) -- node {no} (A-known-p);
  
  \draw[->]              (shrink) -- (locate);
  
  \draw[->]              (locate) -- node {yes} (locate2);
  
  \draw[->]              (locate2) -- node {no} (locate3);
  
  \draw[->]              (locate2) -- node {yes} (locate4);
\end{tikzpicture}}}
\caption{Flowchart of our hardness results for generic quantum homogeneous ODE solvers and their implications. 
}
\label{fig:LB_flowchart}
\end{figure}

\subsubsection{Results}

We first focus on homogeneous ODE systems and show that the most efficient class of homogeneous ODE systems that generic quantum algorithms can solve is quantum dynamics. 
We identify two sources of non-quantumness that introduce computational overhead: real part gap (\emph{i.e.}, the maximal difference in the eigenvalues' real parts) and non-normality of the coefficient matrix, and prove two corresponding lower bounds. Specifically, when the real part gap $\delta > 0$ (\emph{i.e.}, there exists at least a pair of eigenvalues with different real parts), generic quantum ODE solvers cannot scale better than $o(e^{\delta T})$, an exponential overhead in both the real part gap and the evolution time. 
This lower bound is related to the inefficiency of quantum post-selection (assuming $\text{BQP}\neq\text{PostBQP}$, which is reasonable since $\text{PostBQP}$ is equal to an apparently massive complexity class called $\text{PP}$ \cite{Aaronson2004}), except our result is in the query model and generalizes to the case when the coefficient matrix $A$ has a possibly non-orthonormal eigenbasis because we do not assume $A$ is unitarily diagonalizable. 
Second, when $A$ is non-normal (even if all the eigenvalues have the same real parts), we show another 
non-existence of generic quantum ODE solvers with scaling better than $o(\mu(A))$, where $\mu(A) = \|AA^{\dagger}-A^{\dagger}A\|^{1/2}$ is a measure of non-normality~\cite{ElsnerPaardekooper1987} in the sense that $\mu(A)$ becomes larger if $A$ is more non-normal. 
Therefore, the most efficient class that generic quantum ODE solvers can solve is the one that sufficiently avoids both types of ``non-quantumness'', implying that the coefficient matrix should be normal and all the eigenvalues have the same real parts. 
Systems with such features are quantumly equivalent to quantum dynamics due to an observation that we call \emph{shifting equivalence}: the normalized solution of a homogeneous ODE system remains unchanged if the coefficient matrix is shifted by a real number. 
Therefore we can shift the coefficient matrix by the real part of its eigenvalues and the resulting one becomes an anti-Hermitian matrix. 
Our results and their implications are summarized in~\cref{fig:LB_flowchart}. 
We also remark that our lower bound results can also be generalized to inhomogeneous ODE solvers. 

Moreover, we prove lower bounds in terms of the time $T$ and error $\epsilon$ for solving inhomogeneous ODEs even when all the
eigenvalues of $A$ have non-positive real parts or the real part gap is $0$. 
For generic Hamiltonian simulation where $A$ is anti-Hermitian and $b = 0$, Berry \emph{et.~al.} \cite{BerryAhokasCleveEtAl2007} proved an $\Omega(T)$ query lower bound which indicates that no generic fast-forwarding in possible. This bound was later improved to $\Omega(T + \log(1/\epsilon)/\log\log(1/\epsilon))$ by Berry, Childs, and Kothari~\cite{BerryChildsKothari2015}. However, these results do not say anything about what the lower bound should be for ODEs where $A$ is not anti-Hermitian or $b\neq 0$. We extend these results to the setting where $A$ has negative logarithmic norm and $b\neq 0$. This setting is disjoint from that of Hamiltonian simulation since $b\neq 0$ and the logarithmic norm of an anti-Hermitian coefficient matrix is $0$. In this setting, we prove by reduction that generic quantum ODE solvers require $\min\left\{ \Omega(\log^{\alpha}(1/\epsilon)),  \Omega(T^{\alpha})\right\}$ queries if solving the linear system of equations $Ax = b$ requires $\Omega(\log^{\alpha}(1/\epsilon))$ queries. Notice that our lower bound depends on the lower bound for solving a particular class of linear system of equations since $A$ has negative logarithmic norm. It is proved in~\cite{HarrowKothari2022} that when $A$ is unrestricted, solving $Ax=b$ requires at least $\Omega(\log(1/\epsilon))$ queries. We leave to future work the investigation of how this lower bound changes if $A$ is restricted to have negative logarithmic norm. 

\subsubsection{Methods}

We prove our lower bounds on ODE solvers by first introducing a general framework for proving lower bounds on a class of quantum algorithms that we call \emph{amplifiers}.
Our framework generalizes techniques used by Somma and Subaşi to lower bound the complexity of verifying the output of a quantum linear system solver~\cite{SommaSubasi2021} and relates to those used to lower bound the complexity of quantum state discrimination~\cite{Helstrom1969,AbramsLloyd1998,Chefles2000,Montanaro2008,ChildsYoung2016,LiuKoldenKroviEtAl2021}. Suppose we have access to unitaries $O$ and $O^{-1}$ (and their controlled versions) such that $O$ is promised to (consistently) prepare either one of the states $\ket{\psi}$ or $\ket{\phi}$, with $|\braket{\psi|\phi}| \geq 1-\epsilon$ (\emph{i.e.}, the two states are close). If we have an algorithm $\mathcal{A}$ using queries to $O$ and $O^{-1}$ that amplifies the distance between $\ket{\psi}$ and $\ket{\phi}$ to a constant, then we can use $\mathcal{A}$ a constant number of times to readily determine which state $O$ prepares~\cite{Peres1995}. But we know that the quantum state discrimination problem is hard when $\ket{\psi}$ and $\ket{\phi}$ are close, so $\mathcal{A}$ must make many queries to $O$ and $O^{-1}$.
We call quantum algorithms like $\mathcal{A}$ that amplify the distance between two quantum states (to constant) \emph{amplifiers}, and we show that amplifiers must use $\Omega(1/\sqrt{\epsilon})$ queries to $O$ and $O^{-1}$.

Our lower bounds on quantum ODE solvers follow directly from our framework since quantum ODE solvers are amplifiers: the time evolution operator $e^{AT}$ of non-quantum dynamics is no longer unitary and thus small differences in the initial states can lead to large differences in the final states.

We highlight the fact that our lower bound on amplifiers differs from the standard lower bound on quantum state discrimination because the two settings differ. In the standard state discrimination setup~\cite{Helstrom1969,Montanaro2008,LiuKoldenKroviEtAl2021}, an information-theoretic perspective is taken that only assumes access to \emph{copies} of the quantum states that are to be discriminated. In contrast, our setup is motivated from an algorithmic perspective that allows access to quantum state preparation oracles $O$ and their inverses $O^{-1}$. Allowing access to $O^{-1}$ is more appropriate when proving lower bounds on quantum algorithms (like quantum ODE solvers) because we typically know the circuit of $O$ and can therefore easily implement $O^{-1}$ by inverting each gate in the circuit of $O$ and reversing their order. As a result of such stronger access to the quantum states, our lower bound is $\Omega(1/\sqrt{\epsilon})$ which is quadratically smaller than the state discrimination lower bound of $\Omega(1/\epsilon)$. 

Our lower bound framework may be of independent interest since it can be used to prove lower bounds on any quantum algorithm that can be viewed as an amplifier. For example, it can be easily used to recover lower bounds on quantum linear system solvers. We have used our framework to understand the power of quantum algorithms for solving ``non-quantum'' problems. More generally, we expect our framework to help elucidate the boundary between problems that are and are not efficiently solvable by quantum algorithms.

\subsection{Fast-forwarding of tailored quantum ODE solvers}

\begin{table}[t]
\renewcommand{\arraystretch}{1.8}
    \centering
    \scalebox{0.93}{
    \begin{tabular}{x{6.2cm}|x{4.9cm}|x{1.4cm}|x{1.6cm}|x{1.2cm}}\hline\hline
      \multicolumn{2}{c|}{\textbf{Assumptions}} &  \multicolumn{3}{c}{\textbf{Query Complexity}} \\\hline 
      $A$ & $b(t)$ & $T$ & $\|A\|$ & Norms \\\hline
      Eigenvalues with non-positive real parts & General & $\widetilde{\mathcal{O}}(T)$ & $\widetilde{\mathcal{O}}(\|A\|)$ & $\widetilde{\mathcal{O}}(g)$ \\\hline
      \multirow{2}{16em}{Known eigenstates and non-zero eigenvalues with non-positive real parts} & Time-independent & \color{red}{$\mathcal{O}(1)$} & \color{red}{$\mathcal{O}(1)$} & \color{amber}{$\mathcal{O}(g')$} \\\cline{2-5}
       & Time-dependent & $\mathcal{O}(T)$ & \color{red}{$\mathcal{O}(1)$} & \color{amber}{$\mathcal{O}(g'')$} \\\hline
       \multirow{2}{16em}{Known eigenstates and eigenvalues with non-positive real parts} & Time-dependent, worst-case & $\mathcal{O}(T)$ & \color{red}{$\mathcal{O}(1)$} & \color{amber}{$\mathcal{O}(g'')$}  \\\cline{2-5}
       & Time-dependent, best-case & \color{red}{$\mathcal{O}(1)$} & \color{red}{$\mathcal{O}(1)$} & -- \\\hline\hline
    \end{tabular}
    }
    \caption{ Summary of our fast-forwarding results. Here the top result (on the third line) is the scaling of the best generic quantum ODE solver~\cite{Krovi2022} and serves as the baseline. The parameters $g$, $g'$, and $g''$ are defined by $g \coloneqq \max_{t\in[0,T] }\|u(t)\|/\|u(T)\|$, $ g' \coloneqq (\|u(0)\|+\|b\|)/\|u(T)\|$, and $g'' \coloneqq (\|u(0)\|+ T^{-1}\int_0^T\|b(t)\|dt)/\|u(T)\|$. A dash means the corresponding parameter does not exist in that case. The complexities are measured by the number of queries to the input oracles of $A$, $b(t)$ and $u(0)$.  }
    \label{tab:comp}
\vspace{-0.5em}
\end{table}

The worst-case lower bounds discussed above apply to quantum solvers of generic ODEs. However, these lower bounds may be bypassed by quantum solvers tailored to solve particular classes of ODEs (see~\cref{fig:ODE_classification}). Indeed, in the case of Hamiltonian simulation, we know for example that fast-forwarding is possible when the Hamiltonian is $1$-sparse~\cite{ahokas2004improved}, can be diagonalized via an efficient quantum circuit~\cite{AtiaAharonov2017}, or corresponds to real-space dynamics~\cite{LowWiebe2019,AnFangLin2021,ChildsLengLiEtAl2022} or quadratic fermionic and bosonic systems~\cite{AtiaAharonov2017,GuSommaSahinoglu2021}.

In contrast to previous work, we study the fast-forwarding of particular classes of \emph{non-quantum} dynamics.
Our results and comparisons with the best generic algorithms are summarized in~\cref{tab:comp}. 

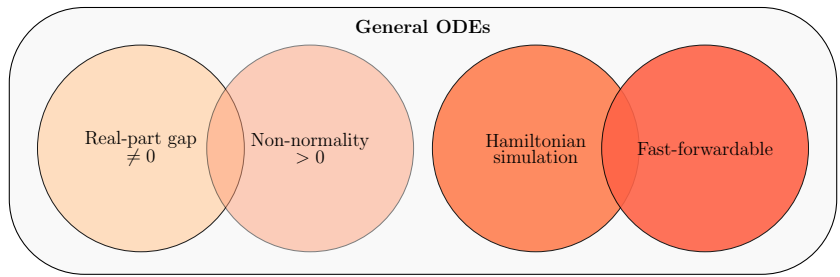
\begin{figure}[ht]
\centering
\scalebox{0.5}{
\begin{tikzpicture} [set/.style = {draw,
circle,
minimum size = 5.5cm,
fill = {rgb,255:red,244;green,111;blue,68},
text opacity = 1}]

\coordinate (top-left) at (-6cm,3cm);
\coordinate (bottom-right) at (16cm,-4.1cm);

\draw[thick, rounded corners=2cm, fill = lightgray!10] (top-left) rectangle (bottom-right);

\node at (5cm,2.5cm) {\Large \textbf{General ODEs}};

\node (A) at (-2.5cm,-0.75cm) [set,opacity = 0.9,fill=color1, align=center,font=\fontsize{15}{12}\selectfont] {Real-part gap \\ $\neq 0$};

\node (B) at (2cm,-0.75cm) [set,opacity = 0.5,fill=color2,align=center,font=\fontsize{15}{12}\selectfont] {Non-normality \\ $> 0$};

\node (C) at (8cm,-0.75cm) [set,opacity = 0.9, fill=yellow,fill=color3,align=center,font=\fontsize{15}{12}\selectfont] {Hamiltonian \\ simulation};

\node (D) at (12.5cm,-0.75cm) [set,opacity = 0.9,fill=color4,font=\fontsize{15}{12}\selectfont] {Fast-forwardable};

\end{tikzpicture}
}
\caption{A characterization of all linear ODEs (including both homogeneous and inhomogeneous ones). The two left-most circles cover the classes of ODEs for which quantum algorithms are inefficient, the circle third from the left covers Hamiltonian simulation which can be viewed as the baseline for quantum algorithms. The right-most circle covers fast-forwardable ODEs for which we design tailored quantum algorithms with improvements in $T$ and/or $\|A\|$. Note that the class of homogeneous ODEs is fully covered by the three left-most circles, as shown in~\cref{fig:LB_flowchart}.}
\label{fig:ODE_classification}
\end{figure}

\subsubsection{Homogeneous ODE}

In the homogeneous case, we obtain exponential speedups assuming $A$ is a normal matrix with an eigenbasis that can be efficiently prepared on a quantum computer and eigenvalues that can be efficiently computed classically. This assumption is similar to the ``QC-solvable'' condition of Atia and Aharonov \cite{AtiaAharonov2017}, which they show implies a violation of the so-called computational time-energy uncertainty principle. We sometimes refer to the assumption as \emph{knowing} the eigenvalues and eigenvectors of $A$. Under this assumption, we show how to implement a block-encoding of $e^{AT}$ with query complexity independent of $T$ and $\|A\|$. Our technique is similar to that used to fast-forward $1$-sparse Hamiltonian simulation~\cite{ahokas2004improved}. Suppose $A = U\Lambda U^{-1}$ is the eigen-decomposition of $A$. 
Then $e^{AT} = Ue^{\Lambda T}U^{-1}$ can be efficiently computed, since $U$ is assumed to be efficiently implementable, and the diagonal transformation $e^{\Lambda T}$ can be implemented using controlled rotations with $\mathcal{O}(1)$ query complexity. 
We remark that, although in both scenarios we can achieve speedup explicitly in the parameters $T$ and/or $\|A\|$, the overall complexities still suffer from the norm decay rate due to a post-selection step. As a result, the overall complexity of preparing the final solution might still have exponential overhead in the evolution time (when $A$ is strictly negative definite), and a genuine fast-forwarding can only be achieved by placing further assumptions on the input states (\emph{e.g.}, the input state has non-trivial overlap with the zero-energy eigenstate).

The key assumption for the aforementioned speedup is that we know the eigenvalues and eigenvectors of $A$. 
We remark that when this assumption is weaken and we only assume certain block-encoding accesses to a negative semi-definite matrix, we may also obtain quadratic speedups in $T$ and $\|A\|$. 
These results are similar to existing work that show how to quadratically speed up the preparation of quantum Gibbs states~\cite{GilyenSuLowEtAl2019,ChowdhurySomma2016}. They are also related to the quadratic fast-forwarding of quantum walks on graphs~\cite{ApersSarlette2019,ApersGilyenEtAl2021,ApersChakrabortyNovoEtAl2022}. 
In this case, the key technique is that $e^{AT}$ can be approximated using a polynomial with degree only $\mathcal{O}(\sqrt{\|A\|T})$. 

\subsubsection{Inhomogeneous ODE}

We generalize our fast-forwarding results to inhomogeneous ODEs. 
We first consider the time-independent inhomogeneous term $b(t) \equiv b$, and show that, once the homogeneous evolution operator $e^{AT}$ can be fast-forwarded, the corresponding inhomogeneous ODE can also be fast-forwarded with the same speedup. 
Our algorithm first separately computes the homogeneous and the inhomogeneous parts of the solution and then combines them together using the technique of linear combination of quantum states.
More specifically, notice that the solution of~\cref{eqn:ODE_general} with constant $b$ can be written as 
\begin{equation}\label{eqn:ODE_solu_intro}
    u(T) = e^{AT} u(0) + \int_0^T e^{A(T-s)} b \, ds, 
\end{equation}
where the first part is exactly the solution of the corresponding homogeneous equation. 
For the second part, when the eigenvalues and eigenstates of $A$ are known, the operator $\int_0^T e^{A(T-s)} ds$ also has known eigensystem and can be exponentially fast-forwarded as in the homogeneous case. 
Finally, the two parts of \cref{eqn:ODE_solu_intro} can be linearly combined via a circuit similar to that arising in the linear combination of unitaries (LCU) technique~\cite{ChildsKothariSomma2017}. 

In the time-independent case, our algorithm for solving inhomogeneous ODE systems is conceptually different from existing quantum ODE solvers. 
To the best of our knowledge, all existing algorithms for inhomogeneous ODE systems require two extra steps: 1. dividing the entire time interval using a very small time step size, and 2. transforming the problem to that of solving a linear system of equations in much higher dimension. 
However, our algorithm takes a one-shot strategy in the sense that our algorithm directly maps the input state to the final solution. 
It neither requires time discretization nor solving a high-dimensional linear system of equations. 
This is a more natural and straightforward way of designing quantum ODE solvers and facilitates fast-forwarding. 
We also expect our one-shot strategy to be applicable to the design of quantum solvers of generic ODEs with generic coefficient matrices $A$, since $e^{AT}$ can still be block-encoded (though the complexity might be super linear in $T$ or $\|A\|$) using the contour integral method~\cite{TakahiraOhashiSogabeEtAl2020,TongAnWiebe2021,FangLinTong2022}. 

We further generalize the exponential speedup in $T$ and $\|A\|$ to the case where the inhomogeneous term $b(t)$ is time-dependent and its norm is known at all times. In this case, the contribution of the inhomogeneous term to the final solution, $\int_0^T e^{A(T-s)} b(s) ds$, becomes a genuine integral over time rather than a matrix-vector multiplication. 
Nevertheless, we can still efficiently compute such an integral as follows. 
We first discretize the integral using first-order numerical quadrature (\emph{e.g.}, the Riemann sum formula), which results in a linear combination of vectors which again can be computed using the linear combination of quantum states. Although the number of the quadrature nodes needs to be very large to control the approximation error, the computational overhead is only logarithmic thanks to the efficiency of the LCU technique. Overall, the complexity of computing the final solution remains independent of the time $T$ and the norm $\|A\|$. Our generalization to time-dependent $b(t)$ crucially relies on the fact that the integral can be efficiently computed with only logarithmic scaling in the number of quadrature nodes. This fact is also the key to the efficiency of the truncated Dyson series method for interaction-picture Hamiltonian simulation~\cite{LowWiebe2019} and the qHOP method for time-dependent Hamiltonian simulation~\cite{AnFangLin2022}. In the time-dependent case, although we use time discretization to approximate the integral due to $b(t)$'s time dependence, our algorithm still does not require solving a high-dimensional linear system of equations.

\subsubsection{Application to evolutionary PDEs}

We identify various instances of high-dimensional systems of linear time-evolutionary partial differential equations (PDEs) that can be fast-forwarded exponentially in $T$ and $\|A\|$. 
Generally speaking, evolutionary PDEs can be classified into three types: parabolic (with degeneration), hyperbolic, and higher-order. 
All of these equations involve spatial differential operators such as the divergence and Laplacian operators. 
A key observation is that these differential operators (after spatial discretization) have known eigenvalues and eigenstates, since they can be diagonalized by the quantum Fourier transform circuit and the eigenvalues are explicit expressions in frequencies. 
Therefore we can apply our fast-forwarded algorithm for ODEs with known eigensystem. 

In~\cref{sec:app_parabolic_pde}, we study parabolic PDEs and show how to simultaneously compute the eigenvalues and prepare the eigenstates of the (spatially discretized) divergence and Laplacian operators. 
We discuss typical instances of parabolic PDEs that can be fast-forwarded, including the transport equation from continuum and statistical physics, the inhomogeneous heat equation from thermodynamics, and the advection-diffusion equation that models chemical processes as well as biological and ecological networks.
In Appendix, we provide more applications of our algorithms to hyperbolic PDEs and higher-order PDEs, such as the wave equation, the Klein-Gordon equation, the Airy equation, and the beam equation.

\subsection{Organization}

The rest of the paper is organized as follows. 
We first introduce preliminary results in~\cref{sec:preliminary}, including the notation and existing generic quantum ODE solvers. 
Then we discuss and prove our lower bound results for generic ODE solvers  in~\cref{sec:lower_bounds} and fast-forwarding algorithms and results for specific applications in~\cref{sec:fast_forwarding}. 
We conclude with a brief summary and several open questions in~\cref{sec:conclusion}. 

\section{Preliminaries}\label{sec:preliminary}

In this section, we introduce our notation and discuss existing generic algorithms for solving ODEs. 
In later analysis, we will also use some more results on linear algebra preliminary lemmas and quantum linear algebra operations, and these are introduced in~\cref{app:proof_la_lemma}. 

\subsection{Notation}

Let $u = (u_0,u_1,\cdots,u_{N-1})^{T}$ be an $N$-dimensional (possibly unnormalized) column vector. 
We use $u^*$ to denote its conjugate transpose, $\|u\|$ to denote its vector 2-norm, and $\|u\|_1$ to denote its vector 1-norm. 
The notation $\ket{u}$ represents a (pure) quantum state that is the normalized vector under 2-norm
\begin{equation}
    \ket{u} = u/\|u\|. 
\end{equation}
For quantum states, we follow the standard bra-ket notation where $\bra{u}$ denotes the conjugate transpose of $\ket{u}$ and $\braket{u|v}$ denotes the inner product between $\ket{u}$ and $\ket{v}$. 

For two quantum states $\ket{u}$ and $\ket{v}$, we write $\ket{u}_a\ket{v}_w$ as the multiple-register quantum state constructed by the tensor product of $\ket{u}$ and $\ket{v}$. 
Here the subscript ``a'' means ``ancilla'' and ``w'' means ``working register.''
We may use other letters (to be specified later) to represent the name of the corresponding register in the subscripts. 
The notation $\ket{\perp}$ generally represents the parts in a quantum state that are orthogonal to those of interest. 

Let $A$ be an $N$-by-$N$ matrix. 
We use $A^{\dagger}$ to denote its conjugate transpose, \emph{i.e.}, the adjoint operator of $A$. 
The norm $\|A\|$ without subscript represents the spectral norm, \emph{i.e.}, the matrix 2-norm 
\begin{equation}
    \|A\| = \sup_{u\neq 0} \|Au\|/\|u\|. 
\end{equation}
The norm $\|A\|_1$ with subscript $1$ is the Schatten 1-norm
\begin{equation}
    \|A\|_1 = \text{Tr}\left(\sqrt{A^{\dagger}A}\right). 
\end{equation}
The trace distance between two matrices $A$ and $B$ is defined to be $\frac{1}{2}\|A-B\|_1$, and the trace distance between two pure states $\ket{u}$ and $\ket{v}$ is the trace distance between the corresponding density matrices $\ket{u}\bra{u}$ and $\ket{v}\bra{v}$. 

Let two functions $f,g\colon \mathbb{R}_{>0} \to \mathbb{R}_{>0}$ represent some complexity scalings. 
We write $f = \mathcal{O}(g)$ if there exists a constant $C > 0$, independent of the arguments of $f$ and $g$, such that $f(x) \leq Cg(x)$ for all sufficiently large $x$. We write $f = \Omega(g)$ if $g = \mathcal{O}(f)$. We write $f = \Theta(g)$ if $f = \mathcal{O}(g)$ and $f = \Omega(g)$. We write
$f = \widetilde{\mathcal{O}}(g)$ if $f = \mathcal{O}(g\text{~poly}\log(g))$. 
We write $f = o(g)$ if $f/g \rightarrow 0$ as $x \rightarrow \infty$. 

\subsection{Existing generic algorithms for solving ODEs}

The best generic algorithm for solving ODEs with time-independent $A$ and $b$ is~\cite{Krovi2022} based on Taylor's expansion, and that for ODE with time-dependent $b(t)$ is the quantum spectral method~\cite{ChildsLiu2020}\footnote{The quantum spectral method can also deal with the time-dependent coefficient matrix $A(t)$, but in our work, we constrain ourselves with time-independent $A$.}. 
We summarize the main results of these two algorithms for comparison with our lower bounds and fast-forwarding results. 

\begin{lem}[\cite{Krovi2022}]\label{lem:DEsolver_TI}
    Consider solving~\cref{eqn:ODE_general} with time-independent $b$. 
    Suppose $A$ is an $N$-by-$N$ $s$-sparse matrix with sparse input oracles. 
    For $u(0)$ and $b$ we assume that their norms are known and their preparation oracles are given. 
    Then there exists a quantum algorithm which produces an $\epsilon$-approximation of $u(T)/\|u(T)\|$ in the $2$-norm sense, succeeding with probability $\Omega(1)$ with a flag indicating success, with query complexity 
    \begin{equation}
        \widetilde{\mathcal{O}}\left( g T \|A\| C(A) \text{poly}\left(s,\log(N), \log\left(1+{Te^2\|b\|}/{\|u(T)\|}\right), \log(1/\epsilon)\right) \right), 
    \end{equation}
    where 
    \begin{equation}
        g = \frac{\sup_{t\in[0,T]}\|u(t)\| }{\|u(T)\|}, \quad C(A) = \sup_{t\in[0,T]} \|e^{At}\|. 
    \end{equation}
\end{lem}

\begin{lem}[\cite{ChildsLiu2020}]\label{lem:DEsolver_TD}
    Consider solving~\cref{eqn:ODE_general} with time-dependent $b(t)$. 
    Suppose $A$ is an $N$-by-$N$ $s$-sparse matrix with sparse input oracles, and can be diagonalized as $A = V\Lambda V^{-1}$, where $\Lambda = \text{diag}(\lambda_0,\cdots,\lambda_{N-1})$ with $\text{Re}(\lambda_j) \leq 0$ for each $j$. 
    For $u(0)$ and $b(t)$ we assume that their norms are known and their preparation oracles are given. 
    Then there exists a quantum algorithm which produces an $\epsilon$-approximation of $u(T)/\|u(T)\|$ in the $2$-norm sense, succeeding with probability $\Omega(1)$ with a flag indicating success, with query complexity 
    \begin{equation}
        \widetilde{\mathcal{O}}\left( g T \|A\| s \kappa_V \text{~poly}\log\left(\frac{ \max_{t\in[0,T],k \geq 1}\|u^{(k)}(t)\| }{\epsilon \|u(T)\|}\right) \right), 
    \end{equation}
    where 
    \begin{equation}
        g = \frac{\sup_{t\in[0,T]}\|u(t)\| }{\|u(T)\|}, \quad \kappa_V =\|V\|\|V^{-1}\|.  
    \end{equation}
\end{lem}

In most existing algorithms based on solving a linear system of equations, the scaling in the number of queries to $A$ and to the preparation oracle of $u(0)$ are asymptotically the same. However, a recent work~\cite{FangLinTong2022} proposes a time-marching strategy for solving homogeneous ODE where the number of queries to the preparation oracle is asymptotically smaller. We summarize the result as follows.\footnote{Similar as the quantum spectral method, the time marching method can also deal with the time-dependent coefficient matrix $A(t)$, but in our work we constrain ourselves with time-independent $A$. }
\begin{lem}[\cite{FangLinTong2022}]\label{lem:DEsolver_TM}
    Consider solving the homogeneous case of~\cref{eqn:ODE_general} (\emph{i.e.}, $b = 0$). 
    Suppose $\|u(0)\| = 1$, and we are given an $(\alpha,n_A,0)$-block-encoding of the matrix $A$ and the preparation oracle of $u(0)$. 
    Then there exists a quantum algorithm which produces an $\epsilon$-approximation of $u(T)/\|u(T)\|$ in the $2$-norm sense, succeeding with probability $\Omega(1)$ with a flag indicating success, using 
    \begin{equation}
        \widetilde{\mathcal{O}}\left( \alpha^2 T^2 Q \log(1/\epsilon) \right)
    \end{equation}
    queries to the block-encoding of $A$, and 
    \begin{equation}
        \mathcal{O}\left( Q \right)
    \end{equation}
    queries to the preparation oracle of $u(0)$. 
    Here 
    \begin{equation}
        Q = \frac{\|e^{AT}\|}{\|u(T)\|}. 
    \end{equation}
\end{lem}

\section{Limitations}\label{sec:lower_bounds}

In this section, we consider ODEs with time-independent coefficient matrix $A$ and inhomogeneous term $b$
\begin{equation}\label{eqn:ODE}
\begin{split}
     \frac{d}{dt}u(t) &= Au(t) + b, \quad t \in [0,T], \\
     u(0) &= u_{\text{in}}. 
\end{split}
\end{equation}
We discuss the limitations of generic quantum algorithms for solving ODEs by proving worst-case lower bounds. 
Motivated by the quantum state discrimination problem, we first discuss and prove a lower bound for quantum algorithms as amplifiers given access to the preparation oracle and its inverse. 
Then we introduce the idea of the lower bounds for quantum ODE solvers using the lower bound of amplifiers and prove them in the homogeneous case. 
After that, we discuss the quantum shifting equivalence for homogeneous ODEs and explain the implications of our lower bounds. 
Finally, we present a generalization to the inhomogeneous case, implying that existing quantum algorithms cannot be significantly improved. 

\subsection{Quantum state discrimination}

\begin{figure}[ht]
    \centerline{
    \Qcircuit @R=1em @C=1em {
    \text{Ancilla} \ket{0} \quad\quad\quad\quad\quad\quad & \multigate{4}{U_1} & \qw & \multigate{4}{U_2} & \ctrl{4} & \multigate{4}{U_3} & \qw & \qw & & & \multigate{4}{U_q} & \gate{O^{\pm 1}} & \multigate{4}{U_{q+1}} & \meter & \\
    \text{Ancilla} \ket{0} \quad\quad\quad\quad\quad\quad & \ghost{U_1} & \qw & \ghost{U_2} & \qw & \ghost{U_3} & \gate{O^{\pm 1}} & \qw & & & \ghost{U_q} & \qw & \ghost{U_{q+1}} & \meter & \\ 
    \text{Ancilla} \ket{0} \quad\quad\quad\quad\quad\quad & \ghost{U_1} & \ctrl{1} & \ghost{U_2} & \qw & \ghost{U_3} & \qw & \qw & \cdots & & \ghost{U_q} & \ctrl{-2} & \ghost{U_{q+1}} & \meter & \\
    \text{System} \ket{0} \quad\quad\quad\quad\quad\quad & \ghost{U_1} & \gate{O^{\pm 1}} & \ghost{U_2} & \qw & \ghost{U_3} & \ctrl{-2} & \qw & & & \ghost{U_q} & \qw & \ghost{U_{q+1}} & \qw & \\
    \text{System} \ket{0} \quad\quad\quad\quad\quad\quad & \ghost{U_1} & \qw & \ghost{U_2} & \gate{O^{\pm 1}} & \ghost{U_3} & \qw & \qw & & & \ghost{U_q} & \qw & \ghost{U_{q+1}} & \qw & \\
    }
    }
    \caption{ An illustration of a quantum query algorithm. Here $O$ represents the given preparation oracle for the state to be determined, and $U_j$ are unitary operators that are independent of the choice of the state. }
    \label{fig:circuit_amplifier}
\end{figure}
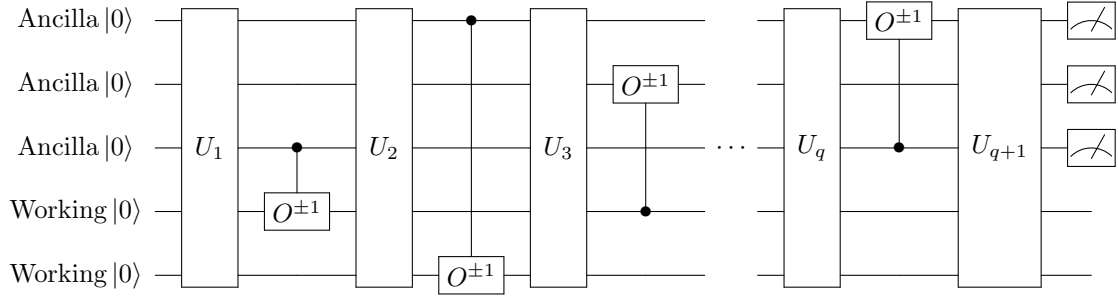

\begin{figure}
\centering
{\tikzset{%
  >={Latex[width=2mm,length=2mm]},
            base/.style = {rectangle, rounded corners, draw=black,
                          minimum width=4cm, minimum height=2.472cm,
                          text centered, font=\sffamily},
       result/.style = {triangle, fill=green!30},
       new/.style = {base, fill=blue!15},
         old/.style = {base, minimum width=2.5cm, fill=yellow!30},
}
\begin{tikzpicture}[>=latex]
\node (amplifier) [new] {Amplifier};
\coordinate (a) at (2,0.5);
\coordinate (b) at (3.5,0.5);

\coordinate (c) at (2,-0.5);
\coordinate (d) at (3.5,-0.5);

\coordinate (e) at (-2,0.5);
\coordinate (f) at (-3.5,0.5);

\coordinate (g) at (-2,-0.5);
\coordinate (h) at (-3.5,-0.5);

\draw [->,thick] (a) -- (b);
\draw [->,thick] (c) -- (d);
\draw [<-,thick] (e) -- (f);
\draw [<-,thick] (g) -- (h);

\node[] at (4.25,0.5) {$\ket{\psi}_{\text{out}}$};
\node[] at (4.25,-0.5) {$\ket{\phi}_{\text{out}}$};
\node[align=center] at (6.5,0)
{Large difference, \\ easy to distinguish};

\node[] at (-4.25,0.5) {$\ket{\psi}$};
\node[] at (-4.25,-0.5) {$\ket{\phi}$};
\node[align=center] at (-6.5,0)
{Small difference, \\ hard to distinguish};
\end{tikzpicture}}
\caption{An illustration of an amplifier. Amplifiers can be used to amplify the difference between two input states and make them easier to distinguish.}
    \label{fig:Lower_bound_amplifier}
\end{figure}
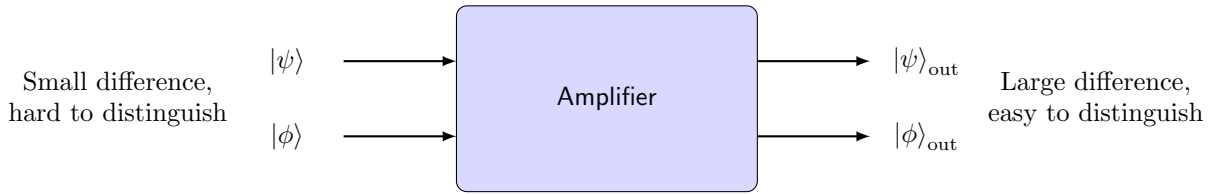

The goal of quantum state discrimination is to distinguish two quantum states $\ket{\psi}$ and $\ket{\phi}$ with large overlap. 
Standard results from the information-theoretic perspective show that, in order to distinguish two states with a constant average probability of error, the copies of required states scale at least linearly in terms of the infidelity between two states~\cite{Helstrom1969,Montanaro2008}. 

Here we study a different scenario to perform state discrimination with stronger assumptions. 
Let $O_{\psi}$ and $O_{\phi}$ be two oracles such that $O_{\psi} \ket{0} = \ket{\psi}$ and $O_{\phi} \ket{0} = \ket{\phi}$. 
Suppose that we are consistently given one of the two oracles, and we aim to determine which one is given. 
Then a general algorithm with query access can be illustrated as in~\cref{fig:circuit_amplifier} (see, e.g.,~\cite{BuhrmanDeWolf2002,HoyerSpalek2005} for a detailed review of the query model). 
Here a generic circuit is decomposed into layers and segments based on whether the operation depends on the oracles to be determined or not. 
Specifically, in~\cref{fig:circuit_amplifier}, $U_1, \cdots, U_{q+1}$ are unitaries which are independent of the choice of $\ket{\psi}$ and $\ket{\phi}$, and can be a composition of several unitary evolutions and gates. 
$O$ is the given oracle, either $O_{\psi}$ or $O_{\phi}$, and its inverse and controlled versions are also accessible. 
The algorithm alternatively applies $U_j$ and $O$, its inverse or controlled versions. 
After all the unitary evolutions, some projection measurements are performed over the ancillas to get a desired output quantum state in the working registers. 
Here $O$ can be applied to or controlled by any register, $U_j$ are arbitrary unitaries independent of the states to be determined, and the measurements are postponed without loss of generality thanks to the deferred measurement principle~\cite{NielsenChuang2000}. 
Therefore~\cref{fig:circuit_amplifier} represents a generic quantum algorithm using the oracle $O$ to output another quantum state. 

Let $\rho_u$ be the output state with $O_u$. 
For two quantum states $\ket{\psi}$ and $\ket{\phi}$ such that $|\braket{\psi|\phi}| \geq 1-\epsilon$, we call a quantum algorithm an \emph{amplifier} if the outputs $\rho_{\psi}$ and $\rho_{\phi}$ can be obtained with constant probability and have $\Omega(1)$ trace distance. 
We claim that the query complexity of an amplifier must have some lower bound, since the quantum state discrimination can be solved with the help of an amplifier as illustrated in~\cref{fig:Lower_bound_amplifier}. 
Specifically, with the given oracle $O$, we perform the amplifier to get the output state $\rho$ and perform the standard state discrimination technique over the output states. 
This is an alternative approach to distinguish $\ket{\psi}$ and $\ket{\phi}$, and therefore should be more expensive if the overlap between $\ket{\psi}$ and $\ket{\phi}$ is larger.  
Meanwhile, the cost of determining the output states $\rho_{\psi}$ and $\rho_{\phi}$ takes $O(1)$ time since they have at least $\Omega(1)$ trace distance. 
This implies that the complexity of the amplifier must be high. 

The quantitative lower bound on the query complexity of an amplifier is given as follows. 
The proof is inspired by~\cite{SommaSubasi2021} and also closely related to the lower bound of the unstructured search problem~\cite{BennettBernsteinBrassardEtAl1997}. In particular, our proof follows the idea for proving the Grover lower bound by comparing the actual algorithm with a ``fake'' algorithm that takes the same circuit structure but slightly changed oracle. However, our result cannot be directly derived from the Grover's lower bound due to the difference in the oracles. Grover's lower bound is in terms of the oracle that identifies the state to be amplified (in our case, the difference between $\ket{\phi}$ and $\ket{\psi}$), but here we would like a lower bound on the number of queries to the state preparation oracles. It is not clear how they can be directly related, so we will present a complete and self-contained proof. 
We also remark that, although our lower bound for amplifiers is related to and can be intuitively understood via the quantum state discrimination problem, the following theorem is standalone and does not use any results from the quantum state discrimination problem. 
\begin{thm}\label{thm:state_discrimination}
    Let $\ket{\psi}$ and $\ket{\phi}$ be two quantum states such that $\braket{\psi|\phi}$ is real and $\braket{\psi|\phi} \geq 1-\epsilon$. 
    Suppose that we are given an oracle $O$ and promised that it is consistently either $O_{\psi}$ or $O_{\phi}$, such that $O_{\psi}\ket{0} = \ket{\psi}$ and $O_{\phi}\ket{0} = \ket{\phi}$. 
    Let an amplifier be a quantum algorithm in~\cref{fig:circuit_amplifier} that outputs either $\rho_{\psi}$ or $\rho_{\phi}$ with $\Omega(1)$ success probability such that $\|\rho_{\psi} - \rho_{\phi}\|_{1} = \Omega(1)$. 
    Then the amplifier must use $\Omega(1/\sqrt{\epsilon})$ queries to the given oracle, its inverse or controlled versions in the worst case. 
\end{thm}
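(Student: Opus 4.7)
The plan is to combine a carefully chosen pair of hard-instance oracles with a standard hybrid argument. If we can exhibit $O_\psi$ and $O_\phi$ satisfying the promises with $\|O_\psi - O_\phi\| = O(\sqrt{\epsilon})$, then after $q$ queries the joint state of the algorithm in~\cref{fig:circuit_amplifier} can differ by only $O(q\sqrt{\epsilon})$ in $2$-norm under the two scenarios, and hence the trace distance of the conditional output states is also $O(q\sqrt{\epsilon})$. The amplifier hypothesis $\|\rho_\psi - \rho_\phi\|_1 = \Omega(1)$ then forces $q = \Omega(1/\sqrt{\epsilon})$.

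To construct the hard pair, fix any unitary $O_\psi$ with $O_\psi\ket{0} = \ket{\psi}$ and expand $\ket{\phi} = c\ket{\psi} + s\ket{\psi^\perp}$, where $c = \braket{\psi|\phi} \in [1-\epsilon,1]$ and $s = \sqrt{1-c^2}$. Define $O_\phi = O_\psi R$, where $R$ acts as the real rotation by angle $\theta = \arccos c$ on the $2$-dimensional subspace $\mathrm{span}\{\ket{0}, O_\psi^\dagger \ket{\psi^\perp}\}$ and as the identity on its orthogonal complement; this guarantees $O_\phi\ket{0} = \ket{\phi}$. A direct calculation gives $\|I - R\| = 2\sin(\theta/2) = \sqrt{2(1-c)} \leq \sqrt{2\epsilon}$, so $\|O_\psi - O_\phi\| = \|O_\psi(I-R)\| \leq \sqrt{2\epsilon}$. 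The identity $O_\psi^{-1} - O_\phi^{-1} = O_\psi^{-1}(O_\phi - O_\psi)O_\phi^{-1}$ together with unitary invariance of the operator norm propagates the same bound to the inverses, and the controlled versions satisfy $c\text{-}O_\psi - c\text{-}O_\phi = \ket{1}\bra{1}\otimes (O_\psi - O_\phi)$, with the same norm.

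Next comes the hybrid argument. Let $\ket{\Psi_q^\psi}$ and $\ket{\Psi_q^\phi}$ denote the joint pure states of the algorithm in~\cref{fig:circuit_amplifier} after $q$ queries under the two oracle assignments, deferring all measurements to the end. Since the $U_j$ are common to both executions and unitaries preserve $2$-norm, swapping oracle invocations one at a time and telescoping yields
\begin{equation}
    \| \ket{\Psi_q^\psi} - \ket{\Psi_q^\phi} \| \leq q\sqrt{2\epsilon},
\end{equation}
so~\cref{lem:trace_distance} gives $\|\ket{\Psi_q^\psi}\bra{\Psi_q^\psi} - \ket{\Psi_q^\phi}\bra{\Psi_q^\phi}\|_1 \leq 2q\sqrt{2\epsilon}$.

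The final step is to convert this into a bound on the conditional output states. Let $\Pi_{\mathrm{succ}}$ project onto the success flag and let $p_u = \|\Pi_{\mathrm{succ}}\ket{\Psi_q^u}\|^2$ for $u\in\{\psi,\phi\}$, so that the unnormalized output $p_u\rho_u = \mathrm{tr}_{\mathrm{anc}}(\Pi_{\mathrm{succ}}\ket{\Psi_q^u}\bra{\Psi_q^u}\Pi_{\mathrm{succ}})$ is obtained from the joint state by a CPTP map. Contractivity of the trace norm yields both $\|p_\psi \rho_\psi - p_\phi \rho_\phi\|_1 \leq 2q\sqrt{2\epsilon}$ and $|p_\psi - p_\phi| \leq 2q\sqrt{2\epsilon}$. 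Using the identity $\rho_\psi - \rho_\phi = p_\psi^{-1}\bigl[(p_\psi\rho_\psi - p_\phi\rho_\phi) - (p_\psi - p_\phi)\rho_\phi\bigr]$ together with the constant-probability promise $p_\psi = \Omega(1)$ then gives $\|\rho_\psi - \rho_\phi\|_1 = O(q\sqrt{\epsilon})$, which combined with $\|\rho_\psi - \rho_\phi\|_1 = \Omega(1)$ yields $q = \Omega(1/\sqrt{\epsilon})$. The most delicate step is this last normalization bookkeeping; the hard-pair construction and the telescoping hybrid are essentially routine.
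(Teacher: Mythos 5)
Your proposal is correct and follows essentially the same approach as the paper: construct a worst-case oracle pair with $\|O_\psi - O_\phi\| \leq \sqrt{2\epsilon}$ (the paper cites an explicit rotation construction from Somma--Suba\c{s}{\i}, yours is morally identical), run a telescoping hybrid argument to bound the trace distance of the pre-measurement states by $O(q\sqrt{\epsilon})$, and then divide through by the $\Omega(1)$ success probability. Your final normalization step, via the identity $\rho_\psi - \rho_\phi = p_\psi^{-1}\bigl[(p_\psi\rho_\psi - p_\phi\rho_\phi) - (p_\psi - p_\phi)\rho_\phi\bigr]$, is somewhat more explicit than the paper's one-line appeal to the constant success probability, but this is a presentational refinement rather than a different proof.
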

\begin{proof}
    Let the quantum circuit before measurement be $\mathcal{U}_{u} = U_{q+1} V_{q,u} U_q \cdots V_{2,u} U_2 V_{1,u} U_1$, where $U_j$'s are unitaries independent of the state $u$, and $V_{j,u}$ denotes a single application of $O_u$, its inverse or controlled version. 
    Then 
    \begin{equation}
        \begin{split}
            & \quad \left\| \mathcal{U}_{\psi}\ket{0}\bra{0}\mathcal{U}_{\psi}^{\dagger} - \mathcal{U}_{\phi}\ket{0}\bra{0}\mathcal{U}_{\phi}^{\dagger}\right\|_{1} \\
            & \leq 2\left\| \mathcal{U}_{\psi}\ket{0} - \mathcal{U}_{\phi}\ket{0} \right\| \\
            & \leq 2\left\| \mathcal{U}_{\psi} - \mathcal{U}_{\phi} \right\| \\
            & = 2 \left\| U_{q+1} V_{q,\psi} U_q \cdots V_{2,\psi} U_2 V_{1,\psi} U_1 - U_{q+1} V_{q,\phi} U_q \cdots V_{2,\phi} U_2 V_{1,\phi} U_1 \right\|  \\
            & = 2\left\| \prod_{j=1}^{q} (V_{j,\psi}U_j) - \prod_{j=1}^{q} (V_{j,\phi}U_j)\right\| \\
            & \leq 2 \sum_{j=1}^{q} \left\|V_{j,\psi}U_j - V_{j,\phi}U_j\right\| \\
            & = 2 \sum_{j=1}^{q} \left\|V_{j,\psi} - V_{j,\phi}\right\|. 
        \end{split}
    \end{equation}
    We bound $\left\|V_{j,\psi} - V_{j,\phi}\right\|$ by first analyzing $\|O_{\psi}-O_{\phi}\|$ in the worst case. 
    Here the worst case means that, besides preparing the desired state from $\ket{0}$, the preparation oracle $O$ is required to satisfy further assumptions on the orthogonal subspace. 
    Specifically, for two quantum states $\ket{\psi}$ and $\ket{\phi}$, there exist two preparation oracles $O_{\psi}$ and $O_{\phi}$ such that 
    \begin{equation}
        \|O_{\psi} - O_{\phi}\| = \|\ket{\psi} - \ket{\phi}\|. 
    \end{equation}
    For example, an explicit construction, given in~\cite[Eqs. (A26-A29)]{SommaSubasi2021}, is that $O_{\phi}$ is a rotation of $O_{\psi}$ in the two-dimensional subspace orthogonal to $\ket{\phi}$ and $\ket{\psi}$: 
    \begin{equation}
    \begin{split}
        O_{\phi} &= e^{i\theta M} O_{\psi}, \\
        \theta &= \arccos(\braket{\phi|\psi}), \\
        M &= i \ket{\psi}\bra{\psi^{\perp}} - i  \ket{\psi^{\perp}}\bra{\psi}, \\
        \ket{\psi^{\perp}} &= (I - \ket{\psi}\bra{\psi})\ket{\phi} / \|(I - \ket{\psi}\bra{\psi})\ket{\phi}\|. \\
    \end{split}
    \end{equation}
    Then, noting that $\braket{\psi|\phi}$ is real, 
    \begin{equation}
        \|O_{\psi} - O_{\phi}\| = \|\ket{\psi} - \ket{\phi}\| = \sqrt{2(1-\braket{\psi|\phi})} \leq \sqrt{2\epsilon}, 
    \end{equation}
    and 
    \begin{equation}
        \|O_{\psi}^{-1} - O_{\phi}^{-1}\| = \|O_{\phi}(O_{\psi}^{-1} - O_{\phi}^{-1})O_{\psi}\| =  \|O_{\psi} - O_{\phi}\| \leq \sqrt{2\epsilon}.
    \end{equation}
    The controlled versions of $O$ and $O^{-1}$ satisfy the same bound. 
    Therefore, in the worst case, we have 
    \begin{equation}
        \left\|V_{j,\psi} - V_{j,\phi}\right\| \leq \sqrt{2\epsilon}, 
    \end{equation}
    and thus 
    \begin{equation}
        \left\| U_{\psi}\ket{0}\bra{0}U_{\psi}^{\dagger} - U_{\phi}\ket{0}\bra{0}U_{\phi}^{\dagger}\right\|_{1} \leq 2q \sqrt{2\epsilon}. 
    \end{equation}
    Measuring $U_{\psi}\ket{0}$ or $U_{\phi}\ket{0}$ and succeeding give the states $\rho_{\psi}$ or $\rho_{\phi}$. 
    Since the success probability is $\Omega(1)$, the normalization factor due to the measurement is $\Omega(1)$, and we have 
    \begin{equation}
        \| \rho_{\psi} - \rho_{\phi} \|_{1} \leq \mathcal{O}(q\sqrt{\epsilon}). 
    \end{equation}
    Together with $\| \rho_{\psi} - \rho_{\phi} \|_{1} = \Omega(1)$ according to the definition of an amplifier, we must have 
    \begin{equation}
        q = \Omega (1/\sqrt{\epsilon}). 
    \end{equation}
\end{proof}

We remark that standard lower bounds for quantum state discrimination problem typically only assume access to multiple copies of the state, and show that the number of the copies required is $\Omega(1/\epsilon)$~\cite{Helstrom1969,Montanaro2008,LiuKoldenKroviEtAl2021}. 
Our result in~\cref{thm:state_discrimination} assumes stronger query access, \emph{i.e.}, the preparation oracle, its inverse and controlled versions, and the query lower bound becomes quadratically worse. 
Another remark is that, though in this paper we focus on the problem of solving ODEs,~\cref{thm:state_discrimination} holds for general quantum algorithms that can be viewed as amplifiers for at least one pair of quantum states and thus can potentially be applied to study other problems with certain ``non-quantumness''. 
For example,~\cite{LiuKoldenKroviEtAl2021} applies a similar idea to prove a lower bound for solving generic non-linear ODEs in terms of non-linearity. 
In~\cref{app:other_lower_bounds}, we show how to use~\cref{thm:state_discrimination} to recover existing lower bounds for solving linear systems of equations. 

\subsection{Proving lower bounds on generic quantum ODE solvers}

The central idea of our lower bound proofs is to regard an ODE solver as an amplifier. 
Non-quantum ODE with specific features may amplify the difference between two quantum states through the time evolution due to its non-unitary evolution operator $e^{AT}$, and thus satisfies the definition of an amplifier and has a worst-case query lower bound according to~\cref{thm:state_discrimination}. 

More specifically, consider the system in~\cref{eqn:ODE}. 
The solution can be explicitly given as 
\begin{equation}\label{eqn:ODE_solu}
    u(T) = e^{AT} u(0) + \int_0^T e^{A(T-s)} b ds. 
\end{equation}
Let $v(T)$ denote the solution to~\cref{eqn:ODE} with a differential initial condition $v(0)$, then 
\begin{equation}\label{eqn:ODE_solu_difference}
    u(T) - v(T) = e^{AT} (u(0)-v(0)). 
\end{equation}
\cref{eqn:ODE_solu_difference} indicates that the difference between $u(T)$ and $v(T)$ may be amplified through the operator $e^{AT}$. 
For example, let us assume that the spectral norm of $e^{AT}$ is large and $w = \text{argmax} \|e^{AT} x\|/\|x\|$. 
If we choose $u(0) - v(0)$ to be proportional to $w$, then $\|u(T)-v(T)\| = \|e^{AT}\|\|u(0)-v(0)\|$, which indicates that a quantum ODE solver may serve as an amplifier of $\ket{u(0)}$ and $\ket{v(0)}$ and the cost has a lower bound according to ~\cref{thm:state_discrimination}.

We remark that even if $\|e^{AT}\|$ is large, \cref{eqn:ODE_solu_difference} does not necessarily imply that $\ket{u(T)}$ and $\ket{v(T)}$ have large distance. 
One reason is that the difference between $u(T)$ and $v(T)$ still depends on the overlap between $u(0)-v(0)$ and the ``worst'' vector that $e^{AT}$ can amplify. 
Another important reason is that a large difference between $u(T)$ and $v(T)$ does not necessarily imply a large difference between $\ket{u(T)}$ and $\ket{v(T)}$ due to possible normalization. 
Consider a trivial example with $A = \alpha I$ and $b = 0$ for some real number $\alpha > 0$. 
The solution of the ODE becomes $u(T) = e^{\alpha T} u(0)$ and thus $\ket{u(T)} = \ket{u(0)}$. 
As a result, $\|\ket{u(T)} - \ket{v(T)}\| = \|\ket{u(0)} - \ket{v(0)}\|$, and the distance between quantum states is preserved, although $\|e^{\alpha T}\|$ or even $\|e^{\alpha T}(u(0)-v(0))\|$ may be very large\footnote{This is also related to the shifting equivalence for homogeneous ODEs. See~\cref{sec:LB_implications}.}. 
Therefore more careful computations are needed to identify the intrinsic sources of any computational overhead and prove the lower bounds.

We will show technical statements and proofs of our lower bounds in the next subsection. We prove our results by constructing a specific pair of quantum states that can be amplified by the ODE and using~\cref{thm:state_discrimination}. We remark that the examples we will construct are the worst-case examples for generic quantum ODE algorithms, which suggests that certain resources are required if we use a generic quantum ODE algorithm to prepare those states. This does not mean that they are hard for any quantum algorithm other than generic ODE solvers. Indeed, as our construction exhibits specific structures, efficient tailored quantum algorithms can be used to achieve fast-forwarding.

\subsection{Technical results}

Here we focus on homogeneous ODEs, \emph{i.e.}, $b = 0$ in~\cref{eqn:ODE}. 
We remark that all of the results can be generalized to the inhomogeneous case where $b \neq 0$, as stated in details in~\cref{app:proofs_LB_inhomo}. 
Typical examples of homogeneous ODEs include quantum dynamics, where $A$ is an anti-Hermitian matrix, and imaginary time evolution, where $A$ is a Hermitian matrix. 
For technical simplicity, we consider a diagonalizable matrix $A$, \emph{i.e.}, there exist an invertible matrix $V$ and a diagonal matrix $D = \text{diag}\left(\lambda_1,\cdots,\lambda_N\right)$, such that $A = VDV^{-1}$.  
Note that here we do not assume $V$ to be unitary. 
We will first show that, once there exists a pair of eigenvalues with different real parts, then the scaling of generic quantum algorithms for this differential equation must be at least exponential in evolution time and real part gap. 
Our second result is that generic quantum ODE solvers will become less efficient if the coefficient matrix $A$ is more ``non-normal''. 
These two hardness results together imply that ``quantum algorithms for quantum dynamics'' is the most efficient. 
Finally, we apply our results to quantum algorithms for imaginary time evolution and show that the cost scales at least linearly in the decay of the solution norm.  

\subsubsection{Eigenvalues with different real parts}

\begin{prop}\label{prop:lb_eig_diff_homo}
    Consider the homogeneous ODE with a diagonalizable matrix $A = VDV^{-1}$ where $D = \text{diag}\left(\lambda_1,\cdots,\lambda_N\right)$ is a diagonal matrix and $V$ is an invertible matrix. 
    Suppose that $b = 0$. 
    Then there is no generic quantum algorithm which can prepare $u(T)/\|u(T)\|$ with bounded error and failure probability, using $o\left(e^{T\max_{i\neq j} |\text{Re}(\lambda_i) - \text{Re}(\lambda_j)|}\right)$ queries to the preparation oracle of $\ket{u(0)}$, its inverse and controlled versions. 
\end{prop}

\begin{proof}
    Let $\lambda_j = \alpha_j + i \beta_j$ where $\alpha_j$ and $\beta_j$ are the real part and the imaginary part of $\lambda_j$, respectively. 
    Without loss of generality, assume that 
    \begin{equation}
        \max_{i\neq j} |\text{Re}(\lambda_i-\lambda_j)| = \text{Re}(\lambda_1-\lambda_2) = \alpha_1-\alpha_2 > 0. 
    \end{equation}
    Let $V = (v_1,\cdots,v_N)$ where each $v_j$ is the eigenvector of $A$ corresponding to the eigenvalue $\lambda_j$ and are normalized such that $\|v_j\| = 1$. 
    Notice that $V$ is not necessarily unitary so $\braket{v_i|v_j}$'s are not necessarily $0$, but we always have 
    \begin{equation}
        0 \leq |\braket{v_i|v_j}| < 1. 
    \end{equation}
    
    We consider solving~\cref{eqn:ODE} with two possible initial conditions 
    \begin{equation}
        \begin{split}
            u(0) &= e^{i\theta} v_2,\\
            w(0) &= \sqrt{\epsilon} v_1 + \xi v_2. 
        \end{split}
    \end{equation}
    Here $\theta$ is chosen such that $\braket{u(0)|w(0)}$ is real, and $\xi$ is chosen such that $\|w(0)\| = 1$, \emph{i.e.}, 
    \begin{equation}\label{eqn:nor_cond_ini_eig_diff}
        1 = w(0)^* w(0) = \epsilon + |\xi|^2 + 2\text{Re}(\sqrt{\epsilon} \xi \braket{v_1|v_2}). 
    \end{equation}
    Let the evolution time be 
    \begin{equation}\label{eqn:choice_T_eig_diff}
        T = \frac{1}{2(\alpha_1-\alpha_2)} \log(1/\epsilon). 
    \end{equation}
    The solutions $u(T)$ and $w(T)$ can be solved as follows. 
    We start with the equation $V^{-1} V = I$, which implies that 
    \begin{equation}
        V^{-1} v_j = e_j, 
    \end{equation}
    where $e_j$ is the vector with the only non-zero entry to be $1$ at the $j$-th position. 
    As a result, 
    \begin{equation}
        e^{AT} v_j = V e^{DT} V^{-1} v_j = V e^{DT} e_j = e^{\lambda_j T} v_j.  
    \end{equation}
    Therefore, 
    \begin{equation}
        u(T) = e^{i\theta}e^{\lambda_2 T} v_2 = e^{i\theta} e^{\alpha_2 T}e^{i\beta_2 T} v_2,
    \end{equation}
    and 
    \begin{equation}
        w(T) = \sqrt{\epsilon} e^{\lambda_1 T} v_1 + \xi e^{\lambda_2 T} v_2 = \sqrt{\epsilon} e^{\alpha_1 T}e^{i\beta_1 T} v_1 + \xi e^{\alpha_2 T}e^{i\beta_2 T} v_2. 
    \end{equation}
    
    We now compute the fidelity of the input states and the output states. 
    For the input states, we have 
    \begin{equation}
    \begin{split}
        |\braket{u(0)|w(0)}| &= \sqrt{\braket{(\sqrt{\epsilon}v_1+\bar{\xi} v_2)|v_2}\braket{v_2|(\sqrt{\epsilon}v_1+\xi v_2)}} \\
        & = \sqrt{(\sqrt{\epsilon}\braket{v_1|v_2} + \bar{\xi})(\sqrt{\epsilon}\braket{v_2|v_1} + \xi)} \\
        & = \sqrt{\epsilon |\braket{v_1|v_2}|^2 + |\xi|^2 + 2\text{Re}(\sqrt{\epsilon} \xi \braket{v_1|v_2})}. 
    \end{split}
    \end{equation}
    Using~\cref{eqn:nor_cond_ini_eig_diff}, we have 
    \begin{equation}\label{eqn:ini_fed_eig_diff}
        \begin{split}
            |\braket{u(0)|w(0)}| &= \sqrt{ \epsilon |\braket{v_1|v_2}|^2 + 1 - \epsilon } \\
            & \geq \sqrt{1-\epsilon} \\
            & > 1-\epsilon. 
        \end{split}
    \end{equation}
    For the output states, we have 
    \begin{equation}
        \begin{split}
            |\braket{u(T)|w(T)}|^2 &= \frac{  w(T)^{*} u(T)  u(T) ^{* }w(T) }{\|u(T)\|^2 \|w(T)\|^2} \\
            &= \frac{ (\sqrt{\epsilon} e^{\alpha_1 T}e^{i\beta_1 T} v_1 + \xi e^{\alpha_2 T}e^{i\beta_2 T} v_2)^{*}v_2 v_2^{*}(\sqrt{\epsilon} e^{\alpha_1 T}e^{i\beta_1 T} v_1 + \xi e^{\alpha_2 T}e^{i\beta_2 T} v_2) }{\|\sqrt{\epsilon} e^{\alpha_1 T}e^{i\beta_1 T} v_1 + \xi e^{\alpha_2 T}e^{i\beta_2 T} v_2\|^2} \\
            & = \frac{ (\sqrt{\epsilon} e^{(\alpha_1-\alpha_2) T}e^{i(\beta_1-\beta_2) T} v_1 + \xi v_2)^{*}v_2 v_2^{*} (\sqrt{\epsilon} e^{(\alpha_1-\alpha_2) T}e^{i(\beta_1-\beta_2) T} v_1 + \xi v_2) }{\|\sqrt{\epsilon} e^{(\alpha_1-\alpha_2) T}e^{i(\beta_1-\beta_2) T} v_1 + \xi v_2\|^2}.
        \end{split}
    \end{equation}
    The choice of $T$ in~\cref{eqn:choice_T_eig_diff} ensures that $\sqrt{\epsilon} e^{(\alpha_1-\alpha_2)T} = 1$, and thus 
    \begin{equation}
        \begin{split}
            |\braket{u(T)|w(T)}|^2 &= \frac{ (e^{i(\beta_1-\beta_2) T} v_1 + \xi v_2)^{*} v_2 v_2^{*}(e^{i(\beta_1-\beta_2) T} v_1 + \xi v_2) }{\|e^{i(\beta_1-\beta_2) T} v_1 + \xi v_2\|^2} \\
            & = \frac{|\braket{v_1|v_2}|^2 + |\xi|^2 + 2\text{Re}(e^{-i(\beta_1-\beta_2)T}\xi \braket{v_1|v_2})}{1 + |\xi|^2 + 2\text{Re}(e^{-i(\beta_1-\beta_2)T}\xi \braket{v_1|v_2})} .
        \end{split}
    \end{equation}
    Notice that 
    \begin{equation}
        |2\text{Re}(e^{-i(\beta_1-\beta_2)T}\xi \braket{v_1|v_2})| \leq 2 |e^{-i(\beta_1-\beta_2)T}\xi \braket{v_1|v_2}| \leq |\xi|^2 + |\braket{v_1|v_2}|^2, 
    \end{equation}
    and that the function $f(x) = \frac{a+x}{b+x}$ with $b > a > 0$ is monotonically increasing for $x \geq -a$, then we can further bound the final fidelity as 
    \begin{equation}
        |\braket{u(T)|w(T)}|^2 \leq \frac{2|\braket{v_1|v_2}|^2 + 2|\xi|^2 }{1 + |\braket{v_1|v_2}|^2 + 2|\xi|^2 }. 
    \end{equation}
    Using~\cref{eqn:nor_cond_ini_eig_diff}, we can bound $|\xi|^2$ as 
    \begin{equation}
        \begin{split}
            |\xi|^2 & = 1 - \epsilon - 2\text{Re}(\sqrt{\epsilon}\xi \braket{v_1|v_2}) \\
            & \leq 1 + 2|\sqrt{\epsilon} \xi \braket{v_1|v_2}| \\
            & \leq 1 + 2|\xi|, 
        \end{split}
    \end{equation}
    which implies that 
    \begin{equation}\label{eqn:bound_xi}
        |\xi| \leq 1 + \sqrt{2}. 
    \end{equation}
    Therefore, 
    \begin{equation}\label{eqn:final_fed_eig_diff}
        |\braket{u(T)|w(T)}| \leq \sqrt{\frac{2|\braket{v_1|v_2}|^2 + 2(3+2\sqrt{2}) }{1 + |\braket{v_1|v_2}|^2 + 2(3+2\sqrt{2}) }} \eqqcolon C. 
    \end{equation}
    Here $C < 1$ and only depends on $V$, as $|\braket{v_1|v_2}|$ will always be strictly smaller than $1$ due to the non-singularity of $A$. We consider $0 < \epsilon < 1-C$ and thus can regard $C$ as a constant and will absorb it into the notation $\mathcal{O}$ and $\Omega$. 

    To establish the non-existence of an efficient generic quantum algorithm as claimed, we use proof by contradiction, assuming the opposite that there exists a generic quantum algorithm with cost $o(e^{T(\alpha_1-\alpha_2)})$. Specifically, 
    given an oracle to consistently prepare either $\ket{u(0)}$ or $\ket{w(0)}$, we denote $\ket{\widetilde{u}(T)}$ and $\ket{\widetilde{w}(T)}$ as the corresponding outputs of a quantum ODE solver with $2$-norm distance at most $(1-C)/4$ of $\ket{u(T)}$ and $\ket{w(T)}$, respectively. 
    $\ket{\widetilde{u}(T)}$ and $\ket{\widetilde{w}(T)}$ can be obtained using queries to the state preparation oracles for a number of $o(e^{T(\alpha_1-\alpha_2)}) = o(1/\sqrt{\epsilon})$. 
    According to~\cref{lem:trace_distance},~\cref{lem:fed_2norm},~\cref{eqn:ini_fed_eig_diff} and~\cref{eqn:final_fed_eig_diff}, we have 
    \begin{equation}\label{eqn:final_dis_eig_diff}
    \begin{split}
        \| \ket{\widetilde{u}(T)}\bra{\widetilde{u}(T)} - \ket{\widetilde{w}(T)}\bra{\widetilde{w}(T)} \|_1 &= 
        2 \sqrt{1 - |\braket{\widetilde{u}(T)|\widetilde{w}(T)}|^2 } \\
        & \geq 2 \sqrt{1 - ( C + (1-C)/4 + (1-C)/4 )^2} \\
        & = \sqrt{(3+C)(1-C)} = \Omega(1). 
    \end{split}
    \end{equation}
    Therefore, \cref{eqn:ini_fed_eig_diff} and~\cref{eqn:final_dis_eig_diff} imply that an amplifier of $\ket{u(0)}$ and $\ket{w(0)}$ can be implemented using $o(1/\sqrt{\epsilon})$ queries to the state preparation oracle. 
    This contradicts with~\cref{thm:state_discrimination}, and thus completes the proof. 
\end{proof}

We remark that~\cref{prop:lb_eig_diff_homo} only requires the input state to be worst-case. 
More specifically, for any given diagonalizable coefficient matrix $A$, we can always find an input state such that the complexity is bounded from below as claimed. 
Meanwhile, our lower bound is only in terms of the preparation oracle of $u(0)$, and no result is obtained in the input oracle of $A$. 
This is different from the counterparts in the Hamiltonian simulation problem (\emph{e.g.},~\cite{BerryChildsKothari2015}), which require both $A$ and $u(0)$ to be worst-case but can yield lower bounds in terms of the input oracles of both $A$ and $u(0)$. 

\subsubsection{Non-normal matrices}

It has been demonstrated that quantum algorithms can efficiently solve Hamiltonian simulation problems. 
With the linear overhead of solution decay, quantum devices can also efficiently solve imaginary time evolution problems. 
In both cases the coefficient matrix $A$ is \emph{normal}, \emph{i.e.}, $A^{\dagger}A = AA^{\dagger}$ or, equivalently, $A$ can be diagonalized by a unitary matrix.

In this section, we wish to understand the capability of quantum algorithms for solving ODEs with a coefficient matrix beyond a normal matrix. 
We will show that generic quantum ODE solvers will become less efficient if the coefficient matrix is more ``non-normal''. 
To establish an explicit hardness result, we need to quantify the non-normality of a matrix. 
In this work, we use the following function 
\begin{equation}\label{eqn:def_mu_A}
    \mu(A) = \|A^{\dagger}A-AA^{\dagger}\|^{1/2}. 
\end{equation}
This is a quite natural measure because of $\mu(A) = 0$ for any normal matrix according to the definition. 
We refer to~\cite{ElsnerPaardekooper1987} for alternative ways to measure the non-normality of a matrix.

We are now ready to prove the following result, which shows that generic quantum ODE solvers must take $\Omega(\mu(A))$ cost in the worst case. 
\begin{prop}\label{prop:lb_non_normal_homo}
    Consider the homogeneous ODE problem in~\cref{eqn:ODE} with $b = 0$. 
    Let $\mu(A) = \|A^{\dagger}A-AA^{\dagger}\|^{1/2}$. 
    Then, there is no generic quantum algorithm which can prepare $u(T)/\|u(T)\|$ with bounded error and failure probability, using $o\left(\mu(A)\right)$ queries to the preparation oracle of $\ket{u(0)}$, its inverse and controlled versions. 
\end{prop}

\begin{proof}
    Consider the example with $N = 3$, $u = (u_1,u_2,u_3)^T$, $b = 0$, and 
     \begin{equation}\label{eqn:proof_lb_non_normal_A}
         A = \left(\begin{array}{ccc}
             i & i/\delta & 0 \\
             0 & 2i & 0 \\
             0 & 0 & 3i 
         \end{array}\right). 
     \end{equation}
     Here $\delta$ is a real positive parameter in $(0,1)$. 
     Notice that the matrix $A$ can be diagonalized such that $A = VDV^{-1}$ where 
     \begin{equation}
         V = \left(\begin{array}{ccc}
             1 & 1 & 0 \\
             0 & \delta & 0 \\
             0 & 0 & 1 
         \end{array}\right), 
         \quad D = \left(\begin{array}{ccc}
             i & 0 & 0 \\
             0 & 2i & 0 \\
             0 & 0 & 3i 
         \end{array}\right),
     \end{equation}
     and that 
     \begin{equation}
         A^{\dagger}A - AA^{\dagger} = \left(\begin{array}{ccc}
             -1/\delta^2 & -1/\delta & 0 \\
             -1/\delta & 1/\delta^2 & 0 \\
             0 & 0 & 0 
         \end{array}\right),
     \end{equation}
     \begin{equation}\label{eqn:mu_non_normal_homo}
        \mu(A) = \frac{\sqrt[4]{1+\delta^2}}{\delta} = \Theta\left(\frac{1}{\delta}\right). 
     \end{equation}
     We choose two initial conditions 
     \begin{equation}
         \begin{split}
             u(0) &= (0,0,1)^{T}, \\
             w(0) &= (0,\delta,\sqrt{1-\delta^2})^{T}. 
         \end{split}
     \end{equation}
     According to~\cref{eqn:ODE_solu} and noting that $e^{A}$ can be computed as $Ve^{D}V^{-1}$, we obtain up to $T = 1$
     \begin{equation}
         \begin{split}
             u(1) &= (0,0,e^{3i})^{T}, \\
             w(1) &= (e^{2i}-e^{i}, e^{2i}\delta, e^{3i}\sqrt{1-\delta^2})^{T}. 
         \end{split}
     \end{equation}

     Now, suppose the opposite that there exists an efficient generic quantum algorithm that can solve the general ODE with cost $o(\mu(A))$. 
     Suppose that we are given an oracle to prepare either $\ket{u(0)}$ or $\ket{w(0)}$. 
     Let $\ket{\widetilde{u}(1)}$ and $\ket{\widetilde{w}(1)}$ be the corresponding outputs of the quantum ODE solver with $2$-norm distance at most $1/10$ from $\ket{u(1)}$ and $\ket{w(1)}$, respectively. 
     In particular, $\ket{\widetilde{u}(1)}$ and $\ket{\widetilde{w}(1)}$ can be prepared using  $o(\mu(A)) = o(1/\delta)$ queries to state preparation oracles.
     On the one hand, 
     \begin{equation}\label{eqn:overlap_non_normal_homo}
         \braket{u(0)|w(0)} = \sqrt{1-\delta^2} \geq 1-\delta^2. 
     \end{equation}
     On the other hand, 
     \begin{equation}
         \begin{split}
             |\braket{u(1)|w(1)}| & = \frac{|e^{-3i}e^{3i}\sqrt{1-\delta^2}|}{\|w(1)\|} \\
             & = \frac{\sqrt{1-\delta^2}}{\sqrt{|e^{2i}-e^i|^2 + \delta^2 + 1-\delta^2}} \\
             & \leq \frac{1}{\sqrt{|e^{2i}-e^i|^2 + 1}}. 
         \end{split}
     \end{equation}
     According to~\cref{lem:trace_distance} and~\cref{lem:fed_2norm}, we have 
     \begin{equation}
         \begin{split}
             \| \ket{\widetilde{u}(1)}\bra{\widetilde{u}(1)} - \ket{\widetilde{w}(1)}\bra{\widetilde{w}(1)}\|_1 
             & = 2\sqrt{1 - |\braket{\widetilde{u}(1)|\widetilde{w}(1)}|^2} \\
             & \geq 2 \sqrt{1 - \left( \frac{1}{\sqrt{|e^{2i}-e^i|^2 + 1}} + 1/10 + 1/10 \right)^2} \\
             & > 0.77 = \Omega(1). 
         \end{split}
     \end{equation}
     Therefore, \cref{eqn:overlap_non_normal_homo} and~\cref{eqn:mu_non_normal_homo} imply that an amplifier of $\ket{u(0)}$ and $\ket{w(0)}$ can be implemented using $o(1/\delta)$ queries to the state preparation oracle. 
     This contradicts with~\cref{thm:state_discrimination}, and thus completes the proof. 
\end{proof}

We remark that, although we use $\mu(A)$, defined in~\cref{eqn:def_mu_A}, as the measure of non-normality in~\cref{prop:lb_non_normal_homo} and throughout this work, there are a variety of measures of non-normality~\cite{ElsnerPaardekooper1987} and our lower bound can be straightforwardly reformulated as lower bounds in terms of other measures by using the inequalities among different measures established in~\cite{ElsnerPaardekooper1987}. 

Furthermore, to provide more intuition on the role of non-normality, we state another lower bound for ODEs with diagonalizable coefficient matrix $A$. 
We show that generic quantum algorithms for ODEs with diagonalizable matrices must take $\Omega(\kappa_V)$ cost in the worst case, where $A = VDV^{-1}$ is the eigen-decomposition of $A$ and $\kappa_V = \|V\|\|V^{-1}\|$ is the condition number of $V$. 
Note that $\kappa_V$ appears in the upper bound in~\cref{lem:DEsolver_TD} and also serves as a measure of non-normality. 
This is because $\kappa_V$ attains its minimum $1$ if and only if $V$ is a unitary matrix, and thus if and only if $A$ is a normal matrix. 

\begin{prop}\label{prop:lb_non_normal_homo_2}
    Consider the homogeneous ODE problem in~\cref{eqn:ODE} with diagonalizable $A$ and $b = 0$. 
    Let $A = VDV^{-1}$ be the eigen-decomposition of $A$ and $\kappa_V = \|V\|\|V^{-1}\|$ is the condition number of $V$. 
    Then, there is no generic quantum algorithm which can prepare $u(T)/\|u(T)\|$ with bounded error and failure probability, using $o\left(\kappa_V\right)$ queries to the preparation oracle of $\ket{u(0)}$, its inverse and controlled versions. 
\end{prop}

\begin{proof}
    This proposition can be proved following the same proof of~\cref{prop:lb_non_normal_homo}, noticing that the worst-case matrix $A$ constructed in~\cref{eqn:proof_lb_non_normal_A} satisfies $\kappa_V = \Theta(\mu(A)) = \Theta(1/\delta)$. 
\end{proof}

\subsubsection{Implications and shifting equivalence}\label{sec:LB_implications}

Besides identifying two types of ``non-quantumness'' of ODEs,  \cref{prop:lb_eig_diff_homo} and~\cref{prop:lb_non_normal_homo} together also imply that the class of ODEs that generic quantum algorithms can solve most efficiently is equivalent to quantum dynamics. 

To illustrate this, let us consider a linear homogeneous ODE $du/dt = Au$ where $A$ is a general matrix. 
According to~\cref{prop:lb_non_normal_homo}, there exist worst-case examples such that the non-normality of the matrix $A$ introduces extra computational cost, and thus a sufficient condition to avoid such overhead is to restrict the matrix $A$ to be normal, \emph{i.e.}, unitarily diagonalizable. 
According to~\cref{prop:lb_eig_diff_homo}, to avoid any possible exponential overhead in time and real part gap, all the eigenvalues of $A$ must have equal real parts. 
These two observations suggest that the structure of the matrix $A$ is 
\begin{equation}\label{eqn:nice_A_homo}
    A = U(aI + J)U^{\dagger},
\end{equation}
where $U$ is a unitary matrix, $a$ is a real number, $I$ is the identity matrix and $J$ is a diagonal matrix with purely imaginary diagonal elements. 

We claim that the ODE with coefficient matrix~\cref{eqn:nice_A_homo} is equivalent to quantum dynamics. 
Let 
\begin{equation}
    H = iUJU^{\dagger}. 
\end{equation}
Here $H$ is a Hermitian matrix because (noticing that $J^{\dagger} = -J$)
\begin{equation}
    H^{\dagger} = -i U J^{\dagger} U^{\dagger} = i UJU^{\dagger} = H. 
\end{equation}
Then the matrix $A$ becomes 
\begin{equation}
    A = U(aI+J)U^{\dagger} = aI - iH. 
\end{equation}
The solution of the ODE can be written as 
\begin{equation}
    u(t) = e^{At} u(0) = e^{(aI-iH) t} u(0) = e^{at}e^{-iHt}u(0), 
\end{equation}
and thus the normalized solution $\ket{u(t)} = e^{-iHt}\ket{u(0)}$ exactly solves the quantum dynamics with Hamiltonian $H$. 

The above calculations can be generally summarized as the shifting equivalence: if two matrices $A$ and $B$ satisfy $A = \alpha I + B$ for some real number $\alpha$, then the ODEs with coefficient matrices $A$ and $B$, respectively, are quantumly equivalent in the sense that their normalized solutions are the same. 
We remark that the quantities we use to describe ``non-quantumness'' in our lower bounds~\cref{prop:lb_eig_diff_homo} and~\cref{prop:lb_non_normal_homo}, namely the real part gap $\delta$ and the measure of non-normality $\mu(A)$, are both invariant under shifting by a real number. 

\subsubsection{Norm decay and imaginary time evolution}

Now we restrict ourselves to the imaginary time evolution
\begin{equation}\label{eqn:ODE_ITE}
    \frac{d}{dt} u(t) = -H u(t), \quad t \in [0,T], 
\end{equation}
where $H$ is a positive semi-definite Hermitian matrix. 
The solution of~\cref{eqn:ODE_ITE} is given as 
\begin{equation}
    u(T) = e^{-HT}u(0), 
\end{equation}
and the imaginary time evolution problem typically aims at preparing an approximation of the quantum state $e^{-HT}\ket{u(0)}/\xi$ where $\xi = \|e^{-HT}\ket{u(0)}\|$. 

Existing approaches for imaginary time evolution, including those based on quantum phase estimation (QPE)~\cite{poulin2009sampling} and LCU~\cite{vanApeldoorn2020quantum}, scale linearly in terms of $1/\xi$, which indicates that the quantum algorithms might be expensive if the unnormalized solution $u(T)$ decays fast during the evolution. 
Similar dependence on the solution decay is also observed in quantum solvers for more general inhomogeneous linear ODE, for example the linear-in-$g$ dependence shown in~\cref{lem:DEsolver_TD} and~\cref{lem:DEsolver_TI}. 
It is noted in~\cite{BerryChildsOstranderEtAl2017} that dramatically improved dependence on the norm decay (\emph{e.g.}, exponentially improved) is highly unlikely since such improvement implies $\text{BQP}=\text{PostBQP}$, which, because $\text{PostBQP} = \text{PP}$ by \cite{Aaronson2004}, then implies $\text{BQP} = \text{PP}$. The equality $\text{BQP} = \text{PP}$ is considered highly unlikely not least because \text{PP} easily contains \text{NP} but $\text{BQP}$ is not even believed to contain $\text{NP}$ \cite{bbbv97}. In comparison, we establish a ``stronger'' \emph{unconditional}  hardness result but in the ``weaker'' \emph{query} model, showing that the linear dependence on the norm decay is asymptotically tight and further ruling out any possible polynomial improvement in the next corollary, which is a direct consequence of~\cref{prop:lb_eig_diff_homo}. 

\begin{cor}
    Consider the imaginary time evolution problem~\cref{eqn:ODE_ITE} where $H$ is a positive semi-definite Hermitian matrix, and let $\xi = \|e^{-HT}\ket{u(0)}\|$. 
    Then, there is no generic quantum algorithm that can prepare $u(T)/\|u(T)\|$ with bounded error and failure probability, using $o(1/\xi)$ queries to the preparation oracle of $\ket{u(0)}$, its inverse or controlled versions. 
\end{cor}
\begin{proof}
    This is a direct consequence of~\cref{prop:lb_eig_diff_homo}, by considering a positive semi-definite Hermitian matrix $H$ with at least one eigenvalue equal to $0$ and noticing that 
    \begin{equation}
        e^{T\max_{i\neq j} |\text{Re}(\lambda_i-\lambda_j)|} = e^{\|H\|T} = \left\|e^{-HT} \ket{u(0)}\right\|^{-1} = 1/\xi,
    \end{equation}
    where $\ket{u(0)}$ is chosen to be an eigenstate of $H$ with eigenvalue $\|H\|$.
\end{proof}

\subsection{Lower bound in time and precision}

We now discuss lower bounds in time $T$ and tolerated error $\epsilon$ for solving ODE. 
It is known that the lower bound for Hamiltonian simulation is $\Omega(T + \log(1/\epsilon)/\log\log(1/\epsilon))$~\cite{BerryChildsKothari2015}. 
Here we establish a different lower bound for inhomogeneous ODE beyond quantum dynamics. 
The difference between our lower bound and the existing one for Hamiltonian simulation is that our result considers the coefficient matrix to have a negative logarithmic norm, while the coefficient matrix in Hamiltonian simulation is always anti-Hermitian (so the logarithmic norm is $0$). 
Here the logarithmic norm $l(A)$ of $A$ is defined as~\cite{Soderlind2006} 
\begin{equation}
    l(A) = \lim_{h \rightarrow 0+} \frac{\|I+hA\|-1}{h}.
\end{equation}

The lower bound can be established by linking a quantum ODE solver with solving a linear system of equations and using the lower bound for quantum linear system solvers. 
The key observation is that the solution of a dissipative system of differential equations exponentially converges to its equilibrium, which is the solution of a system of linear equations. 
We prove the following proposition, which implies a lower bound $\Omega(T^{\alpha})$ for sufficiently accurate simulation and a lower bound $\Omega(\log^{\alpha}(1/\epsilon))$ for sufficiently long time simulation.

\begin{prop}\label{prop:lb_ND_inhomo}
    Consider the linear ODE problem in~\cref{eqn:ODE}. 
    Let $l(A)$ denote the logarithmic norm of $A$. 
    Suppose that the worst-case quantum query lower bound for solving a linear system of equations $Ax = b$ with $l(A) < 0$ up to error $\epsilon$ is $\Omega(\log^{\alpha}(1/\epsilon))$, then any quantum algorithm approximating $u(T)/\|u(T)\|$ up to error $\epsilon$ must have worst-case query complexity $\Omega(\min\left\{ \Omega\left(\log^{\alpha}(1/\epsilon)\right), \Omega\left(T^{\alpha}\right) \right\})$ to the same oracles. 
\end{prop}

\begin{proof}
    The definition of the logarithmic norm implies that it bounds the decay rate of the exponential operator in the sense that for any $T > 0$,~\cite{Soderlind2006}
    \begin{equation}\label{eqn:bdd_log_norm_exp}
        \|e^{AT}\| \leq e^{l(A) T}. 
    \end{equation}
    For now, we do not impose further structure on the coefficient matrix $A$. 
$b$ and $u(0)$ can also be arbitrary vectors with no constraint other than that $\|b\| = \|u(0)\| = 1$ for technical simplicity. 
Using~\cref{lem:succ_prob_error}, the solution of~\cref{eqn:ODE} is an approximation of the solution of the linear system $(-A)x = b$ as $T$ becomes larger. This is because the dynamics of~\cref{eqn:ODE} converges to its stable solution, which turns out to be $-A^{-1}b$. 
Alternatively, we may look at the variation of constants formula~\cref{eqn:ODE_solu}. 
The homogeneous part $e^{AT}u(0)$ will vanish using~\cref{eqn:bdd_log_norm_exp} and let $T \rightarrow \infty$. The inhomogeneous part can be written as $\int_0^T e^{A(T-s)} b ds = (e^{AT} - I)A^{-1}b$ so will converge to $-A^{-1}b$. Since the global phase is not important in quantum state, we can view the dynamics as the approximation of the solution to the quantum linear system problem $A\ket{x} \sim \ket{b}$. A more technical estimate can be carried out as follows
\begin{equation}
\begin{split}
     \left\| \ket{u(T)} - \ket{-A^{-1} b} \right\| & \leq \frac{2}{\|A^{-1}b\|} \left\| u(T) + A^{-1}b \right\| \\
     & = \frac{2}{\|A^{-1}b\|} \left\| e^{AT}u(0) + (e^{AT}-I)A^{-1}b + A^{-1}b \right\| \\
     & \leq \frac{2}{\|A^{-1}b\|} \left(\|e^{AT}\| + \|e^{AT}\|\|A^{-1}\|\right) \\
     & \leq \frac{2}{\|A^{-1}b\|} \left(1+\|A^{-1}\| \right) e^{l(A)T} \\
     & \leq 2\left( \|A\| + \kappa(A) \right) e^{-|l(A)|T}, 
\end{split}
\end{equation}
where the last inequality is because $1 = \|b\| = \|AA^{-1}b\| \leq \|A\|\|A^{-1}b\|$, and $\kappa(A) = \|A\|\|A^{-1}\|$ be the condition number of $A$. 

Let $\ket{\widetilde{u}(T)}$ be the numerical state obtained by a generic quantum ODE solver up to time $T$ with tolerated error $\epsilon$.  
Then we have 
\begin{equation}
    \left\| \ket{\widetilde{u}(T)} - \ket{A^{-1} b} \right\| \leq \epsilon + 2\left( \|A\| + \kappa(A) \right) e^{-|l(A)|T}. 
\end{equation}
In other words, a quantum ODE solver up to time $T$ and precision $\epsilon$ can be regarded as a quantum linear system solver with precision specified in the above estimate. 

Suppose that a generic quantum linear system solver for matrices with negative logarithmic norm has worst-case query complexity $\Omega((\log(1/\epsilon'))^{\alpha})$ where $\epsilon'$ is the tolerated error. 
Then a generic quantum ODE solver for solving ODE with the corresponding worst-case $A$ and $b$ has query complexity 
\begin{equation}
    \begin{split}
        \Omega\left( \log^{\alpha}\left(\frac{1}{\epsilon + 2\left( \|A\| + \kappa(A) \right) e^{-|l(A)|T}}\right) \right) 
        & \geq \Omega\left( \log^{\alpha}\left(\frac{1}{2\max\left\{\epsilon, 2\left( \|A\| + \kappa(A) \right) e^{-|l(A)|T} \right\}}\right) \right) \\
        & = \Omega\left( \log^{\alpha}\left(\frac{1}{2}\min\left\{ \frac{1}{\epsilon}, \frac{1}{2\left( \|A\| + \kappa(A) \right) e^{-|l(A)|T}}\right\} \right) \right) \\
        & = \Omega\left(\min\left\{ \log^{\alpha}\left(\frac{1}{2\epsilon} \right), \log^{\alpha}\left( \frac{e^{|l(A)|T}}{4\left( \|A\| + \kappa(A) \right) } \right) \right\}\right) \\
        & = \min\left\{ \Omega\left(\log^{\alpha}(1/\epsilon)\right), \Omega\left(T^{\alpha}\right) \right\}. 
    \end{split}
\end{equation}
\end{proof}

\section{Fast-forwarding}\label{sec:fast_forwarding}

In this section, we investigate several specific classes of ODEs and design quantum algorithms which scale better than existing generic ones (\cref{lem:DEsolver_TD} and~\cref{lem:DEsolver_TI}). 
We obtain quadratic improvement in $T$ if the coefficient matrix is bounded and negative definite or negative semi-definite with square-root access, and exponential improvement in $T$ and $\|A\|$ if the eigenvalues and eigenstates of $A$ are known. 
As we obtain greater speedup in the second scenario, here in the main text we only focus on the case where $A$ has known eigenvalues and eigenstates. 
Fast-forwarding results for negative definite or semi-definite matrices are presented in details in~\cref{app:FF_ND_NSD}. 
The exponential improvement can also be obtained in solving common linear evolutionary PDEs.

\subsection{Result for negative semi-definite coefficient matrix with known eigenvalues and eigenstates}\label{sec:FF_NSD}

\subsubsection{Oracles and notation}\label{sec:oracles_NSD}

Let $A$ be a negative semi-definite matrix. 
Specifically, $A$ can be represented as 
\begin{equation}
    A = U\Lambda U^{\dagger}.
\end{equation}
Here $U$ is a known unitary in the sense that a quantum circuit for efficiently implementing $U$ is given.  
With an abuse of notation, we still use $U$ to denote this circuit. 
The matrix $\Lambda = \text{diag}(\lambda_0,\cdots,\lambda_{N-1})$ is a diagonal matrix such that $\lambda_j \leq 0$ for all $j$. 

Suppose that the time $T$ and the eigenvalues $\lambda_j$ are binarily encoded through the oracles $O_T: \ket{0}_t \rightarrow \ket{T}_t$ and $O_\Lambda: \ket{0}_e\ket{j}_v \rightarrow \ket{\lambda_j}_e\ket{j}_v$. 
Let the function $e^{\lambda t}$ be given through the oracle $O_{\exp}: \ket{0}_f\ket{\lambda}_e\ket{t}_t \rightarrow \ket{e^{\lambda t}}_f\ket{\lambda}_e\ket{t}_t$, which can be constructed by classical arithmetic. 
Furthermore, let another classical-efficiently computable function $f(\lambda, t)$ to be 
\begin{equation}\label{eqn:NSD_func_integral}
    f(\lambda,t) = \frac{1}{t} \int_0^t e^{\lambda(t-s)} ds 
    = \begin{cases}
        1 , & \text{ if } \lambda = 0,\\ 
        \frac{1}{\lambda t}(e^{\lambda t} - 1) , & \text{ else,}
        \end{cases}
\end{equation}
and it is given by the oracle $O_f: \ket{0}_f\ket{\lambda}_e\ket{t}_t \rightarrow \ket{f(\lambda,t)}_f\ket{\lambda}_e\ket{t}_t$. 
The oracle $O_{\exp}$ will be used for constructing $e^{AT}$, and the oracle $O_f$ is associated with $\int_0^T e^{A(T-s)}ds$ and will be discussed later. 
Here the meanings of the subscripts are: t for time, e for eigenvalue, f for function, and v for vector. 

For the initial condition and the inhomogeneous term, we assume that $O_u$ and $O_b$ are the oracles such that $O_u \ket{0} = \frac{1}{\|u(0)\|} \sum_{j=0}^{N-1} u_j(0) \ket{j}$ and $O_b \ket{0} = \frac{1}{\|b\|} \sum_{j=0}^{N-1} b_j \ket{j}$, and assume that $\|u(0)\|$, $\|b\|$ are known.

\subsubsection{Homogeneous case}

According to the equation $e^{AT} = U e^{\Lambda T} U^{\dagger}$, it suffices to focus on the diagonal transform $e^{\Lambda T}$, which can be implemented by controlled rotations. 
Notice that sequentially applying $O_T$, $O_{\Lambda}$, $O_{\exp}$, $O_{\Lambda}^{\dagger}$ and $O_T^{\dagger}$ on corresponding registers can map $\ket{j}_v\ket{0}_f\ket{0}_t\ket{0}_e$ to $\ket{j}_v\ket{e^{\lambda_j T}}_f\ket{0}_t\ket{0}_e$, \emph{i.e.}, encode the information of $e^{\lambda_j T}$ into the function register. 
Then, starting with a quantum state $\sum_{j=0}^{N-1} v_j \ket{j}_v$ encoding a normalized vector $v$, we first append the information of $e^{\lambda_j T}$ through the previous operator to get $\sum_{j=0}^{N-1} v_j \ket{j}_v\ket{e^{\lambda_j T}}_f$, then append another control register with single qubit and apply a rotation conditioned by the function register to get $\sum_{j=0}^{N-1} v_j \ket{j}_v (e^{\lambda_j T}\ket{0}_r + \sqrt{1-e^{2\lambda_j T}}\ket{1}_r )\ket{e^{\lambda_j T}}_f$. 
Notice that, after uncomputing, the part with $\ket{0}_r$ exactly encodes the vector $e^{\Lambda T} v$ in the amplitude. 

The entire circuit for constructing a block-encoding of $e^{\Lambda T}$ is given in~\cref{fig:circuit_FF_NSD_homo}, and we demonstrate its effectiveness and cost as follows. 

\begin{figure}
    \centerline{
    \Qcircuit @R=1em @C=1em {
    \text{Rotation} \quad\quad\quad\quad\quad\quad & \qw & \qw & \qw & \gate{R} & \qw & \qw & \qw & \qw  \\
    \text{Function} \quad\quad\quad\quad\quad\quad & \qw & \qw & \multigate{2}{O_{\exp}} & \ctrl{-1} & \multigate{2}{O_{\exp}^{\dagger}} & \qw & \qw & \qw  \\ 
    \text{Time} \quad\quad\quad\quad\quad\quad & \qw & \gate{O_T} & \ghost{O_{\exp}} & \qw & \ghost{O_{\exp}^{\dagger}} & \gate{O_T^{\dagger}} & \qw & \qw  \\
    \text{Eigenvalue} \quad\quad\quad\quad\quad\quad & \qw &  \multigate{1}{O_{\Lambda}} & \ghost{O_{\exp}} & \qw & \ghost{O_{\exp}^{\dagger}} & \multigate{1}{O_{\Lambda}^{\dagger}} & \qw & \qw \\
    \text{Vector} \quad\quad\quad\quad\quad\quad & \gate{U^{\dagger}} & \ghost{O_{\Lambda}} & \qw & \qw & \qw & \ghost{O_{\Lambda}^{\dagger}} &  \gate{U} & \qw \\
    }
    }
    \caption{Quantum circuit for constructing a block-encoding of $e^{AT}$ where $A$ is negative semi-definite with known eigenvalues and eigenstates. }
    \label{fig:circuit_FF_NSD_homo}
\end{figure}

\begin{lem}\label{lem:FF_NSD_homo}
    Let $A = U\Lambda U^{\dagger}$ where $U$ is a known unitary and $\Lambda = \text{diag}(\lambda_0,\cdots,\lambda_{N-1})$ such that $\lambda_j \leq 0$ for all $j$. 
    Suppose we are given the oracles described in~\cref{sec:oracles_NSD}, and the time, eigenvalue and function registers consist of $n_t,n_e$ and $n_f$ qubits, respectively. 
    Then a $(1,n_t+n_e+n_f+1,0)$-block-encoding of $e^{AT}$ can be constructed, using $6$ queries to $O_T$, $O_{\Lambda}$, $O_{\exp}$ and their inverses, $2$ queries to $U$ and its inverse, and an extra controlled one-qubit rotation gate. 
\end{lem}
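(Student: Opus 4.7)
The plan is to directly trace the action of the circuit in~\cref{fig:circuit_FF_NSD_homo} on a generic state and verify that the top-left block of the resulting unitary implements $e^{AT}$ exactly. The query count then follows by inspection of the circuit.

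First I would fix an arbitrary vector $\ket{\psi} = \sum_j c_j \ket{v_j}$ expanded in the eigenbasis $\{\ket{v_j}\}$ of $A$, so that $U \ket{j} = \ket{v_j}$, and track the state $\ket{0}_r \ket{0}_f \ket{0}_t \ket{0}_e \ket{\psi}_v$ through the seven gate groups. The $U^{\dagger}$ on the vector register rotates into the eigenbasis to give $\sum_j c_j \ket{j}_v$; the oracles $O_T$ and $O_{\Lambda}$ load $\ket{T}_t$ and, in superposition, $\ket{\lambda_j}_e$; the arithmetic oracle $O_{\exp}$ writes $\ket{e^{\lambda_j T}}_f$ into the function register. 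The state at this stage is
\begin{equation}
    \ket{0}_r \sum_j c_j \ket{e^{\lambda_j T}}_f \ket{T}_t \ket{\lambda_j}_e \ket{j}_v.
\end{equation}

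Next I would specify the rotation gate $R$ controlled by the function register as $R : \ket{0}_r \ket{x}_f \mapsto (x\ket{0}_r + \sqrt{1-x^2}\ket{1}_r)\ket{x}_f$, which is well-defined precisely because $\lambda_j \leq 0$ guarantees $e^{\lambda_j T} \in [0,1]$. Applying $R$ followed by $O_{\exp}^{\dagger}$, $O_{\Lambda}^{\dagger}$, $O_T^{\dagger}$, and finally $U$ yields
\begin{equation}
    \sum_j c_j \bigl(e^{\lambda_j T}\ket{0}_r + \sqrt{1-e^{2\lambda_j T}}\ket{1}_r\bigr) \ket{0}_f \ket{0}_t \ket{0}_e \ket{v_j}_v.
\end{equation}
Projecting all ancillas onto $\ket{0}$ extracts $\sum_j c_j e^{\lambda_j T}\ket{v_j} = U e^{\Lambda T} U^{\dagger}\ket{\psi} = e^{AT}\ket{\psi}$, which is the defining relation of a $(1, n_t+n_e+n_f+1, 0)$-block-encoding per~\cref{def:block_encoding}.

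Finally, counting the gates in~\cref{fig:circuit_FF_NSD_homo} gives one use each of $U^{\dagger}, U, O_T, O_T^{\dagger}, O_{\Lambda}, O_{\Lambda}^{\dagger}, O_{\exp}, O_{\exp}^{\dagger}$ and a single controlled rotation, matching the claimed cost. The only subtlety worth flagging is the well-definedness of $R$: I would explicitly note that the hypothesis $\lambda_j \leq 0$ is essential, since otherwise $e^{\lambda_j T}$ could exceed $1$ and no unitary rotation with that amplitude on $\ket{0}_r$ exists. All other steps are routine bookkeeping since every oracle operates deterministically on basis states, so the block-encoding error is exactly zero with no approximation to control.
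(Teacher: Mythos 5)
Your proof is correct and takes essentially the same approach as the paper: trace the circuit of~\cref{fig:circuit_FF_NSD_homo} on a generic input, observe that $\lambda_j\leq 0$ makes the controlled rotation $R$ well-defined with amplitude $e^{\lambda_j T}\in[0,1]$ on $\ket{0}_r$, and uncompute to obtain the block-encoding. The only cosmetic difference is that you expand the input in the eigenbasis with coefficients $c_j$, whereas the paper works in the computational basis and writes $(U^{\dagger}v)_j$, but these are the same coordinates.
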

\begin{proof}
    Let $\ket{v}_v = \sum_{j=0}^{N-1} v_j \ket{j}_v$ be an arbitrary state, and we start with $\sum_{j=0}^{N-1} v_j \ket{j}_v \ket{0}_r \ket{0}_f \ket{0}_e \ket{0}_t$. 
    We first apply $U^{\dagger}$ on the vector register to get
    \begin{equation}
        \sum_{j=0}^{N-1} (U^{\dagger}v)_j \ket{j}_v \ket{0}_r \ket{0}_f \ket{0}_e \ket{0}_t. 
    \end{equation}
    Here for a matrix $M$ and a vector $x$, $(Mx)_j$ represents its $j$-th component. 
    
    Now we implement the diagonal transform $e^{\Lambda T}$. 
    Applying $O_T$ and $O_{\Lambda}$ on the corresponding registers gives 
    \begin{equation}
        \sum_{j=0}^{N-1} (U^{\dagger}v)_j \ket{j}_v \ket{0}_r \ket{0}_f \ket{\lambda_j}_e \ket{T}_t. 
    \end{equation}
    Apply $O_{\exp}$ on the corresponding registers to get 
    \begin{equation}
        \sum_{j=0}^{N-1} (U^{\dagger}v)_j \ket{j}_v \ket{0}_r \ket{e^{\lambda_j T}}_f \ket{\lambda_j}_e \ket{T}_t. 
    \end{equation}
    Notice that $e^{\lambda_j T} \leq 1$ since $\lambda_j \leq 0$, then we apply a rotation on the rotation register conditioned by the function register and get 
    \begin{equation}
    \begin{split}
        & \quad \sum_{j=0}^{N-1} (U^{\dagger}v)_j \ket{j}_v (e^{\lambda_j T}\ket{0}_r + \sqrt{1-e^{2\lambda_j T}} \ket{1}_r) \ket{e^{\lambda_j T}}_f \ket{\lambda_j}_e \ket{T}_t \\
        & = \sum_{j=0}^{N-1} (e^{\Lambda T} U^{\dagger}v)_j \ket{j}_v \ket{0}_r \ket{e^{\lambda_j T}}_f \ket{\lambda_j}_e \ket{T}_t +  \ket{\perp}. 
    \end{split}
    \end{equation}
    Then we uncompute the function, eigenvalue and time registers by applying $O_{\exp}^{\dagger}$, $O_{\Lambda}^{\dagger}$ and $O_T^{\dagger}$, and we get 
    \begin{equation}
    \begin{split}
        \sum_{j=0}^{N-1} (e^{\Lambda T} U^{\dagger}v)_j \ket{j}_v \ket{0}_r \ket{0}_f \ket{0}_e \ket{0}_t +  \ket{\perp}. 
    \end{split}
    \end{equation}
    
    Finally, applying $U$ gives 
    \begin{equation}
        \sum_{j=0}^{N-1} (Ue^{\Lambda T} U^{\dagger}v)_j \ket{j}_v \ket{0}_r \ket{0}_f \ket{0}_e \ket{0}_t +  \ket{\perp}. 
    \end{equation}
    According to the definition of the block-encoding and the equation $e^{AT} = Ue^{\Lambda_j T} U^{\dagger}$, such a circuit is exactly a $(1,n_t+n_e+n_f+1,0)$-block-encoding of $e^{AT}$. 
\end{proof}

A direct consequence of~\cref{lem:FF_NSD_homo} is the cost of solving homogeneous ODE. 
We can directly apply the block-encoding of $e^{AT}$ to the input state, and then measure all the ancilla qubits. 
If the outcome is all 0, then the remaining state is the desired $\ket{u(T)}$. 

\begin{thm}\label{thm:FF_NSD_homo}
    Consider solving the ODE system~\cref{eqn:ODE}, where $b = 0$ and $A$ is a negative semi-definite Hermitian matrix with known eigenvalues and eigenstates. 
    Suppose that we are given the oracles described in~\cref{sec:oracles_NSD}. 
    Then for any $T > 0$, there exists a quantum algorithm that outputs $\ket{u(T)}$ with $\Omega(1)$ success probability, using 
    \begin{enumerate}
        \item $\mathcal{O}(\|u(0)\|/\|u(T)\|)$ queries to $O_u$, $O_T$, $O_{\Lambda}$, $O_{\exp}$, $U$ and their inverses, 
        \item $(n_t+n_e+n_f+1)$ ancilla qubits, 
        \item $\mathcal{O}(\|u(0)\|/\|u(T)\|)$ extra one-qubit gates. 
    \end{enumerate}
\end{thm}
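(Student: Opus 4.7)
The plan is to treat the result as an almost immediate corollary of \cref{lem:FF_NSD_homo}, combined with a single application of $O_u$ and amplitude amplification. First I would apply $O_u$ on the vector register, starting from the all-zero input, to obtain $\ket{0}_{\text{anc}}\otimes \ket{u(0)}_v$ (where ``anc'' collects the time, eigenvalue, function, and rotation registers of the block-encoding). Next I would apply the $(1,n_t+n_e+n_f+1,0)$-block-encoding $V$ of $e^{AT}$ from \cref{lem:FF_NSD_homo}. By the definition of block-encoding, the resulting state decomposes as
\[
V\bigl(\ket{0}_{\text{anc}}\otimes \ket{u(0)}_v\bigr) = \ket{0}_{\text{anc}}\otimes \frac{e^{AT}u(0)}{\|u(0)\|} + \ket{\perp} = \frac{\|u(T)\|}{\|u(0)\|}\,\ket{0}_{\text{anc}}\otimes\ket{u(T)}_v + \ket{\perp},
\]
where $\ket{\perp}$ has no support on the all-zero ancilla subspace. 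Measuring the ancillas and post-selecting on $\ket{0}_{\text{anc}}$ therefore yields exactly $\ket{u(T)}$ with success probability $(\|u(T)\|/\|u(0)\|)^2$.

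To boost the success probability to $\Omega(1)$, I would apply standard (fixed-point) amplitude amplification around the composite unitary $V\,(I_{\text{anc}}\otimes O_u)$, using the all-zero ancilla projector as the ``good'' subspace indicator. Since the amplitude of the good subspace equals $\|u(T)\|/\|u(0)\|$, this requires $\mathcal{O}(\|u(0)\|/\|u(T)\|)$ rounds, each invoking $O_u$, $O_u^{\dagger}$, $V$ and $V^{\dagger}$ a constant number of times together with a reflection about the ancilla. Plugging in the per-invocation cost from \cref{lem:FF_NSD_homo} (a constant number of queries to $O_T$, $O_{\Lambda}$, $O_{\exp}$, $U$ and their inverses, plus one controlled single-qubit rotation) yields the three bullets of the theorem: the claimed query complexity, $n_t+n_e+n_f+1$ ancilla qubits, and $\mathcal{O}(\|u(0)\|/\|u(T)\|)$ extra one-qubit gates.

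There is no serious obstacle here; the argument is essentially bookkeeping on top of \cref{lem:FF_NSD_homo}. The only point worth a moment's care is verifying that amplitude amplification is applied to a unitary circuit with a well-defined ancilla-flag, which is indeed the case because both $O_u$ and $V$ are unitary and the ``good'' subspace is the cleanly defined $\ket{0}_{\text{anc}}$-projector. Since $V$ is an exact (zero-error) block-encoding, the post-selected output is $\ket{u(T)}$ exactly, so no error-budget accounting is needed and the output state is produced perfectly up to the amplification overhead.
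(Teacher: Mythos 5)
Your proposal is correct and follows the same route as the paper: apply $O_u$, apply the exact $(1,n_t+n_e+n_f+1,0)$-block-encoding of $e^{AT}$ from \cref{lem:FF_NSD_homo}, post-select on the all-zero ancilla to produce $\ket{u(T)}$ exactly (since $u(T)=e^{AT}u(0)$ when $b=0$), and use amplitude amplification for the $\mathcal{O}(\|u(0)\|/\|u(T)\|)$ overhead. The only cosmetic difference is that you explicitly mention fixed-point amplitude amplification and write out the good-subspace amplitude; the paper states the same bookkeeping more tersely.
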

\begin{proof}
    Applying $O_u$ prepares $\ket{u(0)}_v\ket{0}_r\ket{0}_f\ket{0}_e\ket{0}_t$. 
    \cref{lem:FF_NSD_homo} tells that applying the block-encoding of $e^{AT}$ gives 
    \begin{equation}
        \frac{1}{\|u(0)\|} \sum_{j=0}^{N-1} (e^{AT}u(0))_j \ket{j}_v \ket{0}_r \ket{0}_f \ket{0}_e \ket{0}_t +  \ket{\perp}. 
    \end{equation}
    Therefore the first part after successful measurement is $\ket{u(T)}$, and the number of repeats to boost the success probability to $\Omega(1)$ is $\mathcal{O}(\|u(0)\|/\|u(T)\|)$ with amplitude amplification.  
\end{proof}

According to~\cref{thm:FF_NSD_homo}, the query complexity of our algorithm is $\mathcal{O}(\|u(0)\|/\|u(T)\|)$, which still might depend on $T$ and $\|A\|$. 
Therefore, the scalings on $T$ and $\|A\|$ heavily depends on the decay of the solution, and it is possible that the scalings are independent of $T$ and $\|A\|$. 
More specifically, a sufficient condition is that $A$ has $0$ eigenvalue and $u(0)$ has non-trivial overlap with the $0$ eigenstate. 
To see this, suppose that $\lambda_0 = 0$ and $|\braket{u(0)|\psi_0}| \geq \Omega(1)$ where $\ket{\psi_0}$ is the eigenstate corresponding to $\lambda_0$. 
Then, let $\ket{\psi_j}$ denote the eigenstates corresponding to $\lambda_j$, and we get 
\begin{equation}\label{eqn:FF_NSD_norm_decay}
\begin{split}
     \frac{u(T)}{\|u(0)\|} &= U e^{\Lambda T} U^{\dagger} \ket{u(0)} \\
     & = U e^{\Lambda T} U^{\dagger} (\sum_{j=0}^{N-1} \braket{\psi_j|u(0)}\ket{\psi_j}) \\
     & = \sum_{j=0}^{N-1} e^{\lambda_j T}\braket{\psi_j|u(0)}\ket{\psi_j}.
\end{split}
\end{equation}
Therefore $\|u(T)\|/\|u(0)\| \geq e^{\lambda_0 T} |\braket{\psi_0|u(0)}| = \Omega(1)$, and the query complexity is only $\mathcal{O}(1)$. 
We remark that the assumptions $\lambda_0=0$ and $|\braket{u(0)|\psi_0}| \geq \Omega(1)$ are satisfied in the application of the heat equation and the advection-diffusion equation with periodic boundary condition. 
There the largest eigenvalue of the discretized Laplacian operator and divergence operator is exactly $0$, and $|\braket{u(0)|\psi_0}| \geq \Omega(1)$ is naturally assured in practically applicable cases where the initial heat $\int_{[0,1]^d} u(0,x)dx$ is not $0$. 

\subsubsection{Inhomogeneous case}

Now we study the inhomogeneous ODE with time-independent $b$. 
We first show how to construct a linear combination of the homogeneous and the inhomogeneous parts. 
To make the result more general, here we estimate the complexity in terms of the number of queries to the block-encoding to $e^{AT}$ and $\int_0^T e^{A(T-s)}ds$. 
The quantum circuit is given in~\cref{fig:circuit_FF_LCS}. 

\begin{figure}
    \centerline{
    \Qcircuit @R=1em @C=1em {
    \ket{0}_c \quad\quad\quad & \gate{\mathrm{R_y}(\theta)} & \ctrl{2} & \ctrl{1} & \gate{\mathrm{X}} & \ctrl{2} & \ctrl{1} & \gate{\mathrm{X}} & \gate{\mathrm{H}} & \qw & \rstick{\!\!\!\!\!\bra{0}} \\
    \ket{0}_a \quad\quad\quad & \qw & \qw & \multigate{1}{U_0} & \qw & \qw & \multigate{1}{U_1} & \qw & \qw & \qw & \rstick{\!\!\!\!\!\bra{0}} \\ 
    \ket{0}_s \quad\quad\quad & \qw & \gate{O_u} & \ghost{U_0} & \qw & \gate{O_b} & \ghost{U_1} & \qw & \qw & \qw & \\
    }
    }
    \caption{ Quantum circuit for approximating the solution $\ket{u(T)}$. Here the angle $\theta$ using in the rotation is $-2\arcsin(\alpha_1\|b\|/\sqrt{\alpha_0^2\|u(0)\|^2 + \alpha_1^2\|b\|^2})$. $U_0$ and $U_1$ are the block-encodings of $e^{AT}$ and $\int_{0}^T e^{A(T-s)}ds$ with normalization factors $\alpha_0$ and $\alpha_1$, respectively.  }
    \label{fig:circuit_FF_LCS}
\end{figure}
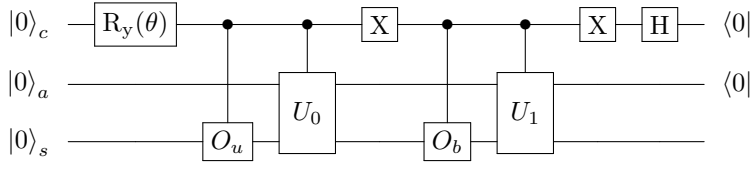

\begin{lem}\label{lem:FF_LCS}
    Consider solving the ODE system~\cref{eqn:ODE_general} with time-independent $b$ up to time $T$. 
    Suppose that $U_0$ and $U_1$ are $(\alpha_0,n',\epsilon_0)$- and $(\alpha_1,n',\epsilon_1)$-block-encodings of $e^{AT}$ and $\int_0^T e^{A(T-s)}ds$, respectively, and assume that $\alpha_0$ and $\alpha_1$ are known. 
    Then, for $0 \leq \epsilon < 1/2$, there exists a quantum algorithm that outputs an $\epsilon$-approximation of $\ket{u(T)}$ with $\Omega(1)$ success probability, using 
    \begin{enumerate}
        \item  \begin{equation}
        \mathcal{O}\left(\frac{\alpha_0\|u(0)\| + \alpha_1 \|b\|}{\|u(T)\|}\right)
    \end{equation}
    queries to the controlled versions of $O_u$, $O_b$, $U_0$ and $U_1$, with the tolerated error in $U_0$ and $U_1$ to be 
    \begin{equation}
        \epsilon_0 = \frac{\|u(T)\|\epsilon}{4\|u(0)\|}, \quad \epsilon_1 = \frac{\|u(T)\|\epsilon}{4\|b\|}, 
    \end{equation}
        \item $(n'+1)$ ancilla qubits, 
        \item \begin{equation}
        \mathcal{O}\left(\frac{\alpha_0\|u(0)\| + \alpha_1 \|b\|}{\|u(T)\|}\right)
    \end{equation} 
    extra one-qubit gates. 
    \end{enumerate}
\end{lem}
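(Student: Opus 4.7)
The plan is to analyze the circuit in \cref{fig:circuit_FF_LCS} in three stages: first establish exact correctness assuming ideal block-encodings, then propagate the block-encoding errors into the final state error, and finally apply amplitude amplification to boost the success probability.

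For the exact analysis, I would track the state through the circuit, writing $c_0 = \alpha_0\|u(0)\|$, $c_1 = \alpha_1\|b\|$, and $c = \sqrt{c_0^2 + c_1^2}$. The rotation $R_y(\theta)$ with the prescribed angle prepares the control superposition $(c_0/c)\ket{0}_c + \sigma (c_1/c)\ket{1}_c$ for a sign $\sigma \in \{\pm 1\}$. The controlled $O_u$ and $O_b$ then load $\ket{u(0)}$ and $\ket{b/\|b\|}$ into the state register in the two branches, while the controlled $U_0$ and $U_1$ apply the block-encodings of $e^{AT}$ and $\int_0^T e^{A(T-s)}ds$. Projecting the block-encoding ancilla onto $\ket{0}_a$ (as guaranteed by \cref{def:block_encoding}), the amplitudes in the two branches simplify to $\frac{1}{c} e^{AT} u(0)$ and $\frac{\sigma}{c}\int_0^T e^{A(T-s)} b \, ds$, where the cancellation of the $\alpha_j$ and $\|u(0)\|, \|b\|$ factors is the reason for the particular form of $\theta$. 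The $X$-conjugation pattern around the second set of controls together with the sign $\sigma$ is designed exactly so that the final Hadamard on the control combines the two vectors additively, yielding $\frac{1}{c\sqrt{2}} u(T)$ as the amplitude of the $\ket{0}_c\ket{0}_a$ component. Hence, conditioned on measuring $\ket{0}_c\ket{0}_a$, the working register carries $\ket{u(T)}$ exactly, and the unamplified success probability is $\|u(T)\|^2/(2c^2)$.

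For the error analysis, each $(\alpha_j, n', \epsilon_j)$-block-encoding $U_j$ of $M_j$ satisfies $\|M_j v - \alpha_j (\bra{0}_a \otimes I) U_j (\ket{0}_a \otimes \ket{v})\| \leq \epsilon_j\|v\|$, so after rescaling by $\|u(0)\|$ and $\|b\|$ the recovered unnormalized vector in the $\ket{0}_c\ket{0}_a$ branch differs from $u(T)/(c\sqrt{2})$ by at most $(\|u(0)\|\epsilon_0 + \|b\|\epsilon_1)/(c\sqrt{2})$ in $2$-norm. Applying the second part of \cref{lem:succ_prob_error} to pass from the unnormalized to the normalized vector, the final error in the prepared state is at most $2(\|u(0)\|\epsilon_0 + \|b\|\epsilon_1)/\|u(T)\|$; choosing $\epsilon_0 = \epsilon\|u(T)\|/(4\|u(0)\|)$ and $\epsilon_1 = \epsilon\|u(T)\|/(4\|b\|)$ produces the claimed $\epsilon$ accuracy.

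Finally, amplitude amplification with respect to the projector $\ket{0}_c\bra{0}_c \otimes \ket{0}_a\bra{0}_a$ boosts the success probability to $\Omega(1)$ using $O(1/\sqrt{p}) = O(c/\|u(T)\|) = O((\alpha_0\|u(0)\|+\alpha_1\|b\|)/\|u(T)\|)$ rounds of the circuit and its inverse. Each round uses a constant number of calls to the controlled versions of $O_u$, $O_b$, $U_0$, $U_1$ and a constant number of extra one-qubit gates, which gives the stated query and gate counts. The ancilla overhead is $n'+1$, reusing the $n'$ block-encoding ancillas between $U_0$ and $U_1$ and adding only the single control qubit. The main technical subtlety is that amplitude amplification requires knowing the success amplitude (or a sufficiently sharp lower bound), which is equivalent to knowing $\|u(T)\|$; this is already implicit in the lemma's complexity expression and in practice can be handled by an exponential search over $\|u(T)\|$ at only logarithmic extra cost.
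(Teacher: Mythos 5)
Your proposal is correct and follows essentially the same route as the paper's proof: trace the state through the circuit in \cref{fig:circuit_FF_LCS}, use the rotation angle so that the $\alpha_j$ and $\|u(0)\|, \|b\|$ factors cancel and the final Hadamard sums the two branches, bound the error via \cref{lem:succ_prob_error} to arrive at $2(\|u(0)\|\epsilon_0 + \|b\|\epsilon_1)/\|u(T)\|$, and apply amplitude amplification. The extra remark about estimating $\|u(T)\|$ via exponential search is a reasonable practical observation that the paper leaves implicit, but it does not change the argument.
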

\begin{proof}
    We start with the state $\ket{0}_c\ket{0}_a\ket{0}_v$ where the subscript ``a'' represents ancilla qubits in the block-encodings, ``c'' represents a single control qubit, and ``v'' represents the vector space encoding the solution. 
    The first step is applying a rotation $\mathrm{R}_y(-2\arcsin(\alpha_1\|b\|/\sqrt{\alpha_0^2\|u(0)\|^2+\alpha_1^2\|b\|^2}))$ on the control register to obtain 
    \begin{equation}
        \frac{1}{\sqrt{\alpha_0^2\|u(0)\|^2+\alpha_1^2\|b\|^2}} (\alpha_0\|u(0)\|\ket{0}_c + \alpha_1\|b\| \ket{1}_c) \ket{0}_a\ket{0}_v. 
    \end{equation}
    Applying $\ket{0}_c\bra{0}_c \otimes O_u + \ket{1}_c\bra{1}_c \otimes O_b$ on the control and vector registers gives 
    \begin{equation}
        \frac{1}{\sqrt{\alpha_0^2\|u(0)\|^2+\alpha_1^2\|b\|^2}} (\alpha_0\|u_0\| \ket{0}_c\ket{0}_a\ket{u(0)}_v + \alpha_1\|b\| \ket{1}_c\ket{0}_a\ket{b}_v). 
    \end{equation}
    Next, applying $\ket{0}_c\bra{0}_c \otimes U_0 + \ket{1}_c\bra{1}_c \otimes U_1$, which is the block-encoding of $e^{AT}$ and $\int_0^T e^{A(T-s)}ds$ controlled by the control register, gives 
    \begin{equation}
    \begin{split}
        \frac{1}{\sqrt{\alpha_0^2\|u(0)\|^2+\alpha_1^2\|b\|^2}} &\Biggl(\|u(0)\| \ket{0}_c\ket{0}_a\left(e^{AT}\ket{u(0)}_v + \|e_0\|\ket{e_0}_v\right)  \\
        & \quad\quad + \|b\| \ket{1}_c\ket{0}_a\left( \int_0^T e^{A(T-s)}ds \ket{b}_v + \|e_1\|\ket{e_1}_v\right)\Biggr) + \ket{\perp}, 
    \end{split}
    \end{equation}
    where $e_0$ and $e_1$ are two error terms with $\|e_0\| \leq \epsilon_0$ and $ \|e_1\| \leq \epsilon_1$, and $\ket{\perp}$ represents terms where the ancilla qubit is not 0. 
    Applying a Hadamard gate on the control register yields 
    \begin{equation}
        \begin{split}
        \frac{1}{\sqrt{2}\sqrt{\alpha_0^2\|u(0)\|^2+\alpha_1^2\|b\|^2}} & \Biggl( \ket{0}_c\ket{0}_a\left(\|u(0)\| e^{AT}\ket{u(0)}_v + \|b\|\int_0^T e^{A(T-s)}ds \ket{b}_v \right)
         \\
        & \quad\quad + \ket{0}_c\ket{0}_a 
        \left( \|u(0)\|\|e_0\|\ket{e_0}_v + \|b\|\|e_1\|\ket{e
        _1}_v \right) \Biggr) + \ket{\perp}, 
    \end{split}
    \end{equation}
    and, according to~\cref{eqn:ODE_solu}, this quantum state is the same as 
    \begin{equation}
        \begin{split}
        \frac{1}{\sqrt{2}\sqrt{\alpha_0^2\|u(0)\|^2+\alpha_1^2\|b\|^2}} &\left( \ket{0}_c\ket{0}_a\left(\|u(T)\|\ket{u(T)}_v + \|u(0)\|\|e_0\|\ket{e_0}_v + \|b\|\|e_1\|\ket{e_1}_v \right) \right) \\
        & \quad\quad + \ket{\perp}. 
    \end{split}
    \end{equation}
    If we measure the control and ancilla registers and get $0$ for both measurements, then the vector register encodes an approximation of $\ket{u(T)}$. 
    The cost of a single run is a single use of the controlled versions of $O_u$, $O_b$, $U_0$ and $U_1$, and $\mathcal{O}(1)$ extra one-qubit gates. 
    
    Now we analyze the approximation error and the success probability. 
    Let $\ket{\psi}$ denote the quantum state in the vector register after a successful measurement. 
    According to~\cref{lem:succ_prob_error}, the approximation error can be bounded as 
    \begin{equation}
        \begin{split}
            \|\ket{\psi}-\ket{u(T)}\| &\leq  \frac{2\sqrt{2}\sqrt{\alpha_0^2\|u(0)\|^2+\alpha_1^2\|b\|^2}}{\|u(T)\|} \left\|\frac{ \|u(0)\|\|e_0\|\ket{e_0}_v + \|b\|\|e_1\|\ket{e_1}_v}{\sqrt{2}\sqrt{\alpha_0^2\|u(0)\|^2+\alpha_1^2\|b\|^2}}\right\| \\
            & \leq \frac{2(\|u(0)\|\epsilon_0 + \|b\|\epsilon_1)}{\|u(T)\|}, 
        \end{split}
    \end{equation}
    and the success probability, after amplitude amplification, can be bounded as 
    \begin{equation}
        \begin{split}
            \text{Prob} &\geq \frac{\|u(T)\|}{\sqrt{2}\sqrt{\alpha_0^2\|u(0)\|^2+\alpha_1^2\|b\|^2}} - \left\|\frac{ \|u(0)\|\|e_0\|\ket{e_0}_v + \|b\|\|e_1\|\ket{e_1}_v}{\sqrt{2}\sqrt{\alpha_0^2\|u(0)\|^2+\alpha_1^2\|b\|^2}}\right\| \\
            & \geq \frac{\|u(T)\|}{\sqrt{2}\sqrt{\alpha_0^2\|u(0)\|^2+\alpha_1^2\|b\|^2}} - \frac{ \|u(0)\|\epsilon_0 + \|b\|\epsilon_1}{\sqrt{2}\sqrt{\alpha_0^2\|u(0)\|^2+\alpha_1^2\|b\|^2}}. 
        \end{split}
    \end{equation}
    Choose $\epsilon_0 = \|u(T)\|\epsilon/(4\|u(0)\|)$ and $\epsilon_1 = \|u(T)\|\epsilon/(4\|b\|)$, then 
    \begin{equation}
        \|\ket{\psi}-\ket{u(T)}\| \leq \epsilon
    \end{equation}
    and 
    \begin{equation}
        \text{Prob} \geq \frac{\|u(T)\| (1-\epsilon/2)}{\sqrt{2}\sqrt{\alpha_0^2\|u(0)\|^2+\alpha_1^2\|b\|^2}} \geq \frac{\|u(T)\|}{2\sqrt{2}\sqrt{\alpha_0^2\|u(0)\|^2+\alpha_1^2\|b\|^2}}. 
    \end{equation}
    Therefore, the success probability can be boosted to $\Omega(1)$ using 
    \begin{equation}
        \mathcal{O}\left(\frac{\alpha_0\|u(0)\| + \alpha_1 \|b\|}{\|u(T)\|}\right)
    \end{equation}
    repeats with amplitude amplification. 
\end{proof}

To use~\cref{lem:FF_LCS}, we need to construct a block-encoding of the operator $\int_0^T e^{A(T-s)} ds$. 
Let $f(\lambda, t)$ be the function defined in~\cref{eqn:NSD_func_integral}. 
Then $\frac{1}{T}\int_0^T e^{A(T-s)} ds = UDU^{\dagger}$ where $D$ is a diagonal matrix with $j$-th diagonal component to be $f(\lambda_j,T)$. 
Since $|f(\lambda,t)| \leq 1$ for all $t > 0$ and $\lambda \leq 0$, we can implement a block-encoding of $\frac{1}{T}\int_0^T e^{A(T-s)} ds$ using a circuit that is almost the same as that in~\cref{fig:circuit_FF_NSD_homo}, except the oracle $O_{\exp}$ is replaced by the oracle $O_f$. 
The cost is as follows, whose proof follows exactly from that of~\cref{lem:FF_NSD_homo}. 

\begin{lem}\label{lem:FF_NSD_inhomo}
    Let $A = U\Lambda U^{\dagger}$ where $U$ is a known unitary and $\Lambda = \text{diag}(\lambda_0,\cdots,\lambda_{N-1})$ such that $\lambda_j \leq 0$ for all $j$. 
    Suppose we are given the oracles described in~\cref{sec:oracles_NSD}, and the time, eigenvalue and function registers consist of $n_t,n_e$ and $n_f$ qubits, respectively. 
    Then a $(T,n_t+n_e+n_f+1,0)$-block-encoding of $\int_0^T e^{A(T-s)} ds$ can be constructed, using $6$ queries to $O_T$, $O_{\Lambda}$, $O_{f}$ and their inverses, $2$ queries to $U$ and its inverse, and an extra controlled one-qubit rotation gate. 
\end{lem}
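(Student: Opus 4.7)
The plan is to mirror the proof of \cref{lem:FF_NSD_homo} almost verbatim, replacing the diagonal transformation $e^{\Lambda T}$ by the diagonal matrix $D = \mathrm{diag}(f(\lambda_0,T),\ldots,f(\lambda_{N-1},T))$. The key algebraic observation is that
\begin{equation}
\frac{1}{T}\int_0^T e^{A(T-s)}\,ds = U\left(\frac{1}{T}\int_0^T e^{\Lambda(T-s)}\,ds\right) U^{\dagger} = UDU^{\dagger},
\end{equation}
so the problem reduces to block-encoding a diagonal matrix whose entries are classically computable via the oracle $O_f$.

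The first thing to verify is the amplitude constraint needed for the controlled rotation: for any $\lambda \leq 0$ and $T > 0$, we have $0 < f(\lambda,T) \leq 1$. This is immediate from the definition $f(\lambda,T) = T^{-1}\int_0^T e^{\lambda(T-s)}ds$ since the integrand lies in $(0,1]$. With this bound in hand, the same rotation trick used in \cref{lem:FF_NSD_homo} loads the amplitude $f(\lambda_j,T)$ onto a single-qubit flag register via a controlled rotation conditioned on the function register.

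The circuit then proceeds exactly as in \cref{fig:circuit_FF_NSD_homo}: apply $U^{\dagger}$ to the vector register; load $T$ and $\lambda_j$ into the time and eigenvalue registers via $O_T$ and $O_{\Lambda}$; apply $O_f$ (instead of $O_{\exp}$) to write $f(\lambda_j,T)$ into the function register; perform the controlled $R$-rotation so that the $\ket{0}_r$ branch carries amplitude $f(\lambda_j,T)$; uncompute the function, eigenvalue, and time registers with $O_f^{\dagger}$, $O_{\Lambda}^{\dagger}$, $O_T^{\dagger}$; and finally apply $U$. Tracking the amplitudes through this sequence shows that the $\ket{0}$ block of all ancillas encodes $UDU^{\dagger}$, which is a $(1,n_t+n_e+n_f+1,0)$-block-encoding of $\frac{1}{T}\int_0^T e^{A(T-s)}ds$, equivalently a $(T,n_t+n_e+n_f+1,0)$-block-encoding of $\int_0^T e^{A(T-s)}ds$.

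There is essentially no obstacle here beyond the amplitude check noted above; the circuit counts ($6$ queries to $O_T,O_{\Lambda},O_f$ and their inverses, $2$ queries to $U$ and its inverse, and one controlled single-qubit rotation) are read off directly from the diagram, exactly as in \cref{lem:FF_NSD_homo}. The only care required is to note that the normalization factor becomes $T$ rather than $1$ because we block-encode the averaged integral $\tfrac{1}{T}\int_0^T e^{A(T-s)}ds$ and then rescale.
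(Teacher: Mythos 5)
Your proposal is correct and takes essentially the same approach as the paper, which likewise constructs the block-encoding by replacing $O_{\exp}$ with $O_f$ in the circuit of \cref{lem:FF_NSD_homo} and observing that $\frac{1}{T}\int_0^T e^{A(T-s)}ds = UDU^\dagger$ with $|f(\lambda_j,T)| \leq 1$. The amplitude check you flag is exactly the remark the paper makes before stating the lemma.
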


Combining~\cref{lem:FF_NSD_homo},~\cref{lem:FF_NSD_inhomo} and~\cref{lem:FF_LCS}, we can solve the inhomogeneous ODE using the circuit in~\cref{fig:circuit_FF_LCS}, and the cost is summarized as follows. 

\begin{thm}\label{thm:FF_NSD}
    Consider solving the ODE system~\cref{eqn:ODE_general} with time-independent $b$ up to time $T$, where $A$ is a negative semi-definite Hermitian matrix with known eigenvalues and eigenstates. 
    Suppose we are given the oracles described in~\cref{sec:oracles_NSD}, and the time, eigenvalue and function registers consist of $n_t,n_e$ and $n_f$ qubits, respectively. 
    Then for any $T>0$, there exists a quantum algorithm that outputs the normalized solution $\ket{u(T)}$ with $\Omega(1)$ success probability, using 
    \begin{enumerate}
        \item \begin{equation}
            \mathcal{O}\left(\frac{\|u(0)\| + T\|b\|}{\|u(T)\|}\right)
        \end{equation}
        queries to the given oracles, their inverses and controlled versions, 
        \item $(n_t+n_e+n_f+2)$ ancilla qubits, 
        \item \begin{equation}
            \mathcal{O}\left(\frac{\|u(0)\| + T\|b\|}{\|u(T)\|}\right)
        \end{equation}
        extra one-qubit gates. 
    \end{enumerate}
\end{thm}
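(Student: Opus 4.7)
The plan is to assemble the result directly from the three lemmas preceding the theorem, invoking the one-shot LCU-type combination circuit of \cref{fig:circuit_FF_LCS}. First I would use \cref{lem:FF_NSD_homo} to obtain a $(1, n_t+n_e+n_f+1, 0)$-block-encoding $U_0$ of $e^{AT}$, and \cref{lem:FF_NSD_inhomo} to obtain a $(T, n_t+n_e+n_f+1, 0)$-block-encoding $U_1$ of $\int_0^T e^{A(T-s)}\,ds$. Both block-encodings share the same ancilla register size and are constructed from the given oracles using only $\mathcal{O}(1)$ queries each. Since the factorizations of these operators are exact, the approximation parameters $\epsilon_0, \epsilon_1$ in \cref{lem:FF_LCS} can be taken to be $0$, and the normalization constants are $\alpha_0 = 1$ and $\alpha_1 = T$, both of which are known.

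Next I would feed $U_0$ and $U_1$, together with the state-preparation oracles $O_u$ and $O_b$, into the circuit of \cref{fig:circuit_FF_LCS}. By \cref{lem:FF_LCS}, this produces an $\epsilon$-approximation of $\ket{u(T)}$ with $\Omega(1)$ success probability; in fact since $\epsilon_0 = \epsilon_1 = 0$, the circuit produces $\ket{u(T)}$ exactly after a successful post-selection on the control and ancilla registers. The number of queries to the controlled versions of $O_u, O_b, U_0, U_1$ scales as
\begin{equation}
    \mathcal{O}\!\left( \frac{\alpha_0 \|u(0)\| + \alpha_1 \|b\|}{\|u(T)\|} \right) = \mathcal{O}\!\left( \frac{\|u(0)\| + T\|b\|}{\|u(T)\|} \right),
\end{equation}
and the same bound governs the number of extra one-qubit gates (dominated by the rotation and Hadamard on the control qubit, repeated under amplitude amplification). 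Each application of $U_0$ or $U_1$ in turn costs only $\mathcal{O}(1)$ queries to $O_T, O_\Lambda, O_{\exp}, O_f, U$ and their inverses, so the overall query complexity to the given oracles retains the same asymptotic form.

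For the ancilla count, $U_0$ and $U_1$ each use $n_t + n_e + n_f + 1$ ancilla qubits, and these can be shared across the controlled applications in \cref{fig:circuit_FF_LCS} since only one of $U_0, U_1$ acts at a time (selected by the control qubit). Adding the single control qubit from \cref{lem:FF_LCS} yields a total of $n_t + n_e + n_f + 2$ ancillas, matching the statement. No step in this argument is technically subtle: the main point to verify is that the exact block-encodings produced by \cref{lem:FF_NSD_homo} and \cref{lem:FF_NSD_inhomo} really do slot into the hypotheses of \cref{lem:FF_LCS} with $\epsilon_0 = \epsilon_1 = 0$, after which the stated complexities follow by direct substitution of $\alpha_0 = 1$ and $\alpha_1 = T$.
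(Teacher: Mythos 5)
Your proposal is correct and follows precisely the route the paper indicates: combine the exact block-encodings from \cref{lem:FF_NSD_homo} ($\alpha_0=1$, error $0$) and \cref{lem:FF_NSD_inhomo} ($\alpha_1=T$, error $0$) via the linear-combination circuit of \cref{lem:FF_LCS}, reading off the query count, gate count, and ancilla count $(n'+1)=n_t+n_e+n_f+2$ by substitution. The paper gives no further detail than this, and your filled-in version is accurate.
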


Compared to~\cref{lem:DEsolver_TI}, we avoid the linear dependence on $\|A\|$ in query complexity. 
In the worst case, the query complexity of our algorithm is still linear in $T$, but a possible linear growth of $\|u(T)\|$ may cancel it and leads to an $\mathcal{O}(1)$ scalings in $T$ as well (see the discussions after~\cref{thm:FF_square_root_inhomo}). 
We also remark that an $\mathcal{O}(1)$ query complexity can be obtained in many more scenarios if we allow the eigenvalues to be complex and use their knowledge in constructing the quantum circuit. 
This will be discussed in the next subsection. 

\subsection{Generalization to matrices with known complex eigenvalues and eigenstates}\label{sec:FF_NSD_gen}

We generalize the results in~\cref{sec:FF_NSD} to the case where the eigenvalues are not assumed to be real and non-positive. 
Most techniques are the same as those in~\cref{sec:FF_NSD} with two main differences: we need to perform another controlled rotation to deal with the imaginary parts of the eigenvalues, and we use the real-part difference to perform the rotation associated with the real parts to deal with possibly positive real parts of the eigenvalues. 

\subsubsection{Oracles and notation}\label{sec:oracles_NSD_gen}

Let $A = U\Lambda U^{\dagger}$, $\Lambda = \text{diag}(\lambda_0,\cdots,\lambda_{N-1})$, and $\lambda_j = \alpha_j + i \beta_j$ where $\alpha_j$ and $\beta_j$ represent the real part and the imaginary part of $\lambda_j$, respectively. 
Let the time $T$ be given by $O_T: \ket{0}_t \rightarrow \ket{T}_t$. 
We assume that the eigenvalues are given by the oracles $O_{\Lambda,r}: \ket{0}_{e}\ket{j}_v \rightarrow \ket{\alpha_j}_{e}\ket{j}_v$ and $O_{\Lambda,i}: \ket{0}_{e}\ket{j}_v \rightarrow \ket{\beta_j}_{e}\ket{j}_v$, and the product operation between eigenvalues and time is given by $O_{\text{prod}}: \ket{0}_f\ket{\beta_j}_e\ket{t}_t \rightarrow \ket{\beta_j t}_f\ket{\beta_j}_e\ket{t}_t$. 

Assume two parameters $\alpha$ and $\beta$ are know, where $\alpha$ is the largest $\alpha_j$, and $\beta$ is a lower bound of $\min_{\alpha_j = 0} |\beta_j|$ (\emph{i.e.}, the minimum of the lengths of purely imaginary eigenvalues). 
Notice that $\alpha$ is the logarithmic norm of $A$ associated with the matrix 2-norm. 

Associated with the homogeneous operator $e^{AT}$, we assume access to the oracle $O_{\exp,\alpha}: \ket{0}_f\ket{\alpha_j}_e\ket{t}_t \rightarrow \ket{e^{(\alpha_j-\alpha)t}}_f\ket{\alpha_j}_e\ket{t}_t$. 
Associated with the inhomogeneous $\int_0^T e^{A(T-s)}ds$, we define 
\begin{equation}\label{eqn:FF_NSD_gen_defC}
    C(\alpha,\beta,T) \coloneqq \begin{cases}
            T, & \text{ if } \alpha = \beta = 0, \\
            \frac{2}{\beta},  & \text{ if } \alpha = 0 \text{ and } \beta \neq 0,\\
            \frac{1}{\alpha} (e^{\alpha T} - 1), & \text{ else,}
        \end{cases}
\end{equation}
and let 
\begin{equation}\label{eqn:FF_NSD_gen_deffg}
    C(\alpha,\beta,T)^{-1} \int_0^T e^{\lambda_j (T-s)} ds = f(\alpha_j,\beta_j,T) + i g(\alpha_j,\beta_j,T), 
\end{equation}
where $f$ and $g$ represent the real and imaginary parts, respectively. 
Notice that these two functions $f$ and $g$ are classically computable, and thus we assume access to the oracles 
\begin{equation}
    O_f: \ket{0}_f\ket{\alpha_j}_{e,r}\ket{\beta_j}_{e,i}\ket{t}_{t} \rightarrow \ket{f(\alpha_j,\beta_j,T)}_f\ket{\alpha_j}_{e,r}\ket{\beta_j}_{e,i}\ket{t}_{t}, 
\end{equation}
and 
\begin{equation}
    O_g: \ket{0}_f\ket{\alpha_j}_{e,r}\ket{\beta_j}_{e,i}\ket{t}_{t} \rightarrow \ket{g(\alpha_j,\beta_j,T)}_f\ket{\alpha_j}_{e,r}\ket{\beta_j}_{e,i}\ket{t}_{t}. 
\end{equation} 
Here the subscript ``e,r'' refers to eigenvalue's real part, and ``e,i'' refers to eigenvalue's imaginary part. 

Same as before, for the vectors $u(0)$ and $b$, we assume $O_u \ket{0} = \frac{1}{\|u(0)\|} \sum_{j=0}^{N-1} u_j(0) \ket{j}$ and $O_b \ket{0} = \frac{1}{\|b\|} \sum_{j=0}^{N-1} b_j \ket{j}$, and assume that $\|u(0)\|$, $\|b\|$ are known. 

\subsubsection{Homogeneous case}

Notice that $\|e^{\Lambda T}\|$ is now bounded by $e^{\alpha T}$, we need to perform the normalized operator $e^{\Lambda T}/e^{\alpha T}$. 
As a result, we replace the oracle $O_{\exp}$ in~\cref{sec:FF_NSD} by the oracle $O_{\exp,\alpha}: \ket{0}_f\ket{\alpha_j}_e\ket{t}_t \rightarrow \ket{e^{(\alpha_j-\alpha)t}}_f\ket{\alpha_j}_e\ket{t}_t$. 
Then the real parts can be implemented using the same approach in~\cref{sec:FF_NSD}. 
The imaginary parts only introduce another $e^{i\beta_jT}$ factor on the basis, which can be implemented by the controlled phase shift gate. 
The entire circuit is summarized in~\cref{fig:circuit_FF_NSD_gen_homo}, and we claim its effetiveness and cost as follows. 
The proof is very similar with that of~\cref{lem:FF_NSD_homo} and is given in~\cref{app:proof_FF_gen_complex}.

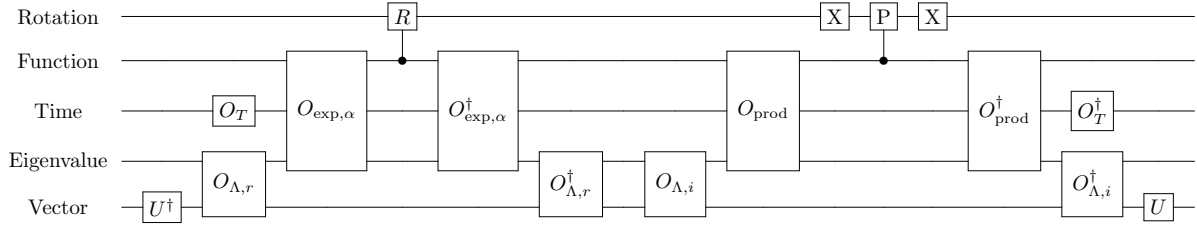
\begin{figure}
\centering
\scalebox{0.8}{
\hspace{1cm}
    \centerline{
    \Qcircuit @R=1em @C=1em {
    \text{Rotation} \quad\quad\quad\quad\quad\quad & \qw & \qw & \qw & \gate{R} & \qw & \qw & \qw & \qw & \qw & \gate{\mathrm{X}} & \gate{\mathrm{P}} & \gate{\mathrm{X}} & \qw & \qw & \qw & \qw\\
    \text{Function} \quad\quad\quad\quad\quad\quad & \qw & \qw & \multigate{2}{O_{\exp,\alpha}} & \ctrl{-1} & \multigate{2}{O_{\exp,\alpha}^{\dagger}} & \qw & \qw & \qw & \multigate{2}{O_{\text{prod}}} & \qw & \ctrl{-1} & \qw & \multigate{2}{O_{\text{prod}}^{\dagger}} & \qw & \qw & \qw \\ 
    \text{Time} \quad\quad\quad\quad\quad\quad & \qw & \gate{O_T} & \ghost{O_{\exp,\alpha}} & \qw & \ghost{O_{\exp,\alpha}^{\dagger}} & \qw & \qw & \qw & \ghost{O_{\text{prod}}} & \qw & \qw & \qw & \ghost{O_{\text{prod}}^{\dagger}} & \gate{O_T^{\dagger}} & \qw & \qw\\
    \text{Eigenvalue} \quad\quad\quad\quad\quad\quad & \qw &  \multigate{1}{O_{\Lambda,r}} & \ghost{O_{\exp,\alpha}} & \qw & \ghost{O_{\exp,\alpha}^{\dagger}} & \multigate{1}{O_{\Lambda,r}^{\dagger}} & \qw & \multigate{1}{O_{\Lambda,i}} & \ghost{O_{\text{prod}}} & \qw & \qw & \qw & \ghost{O_{\text{prod}}^{\dagger}} & \multigate{1}{O_{\Lambda,i}^{\dagger}} & \qw & \qw 
    \\
    \text{Vector} \quad\quad\quad\quad\quad\quad & \gate{U^{\dagger}} & \ghost{O_{\Lambda,r}} & \qw & \qw & \qw & \ghost{O_{\Lambda,r}^{\dagger}} & \qw & \ghost{O_{\Lambda,i}} & \qw & \qw & \qw & \qw & \qw & \ghost{O_{\Lambda,i}^{\dagger}} & \gate{U} & \qw
    }
    }
    }
    \caption{ Quantum circuit for constructing a block-encoding of $e^{AT}$ where $A$ is a unitarily diagonalizable matrix with known eigenvalues and eigenstates. }
    \label{fig:circuit_FF_NSD_gen_homo}
\end{figure}

\begin{lem}\label{lem:FF_NSD_gen_homo}
    Let $A = U\Lambda U^{\dagger}$ where $U$ is a known unitary and $\Lambda = \text{diag}(\lambda_0,\cdots,\lambda_{N-1})$ such that $\lambda_j = \alpha_j + i\beta_j$ for all $j$. 
    Suppose the oracles described in~\cref{sec:oracles_NSD_gen}, and the time, eigenvalue and function registers consist of $n_t,n_e$ and $n_f$ qubits, respectively. 
    Then an $(e^{\alpha T},n_t+n_e+n_f+1,0)$-block-encoding of $e^{AT}$ can be constructed, using $10$ queries to aforementioned oracles and their inverses, $2$ queries to $U$ and its inverse, and $4$ extra (controlled) one-qubit gates. 
\end{lem}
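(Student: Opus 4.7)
The strategy is to follow the circuit in~\cref{fig:circuit_FF_NSD_gen_homo} step by step and verify that the resulting unitary acts on the $\ket{0}$ subspace of all ancillas as $e^{AT}/e^{\alpha T}$. The key structural idea is the factorization $e^{AT} = e^{\alpha T} \cdot U\,\mathrm{diag}(e^{(\alpha_j-\alpha)T}\,e^{i\beta_j T})\,U^\dagger$, where the real exponential factor $e^{(\alpha_j-\alpha)T}\in(0,1]$ (by definition of $\alpha$) is implemented by a controlled rotation as in~\cref{lem:FF_NSD_homo}, and the unimodular factor $e^{i\beta_j T}$ is implemented by a controlled phase on the rotation qubit.

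Concretely, I would start from $\sum_j v_j \ket{j}_v \ket{0}_r\ket{0}_f\ket{0}_e\ket{0}_t$, apply $U^\dagger$ to the vector register to get $\sum_j (U^\dagger v)_j \ket{j}_v\cdots$, then perform the real-part subcircuit: $O_T$, $O_{\Lambda,r}$, $O_{\exp,\alpha}$, a controlled rotation $R$ that sends $\ket{0}_r \mapsto e^{(\alpha_j-\alpha)T}\ket{0}_r + \sqrt{1-e^{2(\alpha_j-\alpha)T}}\ket{1}_r$ conditioned on the function register containing $e^{(\alpha_j-\alpha)T}$, followed by $O_{\exp,\alpha}^\dagger$ and $O_{\Lambda,r}^\dagger$. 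Up to the orthogonal $\ket{1}_r$ piece, the state becomes
\begin{equation}
\sum_j e^{(\alpha_j-\alpha)T}(U^\dagger v)_j\,\ket{j}_v\ket{0}_r\ket{0}_f\ket{0}_e\ket{T}_t + \ket{\perp_1},
\end{equation}
where $\ket{\perp_1}$ lies entirely in the $\ket{1}_r$ subspace.

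Next, the imaginary-part subcircuit applies $O_{\Lambda,i}$ and $O_{\mathrm{prod}}$ to load $\beta_j T$ into the function register, then an $X$-$P$-$X$ sandwich on the rotation qubit controlled by the function register. The role of the two $X$ gates is to flip the control from $\ket{1}_r$ to $\ket{0}_r$, so the phase $e^{i\beta_j T}$ is imprinted precisely on the $\ket{0}_r$ component (the $\ket{1}_r$ component is untouched, keeping $\ket{\perp_1}$ orthogonal to the good subspace). Uncomputing with $O_{\mathrm{prod}}^\dagger$, $O_{\Lambda,i}^\dagger$, $O_T^\dagger$ and finally applying $U$ to the vector register produces
\begin{equation}
\frac{1}{e^{\alpha T}}\sum_j (U e^{\Lambda T} U^\dagger v)_j\,\ket{j}_v\ket{0}_r\ket{0}_f\ket{0}_e\ket{0}_t + \ket{\perp},
\end{equation}
so the top-left block of the overall unitary equals $e^{AT}/e^{\alpha T}$, giving an $(e^{\alpha T},n_t+n_e+n_f+1,0)$-block-encoding with zero error because every step is exact.

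The main thing to be careful about is the control structure of the phase gate: one must verify that after the rotation $R$ the ``good'' amplitude sits in $\ket{0}_r$, so the $X$-$P$-$X$ construction is what is needed to attach $e^{i\beta_j T}$ to that branch rather than the discarded $\ket{1}_r$ branch; and that $\alpha_j\le\alpha$ for all $j$ so that $R$ is a well-defined rotation. The query count then reads off directly from the circuit: $2$ each of $O_T$, $O_{\Lambda,r}$, $O_{\Lambda,i}$, $O_{\exp,\alpha}$, $O_{\mathrm{prod}}$ (counting inverses), $2$ of $U$, and the four one-qubit gates $R$, $X$, $P$, $X$ on the rotation register (with $R$ and $P$ being controlled by the function register). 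No approximation is incurred at any stage, which is why the error parameter is $0$.
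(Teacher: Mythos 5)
Your proof is correct and matches the paper's approach essentially step for step: the same factorization $e^{\Lambda T} = e^{\alpha T}\,\mathrm{diag}(e^{(\alpha_j-\alpha)T}\,e^{i\beta_j T})$, the same real-part subcircuit via controlled rotation as in \cref{lem:FF_NSD_homo}, the same $X$-$P$-$X$ sandwich to imprint $e^{i\beta_j T}$ on the $\ket{0}_r$ branch, and the same query and gate count read off from \cref{fig:circuit_FF_NSD_gen_homo}. You additionally spell out why the $X$-$P$-$X$ conjugation targets the $\ket{0}_r$ branch and why $\alpha_j\leq\alpha$ keeps the rotation well defined, which the paper leaves implicit.
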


\begin{thm}\label{thm:FF_NSD_gen_homo}
    Consider solving the ODE system~\cref{eqn:ODE_general}, where $b = 0$ and $A$ is a unitarily diagonalizable matrix with known eigenvalues and eigenstates. 
    Suppose that we are given the oracles as described in~\cref{sec:oracles_NSD_gen}, and that $\alpha$ is the largest eigenvalues' real part.  
    Then for any $T > 0$, there exists a quantum algorithm that outputs $\ket{u(T)}$ with $\Omega(1)$ success probability, using 
    \begin{enumerate}
        \item $\mathcal{O}(e^{\alpha T}\|u(0)\|/\|u(T)\|)$ queries to $O_u$, $O_T$, $O_{\Lambda,r}$, $O_{\Lambda,i}$, $O_{\exp,\alpha}$, $O_{\text{prod}}$, $U$ and their inverses, 
        \item $(n_t+n_e+n_f+1)$ ancilla qubits, 
        \item $\mathcal{O}(e^{\alpha T}\|u(0)\|/\|u(T)\|)$ extra one-qubit gates. 
    \end{enumerate}
\end{thm}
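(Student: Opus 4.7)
The plan is to mirror the proof of \cref{thm:FF_NSD_homo}, with the only substantive change being the appearance of the normalization factor $e^{\alpha T}$ in the block-encoding provided by \cref{lem:FF_NSD_gen_homo}. This factor is the right one because $\|e^{AT}\| \leq e^{\alpha T}$ when $\alpha$ is the maximum real part of the eigenvalues of $A$, so the block-encoding is well-defined; it enters the final cost estimate through the amplitude amplification overhead.

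Concretely, I would first apply $O_u$ to prepare $\ket{u(0)}_v$ in the vector register (with all ancilla registers initialized to $\ket{0}$). Then I would apply the $(e^{\alpha T}, n_t+n_e+n_f+1, 0)$-block-encoding $V$ of $e^{AT}$ constructed in \cref{lem:FF_NSD_gen_homo}. By the definition of block-encoding, the resulting state takes the form
\begin{equation}
    \frac{1}{e^{\alpha T}\|u(0)\|} \, \ket{0}_a \bigl(e^{AT} u(0)\bigr)_v + \ket{\perp},
\end{equation}
where $\ket{\perp}$ is supported on basis states whose ancilla register is nonzero. Measuring the ancilla register and post-selecting on the outcome $\ket{0}$ collapses the vector register onto $\ket{u(T)} = e^{AT} u(0)/\|e^{AT} u(0)\|$, exactly as desired.

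It remains to control the success probability. A direct calculation gives that the post-selection succeeds with probability
\begin{equation}
    p = \frac{\|e^{AT} u(0)\|^2}{e^{2\alpha T}\|u(0)\|^2} = \frac{\|u(T)\|^2}{e^{2\alpha T}\|u(0)\|^2}.
\end{equation}
Applying fixed-point amplitude amplification to this procedure boosts the success probability to $\Omega(1)$ at the cost of $\mathcal{O}(1/\sqrt{p}) = \mathcal{O}(e^{\alpha T}\|u(0)\|/\|u(T)\|)$ repetitions of the combined circuit ``$V \cdot (I_a \otimes O_u)$'' and its inverse. Each such repetition uses $\mathcal{O}(1)$ queries to $O_u$, $O_T$, $O_{\Lambda,r}$, $O_{\Lambda,i}$, $O_{\exp,\alpha}$, $O_{\text{prod}}$, $U$ and their inverses (by \cref{lem:FF_NSD_gen_homo}), plus $\mathcal{O}(1)$ extra one- or two-qubit gates, yielding the stated query and gate complexities; the ancilla count is inherited directly from \cref{lem:FF_NSD_gen_homo}.

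There is essentially no hard step here: because \cref{lem:FF_NSD_gen_homo} already packages the delicate part (the separate handling of the real and imaginary parts of the eigenvalues, and the shift by $\alpha$ to ensure that the controlled rotation on the function register is well-defined for possibly positive real parts), the proof of the theorem is just bookkeeping. The only thing to watch for is the $e^{\alpha T}$ factor propagating consistently through the amplitude amplification analysis, which makes explicit the intuition that when $\alpha > 0$ the norm of $e^{AT}$ can grow, inflating the post-selection cost, whereas when $\alpha \leq 0$ one recovers a bound of the form $\mathcal{O}(\|u(0)\|/\|u(T)\|)$ analogous to \cref{thm:FF_NSD_homo}.
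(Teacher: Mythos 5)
Your proof is correct and follows exactly the approach the paper intends: apply $O_u$, then the $(e^{\alpha T}, n_t+n_e+n_f+1, 0)$-block-encoding of $e^{AT}$ from \cref{lem:FF_NSD_gen_homo}, post-select, and use amplitude amplification, with the $e^{\alpha T}$ factor correctly propagating from the block-encoding normalization into the success amplitude. This mirrors the paper's proof of \cref{thm:FF_NSD_homo}, which is what the paper implicitly relies on for \cref{thm:FF_NSD_gen_homo}.
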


There are two generalizations of~\cref{thm:FF_NSD_gen_homo} compared to~\cref{thm:FF_NSD_homo}. 
First, as defined,~\cref{thm:FF_NSD_gen_homo} applies to more general matrix $A$ with possibly complex eigenvalues. 
An important category among those is the fast-forwarding for time-independent Hamiltonian simulation, in which $\|u(0)\| = \|u(T)\|$ and $\alpha = 0$. 
Therefore,~\cref{thm:FF_NSD_gen_homo} implies an $\mathcal{O}(1)$ query complexity of simulating Hamiltonians with known eigenvalues and eigenstates, including diagonal Hamiltonians and more general $1$-sparse Hamiltonians. 
Second, thanks to the \emph{a priori} knowledge on $\alpha$, we could avoid possible exponential dependence caused by either positive real parts of the eigenvalues or norm decay. 
Specifically, let $\alpha = \alpha_0 \geq \alpha_1 \geq \cdots \geq \alpha_{N-1}$. 
According to~\cref{eqn:FF_NSD_norm_decay}, we have $\|u(T)\|/\|u(0)\| \geq |e^{\lambda_0 T}| |\braket{u(0)|\psi_0}| = e^{\alpha T} |\braket{u(0)|\psi_0}|$. 
Once $|\braket{u(0)|\psi_0}| \geq \Omega(1)$, we have $e^{\alpha T} \|u(0)\|/\|u(T)\| = \mathcal{O}(1)$ and thus the corresponding differential equation is solved with query complexity independent of $T$ and $\|A\|$. 
We remark that the underlying reason for this efficient fast-forwarding algorithm is again the shifting equivalence discussed in~\cref{sec:LB_implications}: instead of the original ODE, we indeed solve the ODE after shifting by $\alpha$. 
In the shifted ODE, all the eigenvalues of the coefficient matrix have non-positive real parts and at least one of them has zero real part, and thus the two sources of exponential computational overhead are avoided. 

\subsubsection{Inhomogeneous case}

General inhomogeneous ODE with possibly complex eigenvalues can also be solved using the same approach based on~\cref{lem:FF_LCS}. 
The key is still to construct a block-encoding of the operator $\int_0^T e^{A(T-s)} ds$, and this can be implemented by diagonal transformation as well. 

We first study real and imaginary parts of the diagonal components. 
Let $\lambda_j = \alpha_j + i\beta_j$ be a complex number, and the diagonal element $\int_0^T e^{\lambda_j (T-s)} ds $ can be bounded as follows. 
If $\alpha_j=\beta_j = 0$, then 
\begin{equation}
    \int_0^T e^{\lambda_j (T-s)} ds = T. 
\end{equation}
If $\alpha_j=0$ and $\beta_j \neq 0$, then 
\begin{equation}
    \left|\int_0^T e^{\lambda_j (T-s)} ds\right| = \left|\int_0^T e^{i \beta_j (T-s)} ds \right| = \left| \frac{1}{i\beta_j} (e^{i\beta_jT} - 1)\right| \leq \frac{2}{|\beta_j|}. 
\end{equation}
If $\alpha_j \neq 0$, then 
\begin{equation}
    \left|\int_0^T e^{\lambda_j (T-s)} ds\right| \leq \int_0^T |e^{\lambda_j (T-s)}| ds = \int_0^T e^{\alpha_j (T-s)} ds = \frac{1}{\alpha_j}(e^{\alpha_j T } - 1). 
\end{equation}
In summary, by the definition of the parameters $\alpha$, $\beta$ in~\cref{sec:oracles_NSD_gen} and the constant $C(\alpha,\beta,T)$ in~\cref{eqn:FF_NSD_gen_defC}, we always have 
\begin{equation}
    \left|\int_0^T e^{\lambda_j (T-s)} ds\right| \leq C(\alpha,\beta,T).
\end{equation}
Then the magnitude of $C(\alpha,\beta,T)^{-1} \int_0^T e^{\lambda_j (T-s)} ds$ is bounded by $1$, and we can implement the corresponding diagonal transformation by controlled rotation and phase shift.
More specifically, $\int_0^T e^{\lambda_j (T-s)} ds$ can be block-encoded using a circuit similar to that in~\cref{fig:circuit_FF_NSD_gen_homo}, with the following differences: 
\begin{enumerate}
    \item The single eigenvalue register is extended to two registers to encode real and imaginary parts of the eigenvalues separately and simultaneously. 
    \item The procedure of encoding and uncomputing the eigenvalues now becomes: (after applying $U^{\dagger}$) we first encode both the real and the imaginary parts by applying both $O_{\Lambda,r}$ and $O_{\Lambda,i}$ on the corresponding registers, then only uncompute at the very end, right before applying $U$. 
    \item The oracle $O_{\exp,\alpha}$ is replaced by $O_f$. 
    \item The oracle $O_{\text{prod}}$ is replaced by $O_g$. 
\end{enumerate}

\begin{lem}\label{lem:FF_NSD_gen_inhomo}
    Let $A = U\Lambda U^{\dagger}$ where $U$ is a known unitary and $\Lambda = \text{diag}(\lambda_0,\cdots,\lambda_{N-1})$ with $\lambda_j = \alpha_j + i \beta_j$ for all $j$. 
    Let $C(\alpha,\beta,T)$ be defined in~\cref{eqn:FF_NSD_gen_defC}. 
    Suppose we are given the oracles described in~\cref{sec:oracles_NSD_gen}, and the time, eigenvalue and function registers consist of $n_t,n_e$ and $n_f$ qubits, respectively. 
    Then a $(C(\alpha,\beta,T),n_t+2n_e+n_f+1,0)$-block-encoding of $\int_0^T e^{A(T-s)} ds$ can be constructed, using $10$ queries to aforementioned oracles and their inverses, $2$ queries to $U$ and its inverse, and $4$ extra one-qubit gates. 
\end{lem}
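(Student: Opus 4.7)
The plan is to mimic the construction of~\cref{lem:FF_NSD_gen_homo} (the circuit in~\cref{fig:circuit_FF_NSD_gen_homo}) with the four modifications already listed before the lemma statement. Since $A = U\Lambda U^{\dagger}$ and the matrix exponential commutes with the diagonalization,
\begin{equation}
\int_0^T e^{A(T-s)}\,ds \;=\; U\,\mathrm{diag}\!\left(\textstyle\int_0^T e^{\lambda_j (T-s)}\,ds\right)_{j=0}^{N-1} U^{\dagger},
\end{equation}
so it suffices to block-encode the diagonal operator whose $j$-th entry is $\int_0^T e^{\lambda_j(T-s)}\,ds$ and sandwich the result by $U^{\dagger}$ and $U$ on the vector register.

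First, I would invoke the case analysis immediately preceding the lemma, which shows $\left|\int_0^T e^{\lambda_j(T-s)}\,ds\right| \leq C(\alpha,\beta,T)$ for every $j$, irrespective of whether $\alpha_j$ is zero or not. This uniform bound certifies $C(\alpha,\beta,T)$ as a legitimate block-encoding subnormalization and, by~\cref{eqn:FF_NSD_gen_deffg}, guarantees that $f(\alpha_j,\beta_j,T) + i\,g(\alpha_j,\beta_j,T)$ has modulus at most $1$ and can therefore appear as the $(0,0)$ amplitude of a single-qubit operation on the rotation qubit.

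Second, I would write out the circuit. Starting from $\ket{v}_v \ket{0}_r \ket{0}_f \ket{0}_{e,r} \ket{0}_{e,i} \ket{0}_t$, apply $U^{\dagger}$ to the vector register, then $O_T$, $O_{\Lambda,r}$, $O_{\Lambda,i}$ to load $T$, $\alpha_j$ and $\beta_j$ simultaneously into their registers (to be uncomputed only at the very end, per the second modification). Next, use $O_f$ to load $f(\alpha_j,\beta_j,T)$ into the function register and apply a controlled $Y$-rotation to place amplitude $\sqrt{f^2+g^2}$ on $\ket{0}_r$ (using both $f$ and $g$, computed by an intermediate classical arithmetic oracle acting on the function register if one insists on separating $O_f$ and $O_g$). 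Then apply $O_g$ together with the controlled phase conjugated by $X$ gates (the same ``$XPX$'' pattern as in~\cref{fig:circuit_FF_NSD_gen_homo}) to multiply the $\ket{0}_r$ component by $e^{i\arg(f+ig)}$. The net amplitude on $\ket{0}_r$ is $C(\alpha,\beta,T)^{-1}(f(\alpha_j,\beta_j,T) + i\,g(\alpha_j,\beta_j,T))$, which equals $C(\alpha,\beta,T)^{-1}\int_0^T e^{\lambda_j(T-s)}\,ds$ by definition. Finally, uncompute $O_g$, $O_f$, $O_{\Lambda,i}$, $O_{\Lambda,r}$, $O_T$, and apply $U$ to the vector register. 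By the block-encoding definition~\cref{def:block_encoding}, the $(0,0)$-block is exactly $C(\alpha,\beta,T)^{-1}\int_0^T e^{A(T-s)}\,ds$. Tallying the oracle uses gives the claimed counts: each of $O_T$, $O_{\Lambda,r}$, $O_{\Lambda,i}$, $O_f$, $O_g$ appears twice (compute and uncompute), for a total of $10$, and $U$ appears twice; the controlled rotation plus the $X$-conjugated controlled phase provide the $4$ extra one-qubit gates.

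The main subtlety is that, unlike the homogeneous case where $e^{\lambda_j T} = e^{(\alpha_j-\alpha)T}\cdot e^{i\beta_j T}$ factors cleanly into a non-negative magnitude and a pure phase, here the diagonal entry $f+ig$ is a genuine complex number and cannot be produced by applying a phase to a real amplitude $f$. The fix I described above is to use $(\sqrt{f^2+g^2},\arg(f+ig))$ as the effective angles of the single-qubit operation on the rotation qubit; because both $f$ and $g$ are already held in auxiliary registers after applying $O_f$ and $O_g$, this is an $\mathcal{O}(1)$ classical arithmetic step and does not affect the query counts. Everything else, including the verification that the auxiliary-register uncomputation is clean and that the sandwich with $U,U^{\dagger}$ produces the asserted block-encoding, follows line-for-line from the proof of~\cref{lem:FF_NSD_gen_homo}.
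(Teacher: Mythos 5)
Your construction follows the paper's intended approach: diagonalize $\int_0^T e^{A(T-s)}\,ds$ by $U$, then block-encode the diagonal action via a controlled rotation and a controlled phase, exactly in the pattern of~\cref{lem:FF_NSD_gen_homo} with the four listed modifications. You have also put your finger on a genuine wrinkle in the paper's description. Since $f$ and $g$ are defined in~\cref{eqn:FF_NSD_gen_deffg} as the real and imaginary parts of $C(\alpha,\beta,T)^{-1}\int_0^T e^{\lambda_j(T-s)}\,ds$, the literal replacement of $O_{\exp,\alpha}$ by $O_f$ and $O_{\text{prod}}$ by $O_g$ would produce amplitude $f\,e^{ig}$ on $\ket{0}_r$, not $f+ig$, and these differ whenever both are nonzero (take $\alpha_j=0,\ \beta_j\neq 0$, for instance). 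Your recognition that one must instead reach $f+ig$ through its polar form $\bigl(\sqrt{f^2+g^2},\ \arg(f+ig)\bigr)$ is the right idea. However, the specific repair you propose --- an intermediate arithmetic step acting on the function register to combine $f$ and $g$ --- is not quite tight: a single $n_f$-qubit function register cannot hold $f$ and $g$ simultaneously, so either you need a second function register (which would change the stated ancilla count $n_t+2n_e+n_f+1$), or the arithmetic gates must be counted against the query/gate budget. The cleanest resolution, which keeps every count in the lemma exactly as claimed, is to interpret (or redefine) $O_f$ and $O_g$ as the classical oracles computing $\bigl|C(\alpha,\beta,T)^{-1}\int_0^T e^{\lambda_j(T-s)}\,ds\bigr|$ and $\arg\bigl(C(\alpha,\beta,T)^{-1}\int_0^T e^{\lambda_j(T-s)}\,ds\bigr)$ directly from $(\alpha_j,\beta_j,T)$; both are efficiently classically computable, the first lies in $[0,1]$, and the modified circuit then works line-for-line as in the homogeneous case with no auxiliary arithmetic. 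With that reinterpretation, your tallies --- ten oracle queries, two queries to $U$/$U^\dagger$, four extra one-qubit gates (one controlled rotation and the $X$-$P$-$X$ triple), and $n_t+2n_e+n_f+1$ ancillas --- are all correct.
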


A combination of~\cref{lem:FF_NSD_gen_homo},~\cref{lem:FF_NSD_gen_inhomo} and~\cref{lem:FF_LCS} gives the following result. 
This generalizes~\cref{thm:FF_NSD} to general complex eigenvalues with possibly positive real parts. 

\begin{thm}\label{thm:FF_NSD_gen}
    Consider solving the ODE system~\cref{eqn:ODE_general} with time-independent $b$ up to time $T$, where $A$ is a unitarily diagonalizable matrix with known eigenvalues and eigenstates. 
    Let $\alpha$ be the largest $\alpha_j$, and $C(\alpha,\beta,T)$ be defined in~\cref{eqn:FF_NSD_gen_defC}. 
    Suppose we are given the oracles described in~\cref{sec:oracles_NSD_gen}, and the time, eigenvalue and function registers consist of $n_t,n_e$ and $n_f$ qubits, respectively. 
    Then for any $T>0$, there exists a quantum algorithm that outputs the normalized solution $\ket{u(T)}$ with $\Omega(1)$ success probability, using 
    \begin{enumerate}
        \item \begin{equation}
            \mathcal{O}\left( \frac{ e^{\alpha T} \|u(0)\| + C(\alpha,\beta,T) \|b\|}{\|u(T)\|}\right)
        \end{equation}
        queries to the given oracles, their inverses and controlled versions, 
        \item $(n_t+2n_e+n_f+2)$ ancilla qubits, 
        \item \begin{equation}
            \mathcal{O}\left( \frac{ e^{\alpha T} \|u(0)\| + C(\alpha,\beta,T) \|b\|}{\|u(T)\|} \right)
        \end{equation}
        extra one-qubit gates. 
    \end{enumerate}
\end{thm}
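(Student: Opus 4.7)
The plan is to combine the two block-encoding lemmas established earlier in this subsection with the generic linear combination routine from \cref{lem:FF_LCS}, treating \cref{thm:FF_NSD_gen} as a direct corollary.

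First, I would recall from \cref{eqn:ODE_solu} that the solution admits the decomposition
\begin{equation}
    u(T) = e^{AT} u(0) + \left(\int_0^T e^{A(T-s)}\,ds\right) b.
\end{equation}
This decomposition is exactly the form to which \cref{lem:FF_LCS} applies, so the overall algorithm will be an instantiation of the circuit in \cref{fig:circuit_FF_LCS} where $U_0$ block-encodes $e^{AT}$ and $U_1$ block-encodes $\int_0^T e^{A(T-s)}\,ds$.

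Second, I would invoke \cref{lem:FF_NSD_gen_homo} to obtain an exact (i.e., $\epsilon_0=0$) $(e^{\alpha T}, n_t+n_e+n_f+1, 0)$-block-encoding of $e^{AT}$ using $\mathcal{O}(1)$ queries to the given oracles, and \cref{lem:FF_NSD_gen_inhomo} to obtain an exact $(C(\alpha,\beta,T), n_t+2n_e+n_f+1, 0)$-block-encoding of $\int_0^T e^{A(T-s)}\,ds$ using another $\mathcal{O}(1)$ queries. Because both block-encodings are constructed without internal approximation error, no tolerance budget needs to be propagated through \cref{lem:FF_LCS} other than what that lemma itself requires; in particular, $\epsilon_0=\epsilon_1=0$ in the notation of \cref{lem:FF_LCS}.

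Third, I would feed these two block-encodings into \cref{lem:FF_LCS} with $\alpha_0 = e^{\alpha T}$ and $\alpha_1 = C(\alpha,\beta,T)$; both normalization factors are classically computable from the known parameters $\alpha$, $\beta$, $T$, so the state-preparation rotation $\mathrm{R}_y(\theta)$ with $\theta = -2\arcsin(\alpha_1\|b\|/\sqrt{\alpha_0^2\|u(0)\|^2+\alpha_1^2\|b\|^2})$ can be implemented as a single controlled one-qubit gate. The lemma then yields an $\epsilon$-approximation of $\ket{u(T)}$ with $\Omega(1)$ success probability after amplitude amplification, at the cost of
\begin{equation}
    \mathcal{O}\!\left(\frac{\alpha_0\|u(0)\|+\alpha_1\|b\|}{\|u(T)\|}\right)
    = \mathcal{O}\!\left(\frac{e^{\alpha T}\|u(0)\|+C(\alpha,\beta,T)\|b\|}{\|u(T)\|}\right)
\end{equation}
queries to the controlled block-encodings and state-preparation oracles, which is the advertised complexity. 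The ancilla count $(n_t+2n_e+n_f+2)$ follows from the larger of the two block-encoding widths ($n_t+2n_e+n_f+1$, coming from \cref{lem:FF_NSD_gen_inhomo}) plus the single control qubit introduced by \cref{lem:FF_LCS}. Since each of the underlying lemmas has already been proved in full, there is no substantive obstacle; the only care needed is bookkeeping the ancilla sizes and verifying that the dominant normalization factor in the inhomogeneous block-encoding is indeed $C(\alpha,\beta,T)$, which justifies inheriting its appearance in the final query count.
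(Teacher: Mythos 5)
Your proposal is correct and follows precisely the same route as the paper: the paper's proof is exactly the combination of \cref{lem:FF_NSD_gen_homo}, \cref{lem:FF_NSD_gen_inhomo}, and \cref{lem:FF_LCS}, with the ancilla count determined by padding to the larger block-encoding width plus the single control qubit.
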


Similar as the discussion after~\cref{thm:FF_NSD}, one can achieve $\mathcal{O}(1)$ scaling in most scenarios. 
Suppose that all the eigenvalues of $A$ have non-positive real parts. 
Then 
\begin{enumerate}
    \item If $\alpha < 0$, \emph{i.e.}, all the eigenvalues have negative real parts and we regard $\alpha$ as a constant, then the query complexity becomes $\mathcal{O}((\|u(0)\|+\|b\|)/\|u(T)\|)$. 
    Furthermore, in this case $A$ is invertible, and thus $\|u(T)\| = \Omega(1)$, which indicates that the query complexity is independent of both $T$ and $\|A\|$. 
    \item If $\alpha = 0$ and $\beta > 0$, \emph{i.e.}, all the eigenvalues of $A$ have non-positive real parts and are non-zero, then the query complexity is also independent of both $T$ and $\|A\|$. 
    The reason is similar with the previous case, and notice that $\|u(T)\|$ is still $\Omega(1)$ for components corresponding to purely imaginary eigenvalues analog to the Hamiltonian simulation. 
    \item If $\alpha = 0$ and $\beta = 0$, \emph{i.e.}, there exists a $0$ eigenvalue. 
    Then the query complexity becomes $\mathcal{O}((\|u(0)\|+T\|b\|)/\|u(T)\|)$. 
    This is the same as~\cref{thm:FF_NSD}, which is always independent of $\|A\|$ and can be independent of $T$ if $b$ has non-trivial overlap with the eigenstate of $A$ corresponding to $0$ eigenvalue. 
\end{enumerate}

\subsection{Generalization to time-dependent inhomogeneous terms}\label{sec:FF_NSD_gen_td}

Consider the most general ODE~\cref{eqn:ODE_general} time-dependent inhomogeneous term $b(t)$. 
Here we still assume that the matrix $A$ is normal with known eigenvalues and eigenstates, following the assumptions in~\cref{sec:FF_NSD_gen}. 
We discuss how to solve this system with potential exponential fast-forwarding in $\|A\|$ and $T$. 

\subsubsection{Oracles}\label{sec:oracles_FF_NSD_gen_bt}

We state the oracles first, and their usages will be clear in later discussions. 
For the matrix $A$, let $\widetilde{\alpha} = \max_{0\leq j \leq N-1}\left\{0,\text{Re}(\lambda_j)\right\}$ be a known parameter. 
    Suppose we are given the oracles: 
    \begin{equation}
        O_T: \ket{0}_t \rightarrow \ket{T}_t, 
    \end{equation}
    \begin{equation}
         O_{\Lambda,r}: \ket{0}_{e}\ket{j}_v \rightarrow \ket{\alpha_j}_{e}\ket{j}_v, \quad O_{\Lambda,i}: \ket{0}_{e}\ket{j}_v \rightarrow \ket{\beta_j}_{e}\ket{j}_v, 
    \end{equation}
    \begin{equation}
        O_{\exp,\widetilde{\alpha}}: \ket{0}_f\ket{\alpha_j}_e\ket{t}_t \rightarrow \ket{e^{(\alpha_j-\widetilde{\alpha})t}}_f\ket{\alpha_j}_e\ket{t}_t,  
    \end{equation}
    \begin{equation}
        O_{\exp,\widetilde{\alpha},t}: \ket{0}_f\ket{\alpha_j}_e\ket{k}_t \rightarrow \ket{e^{\alpha_j T(1-k/M) - \widetilde{\alpha}T}}_f\ket{\alpha_j}_e\ket{k}_t, 
    \end{equation}
    \begin{equation}
        O_{\text{prod}}: \ket{0}_f\ket{\beta_j}_e\ket{t}_t \rightarrow \ket{\beta_j t}_f\ket{\beta_j}_e\ket{t}_t,  
    \end{equation}
    and 
    \begin{equation}
        O_{\text{prod},t}: \ket{0}_f\ket{\beta_j}_e\ket{k}_t \rightarrow \ket{\beta_j t}_f\ket{\beta_jT(1-k/M)}_e\ket{k}_t.
    \end{equation}
    
For the vector $u(0)$, we assume as usual the oracle $O_u: \ket{0}_v \rightarrow \ket{u(0)}_v$ and that $\|u(0)\|$ is known. 
The difference lies in the input models for $b(t)$ which now becomes a time-dependent function. 
For an integer $M$ to be specified later, we assume a simultaneous preparation oracle $O_{b,t}: \ket{k}_t\ket{0}_v \rightarrow \ket{k}_t\ket{b(kT/M)}_v$, and the information of $\|b(t)\|$ is given in the amplitude of a quantum state through the oracle $O_{\|b\|} : \ket{0}_t \rightarrow \frac{1}{\sqrt{\sum_k \|b(kT/M)\|^2}} \sum_{k=0}^{M-1} \|b(kT/M)\|\ket{k}_t$.
Notice that the assumption on $O_{\|b\|}$ is indeed quite strong. 
Nevertheless, if $b(t)$ has a closed form expression, then $\|b(t)\|$ can be efficiently computed and integrated classically, and there exists an efficient construction of $O_{\|b\|}$~\cite{GroverRudolph2002}. 
In a special case where $\|b(t)\|$ remains unchanged, the oracle $O_{\|b\|}$ just creates a uniform superposition and can be easily constructed using Hadamard gates. 

\subsubsection{Results}

The solution of~\cref{eqn:ODE_general} is given as 
\begin{equation}
    u(T) = e^{AT} u(0) + \int_0^T e^{A(T-s)} b(s) ds. 
\end{equation}
Our approach is still to prepare both vectors as quantum states and then perform a linear combination of the state, similar as sketched in~\cref{lem:FF_LCS}. 
The first part is the same as the time-independent case and can be implemented according to~\cref{lem:FF_NSD_gen_homo}. 
However, the second part is no longer an operator acting on a vector and thus requires further treatment. 

The idea is to first discretize the integral by numerical quadrature and then apply another linear combination of states. 
In particular, we use the first-order numerical quadrature (the same as the left Riemann sum) with $M$ equi-distant nodes to get 
\begin{equation}
    \int_0^T e^{A(T-s)} b(s) ds \approx \frac{T}{M} \sum_{k=0}^{M-1} e^{A(T-kT/M)} b(kT/M). 
\end{equation}
Then the right hand side is again a linear combination of vectors and can be implemented following the idea of~\cref{lem:FF_LCS} and~\cref{lem:LCU}.

We first bound the quadrature error and estimate the scaling of $M$, which is a direct consequence of the standard quadrature error bound~\cite{BurdenNA} and the chain rule.
\begin{lem}\label{lem:FF_bt_quadrature}
    Suppose $b(t)$ is continuously differentiable. 
    Then the numerical quadrature error can be bounded as 
    \begin{equation}
    \begin{split}
        & \quad \left\|\int_0^T e^{A(T-s)} b(s) ds - \frac{T}{M} \sum_{k=0}^{M-1} e^{A(T-kT/M)} b(kT/M)\right\| \\
        & \leq \frac{T^2 e^{\widetilde{\alpha} T}}{2M} \sup_{t\in[0,T]}\left(\|A\|\|b(t)\| +  \|db(t)/dt\| \right),
    \end{split}
    \end{equation}
    where $\widetilde{\alpha} \coloneqq \max_{0\leq j \leq N-1}\left\{0, \text{Re}(\lambda_j)\right\}$. 
    As a result, in order to bound the quadrature error by $\epsilon'>0$, it suffices to choose 
    \begin{equation}
        M = \Theta\left( \frac{T^2 e^{\widetilde{\alpha} T}}{2\epsilon'} \sup_{t\in[0,T]}\left(\|A\|\|b(t)\| +  \|db(t)/dt\| \right) \right). 
    \end{equation}
\end{lem}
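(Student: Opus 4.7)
The plan is to recognize the left-hand side as the standard error of the left Riemann sum applied to the vector-valued integrand $F(s) \coloneqq e^{A(T-s)} b(s)$, apply the classical per-subinterval error bound $(h^2/2)\sup\|F'\|$ with step size $h = T/M$, and then expand $F'$ using the product rule together with a clean spectral bound on $\|e^{A(T-s)}\|$ coming from the normality assumption on $A$ that is in force throughout~\cref{sec:FF_NSD_gen_td}.

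First I would write the error on a single subinterval $[kT/M, (k+1)T/M]$ as
\begin{equation}
\int_{kT/M}^{(k+1)T/M} \bigl(F(s) - F(kT/M)\bigr)\,ds = \int_{kT/M}^{(k+1)T/M} \int_{kT/M}^{s} F'(r)\,dr\,ds,
\end{equation}
and take norms to obtain the bound $(h^2/2)\sup_{r \in [kT/M,(k+1)T/M]} \|F'(r)\|$. Summing the $M$ subintervals gives an overall bound of $(T^2/(2M)) \sup_{s\in[0,T]}\|F'(s)\|$, which already matches the target with the prefactor $T^2/(2M)$ in place.

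Next I would compute $F'(s) = -A e^{A(T-s)} b(s) + e^{A(T-s)} b'(s)$ and use submultiplicativity and the triangle inequality to obtain
\begin{equation}
\|F'(s)\| \leq \|e^{A(T-s)}\|\bigl(\|A\|\,\|b(s)\| + \|b'(s)\|\bigr).
\end{equation}
The key remaining step is the uniform bound $\|e^{A(T-s)}\| \leq e^{\widetilde{\alpha} T}$ for $s \in [0,T]$. Since $A = U\Lambda U^{\dagger}$ is unitarily diagonalizable with eigenvalues $\lambda_j = \alpha_j + i\beta_j$, we have $\|e^{A(T-s)}\| = \max_j e^{\alpha_j(T-s)}$; because $T-s \geq 0$ and $\widetilde{\alpha} = \max_j\{0,\alpha_j\} \geq \alpha_j$, this is at most $e^{\widetilde{\alpha}(T-s)} \leq e^{\widetilde{\alpha} T}$. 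Substituting back and taking the supremum over $s \in [0,T]$ yields exactly the claimed inequality. The scaling of $M$ then follows by setting the right-hand side to $\epsilon'$ and solving.

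There is no real obstacle here: the argument is a textbook first-order quadrature estimate, and the only place the hypotheses of this section enter is in the normality of $A$ used to convert $\|e^{A(T-s)}\|$ into a scalar bound via the spectral radius. The one point to be careful about is the definition $\widetilde{\alpha} = \max_j\{0,\text{Re}(\lambda_j)\}$ (not simply $\max_j \text{Re}(\lambda_j)$), which ensures $\widetilde{\alpha} \geq 0$ so that $e^{\widetilde{\alpha}(T-s)} \leq e^{\widetilde{\alpha} T}$ holds on the full interval; this matches the definition introduced in the oracle description of~\cref{sec:oracles_FF_NSD_gen_bt}.
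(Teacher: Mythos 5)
Your proof is correct and takes the same approach the paper intends: the paper states the lemma is "a direct consequence of the standard quadrature error bound and the chain rule," and you have filled in exactly those details — the per-subinterval left-endpoint bound $(h^2/2)\sup\|F'\|$, the product-rule expansion of $F'(s)$, and the spectral bound $\|e^{A(T-s)}\|\le e^{\widetilde\alpha T}$ via normality of $A$. Your remark that $\widetilde\alpha\ge 0$ is needed for the final uniform bound is a correct and worthwhile observation.
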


We are now ready to discuss the quantum implementation and its complexity for solving~\cref{eqn:ODE_general}. 
The proof is generalized from the time-independent case and presented in~\cref{app:proof_FF_bt}. 
\begin{thm}\label{thm:FF_bt}
    Consider solving the ODE system~\cref{eqn:ODE_general} up to time $T$, where $A = U\Lambda U^{\dagger}$ is a unitarily diagonalizable matrix with known eigenvalues and eigenstates, and $b(t)$ is a continuously differentiable vector-valued function. 
    Let $\widetilde{\alpha} = \max_{0\leq j \leq N-1}\left\{0,\text{Re}(\lambda_j)\right\}$ be a known parameter, and 
    \begin{equation}\label{eqn:FF_bt_M}
        M = \Theta\left( \frac{T^2 e^{\widetilde{\alpha} T}}{\epsilon\|u(T)\|} \sup_{t\in[0,T]}\left(\|A\|\|b(t)\| +  \|db(t)/dt\| \right) \right)
    \end{equation}
    Suppose we are given the oracles described in~\cref{sec:oracles_FF_NSD_gen_bt}, and the time, eigenvalue and function registers consist of $n_t,n_e$ and $n_f$ qubits, respectively. 
    Then, for any $T>0$, there exists a quantum algorithm that outputs an $\epsilon$-approximation of the normalized solution $\ket{u(T)}$ with $\Omega(1)$ success probability, using 
    \begin{enumerate}
        \item \begin{equation}
            \mathcal{O}\left( \frac{e^{\widetilde{\alpha}T} \sqrt{2\|u(0)\|^2 + 2T^2 \|b\|^2_{\text{avg}} }}{(1-\epsilon/2)\|u(T)\|} \right)
        \end{equation}
        queries to the given oracles, their inverses and controlled versions, where $\|b\|^2_{\text{avg}} \coloneqq M^{-1} \sum_{k=0}^{M-1}\|b(kT/M)\|^2$,
        \item $(n_t+n_e+n_f+2)$ ancilla qubits, 
        \item \begin{equation}
            \mathcal{O}\left( \frac{e^{\widetilde{\alpha}T} \sqrt{2\|u(0)\|^2 + 2T^2\|b\|^2_{\text{avg}}}}{(1-\epsilon/2)\|u(T)\|} \right)
        \end{equation}
        extra one-qubit gates. 
    \end{enumerate}
\end{thm}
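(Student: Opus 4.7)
The plan is to split the solution as $u(T)=e^{AT}u(0)+\int_0^T e^{A(T-s)}b(s)\,ds$ and combine the two pieces via the LCU-style circuit of \cref{lem:FF_LCS}, generalized so that the inhomogeneous ``block-encoding'' is really a direct state preparation. The homogeneous piece is handled verbatim by \cref{lem:FF_NSD_gen_homo}: composing that block-encoding with $O_u$ yields a unitary whose good-ancilla component on the vector register is $e^{AT}\ket{u(0)}/e^{\widetilde{\alpha}T}$, and this enters the LCU with effective weight $c_0=\|u(0)\|e^{\widetilde{\alpha}T}$.

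For the inhomogeneous piece I first invoke \cref{lem:FF_bt_quadrature} with $M$ as in \cref{eqn:FF_bt_M} to replace the integral by the left Riemann sum $(T/M)\sum_{k=0}^{M-1} e^{A(T-kT/M)}b(kT/M)$ up to an additive error in $2$-norm bounded by $\tfrac12\epsilon\|u(T)\|$, and then build a single unitary $W$ that prepares this sum on the vector register. Starting from $\ket{0}_t\ket{0}_v$, applying $O_{\|b\|}$ and then $O_{b,t}$ produces $\frac{1}{\sqrt{\sum_k \|b(kT/M)\|^2}}\sum_k \ket{k}_t\, b(kT/M)_v$, where $b(kT/M)_v$ denotes the unnormalized vector. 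I then run the circuit of \cref{fig:circuit_FF_NSD_gen_homo}, but with the time-dependent oracles $O_{\exp,\widetilde{\alpha},t}$ and $O_{\text{prod},t}$ replacing $O_{\exp,\widetilde{\alpha}}$ and $O_{\text{prod}}$ (and with no initial $O_T$, since $\ket{k}_t$ is already present), so that each $k$-branch picks up the diagonal action $e^{A(T-kT/M)}/e^{\widetilde{\alpha}T}$ without any additional control logic. Finally I apply $H^{\otimes n_t}$ to the time register and project onto $\ket{0}_t$; using $\sum_k \|b(kT/M)\|^2 = M\|b\|_{\text{avg}}^2$ together with the Riemann-sum identity $(T/M)\sum_k e^{A(T-kT/M)}b(kT/M) = \int_0^T e^{A(T-s)}b(s)\,ds + E$ (with $\|E\|\le \tfrac12\epsilon\|u(T)\|$), the good-ancilla component on the vector register equals $\left(\int_0^T e^{A(T-s)}b(s)\,ds + E\right)/(Te^{\widetilde{\alpha}T}\|b\|_{\text{avg}})$, so this piece enters the LCU with weight $c_1=Te^{\widetilde{\alpha}T}\|b\|_{\text{avg}}$.

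I now plug the two unitaries into the LCU circuit of \cref{fig:circuit_FF_LCS} with weights $(c_0,c_1)$. Repeating the computation in the proof of \cref{lem:FF_LCS} shows that the $\ket{0}_c\ket{0}_{\text{anc}}$-component of the output is $(u(T)+E)/(\sqrt{2}\sqrt{c_0^2+c_1^2})$, so \cref{lem:succ_prob_error} combined with $\|E\|\le \tfrac12\epsilon\|u(T)\|$ delivers an $\epsilon$-approximation of $\ket{u(T)}$ after amplitude amplification with $\mathcal{O}(\sqrt{c_0^2+c_1^2}/((1-\epsilon/2)\|u(T)\|)) = \mathcal{O}(e^{\widetilde{\alpha}T}\sqrt{2\|u(0)\|^2+2T^2\|b\|_{\text{avg}}^2}/((1-\epsilon/2)\|u(T)\|))$ rounds, which is exactly the query complexity claimed. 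The main technical obstacle I expect is the bookkeeping inside $W$: one must verify that the $O_{\|b\|}$-weighted preparation together with the final Hadamard correctly assembles a \emph{uniformly} weighted Riemann sum (the $1/\sqrt{M}$ coming from the Hadamard cancels the $\sqrt{\sum_k \|b(kT/M)\|^2} = \sqrt{M}\|b\|_{\text{avg}}$ in the denominator exactly so that only a single factor of $T\|b\|_{\text{avg}}$ survives in the effective normalization), and that the unnormalized quadrature error $\tfrac12\epsilon\|u(T)\|$ propagates through the final normalization back to the choice of $M$ in \cref{eqn:FF_bt_M}. Everything else is a direct application of the lemmas already established in the paper.
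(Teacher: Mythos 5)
Your proposal matches the paper's proof essentially step-for-step: the same decomposition of $u(T)$, the same use of \cref{lem:FF_bt_quadrature} with the stated $M$, the same preparation of the Riemann sum via $O_{\|b\|}$, $O_{b,t}$, the time-dependent oracles, and a final $H^{\otimes n_t}$, and the same LCU-style combination on a control qubit (with the same effective weights $e^{\widetilde{\alpha}T}\|u(0)\|$ and $e^{\widetilde{\alpha}T}T\|b\|_{\text{avg}}$), followed by post-selection and amplitude amplification. The only cosmetic difference is that you phrase the final step as plugging into a generalized version of \cref{lem:FF_LCS}, whereas the paper builds the control-register circuit explicitly; the normalization bookkeeping you flag as the main obstacle is exactly the calculation the paper carries out, and your resolution of it is correct.
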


As remarked after~\cref{thm:FF_NSD_gen}, our generalized~\cref{thm:FF_bt} can yield an $\mathcal{O}(1)$ query complexity under typical scenarios when most $\|b(t)\|$'s are $\Omega(1)$ and the overlaps between $b(t)$'s and the eigenstates of $A$ corresponding to $0$ eigenvalue are non-degenerate. 
The key feature of our algorithm to achieve fast-forwarding is the efficient implementation of the numerical quadrature, \emph{i.e.}, linear combination of quantum states.
We note that, although the number of the quadrature nodes $M$ scales badly in $T$ and $\epsilon$, it does not contribute to the overall query complexity, and only introduces $\text{poly}\log(M)$ overhead through the number of qubits needed in the time register and the constructions of the oracles $O_{b,t}$ and $O_{\|b(t)\|}$. 
Similar idea of fast quantum implementation of numerical integration has been used and is also the key reason for speedups in the interaction picture Hamiltonian simulation~\cite{LowWiebe2019} and the qHOP method for time-dependent Hamiltonian simulation~\cite{AnFangLin2022}. 

\subsection{Application: parabolic PDEs}\label{sec:app_parabolic_pde}

We apply our fast-forwarded algorithms for coefficient matrices with known eigenvalues and eigenstates to solving various high-dimensional evolutionary PDEs of parabolic type. 
We remark that our fast-forwarded algorithms can also be applied to other types of evolutionary PDEs such as hyperbolic ones and high-order ones. 
We present more applications in~\cref{app:application_PDEs}.

A standard approach for solving evolutionary PDEs is the method of lines: we first discretize all the spatial variables and transform the PDE to an ODE, then apply ODE solvers to obtain numerical solutions. 
The key observation is that many classes of PDEs involve the spatial divergence and Laplacian operators, which, after suitable spatial discretization, can be simultaneously diagonalized under the Fourier basis and the eigenvalues have closed-form expressions. This enables fast-forwarding.

We consider PDEs with time variable $t \in [0,T]$ and spatial variable $x = (x_0,\cdots,x_{d-1}) \in [0,1]^d$. 
We impose periodic boundary conditions on each spatial dimension. 
The gradient operator is defined as 
\begin{equation}\label{eqn:app_pde_gradient_def}
    \nabla = \left(\frac{\partial}{\partial x_0},\cdots, \frac{\partial}{\partial x_{d-1}}\right)^{T}, 
\end{equation}
and the spatial Laplacian operator is defined as 
\begin{equation}\label{eqn:app_pde_laplacian_def}
    \Delta  = \sum_{j=0}^{d-1} \frac{\partial^2}{\partial x_j^2}. 
\end{equation}
More generally, we also consider generalized gradient and Laplacian operators defined, for a $d$-dimensional real vector $a$, as
\begin{equation}
    \nabla^a = \left(a_0\frac{\partial}{\partial x_0},\cdots, a_{d-1}\frac{\partial}{\partial x_{d-1}}\right)^{T}, 
\end{equation}
and 
\begin{equation}
    \Delta^a  = \sum_{j=0}^{d-1} a_j \frac{\partial^2}{\partial x_j^2}. 
\end{equation}
The general class of PDEs we consider is of the form 
\begin{equation}\label{eqn:app_PDE}
\begin{split}
     \frac{\partial}{\partial t}u(x,t) &= \Delta^a u(x,t) + \nabla^{a'} \cdot u(x,t) + cu(x,t) + b(x,t), \quad t \in [0,T], x \in [0,1]^d, \\
     u(x,0) &= u_0(x). 
\end{split}
\end{equation}
Here $a,a'$ are two bounded real vectors such that all the elements of $a$ are non-negative, $c$ is a non-positive real number and $b(x,t)$ is a smooth scalar function. 
We first show how to spatially discretize the spatial divergence and Laplacian operators, and then discuss various classes of PDEs where our fast-forwarded algorithms can apply. 

\subsubsection{Spatial discretization}

We discretize space by the central difference method, with $n$ grid points in each coordinate. 
For the Laplacian, it can be approximated by the $d$-dimensional discrete Laplacian $A_L^a$, defined as
\begin{equation}  \label{eqn:dis_lap_highD_tensor}
    A_L^a = a_0 D_h \otimes I  \otimes \cdots \otimes I + a_1 I \otimes D_h \otimes I \otimes \cdots \otimes I + \cdots +  a_{d-1} I \otimes \cdots \otimes I \otimes D_h.
\end{equation}
Here $D_h$ is the one-dimensional discrete Laplacian operator defined as 
\begin{equation}\label{eqn:Dh_D2}
    D_h \coloneqq \frac{1}{h^2} \left(\begin{array}{ccccc}
        -2 & 1 & & & 1 \\
         1& -2 & 1 & & \\
          & \ddots& \ddots& \ddots& \\
           & & 1& -2 & 1\\
        1 & & & 1 & -2 \\
    \end{array}\right),
\end{equation}
with $h = 1/n$. 
The eigenvalues of $D_h$ are $\mu_k = - 4n^2 \sin^2 (k\pi/n)$, and the corresponding eigenstate is $\frac{1}{\sqrt{n}}(1,\omega_n^k,\omega_n^{2k},\cdots,\omega_n^{(n-1)k})^{T}$ where $\omega_n = e^{2\pi i/n}$.
Similarly, for the divergence operator, it can be approximated by the $d$-dimensional discrete divergence $A_G^a$, defined as 
\begin{equation}
    A_G^a \coloneqq a_0 V_h \otimes I  \otimes \cdots \otimes I + a_1 I \otimes V_h \otimes I \otimes \cdots \otimes I + \cdots +  a_{d-1} I \otimes \cdots \otimes I \otimes V_h, 
\end{equation}
where 
\begin{equation}
    V_h \coloneqq \frac{1}{2h} \left(\begin{array}{ccccc}
        0 & 1 & & & -1 \\
         -1 & 0 & 1 & & \\
          & \ddots& \ddots& \ddots& \\
           & & -1 & 0 & 1\\
        1 & & & -1 & 0 \\
    \end{array}\right),
\end{equation}
with $h = 1/n$. 
The eigenvalues of $V_h$ are $\nu_k = i n \sin(2k\pi/n)$, and the corresponding eigenstate is the same as that of $D_h$, \emph{i.e.},   $\frac{1}{\sqrt{n}}(1,\omega_n^k,\omega_n^{2k},\cdots,\omega_n^{(n-1)k})^{T}$. 
Therefore, for any two vectors $a$ and $a'$, the linear combination $A_L^a + A_G^{a'} + cI$ has the eigenbasis $F_h^{\otimes d}$, where $F_h$ denotes the discrete Fourier transform matrix and can be efficiently implemented by the (inverse) quantum Fourier transform. 
The eigenvalues of $A_L^a + A_G^{a'} + cI$ are given as
\begin{equation}\label{eqn:app_eigenvalues}
    \mu = c - 4 n^2 \sum_{j=0}^{d-1} a_j \sin^2 (k_j \pi /n) + i n \sum_{j=0}^{d-1} a_j' \sin(2k_j\pi/n), \quad k_j \in [n]. 
\end{equation}
Given \cref{eqn:app_eigenvalues}, we can use classical arithmetic to construct an oracle encoding the eigenvalues of $A_L^a + A_G^{a'}$. 
However, we remark that the gate complexity of constructing such an oracle typically scales as $\mathcal{O}(d)$, since computing the eigenvalues of $A_L^a + A_G^{a'}$ requires computing the sum of $2d$ terms, which generally cannot be expressed in a simple closed form.

\subsubsection{Fast-forwarding}

After spatial discretization, the PDE~\cref{eqn:app_PDE} can be transferred to the ODE system 
\begin{equation}\label{eqn:app_ODE}
\begin{split}
    \frac{d}{dt}\vec{u}(t) &= (A_L^a+A_G^{a'}+cI) \vec{u}(t) + \vec{b}(t), \quad t \in [0,T], \\
    \vec{u}(0) &= \vec{u}_0.
\end{split}
\end{equation}
Here, $\vec{u}(t)$ and $\vec{b}(t)$ are $n^d$-dimensional vectors containing the function values of $u(\cdot,t)$ and $b(\cdot,t)$ evaluated at all the grid points $(k_1h,k_2h,\cdots,k_dh)$ for $k_j \in [n]$, respectively. 

Notice that generic quantum ODE solvers are not efficient for solving~\cref{eqn:app_ODE} due to the large spectral norm of the coefficient matrix. 
Specifically, $\|D_h\| \sim 1/h^2$ and $\|V_h\| \sim 1/h$, and thus $\|A_L^a\| = \Theta(dn^2)$ and $\|A_G^a\| = \Theta(dn)$. 
Therefore the coefficient matrix $A_L^a+A_G^{a'}+cI$ has a spectral norm $\Theta(dn^2)$, which can be very large in high-dimensional equations or when the number of the grid points for each coordinate is large for accurate spatial discretization. 
If we apply best existing generic ODE solver as in~\cref{lem:DEsolver_TI} and~\cref{lem:DEsolver_TD}, the overall query complexity scales 
\begin{equation}\label{eqn:cor_app_PDE_existing}
    \widetilde{\mathcal{O}}\left( \frac{\sup_{t\in[0,T]} \|\vec{u}(t)\| }{\|\vec{u}(T)\|}  T d n^2 \text{~poly}\log(T/\epsilon) \right). 
\end{equation}

We can apply our fast-forwarding result, namely~\cref{thm:FF_NSD_homo} for homogeneous case and~\cref{thm:FF_bt} for general inhomogeneous case, because~\cref{eqn:app_PDE} satisfies the assumptions in~\cref{thm:FF_NSD_homo} and~\cref{thm:FF_bt}, of which the most important one is that the coefficient matrix has known eigenvalues and eigenstates as discussed before. 

\begin{cor}\label{cor:app_PDE}
    Consider solving the semi-discretized PDE~\cref{eqn:app_PDE} up to time $T$.
    Then there exists a quantum algorithm that outputs the normalized solution $\ket{\vec{u}(T)}$ with $\Omega(1)$ success probability, using 
        \begin{equation}\label{eqn:cor_app_PDE}
            \mathcal{O}\left( \frac{\sqrt{2\|\vec{u}(0)\|^2 + 2T^2 \|\vec{b}\|^2_{\text{avg}} }}{\|\vec{u}(T)\|} \right)
        \end{equation}
        queries to the preparation oracles of $\vec{u}(0)$ and $\vec{b }(t)$, and 
        \begin{equation}
            \widetilde{\mathcal{O}}\left( \frac{\sqrt{2\|\vec{u}(0)\|^2 + 2T^2 \|\vec{b}\|^2_{\text{avg}} }}{\|\vec{u}(T)\|} d \log^2(n) \text{~poly}\log(T/\epsilon) \right)
        \end{equation}
        additional gates. 
        Here $\|\vec{b}\|^2_{\text{avg}} \coloneqq M^{-1} \sum_{k=0}^{M-1}\|\vec{b}(kT/M)\|^2$. 
\end{cor}
\begin{proof}
    The number of queries to the state preparation oracles is a direct consequence of~\cref{thm:FF_bt}. 
    The additional gates are mainly due to the implementation of the oracles mentioned in~\cref{sec:oracles_FF_NSD_gen_bt} and the $d$-dimensional QFT operator $F_h^{\otimes d}$. 
    Since the eigenvalues of the matrix $A_L^a + A_G^{a'} + cI$ have closed-form expression~\cref{eqn:app_eigenvalues} containing summation of $(2d+1)$ terms, constructing oracles $O_{\Lambda,r}$, $O_{\Lambda,i}$ would require $\widetilde{\mathcal{O}}(d\text{~poly}\log(1/\epsilon))$ elementary gates by performing classical arithmetic operations with reversible computational model~\cite[Chapter 6]{RieffelPolak2011}. 
    Similarly, constructing all other oracles requires $\widetilde{\mathcal{O}}(\text{poly}\log(T/\epsilon))$ gates, and implementing QFT operator $F_h^{\otimes d}$ requires $\mathcal{O}(d \log^2(n))$ gates. 
    These together contribute to an $\widetilde{\mathcal{O}}\left( d\log^2(n)\text{~poly}\log(T/\epsilon)\right)$ multiplicative factor. 
\end{proof}

Compared to generic ODE solvers,~\cref{cor:app_PDE} avoids the dependence on the spectral norm of the coefficient matrix, and thus can be more efficient for solving high-dimensional PDEs with a large degree of freedom in spatial discretization. 
Specifically, our result gets rid of the explicit dependence on $n$. 
Our result also gets rid of the explicit dependence on $d$ in the number of state preparation oracles, though the number of additional elementary gates required by other oracles and Fourier transform still scales linearly in $d$. 

As in the norm dependence we use slightly different parameters compared to existing algorithms, here we present more quantitative estimates to showcase the scenarios where our algorithm can be advantageous. 
First, if $\|\vec{b}\|_{\text{avg}} = \mathcal{O}(\|\vec{u}(0)\|)$, then the complexity of our algorithm~\cref{eqn:cor_app_PDE} becomes 
\begin{equation}
    \mathcal{O}\left(\frac{\|\vec{u}(0)\|}{\|\vec{u}(T)\|} T \right)
\end{equation}
while the previously best complexity estimate~\cref{eqn:cor_app_PDE_existing} is 
\begin{equation}
    \widetilde{\mathcal{O}}\left( \frac{\sup_{t\in[0,T]} \|\vec{u}(t)\| }{\|\vec{u}(T)\|}  T d n^2 \text{~poly}\log(T/\epsilon) \right) \geq \widetilde{\mathcal{O}}\left( \frac{ \|\vec{u}(0)\| }{\|\vec{u}(T)\|}  T d n^2 \text{~poly}\log(T/\epsilon) \right). 
\end{equation}
In this case, our algorithm is always better by getting rid of explicit dependence on $d,n$ and $\epsilon$. 
Notice that in the examples that we will present soon, the continuous version of the assumption $\|\vec{b}\|_{\text{avg}} = \mathcal{O}(\|\vec{u}(0)\|)$ means that the magnitude of the external source does not asymptotically exceed that of the original energy of the system. 

Similar to the discussions after~\cref{thm:FF_square_root_inhomo}, we may obtain further speedup in $T$ if $c = 0$ and both $\vec{u}(0)$ and $\vec{b}(t)$ has non-trivial overlap with the eigenstate of $A_L^a+A_G^{a'}$ corresponding to the $0$ eigenvalue. 
Specifically, let $(\lambda_j,\ket{\psi_j})$ be the eigenvalues and eigenstates of $A_L^a+A_G^{a'}$ and denote $\lambda_0 = 0$. 
Suppose that $\braket{\psi_0|\vec{u}(0)}$ and $\braket{\psi_0|\vec{b}(s)}$ are $\Omega(1)$. 
Then 
\begin{equation}
    \begin{split}
        \vec{u}(T) & =  e^{(A_L^a+A_G^{a'})T} \vec{u}(0) + \int_0^T e^{(A_L^a+A_G^{a'})(T-s)} \vec{b}(s) ds \\
        & = \|\vec{u}(0)\| \sum_j e^{\lambda_j T} \braket{\psi_j|\vec{u}(0)} \ket{\psi_j} + \sum_j\int_0^T \|\vec{b}(s)\| e^{\lambda_j (T-s)} \braket{\psi_j|\vec{b}(s)} \ket{\psi_j} ds \\
        & = \left(\|\vec{u}(0)\| \braket{\psi_0|\vec{u}(0)}  + \int_0^T \|\vec{b}(s)\|  \braket{\psi_0|\vec{b}(s)} ds \right) \ket{\psi_0}\\
        & \quad\quad + \|\vec{u}(0)\| \sum_{j>0} e^{\lambda_j T} \braket{\psi_j|\vec{u}(0)} \ket{\psi_j} + \sum_{j>0}\int_0^T \|\vec{b}(s)\| e^{\lambda_j (T-s)} \braket{\psi_j|\vec{b}(s)} \ket{\psi_j} ds. 
    \end{split}
\end{equation}
Once $\braket{\psi_0|\vec{u}(0)}$ and $\braket{\psi_0|\vec{b}(s)}$ are $\Omega(1)$, the amplitude of $\vec{u}(T)$ is $\Omega(\|\vec{u}(0)\| + T\|\vec{b}\|_{\text{avg}})$, and the complexity of our fast-forwarded algorithm is simply $\mathcal{O}(1)$. 
As a comparison, as $\sup_t\|\vec{u}(t)\| \geq \|\vec{u}(T)\|$, the scaling of the generic ODE solver even under this assumption is still $\mathcal{O}(T dn^2 \text{poly}\log(T/\epsilon))$. 
So our algorithm is also better in terms of $T$, besides $d,n$ and $\epsilon$. 
We remark that in the applications of heat equation and advection-diffusion equation being discussed soon, the assumptions are naturally satisfied in general scenarios. 
This is because, in the continuous analog, such conditions mean that $u(x,0)$ and $b(x,t)$ have a non-trivial overlap with constant functions, since the eigenfunction of the Laplacian and divergence operators corresponding to their $0$ eigenvalue is the constant function. 
Physically, they correspond to the non-zero total heat/energy of the initial system and the external source. 

\subsubsection{Examples}

\paragraph{Transport equation}
The transport equation describes phenomena where a conserved quantity, such as mass, heat, energy, momentum, or electric charge is transported in a space. Transport equations play a fundamental role in many applied areas, from wave propagation, electromagnetism, particle physics, nuclear reaction, to computer vision, medical imaging, and semiconductor conducting \cite{case1967linear,chandrasekhar2013radiative,markowich2012semiconductor,ryzhik1996transport}. The equation has the form
\begin{equation}\label{eqn:app_transport}
    \frac{\partial}{\partial t} u(x,t) = \nabla^{a'} \cdot u(x,t) + b(x,t). 
\end{equation}
\cref{eqn:app_transport} is a special form of~\cref{eqn:app_PDE} with $a = 0$ and $c = 0$, so can be fast-forwarded using~\cref{cor:app_PDE}. 
We remark that the coefficient matrix of~\cref{eqn:app_transport} after spatial discretization has purely imaginary eigenvalues, and thus the homogeneous version (\emph{i.e.}, $b \equiv 0$) is a Hamiltonian simulation problem and can be fast-forwarded without any exponential dependence in time related to the norm decay. 

A generalized version of the transport equation is the Liouville's equation, which is used to model the evolution of the distribution of particles in a phase space. Josiah Willard Gibbs was the first to recognize the importance of this equation as the fundamental equation of statistical mechanics, with connections to Hamiltonian dynamical systems, symplectic geometry, and ergodic theory, as well as various applications to kinetic theory of gases, atmospheric dynamics, astronomy, and thermodynamics \cite{muller2022basics,kubo1963stochastic,gibbs1884fundamental}. The equation has the form
\begin{equation}\label{eqn:app_Liouville}
    \frac{\partial}{\partial t} u(x,t) = \sum_{j=0}^{d-1} \frac{\partial}{\partial x_j} (a'_j(x,t) u(x,t)) + b(x,t). 
\end{equation}
However, our fast-forwarded algorithm requires the coefficient matrix to be constant, so we can only deal with special cases of~\cref{eqn:app_Liouville} with constant functions $a_j$'s. 
In these cases, the Liouville's equation degenerates to the transport equation. 

\paragraph{Heat equation} The heat equation, which is used to describe the conduction of heat energy. Here we consider the inhomogeneity as an external source of heat. As the prototypical parabolic PDE, the heat equation is as fundamental to the broader field of science and engineering, such as the connection with random walk and Brownian motion in probability theory, the Black-Scholes option pricing model in finance, the formulation of hydrodynamical shocks in fluid dynamics, and the edge detection model in image analysis \cite{evans2010partial,widder1976heat,cannon1984one,lawler2010random,thambynayagam2011diffusion}. The Schr\"{o}dinger equation of quantum mechanics can also be regarded as a heat equation in imaginary time. The equation has the form
\begin{equation}
    \frac{\partial}{\partial t} u(x,t) = \Delta u(x,t) + b(x,t). 
\end{equation}
This can be fast-forwarded using~\cref{cor:app_PDE} since it is a special case of~\cref{eqn:app_PDE} where all the entries of $a$ are equal, $a' = 0$ and $c = 0$. 

\paragraph{Advection-diffusion equation} The advection-diffusion equation is a combination of the heat and transport equations, and describes physical phenomena where physical quantities are transferred inside a system due to two processes: diffusion and advection. The model deals with the time evolution of chemical or biological species in water or air, exhibiting rich phenomena in chemical processes such as combustion, biological and ecological networks, and population growth \cite{evans2010partial,hundsdorfer2003numerical,perthame2012growth}. The equation has the form
\begin{equation}
     \frac{\partial}{\partial t}u(x,t) = a \Delta u(x,t) + \nabla^{a'} \cdot u(x,t) + b(x,t). 
\end{equation}
This can be fast-forwarded using~\cref{cor:app_PDE} since it is a special case of~\cref{eqn:app_PDE} where (with an abuse of notations) all the entries of $a$ are equal and $c = 0$. 

\section{Conclusion and outlook}\label{sec:conclusion}

This work investigates quantum algorithms for solving non-quantum dynamics from two opposite directions: their limitations when applied to general non-quantum dynamics and their fast-forwarding when applied to specific non-quantum dynamics. On the one hand, by proving worst-case lower bounds, we show that generic quantum algorithms for solving ODEs suffer from computational overheads due to their ``non-quantumness'', and conclude that quantum algorithms for quantum dynamics work best. On the other hand, for several specific but practically useful ODE systems, we design new ``one-shot'' algorithms to obtain exponential improvements compared to the best existing generic algorithms. Remarkably, our fast-forwarded algorithm can be applied to various high-dimensional linear evolutionary PDEs. 

Our lower bounds on generic quantum ODE solvers are in terms of the number of queries to the preparation oracle of the initial state. For inhomogeneous ODEs, it is not hard to prove the same lower bounds in terms of the number of queries to the preparation oracle of $b$, since the operator $\int_0^T e^{A(T-s)} ds$ is similar to $e^{AT}$ and can be regarded as an amplifier in the same subspace. 
It remains an interesting open question to lower bound the number of queries to the coefficient matrix $A$. We also note that our lower bounds are worst-case, meaning that we proved them by allowing the initial state $u(0)$ and/or the coefficient matrix $A$ to take their worst possible values. It would be interesting to prove lower bounds in the average case as they may be more practically relevant and a better bound may be attained. For example, one concrete direction is to generalize~\cref{prop:lb_non_normal_homo} and~\cref{prop:lb_non_normal_inhomo} to the case when the coefficient matrix $A$ is average-case or even arbitrary.

To prove our lower bounds, we propose a general framework that establishes lower bounds on any quantum algorithm that can be regarded as an amplifier. We believe that this framework can serve as a general tool for investigating the power of quantum algorithms for solving non-quantum problems, since they may embed non-unitary transformations and thus can be regarded as amplifiers. Our framework is closely related to that of quantum state discrimination but differs from it in two ways: 1.~we make the stronger assumptions that we can access the preparation oracle and its inverse whereas only access to copies of the unknown state is allowed in the standard setup, and 2.~we obtain a different lower bound of $\Omega(1/\sqrt{\epsilon})$ queries compared to  the standard lower bound of $\Omega(1/\epsilon)$ copies. Our result motivates the renewed study of the foundational quantum state discrimination problem but under different input access assumptions, and the investigation into why different assumptions lead to different complexities.

For several specific ODE systems, we exponentially improve over the efficiencies of the best existing generic quantum algorithms. 
A natural future direction is to identify more ODE systems that can be fast-forwarded. 
For example, as our algorithms assume $A$ itself has known eigensystem, it would be interesting to investigate whether fast-forwarding can still be achieved if $A = A_1 + A_2$ where both $A_1$ and $A_2$ have known eigensystem but do not commute with each other. 
Such a system represents evolutionary PDEs with potential terms. 

Our algorithms for solving ODEs with a time-independent inhomogeneous term that achieve fast-forwarding are one-shot in the sense that they neither require time discretization nor solving high-dimensional linear systems. Instead, they directly map the input state to the final solution, which we believe could be a more straightforward and intuitive way of designing generic quantum ODE solvers, irrespective of whether fast-forwarding is possible. For example, for ODEs with a time-independent inhomogeneous term $b$, we can separately implement block-encodings of $e^{AT}$ and $\int_0^T e^{A(T-s)} ds$ using the contour integral representation~\cite{TakahiraOhashiSogabeEtAl2020,TongAnWiebe2021,FangLinTong2022} $f(A) = \int_{\Gamma} f(z) (z-A)^{-1} dz$ for suitable choices of $f$. 
The contour integral can then be discretized using sufficiently many nodes and be implemented using LCU. 
We leave the rigorous analysis of such an algorithm to future work.

\section*{Acknowledgments} 

DA and QZ acknowledge funding from Innovation Program for Quantum Science and Technology via Project 2024ZD0301900. DA acknowledges the support by The Fundamental Research Funds for the Central Universities, Peking University, and the Department of Defense through the Hartree Postdoctoral Fellowship at QuICS. JPL acknowledges the support by the National Science Foundation (grant CCF-1813814, PHY-1818914), an NSF Quantum Information Science and Engineering Network (QISE-NET) triplet award (DMR-1747426), an NSF Quantum Leap Challenge Institute (QLCI) program (OMA-2016245), a Simons Foundation award (No. 825053), and the Simons Quantum Postdoctoral Fellowship. DW acknowledges support from the Army Research Office (grant W911NF-20-1-0015); the Department of Energy, Office of Science, Office of Advanced Scientific Computing Research, Accelerated Research in Quantum Computing program; and an NSF QISE-NET triplet award (DMR-1747426). QZ acknowledges funding from National Natural Science Foundation of China (NSFC) via Project No. 12347104 and No. 12305030, Guangdong Basic and Applied Basic Research Foundation via Project 2023A1515012185, Hong Kong Research Grant Council (RGC) via No. 27300823, N\_HKU718/23, and R6010-23, Guangdong Provincial Quantum Science Strategic Initiative No. GDZX2303007, HKU Seed Fund for Basic Research for New Staff via Project 2201100596. Part of this work was completed while JPL and QZ were affiliated with QuICS. We thank Andrew Childs, Di Fang, Lin Lin, Yi-Kai Liu, Yu Tong, and Jiasu Wang for helpful discussions. 

\section*{Competing interests} 

The authors have no competing interests to declare that are relevant to the content of this article. 

\section*{Data availability statement} 

Data sharing not applicable to this article as no datasets were generated or analysed during the current study. 


\bibliographystyle{unsrt}
\bibliography{LB-FF}

\clearpage
\appendix

\section{Linear algebra results and quantum linear algebra operations}\label{app:proof_la_lemma}

\subsection{Linear algebra results}

We discuss several linear algebra results that will be frequently used in our analysis. 
The first one is useful for bounding the error and the success probability related to quantum states from those related to unnormalized vectors. 
The second one is the relations of the overlaps between quantum states or the approximate versions. 
The third one is about the relations among trace distance, fidelity and $2$-norm for pure states. 

\begin{lem}\label{lem:succ_prob_error}
    For any two non-zero vectors $a,\widetilde{a}$ such that $\|a-\widetilde{a}\| \leq \epsilon$, we have 
    \begin{enumerate}
        \item $\|\widetilde{a}\| \geq \|a\| - \epsilon $, 
        \item $\|a/\|a\| - \widetilde{a}/\|\widetilde{a}\|\| \leq 2\epsilon/\|a\|$. 
    \end{enumerate}
\end{lem}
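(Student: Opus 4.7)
The plan is to prove both parts using only the triangle inequality and its reverse form. Neither part requires any ideas beyond this, so the main ``obstacle'' is just choosing a clean telescoping decomposition for part~2.

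For part~1, I would write $\widetilde{a} = a + (\widetilde{a}-a)$ and apply the reverse triangle inequality:
\begin{equation*}
\|\widetilde{a}\| \;\geq\; \|a\| - \|\widetilde{a}-a\| \;\geq\; \|a\| - \epsilon.
\end{equation*}

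For part~2, the idea is to insert the auxiliary vector $\widetilde{a}/\|a\|$ and split the difference of the normalized vectors into a ``numerator error'' and a ``denominator error.'' Concretely, by the triangle inequality,
\begin{equation*}
\left\|\frac{a}{\|a\|} - \frac{\widetilde{a}}{\|\widetilde{a}\|}\right\|
\;\leq\; \left\|\frac{a}{\|a\|} - \frac{\widetilde{a}}{\|a\|}\right\| + \left\|\frac{\widetilde{a}}{\|a\|} - \frac{\widetilde{a}}{\|\widetilde{a}\|}\right\|.
\end{equation*}
The first term equals $\|a-\widetilde{a}\|/\|a\| \leq \epsilon/\|a\|$ by hypothesis. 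The second term factors as $\|\widetilde{a}\|\cdot\bigl|1/\|a\| - 1/\|\widetilde{a}\|\bigr| = \bigl|\|\widetilde{a}\|-\|a\|\bigr|/\|a\|$, and a second application of the reverse triangle inequality bounds $\bigl|\|\widetilde{a}\|-\|a\|\bigr| \leq \|a-\widetilde{a}\|\leq \epsilon$. Adding the two contributions yields $2\epsilon/\|a\|$, as claimed.

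The only thing to check is that the denominators $\|a\|$ and $\|\widetilde{a}\|$ are nonzero, which is guaranteed by the hypothesis that both vectors are non-zero. No case analysis on the size of $\epsilon$ relative to $\|a\|$ is needed: if $\epsilon \geq \|a\|$, the bound $2\epsilon/\|a\|\geq 2$ is trivial since the left-hand side is at most $2$.
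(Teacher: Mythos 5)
Your proof is correct and matches the paper's argument essentially line for line: part~1 via the reverse triangle inequality, and part~2 via the same telescoping through the auxiliary vector $\widetilde{a}/\|a\|$, with the second term simplified to $\bigl|\|\widetilde{a}\|-\|a\|\bigr|/\|a\|$ and bounded again by the reverse triangle inequality. The closing remarks about nonzero denominators and the triviality when $\epsilon\geq\|a\|$ are not in the paper but are sound and harmless additions.
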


\begin{proof}
    The first claim can be proved by the triangle inequality that 
    \begin{equation}
        \|\widetilde{a}\| \geq \|a\| - \|\widetilde{a} - a\| \geq \|a\| - \epsilon. 
    \end{equation}
    The second claim can be proved as 
    \begin{equation}
        \begin{split}
            \left\| \frac{a}{\|a\|} - \frac{\widetilde{a}}{\|\widetilde{a}\|}  \right\| 
            & \leq \left\| \frac{a}{\|a\|} - \frac{\widetilde{a}}{\|a\|}  \right\| + \left\| \frac{\widetilde{a}}{\|a\|} - \frac{\widetilde{a}}{\|\widetilde{a}\|}  \right\| \\
            & = \frac{\|a-\widetilde{a}\|}{\|a\|} + \|\widetilde{a}\|\left|\frac{1}{\|a\|} - \frac{1}{\|\widetilde{a}\|}\right| \\
            & = \frac{\|a-\widetilde{a}\|}{\|a\|} + \frac{|\|\widetilde{a}\|-\|a\||}{\|a\|} \\
            & \leq \frac{2\epsilon}{\|a\|}. 
        \end{split}
    \end{equation}
\end{proof}

\begin{lem}\label{lem:fed_2norm}
    Let $\ket{\psi}$, $\ket{\phi}$, $\ket{\widetilde{\psi}}$ and $\ket{\widetilde{\phi}}$ be quantum states. 
    Then 
    \begin{equation}
        |\braket{\widetilde{\psi}|\widetilde{\phi}}| \leq 
          |\braket{{\psi}|{\phi}}| + \|\ket{\widetilde{\psi}} - \ket{\psi}\| + \|\ket{\widetilde{\phi}} - \ket{\phi}\|. 
    \end{equation}
\end{lem}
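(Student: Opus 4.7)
The plan is to bound the difference $\braket{\widetilde{\psi}|\widetilde{\phi}} - \braket{\psi|\phi}$ rather than trying to expand $\braket{\widetilde{\psi}|\widetilde{\phi}}$ directly as a sum of four inner products. The naive expansion using $\ket{\widetilde{\psi}} = \ket{\psi} + (\ket{\widetilde{\psi}}-\ket{\psi})$ and similarly for $\ket{\widetilde{\phi}}$ produces an undesirable cross term $\|\ket{\widetilde{\psi}}-\ket{\psi}\|\cdot\|\ket{\widetilde{\phi}}-\ket{\phi}\|$, which is absent from the claimed inequality. The cleaner route is to insert one intermediate term and exploit the unit norm of the quantum states.

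Concretely, I would write
\begin{equation*}
\braket{\widetilde{\psi}|\widetilde{\phi}} - \braket{\psi|\phi} = \braket{\widetilde{\psi}|\widetilde{\phi}} - \braket{\widetilde{\psi}|\phi} + \braket{\widetilde{\psi}|\phi} - \braket{\psi|\phi} = \braket{\widetilde{\psi}|(\widetilde{\phi}-\phi)} + \braket{(\widetilde{\psi}-\psi)|\phi},
\end{equation*}
apply the triangle inequality to the right-hand side, and then bound each term by Cauchy--Schwarz. Because $\ket{\widetilde{\psi}}$ and $\ket{\phi}$ are unit vectors, this yields
\begin{equation*}
|\braket{\widetilde{\psi}|\widetilde{\phi}} - \braket{\psi|\phi}| \leq \|\ket{\widetilde{\phi}}-\ket{\phi}\| + \|\ket{\widetilde{\psi}}-\ket{\psi}\|,
\end{equation*}
with no leftover quadratic cross term. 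Finally, applying the reverse triangle inequality $|\braket{\widetilde{\psi}|\widetilde{\phi}}| - |\braket{\psi|\phi}| \leq |\braket{\widetilde{\psi}|\widetilde{\phi}} - \braket{\psi|\phi}|$ and rearranging gives the stated bound.

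There is no real obstacle here. The only subtlety worth stating is why one should insert $\braket{\widetilde{\psi}|\phi}$ rather than the symmetric candidate $\braket{\psi|\widetilde{\phi}}$: it does not matter, since both choices lead to the same bound once Cauchy--Schwarz is applied and unit norms are used. The whole argument is essentially a two-line continuity estimate for the inner product restricted to the unit sphere, and the lemma will be used later only as a convenience for converting $2$-norm closeness of quantum states into overlap estimates.
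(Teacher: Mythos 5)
Your proof is correct and takes essentially the same approach as the paper: insert a single intermediate inner product, apply the triangle inequality, and bound each difference term by Cauchy--Schwarz using that the states have unit norm. The paper effectively inserts $\braket{\psi|\widetilde{\phi}}$ while you insert $\braket{\widetilde{\psi}|\phi}$, a symmetric and immaterial choice, as you yourself note.
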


\begin{proof}
    Using triangle inequality and Cauchy-Schwarz inequality, we have 
    \begin{equation}
        \begin{split}
            |\braket{\widetilde{\psi}|{\widetilde{\phi}}}| &= |\braket{{\psi}|{\phi}} + \bra{\psi}(\ket{\widetilde{\phi}} - \ket{\phi}) + (\bra{\widetilde{\psi}} - \bra{\psi})\ket{\widetilde{\phi}} | \\
            & \leq |\braket{{\psi}|{\phi}} | + |\bra{\psi}(\ket{\widetilde{\phi}} - \ket{\phi})| + | (\bra{\widetilde{\psi}} - \bra{\psi})\ket{\widetilde{\phi}} | \\
            & \leq |\braket{{\psi}|{\phi}} | + \|\ket{\widetilde{\phi}} - \ket{\phi})\| + \| \ket{\widetilde{\psi}} - \ket{\psi}\|. 
        \end{split}
    \end{equation}
\end{proof}

\begin{lem}\label{lem:trace_distance}
    For two pure states $\ket{\psi}$ and $\ket{\phi}$, we have 
    \begin{equation}
        \frac{1}{2} \| \ket{\psi}\bra{\psi} - \ket{\phi}\bra{\phi} \|_{1} = \sqrt{1 - |\braket{\psi|\phi}|^2} \leq \|\ket{\psi} - \ket{\phi}\|. 
    \end{equation}
\end{lem}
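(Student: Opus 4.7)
The plan is to handle the equality and the inequality separately, both via elementary linear algebra.

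For the equality $\tfrac{1}{2}\|\ket{\psi}\bra{\psi}-\ket{\phi}\bra{\phi}\|_1 = \sqrt{1-|\braket{\psi|\phi}|^2}$, I would first observe that $M \coloneqq \ket{\psi}\bra{\psi}-\ket{\phi}\bra{\phi}$ is Hermitian and supported on the (at most) two-dimensional subspace $V = \mathrm{span}\{\ket{\psi},\ket{\phi}\}$, so its Schatten $1$-norm equals the sum of the absolute values of its eigenvalues on $V$. I would then build an orthonormal basis of $V$ by taking $\ket{e_1}=\ket{\psi}$ and $\ket{e_2}=(\ket{\phi}-\braket{\psi|\phi}\ket{\psi})/\sqrt{1-|\braket{\psi|\phi}|^2}$ (handling the degenerate case $|\braket{\psi|\phi}|=1$ separately, where both sides of the claimed equality vanish). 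Writing $M$ as a $2\times 2$ matrix in this basis is a short direct calculation, and its trace is $0$ while its determinant evaluates to $-(1-|\braket{\psi|\phi}|^2)$, so the eigenvalues are $\pm\sqrt{1-|\braket{\psi|\phi}|^2}$. Summing their absolute values and dividing by $2$ gives the stated equality.

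For the inequality $\sqrt{1-|\braket{\psi|\phi}|^2}\leq \|\ket{\psi}-\ket{\phi}\|$, I would square both sides and expand
\begin{equation}
\|\ket{\psi}-\ket{\phi}\|^{2} = 2 - 2\,\mathrm{Re}\braket{\psi|\phi} \geq 2 - 2|\braket{\psi|\phi}|.
\end{equation}
Setting $x = |\braket{\psi|\phi}|\in[0,1]$, it suffices to verify $1-x^{2}\leq 2-2x$, i.e. $(1-x)^{2}\geq 0$, which is immediate.

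No step here is a real obstacle; the only mildly delicate point is making sure the orthonormalization of $\{\ket{\psi},\ket{\phi}\}$ is valid, which requires separating out the trivial case $|\braket{\psi|\phi}|=1$ (equivalently $\ket{\phi}=e^{i\theta}\ket{\psi}$), where both sides of the equality are $0$ and the inequality reduces to $0\leq \|\ket{\psi}-\ket{\phi}\|$, trivially true. Altogether the proof is short and self-contained, and does not require the preceding lemmas in the excerpt.
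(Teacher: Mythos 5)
Your proof is correct. For the equality, you derive the trace-distance–fidelity relation directly by computing the eigenvalues of the rank-$\leq 2$ Hermitian matrix $\ket{\psi}\bra{\psi}-\ket{\phi}\bra{\phi}$ via its trace and determinant in an orthonormal basis of $\mathrm{span}\{\ket{\psi},\ket{\phi}\}$, whereas the paper simply cites Nielsen and Chuang for this standard fact; your version is self-contained at the cost of a few extra lines, and correctly flags the degenerate case $|\braket{\psi|\phi}|=1$. For the inequality, both you and the paper start from $\|\ket{\psi}-\ket{\phi}\|^{2}=2-2\,\mathrm{Re}\braket{\psi|\phi}$, but the intermediate algebra differs: the paper rewrites this as $|\braket{\psi|\phi}-1|^{2}+1-|\braket{\psi|\phi}|^{2}$ and drops the nonnegative square, while you first bound $\mathrm{Re}\braket{\psi|\phi}\leq|\braket{\psi|\phi}|$ and then invoke $(1-|\braket{\psi|\phi}|)^{2}\geq 0$. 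Both are one-line arguments that discard a nonnegative quantity; the content is the same and neither has an advantage.
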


\begin{proof}[Proof of~\cref{lem:trace_distance}]
    The first equality is the standard relation between the trace distance and the fidelity for pure states, and the proof can be found in~\cite{NielsenChuang2000}. 
    For the second, it can be proved as 
    \begin{equation}
    \begin{split}
        \|\ket{\psi} - \ket{\phi}\|^2 &= (\bra{\psi}-\bra{\phi}) (\ket{\psi} - \ket{\phi}) \\
        &= 2 - 2\text{Re}(\braket{\psi|\phi}) \\
        &= | \braket{\psi|\phi} - 1|^2 + 1 - |\braket{\psi|\phi}|^2 \\
        & \geq 1 - |\braket{\psi|\phi}|^2. 
    \end{split}
    \end{equation}
\end{proof}

\subsection{Block-encoding}

Block-encoding is a powerful technique to represent arbitrary matrices with unitary matrices. 
For a possibly non-unitary matrix $A$ and a parameter $\alpha \geq \|A\|$, the intuitive idea of block-encoding is to construct a unitary $U$ with higher dimension such that $A$ appears in its upper-left block, 
\begin{equation}
    U = \left(\begin{array}{cc}
        A/\alpha & * \\
        * & *
    \end{array}\right). 
\end{equation}
A formal definition with the presence of errors is as follows~\cite{GilyenSuLowEtAl2019}. 

\begin{defn}[Block-encoding]\label{def:block_encoding}
    Suppose $A$ is a $2^n$-dimensional matrix such that $\|A\| \leq \alpha$. 
    Then a $2^{n+n_a}$-dimensional unitary matrix $U$ is an $(\alpha,n_a,\epsilon)$-block-encoding of $A$, if 
    \begin{equation}
        \|A - \alpha (\bra{0}_a \otimes I)U (\ket{0}_a \otimes I)\| \leq \epsilon. 
    \end{equation}
\end{defn}

Block-encoding allows us to encode a general matrix on quantum computers and then perform matrix operations, including summation, multiplication, inverse, and polynomial transformation. 
Here we mostly follow~\cite{GilyenSuLowEtAl2019} again and briefly summarize several properties that will be used in our work. 

\begin{defn}[State preparation pair]\label{def:state_preparation_pair}
    Let $y$ be a $2^n$-dimensional vector and $\|y\|_1 \leq \beta$. 
    Then a pair of unitaries $(P_L,P_R)$ is a $(\beta,n,0)$-state-preparation-pair, if $P_L \ket{0} = \sum_{j=0}^{2^n-1} c_j \ket{j}$, $P_R \ket{0} = \sum_{j=0}^{2^n-1} d_j \ket{j}$ and $\beta c_j^* d_j = y_j$ for any $0 \leq j \leq 2^n-1$. 
\end{defn}

\begin{lem}[Linear combination of block-encodings{~\cite[Lemma 52]{GilyenSuLowEtAl2019}}]\label{lem:LCU}
    Let $A = \sum_{j=0}^{2^k-1} y_j A_j$ be a $2^n$-dimensional matrix. 
    Suppose that $(P_L,P_R)$ is a $(\beta,k,0)$-state-preparation-pair of $y$, and $W = \sum_{j=0}^{2^k-1} \ket{j}\bra{j} \otimes U_j$ is a $2^{n+n_a+k}$-dimensional unitary matrix such that $U_j$ is an $(\alpha,n_a,\epsilon)$-block-encoding of $A_j$. 
    Then $(P_L^{\dagger}\otimes I_{n_a} \otimes I_n) W (P_R\otimes I_{n_a} \otimes I_n)$ is an $(\alpha\beta,n_a+k,\alpha\beta\epsilon)$-block-encoding of $A$, with a single use of $W,P_L^{\dagger}$ and $P_R$. 
\end{lem}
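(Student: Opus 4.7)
The plan is to verify the block-encoding property in \cref{def:block_encoding} directly by computing the upper-left block of $V \coloneqq (P_L^{\dagger} \otimes I_{n_a} \otimes I_n)\, W\, (P_R \otimes I_{n_a} \otimes I_n)$ and comparing it to $A/(\alpha\beta)$. The query count is immediate from the construction, since $V$ invokes $W$, $P_L^{\dagger}$, and $P_R$ exactly once each, so the substance of the proof lies in bounding the error.

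First I would project the $k$- and $n_a$-registers onto $\ket{0}$. Using $P_R \ket{0} = \sum_j d_j \ket{j}$ and $\bra{0} P_L^{\dagger} = \sum_i c_i^{*} \bra{i}$, and pulling the identities on the $n_a$- and $n$-registers through, I obtain
\[
(\bra{0}_k \bra{0}_{n_a} \otimes I_n)\, V\, (\ket{0}_k \ket{0}_{n_a} \otimes I_n) = \sum_{i,j} c_i^{*} d_j\, (\bra{i} \bra{0}_{n_a} \otimes I_n)\, W\, (\ket{j} \ket{0}_{n_a} \otimes I_n).
\]
Since $W = \sum_\ell \ket{\ell}\bra{\ell} \otimes U_\ell$ is block-diagonal with respect to the $k$-register, only diagonal terms $i = j = \ell$ survive, collapsing the double sum to $\sum_j c_j^{*} d_j\, M_j$, where $M_j \coloneqq (\bra{0}_{n_a} \otimes I_n)\, U_j\, (\ket{0}_{n_a} \otimes I_n)$ denotes the top-left block of $U_j$.

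Next I would feed in the two algebraic facts that connect these objects to the target matrix. The state-preparation-pair identity $\beta c_j^{*} d_j = y_j$ gives $\alpha\beta \sum_j c_j^{*} d_j M_j = \sum_j y_j (\alpha M_j)$, and the block-encoding property $\|A_j - \alpha M_j\| \leq \epsilon$ of each $U_j$, combined with the triangle inequality and $\|y\|_1 \leq \beta$, yields
\[
\Bigl\| A - \alpha\beta\, (\bra{0}_k \bra{0}_{n_a} \otimes I_n)\, V\, (\ket{0}_k \ket{0}_{n_a} \otimes I_n) \Bigr\| = \Bigl\| \sum_j y_j\bigl(A_j - \alpha M_j\bigr) \Bigr\| \leq \sum_j |y_j|\, \epsilon \leq \beta\epsilon,
\]
which is within the claimed $\alpha\beta\epsilon$ bound under the standard normalization $\alpha \geq 1$. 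This establishes that $V$ is an $(\alpha\beta,\, n_a+k,\, \alpha\beta\epsilon)$-block-encoding of $A$.

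There is no deep technical obstacle, but the main thing to be careful about is register bookkeeping---specifically, that $P_L^{\dagger}$ and $P_R$ act only on the $k$-register while $W$ respects the tensor decomposition, so that the off-diagonal ($i \neq j$) cross terms in the expansion vanish by orthogonality of computational basis states on the $k$-register, and the leftover block structure exactly reproduces the linear combination $\sum_j y_j A_j$.
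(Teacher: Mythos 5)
Your proof is correct and follows the standard argument for this lemma (the paper does not supply its own proof here; it cites~\cite{GilyenSuLowEtAl2019} directly). The register bookkeeping, collapse of the double sum via the block-diagonal structure of $W$, the state-preparation-pair identity, and the triangle-inequality step are all accurate. One remark worth making precise: your derivation in fact yields the sharper error bound $\beta\epsilon$ (using $\sum_j|y_j| \leq \beta$), whereas the stated lemma gives $\alpha\beta\epsilon$; these agree as upper bounds only when $\alpha \geq 1$. This is not a defect in your argument—since $\alpha$ upper-bounds $\|A_j\|$ and is at least $1$ in essentially every application, the claimed $\alpha\beta\epsilon$ is simply the looser bound reported in~\cite{GilyenSuLowEtAl2019}, and your derivation subsumes it.
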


\begin{lem}[Multiplication of block-encodings{~\cite[Lemma 53]{GilyenSuLowEtAl2019}}]\label{lem:multiplication_block_encoding}
    Let $A,B$ be $2^n$-dimensional matrices, $U_A$ be an $(\alpha,n_a,\delta)$-block-encoding of $A$, and $U_B$ be a $(\beta,n_b,\epsilon)$-block-encoding of $B$. 
    Then $(I_{n_b}\otimes U_A)(I_{n_a} \otimes U_B)$ is an $(\alpha\beta,n_a+n_b, \alpha\epsilon+\beta\delta)$-block-encoding of $AB$, with a single use of $U_A$ and $U_B$. 
\end{lem}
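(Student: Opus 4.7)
The plan is to verify the block-encoding definition directly by computing the upper-left block of the product unitary $V = (I_{n_b}\otimes U_A)(I_{n_a}\otimes U_B)$ and then bounding the deviation from $AB$ via the triangle inequality. I would first fix notation: label the combined ancilla register as $\ket{0}_a\ket{0}_b$, where register $a$ has $n_a$ qubits (associated with $U_A$) and register $b$ has $n_b$ qubits (associated with $U_B$), so that $I_{n_b}\otimes U_A$ acts nontrivially only on register $a$ and the system, while $I_{n_a}\otimes U_B$ acts nontrivially only on register $b$ and the system.

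The key computational step is to observe that each factor commutes trivially with the ancilla projector on its unused register: inserting a resolution of identity $\sum_j \ket{j}\bra{j}_a$ between the two factors collapses to $\ket{0}\bra{0}_a$ after sandwiching with $\bra{0}_a$ and $\ket{0}_a$, because $I_{n_a}\otimes U_B$ leaves register $a$ untouched. This yields the clean factorization
\begin{equation*}
(\bra{0}_a\bra{0}_b\otimes I)\, V\, (\ket{0}_a\ket{0}_b\otimes I) = \bigl((\bra{0}_a\otimes I)U_A(\ket{0}_a\otimes I)\bigr)\bigl((\bra{0}_b\otimes I)U_B(\ket{0}_b\otimes I)\bigr) = \frac{\widetilde{A}\widetilde{B}}{\alpha\beta},
\end{equation*}
where $\widetilde{A} := \alpha(\bra{0}_a\otimes I)U_A(\ket{0}_a\otimes I)$ and $\widetilde{B} := \beta(\bra{0}_b\otimes I)U_B(\ket{0}_b\otimes I)$ satisfy $\|A-\widetilde{A}\|\leq\delta$ and $\|B-\widetilde{B}\|\leq\epsilon$ by \cref{def:block_encoding}.

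The second step is a routine triangle-inequality bound. Writing $AB - \widetilde{A}\widetilde{B} = (A-\widetilde{A})B + \widetilde{A}(B-\widetilde{B})$, I would use $\|B\|\leq\beta$ (from $U_B$ being a block-encoding of $B$) and $\|\widetilde{A}\|\leq\alpha$ (since $\widetilde{A}/\alpha$ is a sub-block of the unitary $U_A$, hence has operator norm at most $1$) to conclude $\|AB-\widetilde{A}\widetilde{B}\|\leq \beta\delta + \alpha\epsilon$. Combined with the factorization above and multiplying by the prefactor $\alpha\beta$, this matches the claimed $(\alpha\beta,\, n_a+n_b,\, \alpha\epsilon+\beta\delta)$-block-encoding property, and the construction manifestly uses each of $U_A$ and $U_B$ exactly once.

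The main obstacle, though a mild one, is simply keeping the tensor-product bookkeeping clean, in particular verifying that the cross-register identities genuinely allow the upper-left block of $V$ to factor as a product of the upper-left blocks of $U_A$ and $U_B$. Beyond that, no heavy machinery is required; the statement is essentially a direct consequence of the definition of block-encoding together with submultiplicativity of the operator norm.
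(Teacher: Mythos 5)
Your proof is correct and is exactly the standard argument given in the cited source \cite[Lemma 53]{GilyenSuLowEtAl2019}: the paper itself does not reprove this lemma but simply cites it. The key factorization of the upper-left block (obtained by inserting a resolution of the identity on the ancillas between the two factors and using that each unitary acts trivially on the other's ancilla register) and the triangle-inequality bound $\|AB-\widetilde{A}\widetilde{B}\|\leq\|A-\widetilde{A}\|\|B\|+\|\widetilde{A}\|\|B-\widetilde{B}\|\leq\beta\delta+\alpha\epsilon$ are both correct, with $\|B\|\leq\beta$ and $\|\widetilde{A}\|\leq\alpha$ following from \cref{def:block_encoding} and the unitarity of $U_A$ respectively.
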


\begin{lem}[Inverse of a block-encoding{~\cite[Appendix B]{TongAnWiebe2021}}]\label{lem:inverse_block_encoding}
    Suppose $A$ is a $2^n$-dimensional invertible Hermitian matrix such that all the eigenvalues are within $[-1,-\delta]\cup [\delta,1]$, and $U_A$ is a $(1,n_A,0)$-block-encoding of $A$. 
    Then a $(4/(3\delta), n_A+1, \epsilon)$-block-encoding of $A^{-1}$ can be constructed, using $\mathcal{O}((1/\delta) \log(1/(\delta\epsilon)))$ queries to $U_A$ and its inverse. 
\end{lem}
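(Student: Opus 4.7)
The plan is to reduce the construction to quantum singular value transformation (QSVT) applied to a polynomial approximation of the inverse function, which is the standard route to implement $A^{-1}$ in the block-encoding framework. The Hermitian assumption on $A$ makes the eigenvalue-version of QSVT directly applicable, and the spectral assumption that the eigenvalues live in $[-1,-\delta]\cup[\delta,1]$ exactly matches the domain on which $1/x$ can be approximated efficiently.

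First, I would invoke the well-known polynomial approximation of $1/x$ on $[-1,-\delta]\cup[\delta,1]$ (for instance the one obtained from truncating a Chebyshev expansion, as in Childs–Kothari–Somma or~\cite{GilyenSuLowEtAl2019}): there exists an odd real polynomial $P(x)$ of degree $d=\mathcal{O}\bigl((1/\delta)\log(1/(\delta\epsilon))\bigr)$ such that $|P(x)|\leq 1$ for all $x\in[-1,1]$ and
\begin{equation}
\sup_{x\in[-1,-\delta]\cup[\delta,1]}\Bigl|P(x)-\tfrac{3\delta}{4x}\Bigr|\leq \tfrac{3\delta}{4}\epsilon.
\end{equation}
The factor $3/4$ is the usual head-room that keeps $P$ inside the unit interval after absorbing the approximation error, which is required so QSVT produces a valid $(1,\cdot,0)$-block-encoding.

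Second, I would apply the QSVT construction to $U_A$ with the phase sequence corresponding to $P$. Since $U_A$ is a $(1,n_A,0)$-block-encoding of the Hermitian matrix $A$ and $P$ is an odd polynomial of degree $d$ bounded by $1$ on $[-1,1]$, this produces a unitary $V$ on $n_A+1$ ancillas that is a $(1,n_A+1,0)$-block-encoding of $P(A)$, using $d$ alternating queries to $U_A$ and $U_A^{\dagger}$ (the extra ancilla absorbs the $\pi/2$ phase controls standard in the QSVT circuit).

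Third, I would rescale: the same unitary $V$ is a $(4/(3\delta),n_A+1,\epsilon)$-block-encoding of $A^{-1}$, because
\begin{equation}
\bigl\|\tfrac{4}{3\delta}P(A)-A^{-1}\bigr\|=\tfrac{4}{3\delta}\bigl\|P(A)-\tfrac{3\delta}{4}A^{-1}\bigr\|\leq \tfrac{4}{3\delta}\cdot\tfrac{3\delta}{4}\epsilon=\epsilon,
\end{equation}
where the spectral-norm bound follows from functional calculus applied to the eigenvalues of $A$, all of which lie in the polynomial-approximation domain. The query count is $d=\mathcal{O}((1/\delta)\log(1/(\delta\epsilon)))$ as claimed. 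The only technical point that requires real work is the polynomial-approximation estimate, but that is an off-the-shelf ingredient; everything else is essentially bookkeeping within the QSVT framework, so I do not anticipate any substantive obstacle.
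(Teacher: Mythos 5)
Your proof is correct and follows exactly the route that the cited reference~\cite{TongAnWiebe2021} uses (the paper itself does not re-prove this lemma but simply cites that appendix): approximate $\tfrac{3\delta}{4}\cdot\tfrac{1}{x}$ on $[-1,-\delta]\cup[\delta,1]$ by an odd polynomial of degree $\mathcal{O}((1/\delta)\log(1/(\delta\epsilon)))$ bounded by $1$ on $[-1,1]$, implement it via QSVT on the Hermitian block-encoding, and rescale by $4/(3\delta)$. The one detail worth verifying against that reference is the ancilla count: your argument asserts $n_A+1$ on the grounds that a single QSP phase sequence handles the odd real polynomial, whereas the generic real-polynomial QSVT result (Theorem 56 of~\cite{GilyenSuLowEtAl2019}, quoted as \cref{lem:poly_block_encoding}) would give $n_A+2$; the $n_A+1$ claim is what the lemma states, and~\cite{TongAnWiebe2021} does achieve it, but it requires the observation that for this specific polynomial a single phase sequence (rather than an LCU of two) suffices.
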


\begin{lem}[Polynomial of a block-encoding{~\cite[Theorem 56]{GilyenSuLowEtAl2019}}]\label{lem:poly_block_encoding}
    Let $A$ be a $2^n$-dimensional Hermitian matrix, and $U_A$ is an $(\alpha,n_A,\epsilon)$-block-encoding of $A$. 
    If $P(x)$ is a degree-$d$ real polynomial such that $|P(x)| \leq 1/2$ for all $x\in[-1,1]$, then a $(1,n_A+2,4d\sqrt{\epsilon/\alpha})$-block-encoding of $P(A/\alpha)$ can be constructed using $d$ applications of $U_A$ and $U_A^{\dagger}$, a single application of controlled $U$ and $\mathcal{O}((n_A+1)d)$ additional one- and two-qubit gates. 
\end{lem}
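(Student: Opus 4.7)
The plan is to prove this via the quantum signal processing (QSP) framework and its lift to block-encoded matrices. First I would reduce to the exact case ($\epsilon=0$) and then treat error propagation separately. Since $\|A\| \le \alpha$, the rescaled Hermitian matrix $\widetilde{A} := A/\alpha$ has spectrum in $[-1,1]$, which is the natural domain on which QSP operates.

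For the exact construction, I would invoke the QSP theorem: for any real polynomial $P$ of degree $d$ with $|P(x)| \le 1/2$ on $[-1,1]$, there exist phase angles $\phi_0,\ldots,\phi_d \in \R$ such that a sequence alternating single-qubit $Z$-rotations by these angles with a signal unitary parametrized by $x$ realizes $P(x)$ in one of its matrix entries. To lift this to matrices, I would implement the projector-controlled phase rotation $e^{i\phi(2\Pi - I)}$, where $\Pi = \ket{0}\bra{0}_a \otimes I$ is the projector onto the block-encoded subspace; using one extra ancilla qubit this costs $\mathcal{O}(n_A)$ elementary gates per phase. Interleaving these phase rotations with alternating applications of $U_A$ and $U_A^\dagger$ then produces a block-encoding of $P(\widetilde{A})$ using exactly $d$ queries. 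Because QSP natively produces polynomials of a fixed parity, I would split $P = P_e + P_o$ into its even and odd parts (each still bounded by $1/2$ on $[-1,1]$), implement each via a separate QSP circuit, and combine them with one further ancilla via \cref{lem:LCU} using the state-preparation-pair associated to weights $(1,1)$ with $\beta=1$ (so the sub-normalization stays at $1$). This accounts for the $n_A + 2$ ancillas and the single controlled $U_A$ in the statement.

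For error propagation, the objective is to show that when $U_A$ is merely an $(\alpha,n_A,\epsilon)$-block-encoding of $A$, the same circuit yields block-encoding error bounded by $4d\sqrt{\epsilon/\alpha}$. A naive triangle inequality gives each of the $d$ queries an error $\epsilon/\alpha$ in encoding $\widetilde{A}$, and composing with the bound $\|P(\widetilde{A}) - P(\widetilde{A}')\| \le \|P'\|_{\infty}\,\|\widetilde{A} - \widetilde{A}'\|$ plus Markov brothers' inequality $\|P'\|_{[-1,1]} \le d^2/2$ would yield the weaker $\mathcal{O}(d^2 \epsilon/\alpha)$ estimate. The stated square-root bound must instead come from telescoping over the $d$ QSP layers while exploiting the unitarity of the ideal circuit, which gives a Hermitian polynomial perturbation inequality of the form $\|P(H) - P(H')\| = \mathcal{O}(d)\sqrt{\|H - H'\|}$ for degree-$d$ polynomials bounded by a constant on $[-1,1]$.

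The main obstacle is establishing the QSP existence theorem itself: showing that any admissible real polynomial $P$ with $|P| \le 1/2$ on $[-1,1]$ admits a sequence of phase angles realizing it. This requires a Chebyshev-based \emph{polynomial completion} argument in which one constructs a complementary polynomial $Q$ so that the pair $(P,Q)$ matches the matrix entries of some QSP unitary, together with an explicit angle-finding procedure. A secondary obstacle is the sharp $\sqrt{\epsilon/\alpha}$ robustness bound, whose telescoping argument is delicate because one must avoid the loose $d^2$-style derivative estimate. Once both ingredients are in place, the resource counts follow immediately: $d$ interleaved queries to $U_A$ and $U_A^\dagger$, one controlled $U_A$ to coherently combine the even and odd parts, and $\mathcal{O}((n_A+1)d)$ additional one- and two-qubit gates to implement the $d+1$ projector-controlled phase rotations.
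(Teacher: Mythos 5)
The paper does not prove this lemma; it is stated verbatim as a citation of Theorem 56 of Gilyén, Su, Low, and Wiebe, so there is no in-paper argument to compare against. Your sketch faithfully reconstructs the route in that reference: rescale to $A/\alpha$, apply QSVT built from alternating phase rotations to realize a fixed-parity real polynomial, split $P = P_e + P_o$ and combine the two branches via \cref{lem:LCU}, and obtain the $\mathcal{O}(d\sqrt{\epsilon/\alpha})$ robustness by a unitary-dilation/telescoping argument rather than a Markov-type derivative bound (which would only give $\mathcal{O}(d^2\epsilon/\alpha)$). You also correctly identify the two genuine ingredients that need to be supplied — the QSP angle-sequence existence theorem via polynomial completion, and the $\sqrt{\epsilon}$-robustness lemma.

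One local step as you wrote it does not go through: the LCU combination ``with weights $(1,1)$ and $\beta=1$'' violates \cref{def:state_preparation_pair}, which requires $\|y\|_1 \le \beta$, and here $\|(1,1)\|_1 = 2 > 1$. The correct bookkeeping — and the reason the hypothesis $|P|\le 1/2$ appears at all — is to apply QSVT to $2P_e$ and $2P_o$ (each bounded by $1$ on $[-1,1]$ precisely because $|P_e|,|P_o|\le 1/2$), producing two $(1,n_A+1,\cdot)$-block-encodings, and then combine them with the $(H,H)$ state-preparation pair, i.e.\ $y = (\tfrac12,\tfrac12)$ and $\beta = 1$. By \cref{lem:LCU} this yields a $(1,n_A+2,\cdot)$-block-encoding of $\tfrac12\bigl(2P_e(A/\alpha)\bigr) + \tfrac12\bigl(2P_o(A/\alpha)\bigr) = P(A/\alpha)$, matching the stated sub-normalization and ancilla count. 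With that correction the sketch is a sound reconstruction of the cited result.
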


\section{Lower bound for solving linear systems of equations}\label{app:other_lower_bounds}

We show how to use~\cref{thm:state_discrimination} to study lower bounds for quantum linear system problem. 
The goal of the quantum linear system problem is to prepare an approximation of the normalized solution $\ket{x} = A^{-1}\ket{b}/\|A^{-1}\ket{b}\|$ for an invertible matrix $A$ with $\|A\|=1$ and a state $\ket{b}$. 
We will show that, for any invertible matrix $A$, generic quantum linear system solves with bounded error must take $\Omega(\kappa)$ queries to the preparation oracle of $\ket{b}$ or its inverse in the worst case, where $\kappa = \|A\|\|A^{-1}\|$ is the condition number of $A$. 

Let $A = UDV^{\dagger}$ be the singular value decomposition.
Here $U = (\ket{u_1},\cdots,\ket{u_N})$ and $V = (\ket{v_1},\cdots,\ket{v_N})$ are two unitary matrices with column vectors $\ket{u_j}$ and $\ket{v_j}$, respectively. 
$D = \text{diag}(d_1,\cdots,d_N)$ is the diagonal matrix containing singular values, and without loss of generality we assume $1 = d_1 \geq d_2 \geq \cdots \geq d_N = 1/\kappa$. 
Note that the smallest singular value must be the inverse of the condition number due to the definition of the condition number. 

Consider two quantum states 
\begin{equation}
    \ket{b_1} = \ket{u_1}, \quad \ket{b_2} = \sqrt{1-1/\kappa^2} \ket{u_1} + (1/\kappa) \ket{u_N}, 
\end{equation}
and let 
\begin{equation}
    \ket{x_1} = \frac{A^{-1}\ket{b_1}}{\|A^{-1}\ket{b_1}\|}, \quad \ket{x_2} = \frac{A^{-1}\ket{b_2}}{\|A^{-1}\ket{b_2}\|}. 
\end{equation}
On the one hand, 
\begin{equation}\label{eqn:QLSP_LB_ini}
    \braket{b_1|b_2} = \sqrt{1-1/\kappa^2} \geq 1 - 1/\kappa^2. 
\end{equation}
On the other hand, using $A^{-1} = VD^{-1}U^{\dagger}$, for any $1 \leq j \leq N$, we can compute 
\begin{equation}
    \begin{split}
        A^{-1} \ket{u_j} = VD^{-1}U^{\dagger}\ket{u_j} = V D^{-1} \ket{e_j} = V (d_j^{-1}\ket{e_j}) = d_j^{-1} \ket{v_j}, 
    \end{split}
\end{equation}
where $\ket{e_j}$ is an $N$-dimensional vector with $1$ on its $j$-th entry and $0$ on all other entries. 
Then 
\begin{equation}
    A^{-1} \ket{b_1} = \ket{v_1}, \quad A^{-1} \ket{b_2} = \sqrt{1-1/\kappa^2} \ket{v_1} + \ket{v_N}, 
\end{equation}
and thus 
\begin{equation}
    \begin{split}
        \braket{x_1|x_2} = \frac{(A^{-1}\ket{b_1})^*(A^{-1}\ket{b_2}) }{\|A^{-1}\ket{b_1}\|\|A^{-1}\ket{b_2}\|} = \frac{\sqrt{1-1/\kappa^2}}{\sqrt{2-1/\kappa^2}} \leq \frac{1}{\sqrt{2}}. 
    \end{split}
\end{equation}

Let $\ket{\widetilde{x}_1}$ and $\ket{\widetilde{x}_2}$ be the outputs of a generic quantum linear system solver that approximate $\ket{x_1}$ and $\ket{x_2}$ with 2-norm error at most $1/10$. 
Then, according to~\cref{lem:fed_2norm} and~\cref{lem:trace_distance}, the fidelity between $\ket{\widetilde{x}_1}$ and $\ket{\widetilde{x}_2}$ is at most $1/\sqrt{2}+1/10+1/10 < 0.91$, and the trace distance between $\ket{\widetilde{x}_1}$ and $\ket{\widetilde{x}_2}$ is at least $0.41$. 
Therefore, a generic quantum linear system solver can be viewed as the amplifier for the state pair $\ket{b_1}$ and $\ket{b_2}$ with overlap $\geq 1-1/\kappa^2$, and~\cref{thm:state_discrimination} directly implies an $\Omega(\kappa)$ lower bound on the number of querying $O_b$, its inverse and controlled version. 

Our lower bound recovers the well-known $\Omega(\kappa)$ scaling for quantum linear system solvers. 
Notice that existing ones in \emph{e.g.},~\cite{HarrowHassidimLloyd2009,OrsucciDunjko2021} require both the matrix $A$ and the vector $b$ to be worst-case, and obtain lower bounds on queries to $A$ and $b$. 
As a comparison, our result holds for arbitrary matrix $A$ and only requires worst-case vector $b$, but our lower bound is in terms of the preparation oracle of $b$ and no lower bound is obtained on the input model of $A$. 
Our result also suggests $\Omega(\kappa)$ lower bound for solving linear system of equations with positive-definite matrix $A$, since no condition on $A$ is assumed in our result. 
There are two slight differences compared to~\cite{SommaSubasi2021}: we consider the algorithm for solving the linear system of equations deterministically while~\cite{SommaSubasi2021} considers the state verification problem with possible randomness, and our result applies to general matrices $A$ while the state pair constructed in~\cite[Appendix A]{SommaSubasi2021} applies to normal matrices. 

\section{Lower bounds for solving inhomogeneous ODEs}\label{app:proofs_LB_inhomo}


Both~\cref{prop:lb_eig_diff_homo} and~\cref{prop:lb_non_normal_homo} can be generalized to the inhomogeneous case using similar analysis and constructions. 

\begin{prop}\label{prop:lb_eig_diff_inhomo}
    Consider the inhomogeneous ODE problem with a diagonalizable matrix $A = VDV^{-1}$ where $D = \text{diag}(\lambda_1,\cdots, \lambda_N)$ is a diagonal matrix and $V$ is an invertible matrix. 
    Then, there is no generic quantum algorithm that can prepare $u(T)/\|u(T)\|$ with bounded error and failure probability, using $o(e^{\gamma T}/(T+1+\sqrt{2}))$ queries to the preparation oracle of $\ket{u(0)}$, its inverse or controlled versions, where $\gamma = \min\left\{\max_j \text{Re}(\lambda_j), \max_{i,j}|\text{Re}(\lambda_i-\lambda_j)| \right\}$. 
\end{prop}

\begin{proof}
    The proof is similar to that of~\cref{prop:lb_eig_diff_homo}. 
    Let $\lambda_j = \alpha_j + i \beta_j$ where $\alpha_j$ and $\beta_j$ are the real part and the imaginary part of $\lambda_j$, respectively. 
    Without loss of generality, assume that 
    \begin{equation}
        \max_{j} \text{Re}(\lambda_j) = \text{Re}(\lambda_1) = \alpha_1
    \end{equation}
    and 
    \begin{equation}
        \min_{j} \text{Re}(\lambda_j) = \text{Re}(\lambda_2) = \alpha_2. 
    \end{equation}
    Then 
    \begin{equation}
        \max_{i,j}\text{Re}(\lambda_i-\lambda_j) = \alpha_1 - \alpha_2. 
    \end{equation}
    Our claim is trivial if $\alpha_1 \leq 0$ or $\alpha_1 - \alpha_2 = 0$, so we assume that $\alpha_1 > 0$ and $\alpha_1 - \alpha_2 > 0$.  
    Let $V = (v_1,\cdots,v_N)$ where each $v_j$ is the eigenvector of $A$ corresponding to the eigenvalue $\lambda_j$ and are normalized such that $\|v_j\| = 1$. 
    Without loss of generality, we assume that $\braket{v_1|v_2}$ is real, otherwise we may rotate and redefine $v_2$ without changing $A$. 
    
    For any $0 < \epsilon < 1$, we consider solving~\cref{eqn:ODE} with 
    \begin{equation}
        b = v_2, 
    \end{equation}
    and two possible initial conditions 
    \begin{equation}
        \begin{split}
            u(0) &= v_2,\\
            w(0) &= \sqrt{\epsilon} v_1 + \xi v_2. 
        \end{split}
    \end{equation}
    Here $\xi$ is chosen to be a real number such that $\|w(0)\| = 1$, \emph{i.e.}, 
    \begin{equation}
        1 = |\braket{w(0)|w(0)}| = \epsilon + |\xi|^2 + 2\sqrt{\epsilon} \xi \braket{v_1|v_2}. 
    \end{equation}
    The solutions $u(T)$ and $w(T)$ can be solved as follows. 
    As proved in~\cref{prop:lb_eig_diff_homo}, for any $t$ and $j$ we have 
    \begin{equation}
        e^{At} v_j = e^{\lambda_j t} v_j. 
    \end{equation}
    Therefore, according to~\cref{eqn:ODE_solu}
    \begin{equation}
        u(T) = e^{\lambda_2 T} v_2 + \int_0^T e^{\lambda_2 (T-s)} v_2 ds \sim v_2, 
    \end{equation}
    and 
    \begin{equation}
        w(T) = \sqrt{\epsilon} e^{\lambda_1 T} v_1 + \xi e^{\lambda_2 T} v_2 + \int_0^T e^{\lambda_2 (T-s)} v_2 ds \coloneqq c_1 v_1 + c_2 v_2, 
    \end{equation}
    where 
    \begin{equation}\label{eqn:def_c_inhomo}
        c_1 = \sqrt{\epsilon} e^{\lambda_1 T}, \quad c_2 = \xi e^{\lambda_2 T} + \int_0^T e^{\lambda_2 (T-s)} ds. 
    \end{equation}
    
    We now compute the fidelity of the input states and the output states. 
    For the input states, the same as~\cref{eqn:ini_fed_eig_diff}, we have 
    \begin{equation}\label{eqn:ini_fed_eig_diff_inhomo}
        \begin{split}
            \braket{u(0)|w(0)}  > 1-\epsilon. 
        \end{split}
    \end{equation}
    For the output states, we have 
    \begin{equation}
        \begin{split}
            |\braket{u(T)|w(T)}|^2 &= \frac{  w(T)^{\dagger} u(T)  u(T) ^{\dagger }w(T) }{\|u(T)\|^2 \|w(T)\|^2} \\
            &= \frac{(c_1v_1+c_2v_2)^{\dagger} v_2 v_2^{\dagger}(c_1v_1+c_2v_2) }{\|c_1v_1 + c_2v_2\|^2} \\
            & = \frac{ |c_1|^2|\braket{v_1|v_2}|^2 + |c_2|^2 + 2\text{Re}(\bar{c}_1c_2 \braket{v_1|v_2}) }{|c_1|^2 + |c_2|^2 + 2\text{Re}(\bar{c}_1c_2 \braket{v_1|v_2})}. 
        \end{split}
    \end{equation}
    Notice that 
    \begin{equation}
        |2\text{Re}(\bar{c}_1c_2 \braket{v_1|v_2})| \leq 2 |\bar{c}_1c_2 \braket{v_1|v_2}| \leq |c_1|^2|\braket{v_1|v_2}|^2 + |c_2|^2, 
    \end{equation}
    and that function $f(x) = \frac{a+x}{b+x}$ with $b > a > 0$ is monotonically increasing for $x \geq -a$, we can further bound the final fidelity as 
    \begin{equation}
        |\braket{u(T)|w(T)}|^2 \leq \frac{ 2|c_1|^2|\braket{v_1|v_2}|^2 + 2|c_2|^2  }{|c_1|^2 (1+|\braket{v_1|v_2}|^2) + 2|c_2|^2 }. 
    \end{equation}
    According to~\cref{eqn:def_c_inhomo} and~\cref{eqn:bound_xi}, we can bound $|c_2|$ as 
    \begin{equation}
        \begin{split}
            |c_2| &\leq |\xi| e^{\alpha_2 T} + \int_0^T e^{\alpha_2 (T-s)} ds \\
            & \leq |\xi| e^{\max\{0,\alpha_2\} T} + \int_0^T e^{\max\{0,\alpha_2\} (T-s)} ds \\
            & \leq |\xi| e^{\max\{0,\alpha_2\} T} + \int_0^T e^{\max\{0,\alpha_2\} T } ds \\
            & \leq (1 + \sqrt{2} + T) e^{\max\{0,\alpha_2\} T}, 
        \end{split}
    \end{equation}
   and thus 
   \begin{equation}
   \begin{split}
       |\braket{u(T)|w(T)}|^2 &\leq \frac{ 2|c_1|^2|\braket{v_1|v_2}|^2 + 2(1 + \sqrt{2} + T)^2 e^{2 T \max\{0,\alpha_2\} }  }{|c_1|^2 (1+|\braket{v_1|v_2}|^2) + 2(1 + \sqrt{2} + T)^2 e^{2 T \max\{0,\alpha_2\}} } \\
       & = \frac{ 2 \epsilon e^{2T \alpha_1 } |\braket{v_1|v_2}|^2 + 2(1 + \sqrt{2} + T)^2 e^{2 T \max\{0,\alpha_2\} }  }{\epsilon e^{2T \alpha_1 } (1+|\braket{v_1|v_2}|^2) + 2(1 + \sqrt{2} + T)^2 e^{2 T \max\{0,\alpha_2\}} } \\
       & = \frac{ 2 \epsilon e^{2T \min\{\alpha_1,\alpha_1-\alpha_2\} } |\braket{v_1|v_2}|^2 + 2(1 + \sqrt{2} + T)^2   }{\epsilon e^{2T \min\{\alpha_1,\alpha_1-\alpha_2\}} (1+|\braket{v_1|v_2}|^2) + 2(1 + \sqrt{2} + T)^2  }. 
   \end{split}
   \end{equation}
   We choose $T$ such that 
   \begin{equation}\label{eqn:choice_T_eig_diff_inhomo}
       \epsilon e^{2T \min\{\alpha_1,\alpha_1-\alpha_2\} } = (1 + \sqrt{2} + T)^2. 
   \end{equation}
   Notice that such $T$ uniquely exists, because the function $f(T) = \sqrt{\epsilon} e^{T \min\{\alpha_1,\alpha_1-\alpha_2\} } - (1 + \sqrt{2} + T)$ defined on $[0,+\infty)$ first monotonically decreases and then monotonically increases with boundary values $f(0) < 0$ and $f(+\infty) = +\infty$. 
   Therefore, 
   \begin{equation}\label{eqn:final_fed_eig_diff_inhomo}
        |\braket{u(T)|w(T)}| \leq \sqrt{\frac{2|\braket{v_1|v_2}|^2 + 2 }{1 + |\braket{v_1|v_2}|^2 + 2 }} \eqqcolon C.
    \end{equation}
    Here $C < 1$ and only depends on $V$, and thus we regard $C$ as a constant and will absorb it into the notation $\mathcal{O}$ and $\Omega$. 
    
    Given a black box to prepare either $\ket{u(0)}$ or $\ket{w(0)}$, we denote $\ket{\widetilde{u}(T)}$ and $\ket{\widetilde{w}(T)}$ as the corresponding outputs of a quantum differential equation solver with $2$-norm distance at most $(1-C)/4$ of $\ket{u(T)}$ and $\ket{w(T)}$, respectively. 

    Suppose the opposite of our claim that there exists efficient generic quantum algorithm that can solve the ODE with cost $o(e^{T \min\{\alpha_1,\alpha_1-\alpha_2\} } / (1 + \sqrt{2} + T)) = o(1/\sqrt{\epsilon})$. 
    Then $\ket{\widetilde{u}(T)}$ and $\ket{\widetilde{w}(T)}$ can be prepared using $o(1/\sqrt{\epsilon})$ queries to the state preparation oracle. 
    Following the same argument as in~\cref{eqn:final_dis_eig_diff}, we have $\|\ket{\widetilde{u}(T)}\bra{\widetilde{u}(T)} - \ket{\widetilde{w}(T)}\bra{\widetilde{w}(T)}\|_1 = \Omega(1)$, so an amplifier of $\ket{u(0)}$ and $\ket{w(0)}$ can be constructed with cost $o(1/\sqrt{\epsilon})$. 
    This contradicts with~\cref{thm:state_discrimination}, and thus completes the proof. 
\end{proof}

\begin{prop}\label{prop:lb_non_normal_inhomo}
    Consider the inhomogeneous ODE problem in~\cref{eqn:ODE}. 
    Let $\mu(A) = \|A^{\dagger}A-AA^{\dagger}\|^{1/2}$. 
    Then, there is no generic quantum algorithm that can prepare $u(T)/\|u(T)\|$ with bounded error and failure probability, using $o(\mu(A))$ queries to the preparation oracle of $\ket{u(0)}$, its inverse or controlled versions. 
\end{prop}

\begin{proof}
     Consider the example with $N = 3$, $u = (u_1,u_2,u_3)^T$, 
     \begin{equation}
         A = \left(\begin{array}{ccc}
             -1 & -1/\delta & 0 \\
             0 & -2 & 0 \\
             0 & 0 & -1/2 
         \end{array}\right),
     \end{equation}
     and 
     \begin{equation}
         b = (0,0,1)^T.
     \end{equation}
     Here $\delta$ is a real positive parameter in $(0,1)$. 
     Notice that the matrix $A$ can be diagonalized such that $A = VDV^{-1}$ where 
     \begin{equation}
         V = \left(\begin{array}{ccc}
             1 & 1 & 0 \\
             0 & \delta & 0 \\
             0 & 0 & 1 
         \end{array}\right), 
         \quad D = \left(\begin{array}{ccc}
             -1 & 0 & 0 \\
             0 & -2 & 0 \\
             0 & 0 & -1/2 
         \end{array}\right),
     \end{equation}
     and that 
     \begin{equation}
         A^{\dagger}A - AA^{\dagger} = \left(\begin{array}{ccc}
             -1/\delta^2 & -1/\delta & 0 \\
             -1/\delta & 1/\delta^2 & 0 \\
             0 & 0 & 1/4 
         \end{array}\right),
     \end{equation}
     \begin{equation}\label{eqn:mu_non_normal}
        \mu(A) = \frac{\sqrt[4]{1+\delta^2}}{\delta} = \Theta\left(\frac{1}{\delta}\right). 
     \end{equation}
     We choose two initial conditions 
     \begin{equation}
         \begin{split}
             u(0) &= (0,0,1)^{T}, \\
             v(0) &= (0,\delta,\sqrt{1-\delta^2})^{T}. 
         \end{split}
     \end{equation}
     According to~\cref{eqn:ODE_solu} and noting that $e^{A}$ can be computed as $Ve^{D}V^{-1}$, we obtain
     \begin{equation}
         \begin{split}
             u(1) &= (0,0,2-e^{1/2})^{T}, \\
             v(1) &= (-e^{-1}+e^{-2}, e^{-2}\delta, 2-(2-\sqrt{1-\delta^2})e^{-1/2})^{T} \coloneqq (v_1(1),v_2(1),v_3(1))^{T}. 
         \end{split}
     \end{equation}
     
     Now, suppose that we are given a black box that prepares either $\ket{u(0)}$ or $\ket{v(0)}$. 
     Let $\ket{\widetilde{u}(1)}$ and $\ket{\widetilde{v}(1)}$ be the corresponding outputs of a quantum ODE solver with $2$-norm distance at most $1/1000$ of $\ket{u(1)}$ and $\ket{v(1)}$, respectively. 
     Also, suppose the opposite of our claim that there exists an efficient generic quantum algorithm that solves the general ODE with cost $o(\mu(A))$. Then, $\ket{\widetilde{u}(1)}$ and $\ket{\widetilde{v}(1)}$ can be obtained using $o(\mu(A)) = o(1/\delta)$ queries to the state preparation oracle. 
     On the one hand, 
     \begin{equation}\label{eqn:overlap_non_normal}
         \braket{u(0)|v(0)} = \sqrt{1-\delta^2} \geq 1-\delta^2. 
     \end{equation}
     On the other hand, 
     \begin{equation}
         \begin{split}
             |\braket{u(1)|v(1)}| & = \frac{|v_3(1)|}{\sqrt{v_1(1)^2+v_2(1)^2+v_3(1)^2}} \\
             & = \frac{1}{\sqrt{1 + v_1(1)^2/v_3(1)^2 + v_2(1)^2/v_3(1)^2 }} \\
             & \leq \frac{1}{\sqrt{1 + v_1(1)^2/v_3(1)^2}} \\
             & \leq \frac{1}{\sqrt{1+(e-1)^2/(4e^4)}},
         \end{split}
     \end{equation}
     where in the last inequality we use 
     \begin{equation}
         \frac{|v_1(1)|}{|v_3(1)|} = \frac{1/e-1/e^2}{2-(2-\sqrt{1-\delta^2})e^{-1/2}} \geq  \frac{1/e-1/e^2}{2}. 
     \end{equation}
     According to~\cref{lem:trace_distance}, we have 
     \begin{equation}
         \begin{split}
             \|\ket{\widetilde{u}(1)}\bra{\widetilde{u}(1)} - \ket{\widetilde{v}(1)}\bra{\widetilde{v}(1)}\|_1 
             & = 2\sqrt{1-|\braket{\widetilde{u}(1)|\widetilde{v}(1)}|^2} \\
             & \geq 2\sqrt{1-(1/\sqrt{1+(e-1)^2/(4e^4)} + 1/1000 + 1/1000 )^2}\\
             & \geq 0.19 = \Omega(1). 
         \end{split}
     \end{equation}
     This implies that an amplifier of $\ket{u(0)}$ and $\ket{v(0)}$ can be constructed with cost $o(1/\delta)$, which, together with~\cref{eqn:overlap_non_normal}, contradicts with~\cref{thm:state_discrimination} and thus completes the proof. 
\end{proof}

Compared to the homogeneous case, the lower bound due to non-normality (\cref{prop:lb_non_normal_inhomo}) is quite similar, yet the condition for exponential overhead due to the eigenvalues has changed. 
In the homogeneous case (\cref{prop:lb_eig_diff_homo}), exponential computational overhead occurs once there exists any difference in eigenvalues' real parts, but in the inhomogeneous case (\cref{prop:lb_eig_diff_inhomo}), that happens under an additional requirement that at least one eigenvalue has positive real part. 
The intuition for the difference is that the shifting equivalence no longer holds for inhomogeneous ODEs due to the the inhomogeneous term $b$.

Our lower bounds indicate that existing generic quantum ODE solvers cannot be significantly improved in the sense that the assumptions cannot be extensively relaxed and the scalings due to the non-normality cannot be exponentially improved. 
We consider a diagonalizable matrix $A$. 
An important assumption in existing algorithms is that all the eigenvalues of $A$ have a non-positive real part. 
This is assumed explicitly in~\cite{Berry2014,BerryChildsOstranderEtAl2017,ChildsLiu2020} (see, \emph{e.g.},~\cref{lem:DEsolver_TD}) and implicitly in~\cite{Krovi2022} through the dependence on the parameter $\max_t \|e^{At}\|$ (see~\cref{lem:DEsolver_TI}). 
Our~\cref{prop:lb_eig_diff_inhomo} indicates a worst-case exponential computational cost when an eigenvalue has a positive real part, except in the restrictive scenario where all the eigenvalues of $A$ have the same real part. 
In addition, there are some computational overheads due to the non-normality in the existing algorithms, for example, the parameter $\kappa_V$ in~\cref{lem:DEsolver_TD} and $\max_t\|e^{At}\|$ in~\cref{lem:DEsolver_TI}. 
Our~\cref{prop:lb_non_normal_inhomo} suggests that such dependencies are unlikely to be exponentially improved. 
We remark that we constructed all our lower bound witnesses using  diagonalizable matrices $A$, while~\cref{lem:DEsolver_TI} also applies to non-diagonalizable matrices. 

\section{More fast-forwarding results for negative definite and semi-definite matrices}\label{app:FF_ND_NSD}

Here we present more general fast-forwarding results for negative definite and semi-definite matrices with quadratic speedup in various parameters. 

\subsection{Negative-definite coefficient matrix}\label{sec:fast_forwarding_negative_definite}

\subsubsection{Oracles}\label{sec:oracles_ND}

Let $A \in \mathbb{C}^{N\times N}$ be a negative definite Hermitian matrix. 
For technical simplicity, we assume $\|A\| \leq 1$ throughout this scenario. 
Suppose we have a $(1,n_A,0)$-block-encoding of $A$ by a unitary $U_A$. 
Furthermore, suppose $O_u$ and $O_b$ are the oracles such that $O_u \ket{0} = \frac{1}{\|u(0)\|} \sum_{j=0}^{N-1} u_j(0) \ket{j}$ and $O_b \ket{0} = \frac{1}{\|b\|} \sum_{j=0}^{N-1} b_j \ket{j}$, and assume that $\|u(0)\|$, $\|b\|$ are known. 

\subsubsection{Homogeneous case}

The approach largely follows~\cite{GilyenSuLowEtAl2019}, and the key component for quadratic speedup is that the exponential function $e^{-T(1-x)}$ can be approximated by a degree-$\mathcal{O}(\sqrt{T})$ polynomial and can be implemented using QSVT (\cref{lem:poly_block_encoding}). 
Then, starting from a block-encoding of $A$, one can first construct a block-encoding of $(I+A)/2$ using the linear combination of unitaries technique (\cref{lem:LCU}), then construct a block-encoding of $(I+A)$ by uniformly amplifying singular values~\cite[Theorem 30]{GilyenSuLowEtAl2019}. 
Notice that here we require the matrix $A$ to be negative definite to control the approximation error. 
Then, the operator $e^{AT} = e^{-T(I-(I+A))}$ can be block-encoded using the circuit for $e^{-T(1-x)}$ with $\mathcal{O}(\sqrt{T})$ query complexity.

\begin{lem}\label{lem:FF_ND_BE_homo}
    Consider solving~\cref{eqn:ODE_general} with $b = 0$ and $A$ is a negative definite Hermitian matrix such that all the eigenvalues of $A$ are within the interval $[-1,-\delta]$ for a $\delta > 0$. 
    Suppose that we are given a $(1,n_A,0)$-block-encoding of $A$, denoted by $U_A$. 
    Then for any $T>0$ and $\epsilon<1/4$, a $(3,n_A+4,\epsilon)$-block-encoding of $e^{AT}$ can be constructed using 
    $$\mathcal{O}\left(  \frac{\sqrt{T}}{\delta}\log\left(\frac{1}{\epsilon}\right)\log\left(\frac{T\log(1/\epsilon)}{\epsilon}\right) \right)$$
    queries to $U_A$, its inverse and controlled versions, and 
    $$\mathcal{O}\left(\left(n_A+\frac{1}{\delta}\log\left(\frac{T\log(1/\epsilon)}{\epsilon}\right)\right)\sqrt{T}\log\left(\frac{1}{\epsilon}\right) \right)$$
    additional one- or two-qubit gates. 
\end{lem}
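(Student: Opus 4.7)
The plan is to realize $e^{AT}$ through quantum singular value transformation by leveraging the identity $e^{AT}=e^{-T(I-(I+A))}$, which reduces the problem to applying the scalar function $f(x)=e^{-T(1-x)}$ to the Hermitian matrix $I+A$, whose spectrum lies in $[0,1-\delta]\subset[-1,1]$.

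First, I would combine $U_A$ with a trivial block-encoding of $I$ via \cref{lem:LCU}, using the state-preparation pair associated with weights $(1/2,1/2)$ (so that $\beta=1$), to obtain a $(1,n_A+1,0)$-block-encoding of $(I+A)/2$. Since the spectrum of $(I+A)/2$ lies in $[0,(1-\delta)/2]$, its singular values sit strictly below $1/2$ with headroom $\delta/2$. I would then apply uniform singular value amplification~\cite[Theorem~30]{GilyenSuLowEtAl2019} with amplification factor $2$ to produce a $(1,n_A+2,\epsilon_1)$-block-encoding of $I+A$, at a cost of $\mathcal{O}((1/\delta)\log(1/\epsilon_1))$ queries to $U_A$ and its inverse; the $1/\delta$ factor arises because the accessible amplification window is controlled by the gap between the amplified singular values and $1$.

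Next, I would invoke a Chebyshev-type polynomial approximation of $f$ on $[-1,1]$ (see, \emph{e.g.}, \cite[Theorem~3.3]{SachdevaVishnoi2014} and its specializations to the exponential function) to obtain a real polynomial $P$ of degree $d=\mathcal{O}(\sqrt{T}\log(1/\epsilon))$ with $\sup_{x\in[-1,1]}|P(x)-f(x)|\le\epsilon/6$. Because $|f(x)|\le 1$ on $[-1,1]$, the rescaled polynomial $P/3$ satisfies $|P(x)/3|\le 1/2$ whenever $\epsilon<1/4$, making it admissible for \cref{lem:poly_block_encoding}. Applying that lemma to the approximate block-encoding of $I+A$ from the previous step yields a $(1,n_A+4,4d\sqrt{\epsilon_1})$-block-encoding of $P(I+A)/3$, using $d$ queries to the block-encoding of $I+A$.

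Finally, I would tune $\epsilon_1=\Theta(\epsilon^2/d^2)$ so that $3\cdot 4d\sqrt{\epsilon_1}\le\epsilon/2$, and combine with $\|P-f\|_{\infty}\le\epsilon/6$ to conclude that the resulting circuit is a $(3,n_A+4,\epsilon)$-block-encoding of $e^{AT}$. Multiplying the amplification cost by the polynomial degree recovers the claimed query complexity $\mathcal{O}((\sqrt{T}/\delta)\log(1/\epsilon)\log(T\log(1/\epsilon)/\epsilon))$, and accumulating the per-layer gate counts from LCU, amplification, and QSVT gives the stated additional gate count. The main technical obstacle will be locating (or deriving from scratch) a degree $\mathcal{O}(\sqrt{T}\log(1/\epsilon))$ uniform polynomial approximation of $e^{-T(1-x)}$ on $[-1,1]$ that remains bounded by $1$ throughout the interval, and then propagating the $\sqrt{\epsilon/\alpha}$ loss in \cref{lem:poly_block_encoding} by picking $\epsilon_1$ quadratically smaller than the target precision, which feeds into the inner logarithm $\log(T\log(1/\epsilon)/\epsilon)$ appearing in the final complexity.
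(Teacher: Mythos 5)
Your proposal is correct and follows essentially the same route as the paper's proof: a Hadamard-sandwich LCU to block-encode $(I+A)/2$, uniform singular-value amplification~\cite[Theorem 30]{GilyenSuLowEtAl2019} at cost $\mathcal{O}((1/\delta)\log(1/\epsilon_1))$ to recover $(I+A)$, a degree-$\widetilde{\mathcal{O}}(\sqrt{T})$ polynomial approximation of $e^{-T(1-x)}$ rescaled by $1/3$ so it fits the hypotheses of \cref{lem:poly_block_encoding}, and a quadratic tightening $\epsilon_1=\Theta(\epsilon^2/d^2)$ to absorb the $4d\sqrt{\epsilon_1}$ QSVT error. The only cosmetic difference is that the paper invokes~\cite[Corollary 64]{GilyenSuLowEtAl2019} directly for the exponential approximation rather than specializing~\cite{SachdevaVishnoi2014}; this changes nothing substantive.
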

\begin{proof}

Starting from the block-encoding of $A$, $V''\coloneqq (H\otimes \one) c\text{-} U (H\otimes \one)$ is a $(2,n_A+1,0)$ block-encoding of $(\one + A)$. Note that $V''$ uses one call to $c\text{-} U$. Equivalently, $V''$ is a $(1, n_A+1, 0)$ block-encoding of $(\one+A)/2$. 

By assumption, $\norm{\one + A} \leq 1 - \delta$ for the parameter $\delta>0$. Therefore, $\norm{(\one+A)/2} \leq (1-\delta)/2$. Then, for $\epsilon_1>0$, by \cite[Theorem 30]{GilyenSuLowEtAl2019}, we can construct a $(1,n_A+2,\epsilon_1)$-block-encoding of $(\one + A)$ using $O(\frac{1}{\delta}\log(1/\epsilon_1))$ calls to $V''$, $V''^\dagger$ and additional one- and two-qubit gates. Call this block-encoding $V'$. 

Let $T, \epsilon_2>0$. 
By \cite[Corollary 64]{GilyenSuLowEtAl2019}, there exists an efficiently constructable real polynomial $P_{T,\epsilon_2}(x)$ of degree $d_{T,\epsilon_2} = O(\sqrt{\max(T,\log(1/\epsilon_2))\log(1/\epsilon_2)})$ such that
\begin{equation}
    |e^{-T(1-x)} - P_{T,\epsilon_2}(x)|_{[-1,1]} \leq \epsilon_2.
\end{equation}
Let $\tilde{P}_{T,\epsilon_2}\coloneqq \frac{1}{3}P_{T,\epsilon_2}(x)$. Then, for any $x\in [-1,1]$ and $\epsilon_2 \leq 1/2$, we have
\begin{equation}
    \Bigl|\tilde{P}_{T,\epsilon_2}(x) \Bigr| = \Bigl| \frac{P_{T,\epsilon_2}(x)}{3} \Bigr| \leq \max_{x\in [-1,1]} \Bigl| \frac{e^{-T(1-x)}}{3} \Bigr| + \frac{\epsilon_2}{3} \leq \frac{1}{2}.
\end{equation}
Then, by~\cref{lem:poly_block_encoding}, we can construct a $(1, n_A+4, 4d_{t,\epsilon_2}\sqrt{\epsilon_1})$-block-encoding of $\tilde{P}_{T,\epsilon_2}(\one+A)$ using $d_{T,\epsilon_2}$ calls to $V'$ and  $V'^{\dagger}$, a single call to $c\text{-} V'$ and $\mathcal{O}(n_A d_{T,\epsilon_2})$ additional gates. 
Call this block-encoding $V$. 

Note that $\norm{\tilde{P}_{T,\epsilon_2}(\one+A) - e^{AT}/3}\leq \epsilon_2/3$, $V$ can also be regarded as a $(3, n_A+4, 12d_{t,\epsilon_2}\sqrt{\epsilon_1}+\epsilon_2)$-block-encoding of $e^{AT}$. 
To bound the overall error by $\epsilon>0$, it suffices to choose $12d_{t,\epsilon_2}\sqrt{\epsilon_1}+\epsilon_2 \leq \epsilon$, and this can be achieved by choosing 
\begin{equation}
    \epsilon_2 = \Theta(\epsilon), \quad \epsilon_1 = \Theta\left(\frac{\epsilon^2}{\max(T,\log(1/\epsilon))\log(1/\epsilon)}\right).
\end{equation}
The overall number of calls to $U_A$ (and its inverse and controlled versions) is of order
\begin{equation}
    \mathcal{O}\left( d_{T,\epsilon_2}\times \frac{1}{\delta}\log(1/\epsilon_1) \right) = \mathcal{O}\left(  \frac{\sqrt{T}}{\delta}\log\left(\frac{1}{\epsilon}\right)\log\left(\frac{T\log(1/\epsilon)}{\epsilon}\right) \right), 
\end{equation}
and the number of additional gates required is 
\begin{equation}
    \mathcal{O}\left( n_A d_{t,\epsilon_2} + d_{T,\epsilon_2}\frac{1}{\delta}\log(1/\epsilon_1) \right) = \mathcal{O}\left(\left(n_A+\frac{1}{\delta}\log\left(\frac{T\log(1/\epsilon)}{\epsilon}\right)\right)\sqrt{T}\log\left(\frac{1}{\epsilon}\right) \right). 
\end{equation}
\end{proof}

To solve the homogeneous ODE, we first apply $O_u$ to prepare $\ket{0}_a\ket{u(0)}_s$ (where ``a'' refers to ``ancilla'' and ``s'' refers to ``state register'') and then apply the block-encoding of $e^{AT}$ to obtain $\ket{0}_a C \ket{\psi}_s + \ket{1}_a C' \ket{\psi'}_s$ for some rescaling factor $C$. 
Here $C \ket{\psi}_s$ is an $\mathcal{O}(\epsilon')$-approximation of $\frac{1}{3\|u(0)\|}e^{AT}u(0) = \frac{1}{3\|u(0)\|}u(T)$. 
According to~\cref{lem:succ_prob_error}, measuring the ancilla qubits and getting $0$ yield an $\mathcal{O}(\epsilon'\|u(0)\|/\|u(T)\|)$ approximation of $\ket{u(T)}$. 
We choose $\epsilon' = \epsilon \|u(T)\|/\|u(0)\|$ to bound the error by $\epsilon$. 
The success probability can be boosted to $\Omega(1)$ using amplitude amplification and repeating the procedure for $\mathcal{O}(\|u(0)\|/\|u(T)\|)$ times. 
Therefore the overall query complexity is 
\begin{equation}
    \widetilde{\mathcal{O}}\left(\frac{\|u(0)\|}{\|u(T)\|} \frac{\sqrt{T}}{\delta}\log^2\left(\frac{1}{\epsilon}\right) \right). 
\end{equation}
We remark that, although the explicit scaling in $T$ seems quadratic, the actual overall scaling is always exponential in $T$ since $\|u(T)\|$ is always exponentially small for negative definite matrix $A$. 
However, if we consider the inhomogeneous ODE instead, a truly quadratic speedup in $T$ may be obtained because the solution of inhomogeneous ODE does not exhibit exponential decay, and the block-encoding of the homogeneous evolution operator $e^{AT}$ is the first step to solve inhomogeneous ODE. We will discuss this in detail next.

\subsubsection{Inhomogeneous case}\label{sec:inhomo_ND}

Now we consider solving~\cref{eqn:ODE_general} with a negative-definite Hermitian matrix $A$ and a time-independent inhomogeneous term $b$. 
The idea is directly based on the solution form given in~\cref{eqn:ODE_solu_intro}: we first construct block-encodings of the operators $e^{AT}$ and $\int_0^T e^{A(T-s)}ds$, then apply them to the corresponding vectors to obtain $e^{AT}u_0$ and $\int_0^T e^{A(T-s)} b ds $, and finally use a technique similar to that of LCU to construct the linear combination of these two vectors. 

We start with the block-encodings of the operators. 
The block-encoding for $e^{AT}$ has been constructed in~\cref{lem:FF_ND_BE_homo}. 
We construct the block-encoding of $\int_0^T e^{A(T-s)} ds$ as in~\cref{fig:circuit_FF_ND_inhomo_term}, which is based on the equation $\int_0^T e^{A(T-s)} ds = (e^{AT}-I)A^{-1}$. 

\begin{figure}
    \centerline{
    \Qcircuit @R=1em @C=1em {
    \text{Control}\quad\quad\quad\quad\quad\quad\quad\quad\quad &  \gate{\mathrm{R}_y(-\pi/3)} & \ctrl{1} & \gate{\mathrm{R}_y(-\pi/3)} & \qw & \qw \\
    \text{Ancilla for } e^{AT} \quad\quad\quad\quad\quad\quad\quad\quad\quad  & \qw & \multigate{1}{U_{e^{AT}}} & \qw & \qw & \qw \\ 
    \text{Vector} \quad\quad\quad\quad\quad\quad\quad\quad\quad  & \qw & \ghost{U_{e^{AT}}} & \qw & \multigate{1}{U_{A^{-1}}} & \qw \\
    \text{Ancilla for } A^{-1} \quad\quad\quad\quad\quad\quad\quad\quad\quad  & \qw & \qw & \qw &  \ghost{U_{A^{-1}}} & \qw \\
    }
    }
    \caption{ Quantum circuit for constructing a block-encoding of $\int_0^T e^{A(T-s)}ds$ for a negative definite Hermitian matrix $A$. Here for a matrix $M$, $U_{M}$ represents its block-encoding. }
    \label{fig:circuit_FF_ND_inhomo_term}
\end{figure}

\begin{lem}\label{lem:FF_ND_inhomo_BE}
    Suppose that $A$ is a negative-definite Hermitian matrix such that all the eigenvalues of $A$ are within $[-1,-\delta]$, and $U_A$ is a $(1,n_A,0)$-block-encoding of $A$. 
    Then for any $T>0$ and $\epsilon<1/2$, an $(16/(3\delta), 2n_A+6,\epsilon)$-block-encoding of $\int_0^T e^{A(T-s)}ds$ can be constructed, using 
    \begin{equation}
        \mathcal{O}\left(\frac{\sqrt{T}}{\delta} \log\left(\frac{1}{\delta\epsilon}\right) \log\left(\frac{T\log(1/(\delta\epsilon))}{\delta\epsilon}\right) \right)
    \end{equation}
    queries to $U_A$, its inverse and controlled version, and 
    \begin{equation}
        \mathcal{O}\left(\left(n_A+\frac{1}{\delta}\log\left(\frac{T\log(1/(\delta\epsilon))}{\delta\epsilon}\right)\right)\sqrt{T}\log\left(\frac{1}{\delta\epsilon}\right) \right)
    \end{equation}
    extra one or two-qubit gates. 
\end{lem}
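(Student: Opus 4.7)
The plan is to realize the identity
\begin{equation}
    \int_0^T e^{A(T-s)}\,ds \;=\; (e^{AT}-I)\,A^{-1},
\end{equation}
which is valid because $A$ is invertible (all eigenvalues lie in $[-1,-\delta]$) and commutes with $e^{AT}$. The circuit in \cref{fig:circuit_FF_ND_inhomo_term} then suggests the three-step recipe: block-encode $e^{AT}-I$ by a small LCU involving a block-encoding of $e^{AT}$ and the identity, block-encode $A^{-1}$ via the standard inversion primitive, and multiply these two block-encodings together.

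First, I would invoke \cref{lem:FF_ND_BE_homo} to obtain a $(3,n_A+4,\epsilon_0)$-block-encoding $U_{e^{AT}}$ for some error $\epsilon_0$ to be tuned later. Viewing the identity as a trivial block-encoding with subnormalization $3$ (so that the two inputs to the LCU share a common $\alpha=3$), I would use \cref{lem:LCU} with a single-qubit state-preparation pair encoding the coefficient vector proportional to $(1,-1)$; picking $\beta=4/3$ so that $\alpha\beta=4$ yields a $(4,n_A+5,4\epsilon_0)$-block-encoding of $e^{AT}-I$. A one-qubit rotation of the form $R_y(-\pi/3)$ is precisely what realizes such a state preparation pair, matching the circuit. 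Independently, \cref{lem:inverse_block_encoding} applied to $U_A$ produces a $(4/(3\delta),n_A+1,\epsilon_1)$-block-encoding of $A^{-1}$ using $\mathcal{O}((1/\delta)\log(1/(\delta\epsilon_1)))$ queries to $U_A$ and its inverse.

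Finally, \cref{lem:multiplication_block_encoding} composes these into an $(\alpha_{\mathrm{prod}},2n_A+6,\epsilon_{\mathrm{prod}})$-block-encoding of $(e^{AT}-I)A^{-1}$ with
\begin{equation}
    \alpha_{\mathrm{prod}}=4\cdot\tfrac{4}{3\delta}=\tfrac{16}{3\delta},\qquad \epsilon_{\mathrm{prod}}=4\epsilon_1+\tfrac{4}{3\delta}\cdot 4\epsilon_0,
\end{equation}
matching the subnormalization in the statement. To make $\epsilon_{\mathrm{prod}}\le \epsilon$, I would set $\epsilon_1=\Theta(\epsilon)$ and $\epsilon_0=\Theta(\delta\epsilon)$. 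Plugging these into the query counts of \cref{lem:FF_ND_BE_homo} and \cref{lem:inverse_block_encoding} and observing that the $e^{AT}$ step dominates (its $\sqrt{T}/\delta$ factor beats the $1/\delta$ factor from inversion), the total query complexity becomes
\begin{equation}
    \mathcal{O}\!\left(\frac{\sqrt{T}}{\delta}\log\!\left(\frac{1}{\delta\epsilon}\right)\log\!\left(\frac{T\log(1/(\delta\epsilon))}{\delta\epsilon}\right)\right),
\end{equation}
with an analogous gate count that also inherits the $n_A$ contribution from \cref{lem:FF_ND_BE_homo}.

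The main obstacle is essentially bookkeeping: one has to choose the LCU normalization so that the shared $\alpha$ across the two summands is realized by a legal state-preparation pair, propagate the two independent error parameters $\epsilon_0,\epsilon_1$ through an LCU that itself multiplies errors by $\alpha\beta$ and then through a multiplication that rescales them by the partner's subnormalization, and finally balance these to recover the $\Theta(\delta\epsilon)$ budget for $\epsilon_0$ which is what forces the extra $\log(1/(\delta\epsilon))$ factors relative to \cref{lem:FF_ND_BE_homo}. No new analytic tool is needed beyond the primitives already stated.
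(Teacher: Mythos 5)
Your proposal follows the paper's proof exactly: block-encode $e^{AT}-I$ by an LCU of the block-encoding from \cref{lem:FF_ND_BE_homo} with the identity, block-encode $A^{-1}$ via \cref{lem:inverse_block_encoding}, multiply via \cref{lem:multiplication_block_encoding}, and balance the two error budgets $\epsilon_0=\Theta(\delta\epsilon)$, $\epsilon_1=\Theta(\epsilon)$. One small slip worth fixing: a $(\beta,1,0)$-state-preparation-pair for $y$ requires $\norm{y}_1\le\beta$, so with $\beta=4/3$ the coefficient vector cannot be proportional to $(1,-1)$. Since the identity unitary at common subnormalization $\alpha=3$ block-encodes $3I$ (not $I$), the correct coefficients are $y=(1,-1/3)$ (equivalently $(3,-1)$ with $\alpha=1$, $\beta=4$, which is the paper's bookkeeping); either choice is realized by the same $\mathrm{R}_y(\pm\pi/3)$ rotations you identify, so the circuit and all downstream computations stand.
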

\begin{proof}
    Let $n_1 = n_A+4$. 
    We start with the $(3,n_1,\epsilon')$-block-encoding of $e^{AT}$ (or equivalently a $(1,n_1,\epsilon'/3)$-block-encoding of $e^{AT}/3$), denoted by $U_A'$, constructed in~\cref{lem:FF_ND_BE_homo} with $\epsilon'$ to be determined later. 
    Let $\mathrm{R}_y(\theta)$ denote the single-qubit rotation gate.
    Notice that 
    \begin{equation}
        \begin{split}
            \mathrm{R}_y(\pi/3) \ket{0} &= \frac{1}{2}(\sqrt{3}\ket{0}+\ket{1}), \\
            \mathrm{R}_y(-\pi/3) \ket{0} &= \frac{1}{2}(\sqrt{3}\ket{0}-\ket{1}), 
        \end{split}
    \end{equation}
    then $(\mathrm{R}_y(\pi/3), \mathrm{R}_y(-\pi/3))$ is a $(4,1,0)$-state-preparation-pair of the vector $(3,-1)$. 
    According to~\cref{lem:LCU}, a $(4,n_1+1,4\epsilon'/3)$-block-encoding of $(e^{AT}-I)$ can be constructed with a single use of controlled $U_A'$ and $2$ extra one-qubit gates. 
    This cost is equivalent to $\mathcal{O}((\sqrt{T}/\delta)\log(1/\epsilon') \log((T/\epsilon')\log(1/\epsilon')))$ of controlled $U_A$ and $\mathcal{O}((n_A+(1/\delta)\log((T/\epsilon')\log(1/\epsilon')))\sqrt{T}\log(1/\epsilon'))$ extra one or two-qubit gates according to~\cref{lem:FF_ND_BE_homo}. 
    Meanwhile,~\cref{lem:inverse_block_encoding} tells that a $(4/(3\delta),n_A+1,\epsilon'')$-block-encoding of $A^{-1}$ can be constructed using $\mathcal{O}((1/\delta)\log(1/(\delta\epsilon'')))$ queries to $U_A$ and its inverse. 
    
    Multiplying this two block-encodings together and using~\cref{lem:multiplication_block_encoding}, we can construct a $(16/(3\delta), n_A+n_1+2, 4\epsilon''+16\epsilon'/(9\delta))$-block-encoding of $(e^{AT}-I)A^{-1}$, using 
    \begin{equation}
        \mathcal{O}\left(\frac{1}{\delta}\left( \sqrt{T}\log\left(\frac{1}{\epsilon'}\right) \log\left(\frac{T\log(1/\epsilon')}{\epsilon'}\right) + \log\left(\frac{1}{\delta \epsilon''}\right) \right)\right)
    \end{equation}
    queries to $U_A$, its inverse and controlled version, and 
    \begin{equation}
       \mathcal{O}\left(\left(n_A+\frac{1}{\delta}\log\left(\frac{T\log(1/\epsilon')}{\epsilon'}\right)\right)\sqrt{T}\log\left(\frac{1}{\epsilon'}\right) \right)
    \end{equation}
    extra one or two-qubit gates. 
    The proof is completed by noticing that $(e^{AT}-I)A^{-1} = \int_0^T e^{A(T-s)}ds$ and choosing $8\epsilon'/(9\delta) = 2\epsilon'' = \epsilon $. 
\end{proof}

We are now ready to state our main result, which is a direct consequence of~\cref{lem:FF_ND_BE_homo},~\cref{lem:FF_ND_inhomo_BE} and~\cref{lem:FF_LCS}. 

\begin{thm}\label{thm:FF_ND}
    Consider solving the ODE system~\cref{eqn:ODE_general} with time-independent $b$ up to time $T$, where $A$ is a bounded negative definite Hermitian matrix such that all the eigenvalues of $A$ are within $[-1,-\delta]$. 
    Suppose the oracles described in~\cref{sec:oracles_ND}. 
    Then for any $T,\epsilon$ such that $0 < \epsilon < 1/2$, there exists a quantum algorithm that outputs an $\epsilon$-approximation of $\ket{u(T)}$ with $\Omega(1)$ success probability, using 
    \begin{enumerate}
        \item \begin{equation}
            \mathcal{O}\left(\frac{\|u(0)\|+\|b\|/\delta}{\|u(T)\|} \frac{\sqrt{T}}{\delta} \log\left(\frac{1}{\delta\epsilon}\right) \log\left(\frac{T\log(1/(\delta\epsilon))}{\delta\epsilon}\right)\right)
        \end{equation}
        queries to $U_A$, its inverse and controlled version, 
        \item \begin{equation}
            \mathcal{O}\left(\frac{\|u(0)\|+\|b\|/\delta}{\|u(T)\|}\right)
        \end{equation}
        queries to the controlled versions of $O_u$ and $O_b$,
        \item $(2n_A+7)$ ancilla qubits, 
        \item \begin{equation}
            \mathcal{O}\left(\frac{\|u(0)\|+\|b\|/\delta}{\|u(T)\|} \left(n_A+\frac{1}{\delta}\log\left(\frac{T\log(1/(\delta\epsilon))}{\delta\epsilon}\right)\right)\sqrt{T}\log\left(\frac{1}{\delta\epsilon}\right)\right)
        \end{equation}
        extra one or two-qubit gates. 
    \end{enumerate}
\end{thm}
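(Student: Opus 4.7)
The plan is a direct three-step composition of the results already established in this section. First, I would invoke \cref{lem:FF_ND_BE_homo} to construct a $(3, n_A+4, \epsilon_0)$-block-encoding $U_0$ of $e^{AT}$, and then invoke \cref{lem:FF_ND_inhomo_BE} to construct a $(16/(3\delta), 2n_A+6, \epsilon_1)$-block-encoding $U_1$ of $\int_0^T e^{A(T-s)}\,ds$. These two block-encodings, together with $O_u$ and $O_b$, are exactly the inputs that \cref{lem:FF_LCS} consumes.

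Second, I would apply \cref{lem:FF_LCS} with normalization factors $\alpha_0 = 3$ and $\alpha_1 = 16/(3\delta)$, choosing the inner block-encoding errors as prescribed, namely $\epsilon_0 = \|u(T)\|\epsilon/(4\|u(0)\|)$ and $\epsilon_1 = \|u(T)\|\epsilon/(4\|b\|)$. The lemma then guarantees an $\epsilon$-approximation of $\ket{u(T)}$ with $\Omega(1)$ success probability using
\begin{equation}
    \mathcal{O}\!\left(\frac{\alpha_0\|u(0)\| + \alpha_1\|b\|}{\|u(T)\|}\right) = \mathcal{O}\!\left(\frac{\|u(0)\| + \|b\|/\delta}{\|u(T)\|}\right)
\end{equation}
queries to the controlled versions of $O_u$, $O_b$, $U_0$, and $U_1$. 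This immediately gives the claimed counts for queries to $O_u$, $O_b$, and the number of ancilla qubits ($2n_A+6$ from $U_1$ plus the single control qubit introduced in \cref{fig:circuit_FF_LCS}).

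Third, each outer call to $U_0$ or $U_1$ is translated into calls to $U_A$ using the per-construction complexities from \cref{lem:FF_ND_BE_homo} and \cref{lem:FF_ND_inhomo_BE}. The cost of constructing $U_1$ (which already internally invokes the construction of $U_0$ and of $A^{-1}$) dominates that of $U_0$ alone, so multiplying the outer iteration count by the $U_A$-query count of \cref{lem:FF_ND_inhomo_BE} with the precision parameters $\epsilon_0, \epsilon_1$ specified above yields the stated $U_A$-query and extra-gate complexities, where the logarithmic factors absorb the $\|u(T)\|/\|u(0)\|$, $\|u(T)\|/\|b\|$ rescalings inside their arguments.

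There is no genuinely hard step: the three lemmas were designed precisely for this composition, and the theorem follows by bookkeeping. The only care required is to verify that the rescaled tolerances $\epsilon_0,\epsilon_1$ enter the complexities of \cref{lem:FF_ND_BE_homo,lem:FF_ND_inhomo_BE} only logarithmically, so that propagating them through does not inflate the stated bounds beyond the $\log(1/(\delta\epsilon))$ and $\log\log$-type factors already present. This justifies absorbing the $\|u(T)\|/\|u(0)\|$ and $\|u(T)\|/\|b\|$ factors inside the logarithms in the final statement.
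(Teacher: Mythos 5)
Your proposal is correct and follows exactly the paper's intended proof: the paper itself states that \cref{thm:FF_ND} is ``a direct consequence of \cref{lem:FF_ND_BE_homo}, \cref{lem:FF_ND_inhomo_BE} and \cref{lem:FF_LCS}'' and gives no further detail, and your composition, with $\alpha_0=3$, $\alpha_1=16/(3\delta)$, $n'=2n_A+6$, and the inner tolerances $\epsilon_0,\epsilon_1$ as prescribed by \cref{lem:FF_LCS}, is precisely that bookkeeping. The one caveat you flag (that the norm-rescaling factors $\|u(0)\|/\|u(T)\|$ and $\|b\|/\|u(T)\|$ reappear inside the logarithms) is a genuine subtlety in how the theorem's stated complexity is phrased, but it is shared by the paper's statement and does not affect the correctness of the argument.
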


Although there is still some $T$-dependence through the parameter $\|u(T)\|$ in the query complexity,~\cref{thm:FF_ND} indeed shows that quadratic fast-forwarding in $T$ can be achieved when solving an inhomogeneous ODE system with a negative definite coefficient matrix. 
This is because the norm of $\|u(T)\|$ can be asymptotically bounded independent of $T$ for a negative definite matrix $A$ (but might depend on $\delta$). 
To see this more clearly, let us consider a simplified case where we assume $\|u(0)\|$, $\|b\|$ and $\delta$ are $\Theta(1)$.
Then, according to~\cref{eqn:ODE_solu_intro}, 
\begin{equation}
\begin{split}
    \|u(T)\| &\geq \|(e^{AT}-I)A^{-1}b\| - \|e^{AT}u(0)\| \\
    & \geq \|A^{-1}b\| - \|e^{AT}A^{-1}b\| - \|e^{AT}u(0)\| \\
    & \geq \|b\| - e^{-\delta T}/\delta - e^{-\delta T} = \Omega(1). 
\end{split}
\end{equation}
Therefore, when the matrix $A$ is well conditioned (i.e., $\delta$ is on a constant level), the overall query complexity can be bounded by $\widetilde{\mathcal{O}}\left(\sqrt{T}\log^2(1/\epsilon)\right)$, yielding a quadratic speedup in $T$ up to a logarithmic factor. 

We also remark that our algorithm involves a polynomial dependence on the condition number $1/\delta$, while the best existing generic algorithm does not have this $1/\delta$ dependence. 
This is mainly because we would like to block-encode $e^{AT}$ directly from the block-encoding of $A$ with scaling $\mathcal{O}(\sqrt{T})$, so we perform a linear transformation of $A$ and then amplify its amplitude, which introduces this $\delta$ scaling. 
Under a stronger assumption on the input model, we can further remove this $\delta$ dependence as discussed in the next subsection.

\subsection{Negative semi-definite coefficient matrix with square-root access}

We now consider the case where $A$ is a negative semi-definite matrix but with square-root access. 
Specifically, we consider $A = -H^2$ for a Hermitian matrix $H$ and assume block-encoding access to $H$ (not to $A$). 
Compared to~\cref{sec:fast_forwarding_negative_definite}, although here $A$ can be semi-definite, we are actually making a stronger assumption on the input model.  
This is because once we have a block-encoding $U_H$ of $H$, then it is straightforward to construct a block-encoding of $A$ by two applications of $U_H$~\cite{GilyenSuLowEtAl2019}. 
On the other hand, constructing $U_H$ from the block-encoding of $A$ is much more difficult, as we need to implement the square-root function $\sqrt{x}$, which is singular at $0$. 
With this stronger assumption, we can design another fast-forwarded algorithm with better scaling and without dependence on the condition number compared to~\cref{sec:fast_forwarding_negative_definite}. 

\subsubsection{Oracles}\label{sec:oracles_NSD_square_root}

Let $A \in \mathbb{C}^{N\times N}$ be a negative semi-definite Hermitian matrix such that $A = -H^2$ for a Hermitian matrix $H$. 
Suppose we have an $(\alpha_H,n_H,0)$-block-encoding of $H$ by a unitary $U_H$. 
Furthermore, suppose $O_u$ and $O_b$ are the oracles such that $O_u \ket{0} = \frac{1}{\|u(0)\|} \sum_{j=0}^{N-1} u_j(0) \ket{j}$ and $O_b \ket{0} = \frac{1}{\|b\|} \sum_{j=0}^{N-1} b_j \ket{j}$, and assume that $\|u(0)\|$, $\|b\|$ are known. 

\subsubsection{Homogeneous case}

Fast-forwarding $e^{-TH^2}$ has been studied in several works including~\cite{ChowdhurySomma2016,GilyenSuLowEtAl2019,ApersChakrabortyNovoEtAl2022}. 
Here we follow the idea suggested in~\cite{GilyenSuLowEtAl2019} and work out technical details. 
The idea is straightforward: the function $e^{-\beta x^2}$ can be approximated with a degree-$\mathcal{O}(\sqrt{\beta})$ polynomial and then $e^{AT} = e^{-T H^2}$ can be implemented via~\cref{lem:poly_block_encoding}. 

\begin{lem}\label{lem:approx_NSD_square_root_homo}
    For any $\beta > 0$ and $0 < \epsilon < 1$, there exists an efficiently computable even polynomial $P(x)$ such that 
    \begin{equation}
        \sup_{x\in[-1,1]} |e^{-\beta x^2} - P(x)| \leq \epsilon
    \end{equation}
    and has degree $\mathcal{O}(\sqrt{\max(\beta,\log(1/\epsilon)) \log(1/\epsilon)})$. 
\end{lem}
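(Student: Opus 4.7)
The plan is to reduce the desired approximation to an approximation of $e^{-T(1-y)}$ on $[-1,1]$, for which an efficiently computable polynomial of degree $\mathcal{O}(\sqrt{\max(T,\log(1/\epsilon))\log(1/\epsilon)})$ is already known (this is \cite[Corollary 64]{GilyenSuLowEtAl2019}, which is exactly the approximation used in the proof of \cref{lem:FF_ND_BE_homo}). The key observation is that the change of variables $y = 1 - 2x^2$ maps $x \in [-1,1]$ onto $y \in [-1,1]$ and gives the algebraic identity
\begin{equation}
    e^{-\beta x^2} \;=\; e^{-(\beta/2)(1-y)}, \qquad y = 1 - 2x^2.
\end{equation}

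The steps I would carry out are as follows. First, apply \cite[Corollary 64]{GilyenSuLowEtAl2019} with parameter $T = \beta/2$ and target accuracy $\epsilon$ to produce an efficiently computable real polynomial $Q(y)$ of degree $d = \mathcal{O}(\sqrt{\max(\beta,\log(1/\epsilon))\log(1/\epsilon)})$ satisfying $\sup_{y \in [-1,1]} |e^{-(\beta/2)(1-y)} - Q(y)| \leq \epsilon$. Second, define $P(x) \coloneqq Q(1-2x^2)$. Since $1-2x^2$ is an even polynomial in $x$ of degree $2$, $P(x)$ is an even polynomial in $x$ of degree $2d$, which is still $\mathcal{O}(\sqrt{\max(\beta,\log(1/\epsilon))\log(1/\epsilon)})$. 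Third, verify the error bound: because $x \in [-1,1]$ implies $1-2x^2 \in [-1,1]$, the identity above gives
\begin{equation}
    \sup_{x \in [-1,1]} \bigl| e^{-\beta x^2} - P(x) \bigr|
    = \sup_{x \in [-1,1]} \bigl| e^{-(\beta/2)(1-(1-2x^2))} - Q(1-2x^2) \bigr|
    \leq \sup_{y \in [-1,1]} \bigl| e^{-(\beta/2)(1-y)} - Q(y) \bigr| \leq \epsilon,
\end{equation}
which is the claim. Efficient computability of $P$ follows from efficient computability of $Q$ together with the explicit substitution.

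There is no substantive obstacle: once one recognizes the substitution $y = 1 - 2x^2$, the bound is a direct consequence of an existing result and the rest is bookkeeping to confirm the degree scaling and evenness. The only mild technical point is to ensure the hypothesis of the cited corollary applies with $T = \beta/2$, so that the degree bound depends on $\max(\beta/2,\log(1/\epsilon))$, which is $\Theta(\max(\beta,\log(1/\epsilon)))$ and thus gives the stated scaling after absorbing constants into the $\mathcal{O}(\cdot)$. Alternatively, one could derive $Q$ directly via a truncated Chebyshev expansion of $e^{-(\beta/2)(1-y)}$ on $[-1,1]$ and bound the tail using standard decay estimates for Chebyshev coefficients of entire functions of exponential type, but invoking \cite[Corollary 64]{GilyenSuLowEtAl2019} is strictly shorter.
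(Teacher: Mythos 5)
Your proof is correct and follows essentially the same route as the paper: both reduce the bivariate $e^{-\beta x^2}$ to a one-variable exponential via a quadratic change of variables and then cite a known polynomial approximation of the exponential; the paper substitutes $y=\beta x^2$ and invokes \cite[Theorem 4.1]{SachdevaVishnoi2014} on $[0,\beta]$, while you substitute $y=1-2x^2$ and invoke \cite[Corollary 64]{GilyenSuLowEtAl2019} on $[-1,1]$, but these two cited results are equivalent normalizations of the same fact. The only cosmetic difference is the choice of reference and the affine reparametrization; the degree bound, evenness, and error analysis are identical in substance.
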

\begin{proof}
     This is a direct consequence following~\cite[Theorem 4.1]{SachdevaVishnoi2014} that $e^{-y}$ on the interval $[0,\beta]$ can be approximated using a polynomial of degree $\mathcal{O}(\sqrt{\max(\beta,\log(1/\epsilon))\log(1/\epsilon)})$ and let $y = \beta x^2$. 
\end{proof}

\begin{lem}\label{lem:FF_square_root_access_homo}
    Consider solving~\cref{eqn:ODE_general} with $b = 0$ and $A$ is a negative semi-definite Hermitian matrix such that $A = -H^2$ for a Hermitian $H$. 
    Suppose that we are given a $(\alpha_H,n_H,0)$-block-encoding of $H$, denoted by $U_H$. 
    Then for any $T>0$ and $\epsilon<1/4$, a $(3,n_H+2,\epsilon)$-block-encoding of $e^{AT}$ can be constructed using 
    $\mathcal{O}(\sqrt{\max(T\alpha_H^2,\log(1/\epsilon)) \log(1/\epsilon)})$
    queries to $U_H$, its inverse and controlled versions, and 
    $\mathcal{O}(n_H \sqrt{\max(T\alpha_H^2,\log(1/\epsilon)) \log(1/\epsilon)})$
    additional one- or two-qubit gates. 
\end{lem}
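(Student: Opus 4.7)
The plan is to mirror the strategy used in the proof of \cref{lem:FF_ND_BE_homo}, replacing the polynomial approximation of $e^{-T(1-x)}$ by the polynomial approximation of $e^{-\beta x^2}$ from \cref{lem:approx_NSD_square_root_homo}. Since $U_H$ is an $(\alpha_H, n_H, 0)$-block-encoding of $H$, it is equivalently a $(1, n_H, 0)$-block-encoding of $H/\alpha_H$, whose eigenvalues lie in $[-1,1]$. Writing $e^{AT} = e^{-TH^2} = f(H/\alpha_H)$ with $f(x) = e^{-\beta x^2}$ and $\beta = T\alpha_H^2$ reduces the problem to implementing a polynomial transformation in $H/\alpha_H$.

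First I would invoke \cref{lem:approx_NSD_square_root_homo} with parameters $\beta = T\alpha_H^2$ and target error $\epsilon$ to obtain an (even) polynomial $P(x)$ of degree $d = \mathcal{O}(\sqrt{\max(T\alpha_H^2,\log(1/\epsilon))\log(1/\epsilon)})$ satisfying $\sup_{x\in[-1,1]}|e^{-\beta x^2}-P(x)| \leq \epsilon$. Next, to apply \cref{lem:poly_block_encoding}, I need the polynomial to be bounded by $1/2$ on $[-1,1]$, so I would rescale and set $\widetilde{P}(x) = P(x)/3$; for $\epsilon < 1/4$ we have $|\widetilde{P}(x)| \leq (1+\epsilon)/3 < 1/2$, which meets the hypothesis.

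Then I would apply \cref{lem:poly_block_encoding} to the $(1,n_H,0)$-block-encoding $U_H$ of $H/\alpha_H$ with the polynomial $\widetilde{P}$. Because the block-encoding error of $U_H$ is $0$, the lemma yields a $(1,n_H+2,0)$-block-encoding of $\widetilde{P}(H/\alpha_H)$ using $d$ applications of $U_H$ and $U_H^\dagger$, one controlled call, and $\mathcal{O}(n_H d)$ additional one- and two-qubit gates. Equivalently, this is a $(3, n_H+2, 0)$-block-encoding of $P(H/\alpha_H)$. Finally, using the spectral theorem for the Hermitian matrix $H/\alpha_H$ together with the uniform bound $|P(x) - e^{-\beta x^2}| \leq \epsilon$ on $[-1,1]$, I obtain $\|P(H/\alpha_H) - e^{-TH^2}\| \leq \epsilon$, so the circuit above is in fact a $(3,n_H+2,\epsilon)$-block-encoding of $e^{AT}$, matching the claimed query and gate counts.

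There is no real obstacle here beyond bookkeeping; the main points to verify carefully are (i) that the scaling factor $1/3$ is small enough to satisfy the $|\widetilde{P}| \leq 1/2$ hypothesis of \cref{lem:poly_block_encoding} under the stated assumption $\epsilon < 1/4$, and (ii) that the final error of $\epsilon$ for the block-encoding of $e^{AT}$ at scale $\alpha = 3$ is exactly the polynomial approximation error (no additional $\alpha$-factor is incurred because the input block-encoding is exact).
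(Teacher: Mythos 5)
Your proposal is correct and follows essentially the same route as the paper: approximate $e^{-\beta x^2}$ (with $\beta = T\alpha_H^2$) by the even polynomial of \cref{lem:approx_NSD_square_root_homo}, rescale by $1/3$ so the QSVT hypothesis $|\widetilde{P}|\le 1/2$ holds, and apply \cref{lem:poly_block_encoding} to the exact $(1,n_H,0)$-block-encoding of $H/\alpha_H$. Your explicit checks of (i) $(1+\epsilon)/3 < 1/2$ for $\epsilon<1/4$ and (ii) the error at scale $\alpha=3$ being exactly the uniform polynomial-approximation error are both correct and simply make explicit what the paper leaves implicit.
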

\begin{proof}
     Let $\beta = T\alpha_H^2$, $P(x)$ be the polynomial in~\cref{lem:approx_NSD_square_root_homo} with error $\epsilon$, and 
     $d = \mathcal{O}(\sqrt{\max(\beta,\log(1/\epsilon))\log(1/\epsilon)})$
     Then, according to~\cref{lem:poly_block_encoding}, a $(1,n_H+2,0)$-block-encoding of $P(H/\alpha_H)/3$ can be constructed using $\mathcal{O}(d)$ queries and $\mathcal{O}(n_H d)$ additional gates. 
     Equivalently, this is a $(3,n_H+2,\epsilon)$-block-encoding of $e^{-TH^2} = e^{AT}$. 
\end{proof}

The parameter $\alpha_H$ depends on specific circuits that realize the oracle $U_H$, and it can be as small as $\sqrt{\|A\|}$. 
In this optimal case, the overall complexity becomes $\widetilde{\mathcal{O}}(\sqrt{T\|A\|} \log(1/\epsilon))$ and achieves quadratic speedup in both $T$ and $\|A\|$ for constructing block-encoding. 
When it comes to solving homogeneous ODE, similar to the discussion after~\cref{lem:FF_ND_BE_homo}, an extra post-selection step is required and the overall query complexity becomes 
\begin{equation}
    \widetilde{\mathcal{O}}\left(\frac{\|u(0)\|}{\|u(T)\|}\sqrt{T\|A\|} \log\left(\frac{1}{\epsilon}\right)\right). 
\end{equation}
However, unlike the negative definite case, a real quadratic improvement can hold once the matrix $A$ has an $0$ eigenvalue and $\ket{u(0)}$ has non-trivial overlap with the corresponding eigenstate. 
In this case, $\|u(T)\| = \|e^{AT}u(0)\| \geq \|u(0)\||\braket{u(0)|\psi_0}| = \Omega(\|u(0)\|)$ where $\ket{\psi_0}$ is the eigenstate corresponding to $0$ eigenvalue, and the overall complexity becomes $\widetilde{\mathcal{O}}(\sqrt{T\|A\|}\log(1/\epsilon))$. 

\subsubsection{Inhomogeneous case}\label{sec:inhomogenous_sqrt}

Now we consider~\cref{eqn:ODE_general} with a time-independent inhomogeneous term $b$. 
The idea is the same as~\cref{lem:FF_square_root_access_homo}, and to block encode $\int_0^T e^{A(T-s)}ds$ using QSVT with lower-degree polynomial. 
\begin{lem}\label{lem:FF_square_root_poly_2}
    For any $\beta > 0$ and $0 < \epsilon < 1$, there exists and efficiently computable even polynomial $Q(x)$ such that 
    \begin{equation}
        \sup_{x\in[-1,1]} \left| \int_0^1 e^{-\beta \tau x^2 } d\tau - Q(x)\right| \leq \epsilon
    \end{equation}
    and has degree $\mathcal{O}(\sqrt{\max(\beta,\log(1/\epsilon))\log(1/\epsilon)})$. 
\end{lem}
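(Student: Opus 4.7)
The plan is to leverage the scalar approximation result already used to prove \cref{lem:approx_NSD_square_root_homo} and simply swap the order of integration and polynomial approximation. By \cite[Theorem 4.1]{SachdevaVishnoi2014}, there is an explicit, efficiently computable polynomial $P(y)$ of degree $d = \mathcal{O}(\sqrt{\max(\beta,\log(1/\epsilon))\log(1/\epsilon)})$ with
$$\sup_{y \in [0,\beta]} \bigl| e^{-y} - P(y) \bigr| \leq \epsilon.$$
The key observation is that when $x \in [-1,1]$ and $\tau \in [0,1]$, the argument $y = \beta \tau x^2$ lies in $[0,\beta]$, so this uniform bound transfers pointwise to the full integration domain in $(\tau,x)$.

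Next, I would define the candidate polynomial by integrating the approximant against $\tau$:
$$Q(x) \coloneqq \int_0^1 P(\beta \tau x^2)\, d\tau.$$
Writing $P(y) = \sum_{k=0}^{d} c_k y^k$ and integrating term by term gives
$$Q(x) = \sum_{k=0}^{d} \frac{c_k \beta^k}{k+1}\, x^{2k},$$
which is manifestly an even polynomial in $x$, is efficiently computable from the coefficients of $P$, and has degree $2d = \mathcal{O}(\sqrt{\max(\beta,\log(1/\epsilon))\log(1/\epsilon)})$, matching the claimed bound.

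Finally, the error estimate follows by pulling the supremum inside the integral: for every $x \in [-1,1]$,
$$\left| \int_0^1 e^{-\beta \tau x^2}\, d\tau - Q(x) \right| \leq \int_0^1 \bigl| e^{-\beta \tau x^2} - P(\beta \tau x^2) \bigr|\, d\tau \leq \int_0^1 \epsilon\, d\tau = \epsilon.$$
No real obstacle is expected; the only point worth checking is that substituting $y = \beta \tau x^2$ keeps us inside the interval $[0,\beta]$ on which the Sachdeva--Vishnoi approximation is valid, so a single scalar polynomial approximation of $e^{-y}$ on $[0,\beta]$ suffices---no separate treatment for different values of $\tau$ is needed, and the degree in $x$ does not blow up beyond the degree in $y$.
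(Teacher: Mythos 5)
Your proposal is correct and takes essentially the same approach as the paper: invoke the Sachdeva--Vishnoi polynomial approximation of $e^{-y}$ on $[0,\beta]$, substitute $y = \beta\tau x^2$, and integrate term by term in $\tau$ to obtain the even polynomial $Q(x) = \sum_k \frac{c_k\beta^k}{k+1}x^{2k}$, with the error bound following by pushing the supremum inside the integral.
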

\begin{proof}
    According to~\cite[Theorem 4.1]{SachdevaVishnoi2014}, $e^{-y}$ on the interval $[0,\beta]$ can be approximated using a polynomial of degree $\mathcal{O}(\sqrt{\max(\beta,\log(1/\epsilon))\log(1/\epsilon)})$. 
    Formally, there exists a polynomial 
    \begin{equation}
        \sum_{j=0}^{J} q_j(\beta,\epsilon) y^j 
    \end{equation}
    such that 
    \begin{equation}
        \sup_{y\in[0,\beta]} \left| e^{-y} - \sum_{j=0}^{J} q_j(\beta,\epsilon) y^j \right| \leq \epsilon, 
    \end{equation}
    and $J = \mathcal{O}(\sqrt{\max(\beta,\log(1/\epsilon))\log(1/\epsilon)})$. 
    Let $y = \beta \tau x^2$, then we have 
    \begin{equation}
        \sup_{x\in[-1,1],\tau\in[0,1]} \left| e^{-\beta \tau x^2} - \sum_{j=0}^{J} q_j(\beta,\epsilon) \beta^j \tau^j x^{2j} \right| \leq \epsilon,  
    \end{equation}
    which implies (using $\int_0^1 \tau^j d\tau = 1/(j+1)$)
    \begin{equation}
        \sup_{x\in[-1,1]} \left| \int_0^1 e^{-\beta \tau x^2} d\tau  - \sum_{j=0}^{J} \frac{q_j(\beta,\epsilon) \beta^j }{j+1} x^{2j}  \right| \leq \epsilon. 
    \end{equation}
    So we may define 
    \begin{equation}
        Q(x) = \sum_{j=0}^{J} \frac{q_j(\beta,\epsilon) \beta^j }{j+1} x^{2j} 
    \end{equation}
    which is the desired estimate and has degree $2J = \mathcal{O}(\sqrt{\max(\beta,\log(1/\epsilon))\log(1/\epsilon)})$. 
\end{proof}

\begin{lem}\label{lem:FF_square_root_access_inhomo_BE}
    Let $A = -H^2$ for a Hermitian $H$. 
    Suppose that we are given a $(\alpha_H,n_H,0)$-block-encoding of $H$, denoted by $U_H$. 
    Then for any $T>0$ and $\epsilon<1/4$, a $(3T,n_H+2,\epsilon)$-block-encoding of $\int_0^T e^{A(T-s)} ds$ can be constructed using 
    $\mathcal{O}(\sqrt{\max(T \alpha_H^2,\log(T/\epsilon))\log(T/\epsilon)})$
    queries to $U_H$, its inverse and controlled versions, and 
    $\mathcal{O}(n_H \sqrt{\max(T \alpha_H^2,\log(T/\epsilon))\log(T/\epsilon)} )$
    additional one- or two-qubit gates. 
\end{lem}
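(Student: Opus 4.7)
The plan is to reduce this to \cref{lem:FF_square_root_poly_2} by a simple change of variables and then apply QSVT as in \cref{lem:poly_block_encoding}. First I would substitute $\tau = (T-s)/T$ (so $ds = -T\, d\tau$, with the limits swapping) to rewrite
\begin{equation}
    \int_0^T e^{A(T-s)}\, ds \;=\; T\int_0^1 e^{-T\tau H^2}\, d\tau.
\end{equation}
Setting $\beta = T\alpha_H^2$ and using $A = -H^2$ together with the rescaling $x = H/\alpha_H$, the integrand becomes $e^{-\beta\tau x^2}$, which is exactly the form of the integrand in \cref{lem:FF_square_root_poly_2}. Therefore \cref{lem:FF_square_root_poly_2} supplies an even polynomial $Q(x)$ of degree $d=\mathcal{O}(\sqrt{\max(\beta,\log(1/\epsilon'))\log(1/\epsilon')})$ with
\begin{equation}
    \sup_{x\in[-1,1]} \left| Q(x) - \int_0^1 e^{-\beta\tau x^2}\, d\tau \right| \le \epsilon',
\end{equation}
for $\epsilon'$ to be chosen below.

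Next I would verify the normalization requirement needed to invoke \cref{lem:poly_block_encoding}: since $\beta\tau x^2\ge 0$ on $[-1,1]$, the target integral is bounded by $1$, so $|Q(x)|\le 1+\epsilon'\le 3/2$ and $|Q(x)/3|\le 1/2$ on $[-1,1]$. Then applying \cref{lem:poly_block_encoding} to $U_H$ (which is a $(\alpha_H,n_H,0)$-block-encoding of $H$) yields a $(1,n_H+2,0)$-block-encoding $V$ of $Q(H/\alpha_H)/3$, using $d$ queries to $U_H$ (and its inverse/controlled versions) and $\mathcal{O}(n_H d)$ additional one- or two-qubit gates. Rescaling the subnormalization by $3T$, the same unitary $V$ is a $(3T,n_H+2,0)$-block-encoding of $T\,Q(H/\alpha_H)$.

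Finally I would convert polynomial approximation error into block-encoding error. Because
\begin{equation}
    \left\| T\,Q(H/\alpha_H) - T\!\int_0^1 e^{-T\tau H^2}\, d\tau \right\| \le T\epsilon',
\end{equation}
the operator norm triangle inequality gives that $V$ is a $(3T,n_H+2,T\epsilon')$-block-encoding of $\int_0^T e^{A(T-s)}\,ds$. Choosing $\epsilon' = \epsilon/T$ produces error at most $\epsilon$, and the degree becomes $d = \mathcal{O}(\sqrt{\max(T\alpha_H^2,\log(T/\epsilon))\log(T/\epsilon)})$, which matches the claimed query and gate counts.

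The main step to be careful about is the normalization: the factor of $T$ in front of the integral is absorbed into the subnormalization constant of the block-encoding rather than into the polynomial itself, which is what allows the degree (and hence query complexity) to remain $\widetilde{\mathcal{O}}(\sqrt{T\alpha_H^2})$ rather than scaling with $T$ multiplicatively. The only subtle bookkeeping is that the polynomial-approximation error $\epsilon'$ must be shrunk by a factor of $T$ (costing only a $\log T$ inside the square root) to compensate for this rescaling, which is why the final dependence on $\epsilon$ shows up as $\log(T/\epsilon)$ rather than $\log(1/\epsilon)$.
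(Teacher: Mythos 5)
Your proposal is correct and follows essentially the same route as the paper's proof: rewrite the integral as $T\int_0^1 e^{-T\tau H^2}\,d\tau$, invoke \cref{lem:FF_square_root_poly_2} with $\beta = T\alpha_H^2$, apply \cref{lem:poly_block_encoding} to obtain a $(1,n_H+2,0)$-block-encoding of $Q(H/\alpha_H)/3$, reinterpret it as a $(3T,n_H+2,T\epsilon')$-block-encoding of the integral, and set $\epsilon'=\epsilon/T$. The only difference is that you explicitly verify the normalization bound $|Q(x)/3|\le 1/2$, which the paper leaves implicit.
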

\begin{proof}
    Notice that 
    \begin{equation}
        \frac{1}{T}\int_0^T e^{A(T-s)} ds = \frac{1}{T}\int_0^T e^{-H^2(T-s)} ds = \int_0^1 e^{-T\alpha_H^2 s (H/\alpha_H)^2 } ds. 
    \end{equation}
    Let $\beta = T \alpha_H^2$, $Q(x)$ be the polynomial in~\cref{lem:FF_square_root_poly_2} with error $\epsilon'$, and $d = \mathcal{O}(\sqrt{\max(\beta,\log(1/\epsilon'))\log(1/\epsilon')})$
    Then, according to~\cref{lem:poly_block_encoding}, a $(1,n_H+2,0)$-block-encoding of $Q(H/\alpha_H)/3$ can be constructed using $\mathcal{O}(d)$ applications of $U_H$, its inverse and controlled versions, and $\mathcal{O}(n_H d)$ additional one- and two-qubit gates. 
    Notice that this is also a $(1,n_H+2,\epsilon'/3)$-block-encoding of $\frac{1}{3T} \int_0^T e^{A(T-s)} ds $, and equivalently a $(3T,n_H+2,T\epsilon')$-block-encoding of $\int_0^T e^{A(T-s)} ds$. 
    We can choose $\epsilon' = \epsilon/T$. 
\end{proof}

The result for the inhomogeneous case is a direct consequence of~\cref{lem:FF_square_root_access_homo},~\cref{lem:FF_square_root_access_inhomo_BE} and~\cref{lem:FF_LCS}. 

\begin{thm}\label{thm:FF_square_root_inhomo}
    Consider solving the ODE system~\cref{eqn:ODE_general} with time-independent $b$ up to time $T$, where $A$ is a negative semi-definite Hermitian matrix such that $A=-H^2$ for a Hermitian $H$. 
    Suppose the oracles described in~\cref{sec:oracles_NSD_square_root}. 
    Then for any $T,\epsilon$ such that $0 < \epsilon < 1/4$, there exists a quantum algorithm that outputs an $\epsilon$-approximation of $\ket{u(T)}$ with $\Omega(1)$ success probability, using 
    \begin{enumerate}
        \item \begin{equation}
            \mathcal{O}\left( \frac{\|u(0)\| + T \|b\|}{ \|u(T)\| }  \sqrt{\max\left(T \alpha_H^2,\log\left(\frac{\|u(0)\| + T \|b\|}{\|u(T)\| \epsilon }\right)\right)\log\left(\frac{\|u(0)\| + T \|b\|}{\|u(T)\| \epsilon }\right)}\right) 
        \end{equation}
        queries to $U_H$, its inverse and controlled version, 
        \item \begin{equation}
            \mathcal{O}\left( \frac{\|u(0)\| + T \|b\|}{ \|u(T)\| } \right)
        \end{equation}
        queries to the controlled versions of $O_u$ and $O_b$,
        \item $( n_H+2 )$ ancilla qubits, 
        \item \begin{equation}
            \mathcal{O}\left( n_H \frac{\|u(0)\| + T \|b\|}{ \|u(T)\| }  \sqrt{\max\left(T \alpha_H^2,\log\left(\frac{\|u(0)\| + T \|b\|}{\|u(T)\| \epsilon }\right)\right)\log\left(\frac{\|u(0)\| + T \|b\|}{\|u(T)\| \epsilon }\right)}\right) 
        \end{equation}
        extra one or two-qubit gates. 
    \end{enumerate}
\end{thm}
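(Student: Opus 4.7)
The plan is to prove \cref{thm:FF_square_root_inhomo} as a direct combination of the three preceding ingredients: the block-encoding of $e^{AT}$ from \cref{lem:FF_square_root_access_homo}, the block-encoding of $\int_0^T e^{A(T-s)}\,ds$ from \cref{lem:FF_square_root_access_inhomo_BE}, and the linear-combination circuit from \cref{lem:FF_LCS}. The overall idea is to instantiate \cref{lem:FF_LCS} with $U_0$ and $U_1$ being the two block-encodings above, with normalization factors $\alpha_0 = 3$ and $\alpha_1 = 3T$, and then propagate the internal error tolerances backward into the constructions of $U_0$ and $U_1$ in a way that yields the final $\epsilon$-approximation of $\ket{u(T)}$.

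First, I would fix the precisions that \cref{lem:FF_LCS} demands, namely $\epsilon_0 = \|u(T)\|\epsilon/(4\|u(0)\|)$ and $\epsilon_1 = \|u(T)\|\epsilon/(4\|b\|)$. Invoking \cref{lem:FF_square_root_access_homo} with target error $\epsilon_0$ produces $U_0$, a $(3, n_H+2, \epsilon_0)$-block-encoding of $e^{AT}$ using
\begin{equation}
    \mathcal{O}\!\left(\sqrt{\max(T\alpha_H^2,\log(1/\epsilon_0))\log(1/\epsilon_0)}\right)
\end{equation}
queries to $U_H$ and the corresponding number of additional gates. Similarly, invoking \cref{lem:FF_square_root_access_inhomo_BE} with target error $\epsilon_1$ yields $U_1$, a $(3T, n_H+2, \epsilon_1)$-block-encoding of $\int_0^T e^{A(T-s)}\,ds$ with a matching cost after substituting $\epsilon_1$ for $\epsilon$. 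At this point the two block-encodings share the same ancilla count $n'=n_H+2$, so \cref{lem:FF_LCS} applies directly.

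Plugging $\alpha_0 = 3$ and $\alpha_1 = 3T$ into \cref{lem:FF_LCS} gives the repetition overhead
\begin{equation}
    \mathcal{O}\!\left(\frac{\alpha_0\|u(0)\|+\alpha_1\|b\|}{\|u(T)\|}\right) = \mathcal{O}\!\left(\frac{\|u(0)\|+T\|b\|}{\|u(T)\|}\right),
\end{equation}
which controls how many times the controlled versions of $U_0$, $U_1$, $O_u$, and $O_b$ are invoked. Multiplying this repetition factor by the per-use query counts of $U_0$ and $U_1$ gives the stated total query complexity in $U_H$. The substitution of $\epsilon_0$ and $\epsilon_1$ into the $\log(1/\epsilon_0)$ and $\log(T/\epsilon_1)$ factors produces terms $\log\!\bigl((\|u(0)\|+T\|b\|)/(\|u(T)\|\epsilon)\bigr)$, matching the expression in the theorem (absorbing the extra $\log T$ inside the $\log(T/\epsilon_1)$ term into the same logarithm since $T\leq (\|u(0)\|+T\|b\|)/\|u(T)\|$ up to constants under the relevant normalizations, or more cleanly by just applying the log-of-max bookkeeping). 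The ancilla count $n_H+2$ is inherited unchanged from $U_0$ and $U_1$ (they share ancillas in the LCU circuit), plus the single control qubit from \cref{lem:FF_LCS}.

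The main (mild) obstacle is the bookkeeping in the final logarithmic factor: one needs to verify that replacing $\epsilon$ by $\epsilon_0$ and $\epsilon_1$ in the per-call complexities, and then merging everything into a single $\log((\|u(0)\|+T\|b\|)/(\|u(T)\|\epsilon))$, is consistent up to the $\widetilde{\mathcal{O}}$ notation. This amounts to observing that $1/\epsilon_i \leq \mathcal{O}((\|u(0)\|+T\|b\|)/(\|u(T)\|\epsilon))$ for $i=0,1$, so a single logarithm dominates both $\log(1/\epsilon_0)$ and $\log(T/\epsilon_1)$. Once this is checked, the final-error and $\Omega(1)$-success-probability guarantees follow immediately from the corresponding guarantee in \cref{lem:FF_LCS}, completing the proof.
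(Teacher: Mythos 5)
Your proposal is correct and matches the paper's approach exactly: the paper presents~\cref{thm:FF_square_root_inhomo} as a direct consequence of~\cref{lem:FF_square_root_access_homo},~\cref{lem:FF_square_root_access_inhomo_BE}, and~\cref{lem:FF_LCS}, with $\alpha_0 = 3$, $\alpha_1 = 3T$, and the per-lemma error tolerances $\epsilon_0,\epsilon_1$ set to the values demanded by~\cref{lem:FF_LCS}, just as you do. Your bookkeeping of the merged logarithmic factor and the repetition overhead is what the paper is implicitly relying on, so the proposal fills in the routine details correctly.
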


If we assume the asymptotically optimal block-encoding of $H$ in the sense that $\alpha_H$ is on the same scaling of $\|H\|$, then, from $A = -H^2$, we have $\alpha_H = \mathcal{O}(\|H\|) = \mathcal{O}(\sqrt{\|A\|})$. 
Compared to~\cref{lem:DEsolver_TI}, we quadratically improve the dependence on $\|A\|$ in query complexity. 
Notice that if the block-encoding of $H$ has complicated construction, resulting in a much larger $\alpha_H$ compared to $\|H\|$, then such a quadratic speedup may become invalid. For time dependence, 
in the worst case, the query complexity of our algorithm is $\mathcal{O}(T^{3/2})$, which is even worse than the generic algorithm. 
However, it is possible and typical that $\|u(T)\|$ grows linearly in $T$ and cancels a linear-in-$T$ term in the numerator, yielding a quadratic improvement in both $T$ and $\|A\|$. 
To see this, let us consider the case where $\lambda_0 = 0$ and, for technical simplicity, assume $\|u(0)\| = \|b\| = 1$. 
Recall that this represents the inhomogeneous heat equation and the advection-diffusion equation with periodic boundary condition and a time-independent source term. 
Let $\ket{\psi_j}$ be the eigenstates of $A$. 
Then 
\begin{equation}
    \begin{split}
        u(T) &= e^{At} u(0) + \int_0^T e^{A(T-s)} b ds \\
        & = \sum_{j=0}^{N-1} e^{\lambda_j T} \braket{\psi_j|u(0)} \ket{\psi_j} + \sum_{j=0}^{N-1} T f(\lambda_j,T) \braket{\psi_j|b} \ket{\psi_j} \\
        & = (\braket{\psi_0|u(0)} + T \braket{\psi_0|b} ) \ket{\psi_0} + \sum_{j=1}^{N-1} c_j \ket{\psi_j}. 
    \end{split}
\end{equation}
Therefore, provided $|\braket{\psi_0|b}| = \Omega(1)$, we have $\|u(T)\| \geq |\braket{\psi_0|u(0)} + T \braket{\psi_0|b}| = \Omega(T)$ and thus the query complexity becomes $\mathcal{O}(\sqrt{T})$. 
Here the assumption $|\braket{b|\psi_0}| = \Omega(1)$ is natural in the inhomogeneous heat equation, which corresponds to the trivial assumption that the heat source has a non-zero total heat.

\section{Proof of~\texorpdfstring{\cref{lem:FF_NSD_gen_homo}}{lemma}}\label{app:proof_FF_gen_complex}

\begin{proof}
    Let $\ket{v}_v = \sum_{j=0}^{N-1} v_j \ket{j}_v$ be an arbitrary state, and we start with $\sum_{j=0}^{N-1} v_j \ket{j}_v \ket{0}_r \ket{0}_f \ket{0}_e \ket{0}_t$. 
    We first apply $U^{\dagger}$ on the vector register to get
    \begin{equation}
        \sum_{j=0}^{N-1} (U^{\dagger}v)_j \ket{j}_v \ket{0}_r \ket{0}_f \ket{0}_e \ket{0}_t. 
    \end{equation}
    Here for a matrix $M$ and a vector $x$, $(Mx)_j$ represents its $j$-th component. 
    
    Now we implement the diagonal transform $e^{\Lambda T}$. 
    Let $\Lambda = \mathcal{A} + i \mathcal{B} $ where $\mathcal{A} = \text{diag}(\alpha_0,\cdots,\alpha_{N-1})$ and $\mathcal{B} = \text{diag}(\beta_0,\cdots,\beta_{N-1})$. 
    Since $\mathcal{A}$ and $\mathcal{B}$ commute, we have $e^{\Lambda T} = e^{i\mathcal{B}T}e^{\mathcal{A}T}$. 
    We can then implement $e^{\mathcal{A}T}$ and $e^{i\mathcal{B}T}$ sequentially. 
    The validation of the parts regarding $e^{\mathcal{A}T}$ is the same as the proof of~\cref{lem:FF_NSD_homo}. 
    As a consequence, after applying $O_T$ and $O_{\Lambda,r}$, applying $O_{\exp,\alpha}$, performing the controlled rotation and uncomputing (except the time register), we arrive at 
    \begin{equation}
    \begin{split}
        e^{-\alpha T}\sum_{j=0}^{N-1} (e^{\mathcal{A} T} U^{\dagger}v)_j \ket{j}_v \ket{0}_r \ket{0}_f \ket{0}_e \ket{T}_t +  \ket{\perp}. 
    \end{split}
    \end{equation}
    For $e^{i\mathcal{B}T}$, we first apply $O_{\Lambda,i}$ to encode the imaginary parts and get 
    \begin{equation}
        e^{-\alpha T}\sum_{j=0}^{N-1} (e^{\mathcal{A} T} U^{\dagger}v)_j \ket{j}_v \ket{0}_r \ket{0}_f \ket{\beta_j}_e \ket{T}_t +   \ket{\perp}. 
    \end{equation}
    Applying $O_{\text{prod}}$ to get 
    \begin{equation}
        e^{-\alpha T}\sum_{j=0}^{N-1} (e^{\mathcal{A} T} U^{\dagger}v)_j \ket{j}_v \ket{0}_r \ket{\beta_jT}_f \ket{\beta_j}_e \ket{T}_t +   \ket{\perp}. 
    \end{equation}
    Now we append the phase factor $e^{i\beta_j T}$. 
    This can be done by first flipping the rotation register, applying the phase shift gate conditioned by the function register and flipping the rotation register back. 
    We obtain 
    \begin{equation}
    \begin{split}
        & \quad e^{-\alpha T}\sum_{j=0}^{N-1} e^{i\beta_jT}(e^{\mathcal{A} T} U^{\dagger}v)_j \ket{j}_v \ket{0}_r \ket{\beta_jT}_f \ket{\beta_j}_e \ket{T}_t +  \ket{\perp} \\
        & = e^{-\alpha T}\sum_{j=0}^{N-1} (e^{i\mathcal{B}T}e^{\mathcal{A} T} U^{\dagger}v)_j \ket{j}_v \ket{0}_r \ket{\beta_jT}_f \ket{\beta_j}_e \ket{T}_t +  \ket{\perp}. 
    \end{split}
    \end{equation}
    Uncomputing the last three registers gives 
    \begin{equation}
    \begin{split}
        e^{-\alpha T}\sum_{j=0}^{N-1} (e^{\Lambda T} U^{\dagger}v)_j \ket{j}_v \ket{0}_r \ket{0}_f \ket{0}_e \ket{0}_t +  \ket{\perp}. 
    \end{split}
    \end{equation}
    
    Finally, applying $U$ gives 
    \begin{equation}
        e^{-\alpha T}\sum_{j=0}^{N-1} (Ue^{\Lambda T} U^{\dagger}v)_j \ket{j}_v \ket{0}_r \ket{0}_f \ket{0}_e \ket{0}_t +   \ket{\perp}. 
    \end{equation}
    According to the definition of the block-encoding and the equation $e^{AT} = Ue^{\Lambda T} U^{\dagger}$, such a circuit is exactly a $(e^{\alpha T},n_t+n_e+n_f+1,0)$-block-encoding of $e^{AT}$. 
\end{proof}

\section{Proof of~\cref{thm:FF_bt}}\label{app:proof_FF_bt}
\begin{proof}[Proof of~\cref{thm:FF_bt}]
    We start with $\ket{0}_c\ket{0}_r\ket{0}_f\ket{0}_t\ket{0}_e\ket{0}_v$, where the subscripts represent control, rotation, function, time, eigenvalue and vector registers, respectively. 
    For notation simplicity, we let 
    \begin{equation}
        \|b\|_{\text{avg}} \coloneqq \sqrt{\frac{1}{M}\sum_{k=0}^{M-1}\|b(kT/M)\|^2}.
    \end{equation}
    We first perform a single qubit rotation on the control register to get 
    \begin{equation}\label{eqn:FF_bt_ini}
        \frac{1}{e^{\widetilde{\alpha}T} \sqrt{\|u(0)\|^2 + T^2 \|b\|_{\text{avg}}^2}}\left(e^{\widetilde{\alpha}T}\|u(0)\|\ket{0}_c + e^{\widetilde{\alpha}T} T \|b\|_{\text{avg}} \ket{1}_c\right)\ket{0}_r\ket{0}_f\ket{0}_t\ket{0}_e\ket{0}_v. 
    \end{equation}
    
    For the first part $\ket{0}_c\ket{0}_r\ket{0}_f\ket{0}_t\ket{0}_e\ket{0}_v$, we can directly apply~\cref{thm:FF_NSD_gen_homo} using the circuit in~\cref{fig:circuit_FF_NSD_gen_homo} with the only difference being that all the operations are controlled by the control register and only apply when it is $\ket{0}_c$. 
    Then we map the first part to 
    \begin{equation}\label{eqn:FF_bt_1st}
        e^{-\widetilde{\alpha}T} \ket{0}_c\ket{0}_r\ket{0}_f\ket{0}_t\ket{0}_e e^{AT}\ket{u(0)}_v + \ket{0}_c\ket{\perp}. 
    \end{equation}
    
    The second part $\ket{1}_c\ket{0}_r\ket{0}_f\ket{0}_t\ket{0}_e\ket{0}_v$ will encode the approximation of the integral, and all the related operations are controlled by the control register and only apply when it is $\ket{1}_c$. 
    We first apply $O_{\|b\|}$ to get 
    \begin{equation}
        \frac{1}{\sqrt{\sum_{k=0}^{M-1} \|b(kT/M)\|^2}} \sum_{k=0}^{M-1} \|b(kT/M)\|\ket{1}_c\ket{0}_r\ket{0}_f\ket{k}_t\ket{0}_e\ket{0}_v. 
    \end{equation}
    Notice that, with an abuse of notation, here the time register encodes the index of the time step rather than the binary encoding of the exact time. 
    Applying $O_{b,t}$ gives 
    \begin{equation}
        \frac{1}{\sqrt{\sum_{k=0}^{M-1} \|b(kT/M)\|^2}} \sum_{k=0}^{M-1} \|b(kT/M)\|\ket{1}_c\ket{0}_r\ket{0}_f\ket{k}_t\ket{0}_e\ket{b(kT/M)}_v. 
    \end{equation}
    This state gives the weighted superposition of the input states with different information on the time, and thus we can again apply~\cref{thm:FF_NSD_gen_homo} using a circuit similar to that in~\cref{fig:circuit_FF_NSD_gen_homo} with the following differences:
    \begin{enumerate}
        \item All the operators are controlled by the control register and only apply when it is $\ket{1}_c$. 
        \item We do not perform $O_T$ and $O_T^{\dagger}$ since here the time information has already been encoded. 
        \item $O_{\exp,\alpha}$ and $O_{\text{prod}}$ are replaced by $O_{\exp,\widetilde{\alpha},t}$ and $O_{\text{prod},t}$. 
    \end{enumerate}
    The corresponding circuit computes $e^{A(T-kT/M)} b(kT/M)$ in parallel in time and gives the state 
    \begin{equation}
        \frac{e^{-\widetilde{\alpha}T}}{\sqrt{\sum_{k=0}^{M-1} \|b(kT/M)\|^2}} \sum_{k=0}^{M-1} \|b(kT/M)\| \ket{1}_c\ket{0}_r\ket{0}_f\ket{k}_t\ket{0}_e e^{A(T-kT/M)}\ket{b(kT/M)}_v 
        + \ket{1}_c\ket{\perp}. 
    \end{equation}
    Applying $\otimes^{n_t} \mathrm{H}$ on the time register gives 
    \begin{equation}\label{eqn:FF_bt_2nd}
        \frac{e^{-\widetilde{\alpha}T}}{\sqrt{M}\sqrt{\sum_{k=0}^{M-1} \|b(kT/M)\|^2}} \ket{1}_c\ket{0}_r\ket{0}_f\ket{0}_t\ket{0}_e \left( \sum_{k=0}^{M-1} \|b(kT/M)\| e^{A(T-kT/M)} \ket{b(kT/M)}_v \right)
        + \ket{1}_c\ket{\perp}. 
    \end{equation}
    
    Now we combine~\cref{eqn:FF_bt_1st} and~\cref{eqn:FF_bt_2nd} together, and the state in~\cref{eqn:FF_bt_ini} is transformed to 
    \begin{equation}
    \begin{split}
        & \frac{1}{e^{\widetilde{\alpha}T} \sqrt{\|u(0)\|^2 + T^2 \|b\|_{\text{avg}}^2}} \ket{0}_r\ket{0}_f\ket{0}_t\ket{0}_e \otimes \\
        & \left( \ket{0}_c \left(\|u(0)\| e^{AT}\ket{u(0)}_v\right) + \ket{1}_c \left( \frac{T}{M} \sum_{k=0}^{M-1} \|b(kT/M)\| e^{A(T-kT/M)} \ket{b(kT/M)}_v \right) \right) + \ket{\perp}. 
    \end{split}
    \end{equation}
    The final step is to apply a Hadamard gate on the control register to add up the two states, which yields 
    \begin{equation}\label{eqn:FF_bt_output}
         \frac{\|\widetilde{u}(T)\|}{e^{\widetilde{\alpha}T} \sqrt{2(\|u(0)\|^2 + T^2 \|b\|_{\text{avg}}^2)}} \ket{0}_c\ket{0}_r\ket{0}_f\ket{0}_t\ket{0}_e  \ket{\widetilde{u}(T)}_v  + \ket{\perp}, 
    \end{equation}
    where 
    \begin{equation}
        \widetilde{u}(T) =  e^{AT} u(0) +  \frac{T}{M} \sum_{k=0}^{M-1} e^{A(T-kT/M)} b(kT/M). 
    \end{equation}
    According to~\cref{lem:FF_bt_quadrature} and~\cref{eqn:FF_bt_M}, we have $\|\widetilde{u}(T) - u(T)\| \leq \frac{1}{2}\epsilon\|u(T)\|$. 
    Therefore, according to~\cref{lem:succ_prob_error}, $\ket{\widetilde{u}(T)}$ is an $\epsilon$-approximation of $\ket{u(T)}$. $\ket{u(T)}$ can be obtained by measuring the state in~\cref{eqn:FF_bt_output} and obtaining all $0$ in the ancilla registers which indicates success. Since the state in ~\cref{eqn:FF_bt_output} takes $\mathcal{O}(1)$ queries and extra one-qubit gates to prepare, to prepare $\ket{u(T)}$ with $\Omega(1)$ success probability, we can use a number of queries and extra one-qubit gates of order
     \begin{equation}
        \frac{e^{\widetilde{\alpha}T} \sqrt{2(\|u(0)\|^2 + T^2 \|b\|_{\text{avg}}^2)}}{\|\widetilde{u}(T)\|} \leq \frac{e^{\widetilde{\alpha}T} \sqrt{2(\|u(0)\|^2 + T^2 \|b\|_{\text{avg}}^2)}}{\|u(T)\| - \|\widetilde{u}(T)-u(T)\|} \leq \frac{e^{\widetilde{\alpha}T} \sqrt{2(\|u(0)\|^2 + T^2 \|b\|_{\text{avg}}^2)}}{(1-\epsilon/2)\|u(T)\|}. 
    \end{equation}
\end{proof}


\section{Applications of fast-forwarded algorithms to more types of PDEs}\label{app:application_PDEs}

In the main text, we show how our fast-forwarded algorithm can better solve parabolic evolutionary PDEs.
Here we present the applications of our algorithm to more types of PDEs, including hyperbolic PDEs and high-order ones. 

\subsection{Application: hyperbolic PDEs}\label{sec:app_hyperbolic_pde}

We now consider high-dimensional evolutionary PDEs of hyperbolic type. 
The main difference between hyperbolic PDEs and parabolic PDEs is that hyperbolic PDEs typically involve a second-order time derivative. 
In order to apply our fast-forwarded algorithm, which only applies to first-order equations, we will first extend the original single PDE to an equivalent system of two PDEs of both the solution and its time derivative. 
The resulting system of PDEs become first-order in time and can be fast-forwarded as in~\cref{sec:app_parabolic_pde}. 

We continue using the notation introduced in~\cref{sec:app_parabolic_pde}. 
The hyperbolic PDEs we consider have the form
\begin{equation}\label{eqn:app_PDE_hyper}
\begin{split}
     \frac{\partial^2}{\partial t^2 }u(x,t)  &= \Delta^a u(x,t) + c u(x,t) + b(x,t), \quad t \in [0,T], x \in [0,1]^d, \\
     u(x,0) &= u_0(x), \\
     \frac{\partial}{\partial t}u(x,0) &= w_0(x). 
\end{split}
\end{equation}
There are two differences compared to~\cref{eqn:app_PDE}. 
First,~\cref{eqn:app_PDE_hyper} involves a second-order time derivative but~\cref{eqn:app_PDE} only has first-order time derivatives. 
Second, there is no spatial divergence term in~\cref{eqn:app_PDE_hyper}. 
The reason is that the eigenvalues of the coefficient matrix must be real numbers in order to transform the second-order equation into a first-order system with a normal coefficient matrix.

\subsubsection{Spatial discretization and lifting}

We first spatially discretize~\cref{eqn:app_PDE_hyper} using the same approach as in~\cref{sec:app_parabolic_pde}. 
The resulting system of ODEs is 
\begin{equation}\label{eqn:app_ODE_hyper}
\begin{split}
    \frac{d^2}{dt^2}\vec{u}(t) &= (A_L^a+cI) \vec{u}(t) + \vec{b}(t), \quad t \in [0,T], \\
    \vec{u}(0) &= \vec{u}_0, \\
    \frac{d}{dt}\vec{u}(0) &= \vec{w}_0. 
\end{split}
\end{equation}
To apply~\cref{thm:FF_bt}, we need to transform this equation to a first-order ODE system. 
To this end, we follow the idea in~\cite{CostaJordanOstrander2019}.  Write $A_L^a+cI = (F_h^{\otimes d})^{-1} D F_h^{\otimes d}$ where $D$ is a diagonal matrix with diagonal entries
\begin{equation}
    \mu =  c - 4 n^2 \sum_{j=0}^{d-1} a_j \sin^2 (k_j \pi /n), \quad k_j \in [n]. 
\end{equation}
Define $B = (F_h^{\otimes d})^{-1} \sqrt{-D} F_h^{\otimes d}$ where $\sqrt{-D}$ denotes another diagonal matrix with diagonal entries to be the positive square root of $-D$'s, then $B$ is a Hermitian matrix such that $B^2 = -(A_L^a+cI)$. 
Let $\vec{v}(t)$ be another $n^d$-dimensional vector and consider the ODE for $(\vec{u},\vec{v})$ in the form 
\begin{equation}\label{eqn:app_ODE_hyper_1st}
\begin{split}
    \frac{d}{dt} \left(\begin{array}{c}
         \vec{u}(t) \\
         \vec{v}(t) 
    \end{array}
    \right) = \left(\begin{array}{cc}
        0 & iB \\
        iB & 0
    \end{array}\right) \left(\begin{array}{c}
         \vec{u}(t) \\
         \vec{v}(t) 
    \end{array}
    \right) + \left(\begin{array}{c}
         0 \\
         \vec{b}(t) 
    \end{array}
    \right), 
\end{split}
\end{equation}
which becomes a first-order ODE system. 
It is straightforward to check that $\vec{u}(t)$ satisfies~\cref{eqn:app_ODE_hyper}, and by definition, $\vec{v}(t)$ solves the linear system $iB\vec{v}(t) = \frac{d}{dt} \vec{u}(t)$. 

\subsubsection{Fast-forwarding}

We discuss quantum algorithms for~\cref{eqn:app_ODE_hyper}, \emph{i.e.}, preparing a quantum state approximate to $\vec{u}(T)$.  
Our approach is to first consider~\cref{eqn:app_ODE_hyper_1st} and prepare a quantum state encoding $(\vec{u},\vec{v})$, then post-select $\vec{u}$. 
Our fast-forwarding result can be summarized as follows. 
\begin{cor}\label{cor:app_PDE_hyper}
    Consider the semi-discretized PDE~\cref{eqn:app_ODE_hyper} up to time $T$. 
    Suppose that $c$ is always non-positive and, when $c = 0$, an extra condition $\sum_{j\in [n]^d} w_0(j/n) = 0$ holds. 
    Then there exists a quantum algorithm that outputs the normalized solution $\ket{\vec{u}(T)}$ with $\Omega(1)$ success probability, using 
    \begin{equation}
            \mathcal{O}\left( \frac{\sqrt{2\|\vec{u}(0)\|^2 + 2\|\vec{v}(0)\|^2 + 2T^2 \|\vec{b}\|^2_{\text{avg}} }}{\|\vec{u}(T)\|} \right)
    \end{equation}
    queries to the preparation oracles of $\vec{u}_0$ and $\vec{b}(t)$, 
    \begin{equation}
            \mathcal{O}\left( \frac{\|\vec{w}_0\|\sqrt{2\|\vec{u}(0)\|^2 + 2\|\vec{v}(0)\|^2 + 2T^2 \|\vec{b}\|^2_{\text{avg}} }}{\|\vec{v}(0)\|\|\vec{u}(T)\|} \right)
    \end{equation}
    queries to the preparation oracle of $\vec{w}_0$, and 
    \begin{equation}
        \widetilde{\mathcal{O}}\left(\frac{\|\vec{w}_0\|\sqrt{2\|\vec{u}(0)\|^2 + 2\|\vec{v}(0)\|^2 + 2T^2 \|\vec{b}\|^2_{\text{avg}} }}{\|\vec{v}(0)\|\|\vec{u}(T)\|} d \log^2(n) \text{~poly}\log(T/\epsilon) \right)
    \end{equation}
    additional gates. 
\end{cor}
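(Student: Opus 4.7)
The plan is to reduce \cref{cor:app_PDE_hyper} to \cref{thm:FF_bt} via the lift in \cref{eqn:app_ODE_hyper_1st}, paying two additional costs: preparing the lifted initial state and post-selecting the upper half at the end. First, observe that the lifted coefficient matrix
\begin{equation}
M \coloneqq \begin{pmatrix} 0 & iB \\ iB & 0 \end{pmatrix} = i\,\sigma_x \otimes B
\end{equation}
is anti-Hermitian, since $B$ is Hermitian. Because $\sigma_x$ is diagonalized by the Hadamard $H$ and $B$ is diagonalized by $F_h^{\otimes d}$ with diagonal entries $\sqrt{-\mu_{k}}$ (which are classically computable from the closed form in \cref{eqn:app_eigenvalues}), the full eigenbasis $H\otimes (F_h^{\otimes d})^{-1}$ of $M$ is known and all eigenvalues $\pm i\sqrt{-\mu_k}$ are purely imaginary. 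Hence $M$ satisfies the hypotheses of \cref{thm:FF_bt} with $\widetilde{\alpha}=0$, and the required oracles $O_{\Lambda,r}, O_{\Lambda,i}, O_{\exp,\widetilde{\alpha}}, O_{\text{prod}}, \ldots$ can be synthesized by one extra Hadamard and one extra QFT layer on top of the classical arithmetic that evaluates $\sqrt{-\mu_k}$.

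Next, I prepare the lifted initial state $(\vec{u}_0,\vec{v}(0))$ where $\vec{v}(0)=(iB)^{-1}\vec{w}_0$. When $c<0$, every $\mu_k<0$, so $B$ is invertible and $B^{-1}$ acts in the Fourier basis by the diagonal map $1/\sqrt{-\mu_k}$, implementable by a controlled rotation sandwiched between $F_h^{\otimes d}$ and its inverse, exactly as in the proof of \cref{lem:FF_NSD_gen_homo}. When $c=0$, the single mode $k=0$ makes $B$ rank-deficient, but the hypothesis $\sum_{j\in[n]^d} w_0(j/n)=0$ says precisely that $\vec{w}_0$ is orthogonal to the constant eigenvector, so $\vec{v}(0)$ is well-defined as the pseudo-inverse image; the same controlled-rotation circuit can be used after flagging the $k=0$ basis state as ``orthogonal to the image''. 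In both cases the circuit produces $\ket{\vec{v}(0)}$ from $\ket{\vec{w}_0}$ with success probability $\Omega(\|\vec{v}(0)\|^2/\|\vec{w}_0\|^2)$, so amplitude amplification costs $\mathcal{O}(\|\vec{w}_0\|/\|\vec{v}(0)\|)$ queries to $O_{w_0}$.

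Now I apply \cref{thm:FF_bt} to the lifted system, with combined initial state $\ket{(\vec{u}_0,\vec{v}(0))}$, inhomogeneous term $(0,\vec{b}(t))$, and $\widetilde{\alpha}=0$. The theorem returns $\ket{(\vec{u}(T),\vec{v}(T))}$ with
\begin{equation}
\mathcal{O}\!\left(\frac{\sqrt{2\|\vec{u}_0\|^2 + 2\|\vec{v}(0)\|^2 + 2T^2\|\vec{b}\|_{\mathrm{avg}}^2}}{\|(\vec{u}(T),\vec{v}(T))\|}\right)
\end{equation}
queries to the lifted oracles, each of which invokes either $O_{u_0}$, $O_{b,t}$, or the $\vec{v}(0)$-preparation sub-routine (contributing the extra $\|\vec{w}_0\|/\|\vec{v}(0)\|$ factor only on $O_{w_0}$). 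I then post-select the lifting qubit to extract the $\vec{u}$ block; because $\|(\vec{u}(T),\vec{v}(T))\| \geq \|\vec{u}(T)\|$, amplitude amplification contributes one more factor of $\|(\vec{u}(T),\vec{v}(T))\|/\|\vec{u}(T)\|$, and combining the two replaces the denominator by $\|\vec{u}(T)\|$, giving the stated query counts. The additional gate count inherits the $\widetilde{\mathcal{O}}(d\log^2(n)\,\mathrm{polylog}(T/\epsilon))$ overhead from \cref{cor:app_PDE} plus the QFT-based circuit for $B^{-1}$, which is of the same order.

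The main obstacle is the $c=0$ degenerate case: $B$ is singular, so the definition of $\vec{v}(0)$ and the circuit that produces it must avoid dividing by zero at the constant Fourier mode. The hypothesis $\sum_j w_0(j/n)=0$ makes $(iB)^{-1}\vec{w}_0$ well-defined, but at the implementation level one should build the controlled rotation from a QSVT polynomial that approximates $1/\sqrt{-\mu_k}$ only away from $k=0$ and zeros out the $k=0$ channel explicitly, so that spurious amplitude at the kernel mode cannot be amplified into the output. Once this is handled cleanly, the remaining steps are straightforward bookkeeping on top of \cref{thm:FF_bt,lem:FF_NSD_gen_homo}.
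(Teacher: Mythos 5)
Your proposal follows the same route as the paper's proof: lift via \cref{eqn:app_ODE_hyper_1st}, diagonalize the block matrix (you identify it cleanly as $i\sigma_x\otimes B$ with eigenbasis $H\otimes(F_h^{\otimes d})^{-1}$, which is exactly the block-by-block change of basis computed in the paper), invoke \cref{thm:FF_bt} with $\widetilde{\alpha}=0$, prepare $\vec{v}(0)$ by a controlled-rotation/fast-inversion circuit for $B^{-1}$ exactly as the paper does via \cite[Proposition 7]{TongAnWiebe2021}, and post-select the upper block at the end. The only cosmetic difference is your suggestion of a QSVT polynomial for $1/\sqrt{-\mu_k}$ in the $c=0$ case, which is unnecessary since the eigenvalues are known in closed form and a controlled rotation with the zero mode flagged suffices (as the paper's fast-inversion citation already handles).
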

\begin{proof}
    We first analyze the eigenvalue and eigenstate of the coefficient matrix. 
    By discrete Fourier transform and switching the blocks, 
    \begin{equation}
        \begin{split}
            & \quad \left(\begin{array}{cc}
                \frac{1}{\sqrt{2}} I & \frac{1}{\sqrt{2}} I  \\
                \frac{1}{\sqrt{2}} I & -\frac{1}{\sqrt{2}} I
            \end{array}\right) 
            \left(\begin{array}{cc}
                F_h^{\otimes d} & 0  \\
                0 & F_h^{\otimes d} 
            \end{array}\right) 
            \left(\begin{array}{cc}
                0 & iB \\
                iB & 0
            \end{array}\right) 
            \left(\begin{array}{cc}
                (F_h^{\otimes d})^{-1} & 0 \\
                0 & (F_h^{\otimes d})^{-1}
            \end{array}\right)
            \left(\begin{array}{cc}
                \frac{1}{\sqrt{2}} I & \frac{1}{\sqrt{2}} I  \\
                \frac{1}{\sqrt{2}} I & -\frac{1}{\sqrt{2}} I 
            \end{array}\right) \\
            & = \left(\begin{array}{cc}
                \frac{1}{\sqrt{2}} I & \frac{1}{\sqrt{2}} I  \\
                \frac{1}{\sqrt{2}} I & -\frac{1}{\sqrt{2}} I
            \end{array}\right) 
            \left(\begin{array}{cc}
                0 & i\sqrt{-D} \\
                i\sqrt{-D} & 0
            \end{array}\right)
            \left(\begin{array}{cc}
                \frac{1}{\sqrt{2}} I & \frac{1}{\sqrt{2}} I  \\
                \frac{1}{\sqrt{2}} I & -\frac{1}{\sqrt{2}} I
            \end{array}\right) \\
            & = \left(\begin{array}{cc}
                i\sqrt{-D} & 0 \\
                0 & -i\sqrt{-D}
            \end{array}\right). 
        \end{split}
    \end{equation}
    Therefore, the eigenvalues of the coefficient matrix are 
    \begin{equation}
         \pm i \sqrt{-c + 4 n^2 \sum_{j=0}^{d-1} a_j \sin^2 (k_j \pi /n)}, \quad k_j \in [n], 
    \end{equation}
    and the corresponding eigenstates form the unitary matrix 
    \begin{equation}
        \left(\begin{array}{cc}
                (F_h^{\otimes d})^{-1} & 0 \\
                0 & (F_h^{\otimes d})^{-1}
            \end{array}\right)
            \left(\begin{array}{cc}
                \frac{1}{\sqrt{2}} I & \frac{1}{\sqrt{2}} I  \\
                \frac{1}{\sqrt{2}} I & -\frac{1}{\sqrt{2}} I 
            \end{array}\right). 
    \end{equation}
    Therefore, the coefficient matrix of~\cref{eqn:app_ODE_hyper_1st} has closed-form eigenvalues (though they require summation of $\mathcal{O}(d)$ terms), and the eigenstate transformation matrix can be quantum implemented by $2d$ (inverse) quantum Fourier transform and a Hadamard gate. 
    According to the same reasoning of~\cref{cor:app_PDE}, we may use~\cref{thm:FF_bt} to prepare a quantum state proportional to $(\vec{u}(T),\vec{v}(T))$, with 
    \begin{equation}
            \mathcal{O}\left( \frac{\sqrt{2\|\vec{u}(0)\|^2 + 2\|\vec{v}(0)\|^2 + 2T^2 \|\vec{b}\|^2_{\text{avg}} }}{\sqrt{\|\vec{u}(T)\|^2 + \|\vec{v}(T)\|^2}} \right)
    \end{equation}
    queries to the preparation oracles of $(\vec{u}(0),\vec{v}(0))$ and $\vec{b}(t)$, and 
    \begin{equation}
            \widetilde{\mathcal{O}}\left( \frac{\sqrt{2\|\vec{u}(0)\|^2 + 2\|\vec{v}(0)\|^2 + 2T^2 \|\vec{b}\|^2_{\text{avg}} }}{\sqrt{\|\vec{u}(T)\|^2 + \|\vec{v}(T)\|^2}} d \log^2(n) \text{~poly}\log(T/\epsilon) \right)
    \end{equation}
    additional gates. 

    The state preparation oracle of $(\vec{u}(0),\vec{v}(0))$ requires the controlled version of the preparation oracles of $\ket{\vec{u}(0)}$ and $\ket{\vec{v}(0)}$.  
    The preparation oracle of $\ket{\vec{u}(0)}$ is given as assumed. 
    For the preparation oracle of $\ket{\vec{u}(0)}$, using the relation $i B \vec{v}(t) = \frac{d}{dt}\vec{u}(t)$, we need to solve the linear system 
    \begin{equation}\label{eqn:app_PDE_hyper_QLSP}
        i B \vec{v}(0) = \vec{w}_0. 
    \end{equation} 
    \cref{eqn:app_PDE_hyper_QLSP} has two features. 
    First,~\cref{eqn:app_PDE_hyper_QLSP} is not always solvable. 
    Notice that the matrix $B$ is invertible when $c \neq 0$, and in this case $\vec{v}(0)$ uniquely exists for any $\vec{w}_0$. 
    However, $B$ becomes singular when $c = 0$, and in this case we need to require $\vec{w}_0$ does not overlap with the eigenspace of $B$ corresponding to $0$ eigenvalue. 
    Since the eigenstates of $B$ form Fourier basis with periodic boundary condition, this eigenspace is $\text{span}\left\{ (1,1,\cdots,1)\right\}$. 
    Therefore, under the assumption $\sum_{j\in [n]^d} w_0(j/n) = 0$,~\cref{eqn:app_PDE_hyper_QLSP} has the solution. 
    Second, though solving general linear system of equations requires complexity linear in the condition number,~\cref{eqn:app_PDE_hyper_QLSP} can be solved using fast inversion proposed in~\cite{TongAnWiebe2021} since $B$ is a normal matrix with known eigenvalues and eigenstates. 
    According to~\cite[Proposition 7]{TongAnWiebe2021}, we need $\mathcal{O}(\|\vec{w}_0\|/\|\vec{v}(0)\|)$ queries to preparing $\ket{\vec{w}_0}$ and $(F_h)^{\otimes d}$. 
    So the overall numbers of queries to $\ket{\vec{w}_0}$ and the required additional gates due to quantum Fourier transform have an extra multiplicative factor $\|\vec{w}_0\|/\|\vec{v}(0)\|$. 

    Finally, to obtain $\ket{\vec{u}(T)}$ from a state proportional to $(\vec{u}(T),\vec{v}(T))$, we need post-selection over the correct subspace, and the average number of repeats required after amplitude amplification is of order $\sqrt{\|\vec{u}(T)\|^2 + \|\vec{v}(T)\|^2}/\|\vec{u}(T)\|$. This contributes another multiplicative factor in all the complexities. 
\end{proof}

\cref{cor:app_PDE_hyper} suggests that the overall complexity in $d$, $n$ and $T$ depends on the scalings of $\|\vec{v}(0)\|$ and $\|\vec{u}(T)\|$. 
Here we focus on a more explicit scaling of $\|\vec{v}(0)\|$, while we postpone the discussions on $\|\vec{u}(T)\|$ to specific examples. 
According to~\cref{eqn:app_ODE_hyper_1st}, the vector $\vec{v}(0)$ is determined by solving the possibly generalized linear system $i B \vec{v}(0) = \vec{w}_0$. 
As discussed in the proof, the extra condition $\sum_{j\in [n]^d} w_0(j/n) = 0$ when $c = 0$ is to ensure that such linear system is always solvable. 
We remark that this condition is a discrete analog of the continuous counterpart $\int_{[0,1]^d} w_0(x) dx = 0$ and is also required by some existing algorithm for wave equation~\cite{CostaJordanOstrander2019}. 
The solution $\vec{v}(0)$, however, may be as small as $\Theta(d^{-1/2}n^{-1}\|\vec{w}_0\|)$, leading to an extra multiplicative factor $\mathcal{O}(\sqrt{d}n)$ in the worst case. 
This is because the effective condition number of $B$ is on the scale of $\sqrt{d}n$. 
However, $\|\vec{v}(0)\|$ can also be asymptotically comparable to $\|\vec{w}_0\|$ in the best situations. 
As discussed in~\cite[Proposition 8]{TongAnWiebe2021}, this happens when $\vec{w}_0$ has $\Theta(1)$ overlap with the eigenstates corresponding to $\Theta(1)$ eigenvalues of $B$, which can be guaranteed if $w_0(x)$ is a smooth function. 
We summarize the above discussions with explicit best- and worst-case scalings of $\|\vec{v}(0)\|$ as follows. 

\begin{cor}\label{cor:app_PDE_hyper_best_worst}
    Under the same conditions as in~\cref{cor:app_PDE_hyper}, 
    \begin{enumerate}
        \item the worst-case complexity can be estimated as: 
        \begin{equation}
            \mathcal{O}\left( \frac{\sqrt{2\|\vec{u}(0)\|^2 + 2\|\vec{w}_0\|^2 + 2T^2 \|\vec{b}\|^2_{\text{avg}} }}{\|\vec{u}(T)\|} \right)
    \end{equation}
    queries to the preparation oracles of $\vec{u}_0$ and $\vec{b}(t)$, 
    \begin{equation}
            \mathcal{O}\left( \frac{\sqrt{d}n\sqrt{2\|\vec{u}(0)\|^2 + 2\|\vec{w}_0\|^2 + 2T^2 \|\vec{b}\|^2_{\text{avg}} }}{\|\vec{u}(T)\|} \right)
    \end{equation}
    queries to the preparation oracle of $\vec{w}_0$, and 
    \begin{equation}
        \widetilde{\mathcal{O}}\left(\frac{\sqrt{2\|\vec{u}(0)\|^2 + 2\|\vec{w}_0\|^2 + 2T^2 \|\vec{b}\|^2_{\text{avg}} }}{\|\vec{u}(T)\|} d^{3/2} n\log^2(n) \text{~poly}\log(T/\epsilon) \right)
    \end{equation}
    additional gates, 
    \item if we further assume that $w_0(x)$ is smooth with uniformly bounded partial derivatives, then the overall complexity can be estimated as: 
    \begin{equation}
            \mathcal{O}\left( \frac{\sqrt{2\|\vec{u}(0)\|^2 + 2\|\vec{w}_0\|^2 + 2T^2 \|\vec{b}\|^2_{\text{avg}} }}{\|\vec{u}(T)\|} \right)
    \end{equation}
    queries to the preparation oracles of $\vec{u}_0$ and $\vec{b}(t)$, 
    \begin{equation}
            \mathcal{O}\left( \frac{\sqrt{2\|\vec{u}(0)\|^2 + 2\|\vec{w}_0\|^2 + 2T^2 \|\vec{b}\|^2_{\text{avg}} }}{\|\vec{u}(T)\|} \right)
    \end{equation}
    queries to the preparation oracle of $\vec{w}_0$, and 
    \begin{equation}
        \widetilde{\mathcal{O}}\left(\frac{\sqrt{2\|\vec{u}(0)\|^2 + 2\|\vec{w}_0\|^2 + 2T^2 \|\vec{b}\|^2_{\text{avg}} }}{\|\vec{u}(T)\|} d \log^2(n) \text{~poly}\log(T/\epsilon) \right)
    \end{equation}
    additional gates. 
    \end{enumerate}
\end{cor}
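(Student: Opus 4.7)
The plan is to derive both bounds from \cref{cor:app_PDE_hyper} by analyzing the ratio $\|\vec{w}_0\|/\|\vec{v}(0)\|$ that appears in its query complexities, since this is the only quantity in those bounds that is not already an explicit input parameter. Recall that $\vec{v}(0)$ is obtained by solving $iB \vec{v}(0) = \vec{w}_0$, where $B = (F_h^{\otimes d})^{-1}\sqrt{-D}F_h^{\otimes d}$ is Hermitian with eigenvalues $\sqrt{-c + 4n^2\sum_{j=0}^{d-1} a_j \sin^2(k_j\pi/n)}$ for $k_j \in [n]$ (in the $c=0$ case the unique zero eigenvalue is removed by the assumption $\sum_j w_0(j/n)=0$, which places $\vec{w}_0$ in the range of $B$). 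I will substitute the resulting bounds on $\|\vec{w}_0\|/\|\vec{v}(0)\|$ into the three complexity expressions of \cref{cor:app_PDE_hyper} and simplify.

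For the worst-case bound, the key inequality is
\begin{equation}
    \|\vec{v}(0)\| = \|B^{-1}\vec{w}_0\| \geq \frac{\|\vec{w}_0\|}{\|B\|},
\end{equation}
where I treat $B^{-1}$ as the pseudo-inverse when $c=0$ (the assumption on $w_0$ ensures $\vec{w}_0$ lies in the non-null subspace). Since the largest eigenvalue of $B$ is at most $\sqrt{|c| + 4n^2\sum_j a_j}$, which is $\Theta(\sqrt{d}n)$ for bounded $a$ and $c$, we get $\|\vec{w}_0\|/\|\vec{v}(0)\| \leq \|B\| = \mathcal{O}(\sqrt{d}n)$. Substituting this factor only into the terms of \cref{cor:app_PDE_hyper} that involve $\|\vec{w}_0\|/\|\vec{v}(0)\|$ (the $\vec{w}_0$-oracle queries and the additional-gate count), and using $\|\vec{v}(0)\| \leq \|\vec{w}_0\|/\sqrt{|c|+\mathrm{const}}$ more loosely to replace $\|\vec{v}(0)\|$ by $\|\vec{w}_0\|$ in the numerators, yields the first claim.

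For the best-case bound, the plan is to use smoothness of $w_0$ to argue that $\vec{w}_0$ concentrates on low-frequency Fourier modes, for which the eigenvalues of $B$ are $\Theta(1)$. Concretely, if $w_0$ has uniformly bounded partial derivatives of all orders, then its discrete Fourier coefficients $\widehat{w}_k$ decay faster than any polynomial in $|k|$, so a constant fraction of $\|\vec{w}_0\|^2$ is carried by the modes with $|k_j| = \mathcal{O}(1)$, on which $B$ acts as multiplication by a constant (bounded away from zero by the non-singularity or by the removal of the zero mode). Therefore $\|\vec{v}(0)\| = \|B^{-1}\vec{w}_0\| = \Theta(\|\vec{w}_0\|)$, giving $\|\vec{w}_0\|/\|\vec{v}(0)\| = \mathcal{O}(1)$, and \cref{cor:app_PDE_hyper} directly yields the second claim.

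The main obstacle is the rigorous justification of the smoothness-to-overlap argument: discrete Fourier decay must be shown from the continuous smoothness of $w_0(x)$ together with the sampling $w_0(j/n)$, and one must track how the low-frequency mass translates into a quantitative lower bound on $\|B^{-1}\vec{w}_0\|$ that is independent of $d$ and $n$. This is essentially the content of~\cite[Proposition 8]{TongAnWiebe2021} cited after the corollary, so the plan is to invoke it as a black box rather than reprove it. The remaining steps (controlling $\|B\|$, checking solvability of $iB\vec{v}(0)=\vec{w}_0$ under the stated hypothesis on $w_0$, and bookkeeping of the bound substitutions) are routine.
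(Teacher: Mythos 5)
Your proposal is correct and matches the paper's own (informal) argument: the paper also derives this corollary directly from \cref{cor:app_PDE_hyper} by bounding $\|\vec{w}_0\|/\|\vec{v}(0)\|$ by $\|B\| = \mathcal{O}(\sqrt{d}n)$ in the worst case, bounding $\|\vec{v}(0)\| = \mathcal{O}(\|\vec{w}_0\|)$ via the smallest relevant eigenvalue of $B$ being $\Theta(1)$, and invoking \cite[Proposition~8]{TongAnWiebe2021} as a black box for the smooth-$w_0$ case. Your bookkeeping of where the extra $\sqrt{d}n$ factor enters (only the $\vec{w}_0$-oracle count and the gate count) agrees with the stated bounds.
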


Now we compare our algorithm with existing ones. 
For technical simplicity we only compare the numbers of queries to the state preparation oracles, and we focus on the scaling in $d$ and $n$ and postpone the discussions on potential speedup in terms of $T$ to specific examples. 
Due to the second-order nature of~\cref{eqn:app_PDE_hyper}, there are two ways of applying existing generic algorithms. 
The first approach is to solve~\cref{eqn:app_ODE_hyper_1st} using~\cref{lem:DEsolver_TD}. 
As shown in the proof of~\cref{cor:app_PDE_hyper}, the spectral norm of the coefficient matrix $[0,iB;iB,0]$ is $\mathcal{O}(\sqrt{d}n)$, so the query complexity of generic algorithms scales 
\begin{equation}\label{eqn:scaling1_rev}
    \widetilde{\mathcal{O}}\left( \frac{\sup_t \sqrt{\|\vec{u}(t)\|^2+\|\vec{v}(t)\|^2}}{\|\vec{u}(T)\|} T \sqrt{d}n \text{~poly}\log(1/\epsilon) \right), 
\end{equation}
which is always $\widetilde{\mathcal{O}}(\sqrt{d}n)$. 
However, notice that here the query model to the initial condition is that to $(\vec{u}(0),\vec{v}(0))$, so the oracle of preparing $\vec{v}(0)$ still requires extra steps, \emph{i.e.}, solving the linear system $i B \vec{v}(0) = \vec{w}_0$, which is the same as our fast-forwarded algorithm. 
Taken this into consideration, generic algorithm for~\cref{eqn:app_ODE_hyper_1st} uses $\widetilde{\mathcal{O}}(\sqrt{d}n)$ queries to $\vec{u}_0$ and $\vec{b}(t)$, and $\widetilde{\mathcal{O}}(\sqrt{d}n)$ to $\widetilde{\mathcal{O}}(dn^2)$ queries to $\vec{w}_0$ depending on the regularity of $w_0(x)$ in the same fashion of~\cref{cor:app_PDE_hyper_best_worst}. 
Specifically, as discussed before~\cref{cor:app_PDE_hyper_best_worst}, in the worst case, solving $iB\vec{v}(0) = \vec{w}_0$ requires $\mathcal{O}(\sqrt{d}n)$ queries to $\vec{w}_0$, so the overall complexity becomes $\mathcal{O}(dn^2)$. 
With further assumption that $w_0$ is smooth, solving $iB\vec{v}(0) = \vec{w}_0$ can be achieved with $\mathcal{O}(1)$ cost~\cite{TongAnWiebe2021}, so the overall complexity is still $\mathcal{O}(\sqrt{d}n)$.
Meanwhile, according to~\cref{cor:app_PDE_hyper_best_worst}, our fast-forwarded algorithm only needs $\mathcal{O}(\sqrt{d}n)$ in the worst case and $\mathcal{O}(1)$ with smooth condition. 
Therefore our algorithm is always better than the existing one by a multiplicative factor $\sqrt{d}n$. 

The second approach is to use a more natural way to lift~\cref{eqn:app_ODE_hyper} to first-order equations by introducing $\vec{\widetilde{v}}(t) = \frac{d}{dt} \vec{u}(t)$.  
The corresponding first-order ODE system becomes 
\begin{equation}\label{eqn:app_ODE_hyper_1st_another}
\begin{split}
    \frac{d}{dt} \left(\begin{array}{c}
         \vec{u}(t) \\
         \vec{\widetilde{v}}(t) 
    \end{array}
    \right) = \left(\begin{array}{cc}
        0 & I \\
        A_L^a + cI & 0
    \end{array}\right) \left(\begin{array}{c}
         \vec{u}(t) \\
         \vec{\widetilde{v}}(t) 
    \end{array}
    \right) + \left(\begin{array}{c}
         0 \\
         \vec{b}(t)
    \end{array}
    \right). 
\end{split}
\end{equation}
This is more natural because the input oracle of $\vec{\widetilde{v}}(0)$ is exactly the preparation oracle of $w_0$, so there is no extra linear system issue any more. 
Notice that our fast-forwarded algorithm cannot be applied to ~\cref{eqn:app_ODE_hyper_1st_another} because the coefficient matrix here is no longer normal. 
However, one may apply the generic algorithm in~\cref{lem:DEsolver_TD}. 
Notice that the coefficient matrix has spectral norm $\mathcal{O}(\sqrt{d} n)$ and condition number $\mathcal{O}(n)$~\cite{CostaJordanOstrander2019}, so the query complexity of generic algorithms scales 
\begin{equation}
    \widetilde{\mathcal{O}}\left( \frac{\sup_t \sqrt{\|\vec{u}(t)\|^2+\|\vec{v}(t)\|^2}}{\|\vec{u}(T)\|} T \sqrt{d}n^2 \text{~poly}\log(1/\epsilon) \right), 
\end{equation}
which is always $\widetilde{\mathcal{O}}(\sqrt{d}n^2)$. 
Therefore our fast-forwarded algorithm always outperforms existing one in $n$, and can be better in $d$ as well if $w_0(x)$ is smooth. 
Nevertheless, we remark that our fast-forwarded algorithm can only solve a subclass of~\cref{eqn:app_ODE_hyper} as specified in~\cref{cor:app_PDE_hyper}, while generic algorithm applied to~\cref{eqn:app_ODE_hyper_1st_another} works for all possible initial conditions. 

Finally, we remark that one may also obtain a quantum state encoding $\frac{d}{dt}\vec{u}(T)$ from our algorithm. 
This can be done by first solving~\cref{eqn:app_ODE_hyper_1st} and post-select $\vec{v}(T)$ rather than $\vec{u}(T)$, then compute $\frac{d}{dt}\vec{u}(T) = i B \vec{v}(T)$. 
To apply the operator $iB$, one may still use the QFT and controlled rotations similar as the quantum fast inversion in~\cite{TongAnWiebe2021}, since the eigenvalues and eigenstates of $iB$ are known as well.

\subsubsection{Examples}

\paragraph{Wave equation}

The wave equation, as the prototypical hyperbolic PDE, is for the description of travelling and standing waves, such as mechanical waves (e.g. water waves, sound waves, and seismic waves) or electromagnetic waves (including light waves). It arises in fields like acoustics, electromagnetism, fluid mechanics, continuum mechanics, quantum mechanics, plasma physics, general relativity, geophysics, and many other disciplines \cite{evans2010partial,courant2008methods}. Here we consider the inhomogeneity as an external source, such as the electromagnetic wave equation. The equation has the form
\begin{equation}
    \frac{\partial^2 }{\partial t^2 } u(x,t) = \Delta u + b(x,t). 
\end{equation}
This is a special case of~\cref{eqn:app_PDE_hyper} where all the entries of $a$ are equal and $c = 0$, so can be fast-forwarded using~\cref{cor:app_PDE_hyper}. 
Since in the wave equation $c = 0$, all the eigenvalues of the corresponding coefficient matrix after spatial discretization and lifting are purely imaginary. 
Therefore~\cref{eqn:app_ODE_hyper_1st} corresponding to the homogeneous wave equation is just a Hamiltonian simulation problem, and the exponential decay related to the norm decay is avoided. 
Furthermore, the coefficient matrix has zero eigenvalues, which suggests, in the inhomogeneous case, the possibility that $\|\vec{u}(T)\|$ might be $\Theta(T)$ and thus the overall query complexity in~\cref{cor:app_PDE_hyper} can be $\mathcal{O}(1)$. 
However, noticing that the condition that $(0,\vec{b}(t))$ overlaps with eigenstates corresponding to the zero eigenvalues only implies the linear growth in time of the extended solution $(\vec{u}(t),\vec{v}(t))$, it is not clear yet whether the linear growth is contributed by $\vec{u}(t)$, so the overall query complexity to approximate $\ket{\vec{u}(T)}$ might still be linear in $T$. 

\paragraph{Klein-Gordon equation}

The Klein-Gordon equation is a relativistic wave equation that characterizes a field whose quanta are spinless particles. It is a quantized version of the relativistic energy–momentum relation. The Klein-Gordon equation plays the role of one of the fundamental equations of quantum field theory \cite{weinberg1995quantum,gross1999relativistic,greiner2000relativistic}.
The equation goes as 
\begin{equation}
    \frac{\partial^2 }{\partial t^2 } u(x,t) = \Delta u - m^2 u + b(x,t). 
\end{equation}
This is a special case of~\cref{eqn:app_ODE_hyper_1st} with all equal entries in $a$. 
Notice that, since $m > 0$, the eigenvalues of the coefficient matrix have strictly negative real part, so the solution $\vec{u}(t)$ cannot grow linearly. 
In this case, according to~\cref{cor:app_PDE_hyper_best_worst}, provided that the solution does not decay fast, the overall query complexity is always linear in time, and our fast-forwarded algorithm achieves speedup compared to existing generic ones only in $d$ and $n$.

\subsection{Application: higher-order PDEs}\label{sec:app_high_order_pde}

We have discussed how our algorithm can be applied to several parabolic and hyperbolic PDEs and achieve speedups over existing quantum algorithms. These two types of PDEs are second-order PDEs and only involve spatial derivatives up to second order. 
However, our algorithm can also be applied to PDEs with higher-order spatial derivatives. This is because the spatial derivative of any order (under periodic boundary conditions) has the Fourier basis functions as its eigenfunctions, and so its discretized version can be diagonalized by the QFT.  In this section, we discuss two higher-order examples of PDEs that cannot be characterized as either parabolic or hyperbolic: the Airy equation and the beam equation. For technical simplicity, we only consider the one-dimensional case. 

\paragraph{Airy equation}
The Airy equation is the linearized version of the Korteweg–De Vries (KdV) equation, which was initially introduced to model shallow water waves and found to have various applications in modeling long internal waves in oceans~\cite{SegurHammack1982}, ion acoustic waves in a plasma~\cite{konnoIchikawa1974}, and the Fermi–Pasta–Ulam–Tsingou problem in chaos theory~\cite{BenettinChristodoulidiPonno2013}. 
When the group velocity of the wave is at an extremum, the wave envelope is approximated by the linearized KdV equation, \emph{i.e.}, the Airy equation~\cite{Haberman2012applied}.
The equation has the form\footnote{The Airy equation also commonly refers to $\frac{d^2y}{dx^2}-xy = 0$. The connection between this equation and \cref{eq:airy} is that if $y(x)$ solves the Airy ODE then $u(x,t) = t^{-1/3}y(x/(3t)^{1/3})$ solves \cref{eq:airy} with $b\equiv 0$.}
\begin{equation}\label{eq:airy}
    \frac{\partial }{\partial t} u(x,t) + \frac{\partial^3 }{\partial x^3} u(x,t) = b(x,t). 
\end{equation}

We first discretize $\frac{\partial^3 }{\partial x^3}$ by the central difference method using $n$ grid points $x_j = j/n, j \in [n]$ and spatial step size $h = 1/n$. 
According to the central difference formula $f^{(3)}(x) \approx \frac{1}{h^3}( -f(x-2h)/2 + f(x-h) - f(x+h) + f(x+2h)/2 )$, we may define the discretized operator as $D_{h,3}$ of which all the non-zero entries are 
\begin{equation}
    D_{h,3}(j,j-2) = -\frac{1}{2h^3}, \quad D_{h,3}(j,j-1) = \frac{1}{h^3}, \quad D_{h,3}(j,j+1) = -\frac{1}{h^3}, \quad D_{h,3}(j,j+2) = \frac{1}{2h^3}. 
\end{equation}
Here, under periodic boundary condition, we interpret $D_{h,3}(j,k) = D_{h,3}(j,k\pm n)$ to let the index fall into $[n]$. 
The corresponding semi-discretized equation is given as 
\begin{equation}\label{eqn:app_airy_ode}
    \frac{d}{dt} \vec{u}(t) = -D_{h,3} \vec{u}(t) + \vec{b}(t). 
\end{equation}

The eigenvalues of $D_{h,3}$ are $\mu_k = - 4in^3 \sin(2k\pi/n)\sin^2(k\pi/n)$, and the corresponding eigenstate is $\frac{1}{\sqrt{n}}(1,\omega_n^k,\omega_n^{2k},\cdots,\omega_n^{(n-1)k})^{T}$ where $\omega_n = e^{2\pi i/n}$.
Therefore, the coefficient matrix $-D_{h,3}$ can be diagonalized using QFT, and its eigenvalues have closed-form expression. 
~\cref{thm:FF_bt} directly implies that the overall query complexity of our fast-forwarded algorithm applied to~\cref{eqn:app_airy_ode} is 
\begin{equation}
    \mathcal{O}\left( \frac{\sqrt{2\|\vec{u}(0)\|^2 + 2T^2 \|b\|^2_{\text{avg}} }}{\|\vec{u}(T)\|} \right). 
\end{equation}
Again, this is always independent of $n$, at most linear in $T$, and in the best case can be independent of $T$ as well. 
As a comparison, since $\|D_{h,3}\| = \Theta(n^3)$, query complexities of best existing algorithms (\cref{lem:DEsolver_TI} and~\cref{lem:DEsolver_TD}) have qubit scaling in $n$ and linear scaling in $T$. 
So our algorithm always achieves a speedup in $n$ and can potentially achieve a speedup in $T$ with further assumption on $\vec{b}(t)$.

\paragraph{Beam equation}

The beam equation (also known as the Euler–Bernoulli equation) describes the relationship between the beam's deflection and the applied load. It is widely used in engineering practice, especially civil, mechanical, and aeronautical engineering. Civil engineering structures often consist of an assembly or grid of beams with cross-sections having shapes such as T's and I's. A large number of machine parts also are beam-like structures: lever arms, shafts, etc. Several aeronautical structures such as wings and fuselages can also be treated as thin-walled beams. The beam models provide the designer with a simple tool to analyze the structures, and make it successfully to solve a variety of engineering problems \cite{witmer1991elementary,bauchau2009euler,carrera2011beam}. The equation has the form
\begin{equation}
    \frac{\partial^2 }{\partial t^2} u(x,t) + \frac{\partial^4 }{\partial x^4} u(x,t) = b(x,t). 
\end{equation}

We follow the approach for hyperbolic PDEs that we first spatially discretize the PDE to obtain a system of second-order ODEs, then transfer it to a system of first-order ones in order to apply our fast-forwarded algorithm. 
According to the central difference formula $f^{(4)}(x) \approx \frac{1}{h^4} (f(x-2h) - 4f(x-h) + 6f(x) - 4f(x+h) + f(x+2h))$, we may discretize the differential operator $\frac{\partial^4}{\partial x^4}$ by $D_{h,4}$ of which all the non-zero entries are 
\begin{equation}
    D_{h,3}(j,j\pm 2) = \frac{1}{h^4}, \quad D_{h,3}(j,j\pm 1) = -\frac{4}{h^4}, \quad D_{h,3}(j,j) = \frac{6}{h^4}. 
\end{equation}
The corresponding semi-discretized equation is given as 
\begin{equation}\label{eqn:app_beam_ode}
    \frac{d^2}{dt^2} \vec{u}(t) = -D_{h,4} \vec{u}(t) + \vec{b}(t). 
\end{equation}

One may compute that the eigenvalues of $D_{h,4}$ are $\mu_k = 16n^4 \sin^4(k\pi/n)$, and the corresponding eigenstate is $\frac{1}{\sqrt{n}}(1,\omega_n^k,\omega_n^{2k},\cdots,\omega_n^{(n-1)k})^{T}$ where $\omega_n = e^{2\pi i/n}$. 
So we may write $D_{h,4} = F_h^{-1} \Lambda_{4} F_h$ where $F_h$ represents the inverse QFT and $\Lambda_{4} = \text{diag}(16n^4 \sin^4(k\pi/n))$. 
Define $B_{4} =  F_h^{-1} \sqrt{\Lambda_{4}} F_h $ where $\sqrt{\Lambda_{4}} = \text{diag}( 4n^2 \sin^2(k\pi/n))$. 
Then $B_4^2 = D_{h,4}$. 
Consider the ODE system 
\begin{equation}\label{eqn:app_beam_ODE_1st}
\begin{split}
    \frac{d}{dt} \left(\begin{array}{c}
         \vec{u}(t) \\
         \vec{v}(t) 
    \end{array}
    \right) = \left(\begin{array}{cc}
        0 & iB_4 \\
        iB_4 & 0
    \end{array}\right) \left(\begin{array}{c}
         \vec{u}(t) \\
         \vec{v}(t) 
    \end{array}
    \right) + \left(\begin{array}{c}
         0 \\
         \vec{b}(t) 
    \end{array}
    \right), 
\end{split}
\end{equation}
Then $\vec{u}(t)$ is also the solution of~\cref{eqn:app_beam_ode}, and by definition $i B_4 \vec{v}(t) = \frac{d}{dt} \vec{u}(t)$. 
Notice that~\cref{eqn:app_beam_ODE_1st} has the same form of~\cref{eqn:app_ODE_hyper_1st}, so all the discussions in~\cref{sec:app_hyperbolic_pde} still hold for beam equation.

\end{document}